\newif\ifsubmit    
\newif\ifllncs      
\newif\ifexabs      
\newif\ifblind 
\newif\ifacm
\def\qip{1}
\else \documentclass[letterpaper,11pt,pdfa]{article}
\definecolor{DarkBlue}{RGB}{0,0,150}
\definecolor{NotSoDarkBlue}{RGB}{15,15,210}
\definecolor{DarkRed}{RGB}{150,0,0}
\definecolor{DarkGreen}{RGB}{0,100,0}
  \spnewtheorem{claim}{Claim}{\bfseries}{\rmfamily}
  \crefname{claim}{claim}{claims}
  \Crefname{claim}{Claim}{Claims}
  \newtheorem{theorem}{Theorem}[section]
  \newtheorem{definition}[theorem]{Definition}
  \newtheorem{remark}[theorem]{Remark}
  \newtheorem{lemma}[theorem]{Lemma}
  \newtheorem{corollary}[theorem]{Corollary}
  \newtheorem{proposition}[theorem]{Proposition}
  \newtheorem{claim}[theorem]{Claim}
  \newtheorem*{remark*}{Remark}
  \newtheorem*{theorem*}{Theorem}
  \newtheorem*{lemma*}{Lemma}
\setlist[description]{noitemsep}
\setlist[enumerate]{noitemsep}
\setlist[itemize]{noitemsep}
\definecolor{RoyalBlue}{RGB}{0,0,150}
\definecolor{NotSoDarkBlue}{RGB}{15,15,210}
\definecolor{DarkRed}{RGB}{150,0,0}
\definecolor{DarkGreen}{RGB}{0,100,0}
\newcommand{\authnote}[3]{}
\newcommand{\bhaskar}[1]{}
\newcommand{\justin}[1]{}
\newcommand{\jiahui}[1]{}
\newcommand{\vnote}[1]{}
\newcommand{\anote}[1]{}
\newcommand{\note}[1]{}
\newcommand{\authnote}[3]{\textcolor{#3}{[{\footnotesize {\bf #1:} { {#2}}}]}}
\newcommand{\bhaskar}[1]{\authnote{Bhaskar}{#1}{OliveGreen}}
\newcommand{\justin}[1]{\authnote{Justin}{#1}{RoyalBlue}}
\newcommand{\jiahui}[1]{\authnote{Jiahui}{#1}{ForestGreen}}
\newcommand{\vnote}[1]{\authnote{Vinod}{#1}{blue}}
\newcommand{\anote}[1]{\authnote{Aparna}{#1}{Magenta}}
\newcommand{\note}[1]{\authnote{Note}{#1}{brown}}
\newcommand{\cA}{\mathcal{A}}
\newcommand{\cB}{\mathcal{B}}
\newcommand{\cD}{\mathcal{D}}
\newcommand{\cF}{\mathcal{F}}
\newcommand{\cH}{\mathcal{H}}
\newcommand{\cO}{\mathcal{O}}
\newcommand{\cQ}{\mathcal{Q}}
\newcommand{\cR}{\mathcal{R}}
\newcommand{\cV}{\mathcal{V}}
\newcommand{\cX}{\mathcal{X}}
\newcommand{\cY}{\mathcal{Y}}
\newcommand{\vk}{\mathsf{vk}}
\newcommand{\sigk}{\mathsf{sigk}}
\newcommand{\calA}{\mathcal{A}}
\newcommand{\calB}{\mathcal{B}}
\newcommand{\calC}{\mathcal{C}}
\newcommand{\calD}{\mathcal{D}}
\newcommand{\calF}{\mathcal{F}}
\newcommand{\calH}{\mathcal{H}}
\newcommand{\calM}{\mathcal{M}}
\newcommand{\calO}{\mathcal{O}}
\newcommand{\calP}{\mathcal{P}}
\newcommand{\calQ}{\mathcal{Q}}
\newcommand{\calR}{\mathcal{R}}
\newcommand{\calV}{\mathcal{V}}
\newcommand{\calX}{\mathcal{X}}
\newcommand{\calY}{\mathcal{Y}}
\newcommand{\calZ}{\mathcal{Z}}
\newcommand{\bbF}{\mathbb{F}}
\newcommand{\bbN}{\mathbb{N}}
\newcommand{\ketbra}[1]{\ket{#1}\bra{#1}}
\newcommand{\Tr}{\mathsf{Tr}}
\newcommand{\secp}{\lambda}
\newcommand{\secpar}{\secp}
\newcommand{\negl}{\mathsf{negl}}
\newcommand{\Eval}{\mathsf{Eval}}
\newcommand{\eval}{\mathsf{Eval}}
\newcommand{\getsr}{\overset{\$}{\leftarrow}}
\newcommand{\Cs}{\mathcal{C}}
\newcommand{\setup}{\mathsf{Setup}}
\newcommand{\prove}{\mathsf{Prove}}
\newcommand{\delegate}{\mathsf{Delegate}}
\newcommand{\gen}{\mathsf{Generate}}
\newcommand{\crs}{\mathsf{CRS}}
\newcommand{\Enc}{\mathsf{Enc}}
\newcommand{\Dec}{\mathsf{Dec}}
\newcommand{\sk}{\mathsf{sk}}
\newcommand{\msk}{\mathsf{msk}}
\newcommand{\ek}{\mathsf{ek}}
\newcommand{\pk}{\mathsf{pk}}
\newcommand{\LG}{\mathsf{LearningGame}}
\newcommand{\GLG}{\mathsf{GenLearningGame}}
\newcommand{\GOTP}{\mathsf{GenOTP}}
\newcommand{\ct}{\mathsf{ct}}
\newcommand{\chk}{\mathsf{check}}
\newcommand{\generate}{\mathsf{Generate}}
\newcommand{\ans}{\mathsf{ans}}
\newcommand{\out}{\mathsf{out}}
\newcommand{\SKE}{\mathsf{SKE}}
\newcommand{\prf}{\mathsf{PRF}}
\newcommand{\cprf}{\mathsf{cPRF}}
\newcommand{\constrain}{\mathsf{Constrain}}
\newcommand{\puncture}{\mathsf{Puncture}}
\newcommand{\constraineval}{\mathsf{ConstrainEval}}
\newcommand{\invert}{\mathsf{Invert}}
\newcommand{\keygen}{\mathsf{KeyGen}}
\newcommand{\kgen}{\mathsf{KeyGen}}
\newcommand{\samp}{\mathsf{Samp}}
\newcommand{\qhe}{\mathsf{QHE}}
\newcommand{\iO}{\mathsf{iO}}
\newcommand{\shO}{\mathsf{shO}}
\newcommand{\CC}{\mathsf{CC}}
\newcommand{\obf}{\mathsf{Obf}}
\newcommand{\aux}{\mathsf{aux}}
\newcommand{\ccobf}{{\sf CC.Obf}}
\newcommand{\tildeP}{\tilde{P}}
\newcommand{\Hyb}{\mathsf{Hyb}}
\newcommand{\Accept}{\mathsf{Acc}}
\newcommand{\concat}{\|}
\newcommand{\adv}{\mathcal{A}}
\newcommand{\OTP}{\mathsf{OTP}}
\newcommand{\Sim}{\mathsf{Sim}}
\renewcommand{\vec}[1]{\mathbf{#1}} 
\newcommand{\N}{\mathbb{N}}
\newcommand{\F}{\mathbb{F}}
\newcommand{\C}{\mathbb{C}}
\newcommand{\bit}{\{0,1\}}
\newcommand{\poly}{\mathsf{poly}}
\newcommand{\inp}{\mathsf{inp}}
\newcommand{\bfA}{\mathbf{A}}
\def\vecv{\mathbf{v}}
\def\vecx{\mathbf{x}}
\def\vecy{\mathbf{y}}
\newcommand{\Decomp}{\mathsf{Decomp}}
\newcommand{\CO}{\mathsf{CO}}
\newcommand{\Sign}{\mathsf{Sign}}
\newcommand{\Verify}{\mathsf{Verify}}
\newcommand{\Gen}{\mathsf{Gen}}
\newcommand{\PRF}{\mathsf{PRF}}
\newcommand{\norm}[1]{\left\|#1\right\|}
\newcommand\QM{\mathsf{QM}}
\newcommand\Generate{\mathsf{Generate}}
\newcommand\Evaluate{\mathsf{Evaluate}}
\newcommand\Ver{\mathsf{Ver}}
\newcommand{\Distr}{\mathsf{Distr}}
\newcommand{\SEQ}{\mathsf{SEQ}}
\title{Quantum One-Time Programs, Revisited}
\author{}
\author{Aparna Gupte\\MIT \and 
Jiahui Liu\thanks{Part of this work done while at Fujitsu Research.}\\MIT \and 
Justin Raizes\\CMU \and 
Bhaskar Roberts\\UC Berkeley \and
Vinod Vaikuntanathan\\MIT}
\begin{document}

\ifexabs
\thispagestyle{empty}
\section{\ifexabs Extended Abstract \else Introduction \fi}
\label{sec:intro}
The notion of one-time programs, first proposed by Goldwasser, Kalai and Rothblum~\cite{goldwasser2008one}, allows us to compile a program into one that can be run on a single input of a user's choice, but only one. If realizable, one-time programs would have wide-ranging applications in software protection, digital rights management, electronic tokens and electronic cash. Unfortunately, one-time programs immediately run into a fundamental barrier: software can be copied multiple times at will, and therefore, if it can be run on a single input of a user's choice, it can also be run on as many inputs as desired.

To circumvent this barrier, \cite{goldwasser2008one} designed a one-time program with the assistance of a specialized stateful hardware device that they called a \textit{one-time memory}. A one-time memory is a device instantiated with two strings $(s_0, s_1)$; it takes as input a choice bit $b \in \{0,1\}$, outputs $s_b$ and then {\em self-destructs}. Using one-time memory devices, Goldwasser et al.~showed how to compile any program into a one-time program, assuming one-way functions exist. Goyal et al.~\cite{DBLP:conf/tcc/GoyalISVW10} extended these results by achieving unconditional security against malicious parties and using a weaker type of one-time memories that store single bits. Notwithstanding these developments, the security of these schemes rests on shaky grounds: security relies on how much one is willing to trust the impenetrability of these hardware devices in the hands of a motivated and resourceful adversary who may be willing to mount sophisticated side-channel attacks. Which brings up the motivating question of our paper: {\em is there any other way to construct one-time programs}?

One might hope that the quantum no-cloning theorem~\cite{WoottersZurek} might give us a solution. The no-cloning theorem states that quantum information cannot be generically copied, so if one can encode the given program into an appropriate quantum state, one might expect to circumvent the barrier. However, there is a simple impossibility result by Broadbent, Gutoski and Stebila~\cite{broadbent2013quantum} that rules out {quantum} one-time versions of any \textit{deterministic} program. Indeed, given a candidate quantum one-time program state $\ket{\psi_f}$, an adversary can evaluate $f$ many times on different inputs as follows: it first evaluates the program on some input $x$, measures the output register to obtain $f(x)$. Since $f$ is deterministic, the measurement does not disturb the state of the program at all (if the computation is perfectly correct). The adversary then uncomputes the first evaluation, restoring the initial program state. She can repeat this process on as many inputs as she wishes. While this impossibility result rules out one-time programs for deterministic functionalities, it raises the following natural question:
\begin{center}
    \textit{Can we obtain quantum one-time programs for randomized functionalities?}
\end{center}
More concretely, can we construct quantum one-time programs for randomized functions $f:\calX \times \calR \rightarrow \calY$ that lets the user choose the input $x \in \calX$ but not the randomness $r \in \calR$? One might hope that by forcing the evaluation procedure to utilize the {\em inherent randomness} of quantum information in sampling $r \leftarrow \calR$, measuring the output would collapse the program state in a way that does not allow further evaluations. However, once again, \cite{broadbent2013quantum} showed that it is impossible to compile \textit{any quantum channel} into a one-time program, unless it is (essentially) learnable with just one query. This is a much more general impossibility; in fact, it rules out non-trivial one-time programs for classical randomized functions.

On the other hand, more recently, Ben-David and Sattath~\cite{ben2023quantum} demonstrated the first instance of a one-time program for a certain randomized function. In particular, they construct a digital signature scheme where the (randomized) signing procedure can be compiled into a one-time program that the user can use to generate a single signature for a message of her choice.

At a first glance, this positive result might seem like a contradiction to the \cite{broadbent2013quantum} impossibility; however, that is not so, and the difference lies in which definition of one-time programs one achieves. Ben-David and Sattath~\cite{ben2023quantum} achieve a much weaker notion of one-time security than what was proven to be impossible by \cite{broadbent2013quantum}. On the one hand, \cite{broadbent2013quantum} demanded that an adversarial user should not be able to do \textit{anything} other than evaluate the one-time program on a single input, an ideal obfuscation-like guarantee~\cite{AC:Hada00,barak2001possibility}. On the other hand, the positive result of \cite{ben2023quantum} only claimed security in the sense that an adversarial user cannot output two different valid signatures. 

The starting point of this paper is that there is a vast gap between these two security notions. Within the gap, one could imagine several meaningful and useful intermediate notions of quantum one-time programs for classical randomized functions. For example, strengthening the \cite{ben2023quantum} definition, one could imagine requiring that the user should not even be able to verify the correctness of two input-output pairs (and not just be unable to produce them). Such a definition is a meaningful strengthening in the context of indistinguishability games (such as in pseudorandom functions) rather than unpredictability games (such as in digital signatures).
One could also imagine realizing one-time programs for a wider class of functions than the signature tokens of \cite{ben2023quantum}.
In this work, we revisit notions of quantum one-time programs and make progress on these questions. We propose a number of security notions of quantum one-time programs for randomized functions; give constructions both in the plain model and a classical oracle model; and examine the limits of these notions by showing negative results. 
We next describe our contributions in more detail.

\subsection{Our Results}
\label{sec:our_results}

\textbf{Definitions.} Our first contribution is definitional. We give correctness and security definitions for one-time programs of classical randomized circuits, which we call one-time sampling programs.  For correctness of a one-time sampling program for a classical $f$, any honest user can choose its own input $x$ and the evaluation gives $f(x;r)$ for some random $r$. For security, we lay out a list of different notions of security that we might desire from the one-time sampling program.

We make a few attempts on a simulation-based definition: the desired one-time sampling program functionality should be indistinguishable from an idealized functionality, where we are allowed to make a single quantum query to a ''randomized oracle'' for the target functionality $f$. 
However, these definitions run into several strong impossibility results, unless assuming hardware assumptions. \ifnum\qip=1 (See Figure 1 and Section 2.1 in the technical manuscript.) \else \fi

We therefore explore a possible weakening on the single quantum query access we allow in the ideal world. Inspired by the compressed oracle technique in \cite{zhandry19compressed} used to record queries for quantum random oracles, we re-define the single-query access oracle in the ideal world. Very informally, the randomized oracle would record queries so that it allows only one "informative" query to be made, but potentially many more dummy queries. We then give a new simulation-based definition based on this oracle we call single-\emph{effective}-query oracle\footnote{We refer to the traditional single query oracle which allows literally one query as single-\emph{physical}-query oracle.}, which allows us to bypass the above impossibility results. We additionally introduce a weaker but highly useful security definition called operational definition, in which the adversary cannot "evaluate" twice given a one-time program\footnote{Throughout the work, we use the terms "one-time programs" and "one-time sampling programs" interchangeably.}.

\paragraph{Constructions.} 
We give a very generic construction for one-time sampling programs in the classical oracle model\footnote{A classical oracle is a classical circuit that can be accessed coherently by quantum users in a black-box manner.}, inspired by the one-time signature scheme in \cite{ben2023quantum}.
We allow an honest user to choose its own input and then generate a random string by measuring a ``signature token'' state. The evaluation is on the user's input together with this freshly generated randomness. In particular, an honest evaluator does not need to run a classical circuit coherently on a quantum input, but only needs quantum memory and one measurement to evaluate the program in our construction. But an adversary will likely need the power of evaluating large-depth classical circuits coherently on quantum states. We prove its security under the single-effective-query simulation-based definition.

\begin{theorem}(Informal)
  \label{thm:construction_informal}
  There exists a secure one-time sampling program for all functions (with sufficiently long randomness) in the classical oracle model, with respect to our simulation-based, single-effective-query model one-time sampling security.
\end{theorem}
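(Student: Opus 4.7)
The plan is to instantiate the one-time sampling program by combining a quantum signature-token state (in the spirit of~\cite{ben2023quantum}) with a classical verification-and-evaluation oracle. Concretely, the generator samples a uniformly random subspace $A \subseteq \mathbb{F}_2^n$ of dimension $n/2$ and hands the user the token $\ket{A} = \frac{1}{\sqrt{|A|}} \sum_{v \in A} \ket{v}$. The classical oracle $\mathcal{O}$, on input $(x, v)$, first checks whether $v \in A$; if so it outputs $f(x; G(v))$, where $G$ expands $v$ into randomness of the required length (using, e.g., a pairwise-independent hash or PRG hard-coded into $\mathcal{O}$). An honest user with chosen input $x$ measures $\ket{A}$ in the computational basis to obtain a uniformly random $v \in A$, and then queries $\mathcal{O}(x, v)$ to obtain $f(x; G(v))$. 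Correctness is immediate once one verifies that $G(v)$ is appropriately distributed for random $v$.

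For security, I would construct a simulator $\mathsf{Sim}$ in the ideal world (equipped with one query to the single-effective-query oracle for $f$) that produces the adversary's view as follows. $\mathsf{Sim}$ samples its own random subspace $A'$, prepares $\ket{A'}$ as the token, and answers each of the adversary's oracle queries by emulating the subspace-membership predicate coherently, while maintaining a compressed-oracle-style register tracking the evolving correlations between the token and the oracle responses, in the style of~\cite{zhandry19compressed}. Whenever a query would project non-trivially onto $A'$---the unique route through which the input $x$ propagates into a function evaluation---$\mathsf{Sim}$ coherently extracts the associated input $x$ and forwards it as its one effective query to the ideal oracle, incorporating the reply into the oracle answer. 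All other queries (those that leave the token register essentially undisturbed) are answered locally from the compressed state without charging the effective-query budget.

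To prove indistinguishability, I would proceed through a hybrid sequence: (i) the real world with the honest oracle; (ii) an intermediate world where $\mathcal{O}$ is replaced by its compressed representation that only tracks queries overlapping $A$; and (iii) the ideal world in which $\mathsf{Sim}$ interacts with the single-effective-query oracle. The principal technical ingredient is the monogamy / direct-product hardness of subspace coset states established in~\cite{ben2023quantum}: no efficient quantum adversary holding $\ket{A}$ together with classical membership oracles for $A$ and $A^{\perp}$ can produce two distinct witnesses that both pass verification, except with negligible probability. Combined with a suitable unlearnability criterion on $f$, this property guarantees that at most one query can consume the effective-query budget, and that any attempt by the adversary to "reuse" the token must fail.

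The main obstacle I anticipate is quantifying precisely what constitutes an \emph{effective} query in a form that $\mathsf{Sim}$ can coherently detect, that corresponds to the compressed-oracle delta test used in~\cite{zhandry19compressed}, and that introduces only negligible statistical deviation from the real oracle. In particular, the analysis must carefully bound the error incurred when the real subspace-check projector inside $\mathcal{O}$ is replaced by $\mathsf{Sim}$'s compressed-oracle approximation on adversarial coherent queries, and must track how correlations between the token register and the compressed database evolve across the adversary's computation. Once this bookkeeping is in place, the direct-product hardness of subspace coset states translates cleanly into the statement that a single effective query in the ideal world suffices to reproduce the adversary's real-world view, yielding the claimed simulation-based security.
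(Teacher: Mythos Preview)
There is a structural gap in the construction you propose. Your oracle $\mathcal{O}(x,v)$ tests membership of $v$ in a \emph{single} subspace $A$, independently of the chosen input $x$. Once an adversary measures $\ket{A}$ to obtain any classical $v\in A$, it can query $\mathcal{O}(x,v)$ on as many distinct inputs $x$ as it likes, receiving $f(x;G(v))$ each time. Direct-product hardness does not block this attack: that property only rules out producing \emph{two distinct} valid vectors (one in $A$ and one in $A^\perp$, or two in the same half), not reusing a single vector across queries. Thus the real-world program leaks unboundedly many evaluations and cannot be simulated by any single-effective-query oracle for nontrivial $f$; your security argument cannot get off the ground.

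The paper closes this gap by binding the input bits to the membership test. It samples one subspace $A_i$ per input bit $i\in[m]$, and the oracle requires $v_i\in A_i$ when $x_i=0$ and $v_i\in A_i^\perp$ when $x_i=1$. Evaluating on $x$ therefore forces a specific basis measurement on each $\ket{A_i}$, and the resulting vector tuple fails verification for any $x'\neq x$ in at least one coordinate; direct-product hardness (\Cref{thm: direct product oracle}) then genuinely applies, since valid tuples for two inputs would require both a primal and a dual vector at some index. On the simulation side, the paper does not try to detect ``nontrivial projection onto $A$'' as you suggest. Instead the simulator maintains an explicit vector-cache register $\calV$ kept in sync with the SEQ oracle's compressed database for $H$, and the proof proceeds through hybrids that (i) implement the extractor $G$ as a compressed oracle, (ii) add an SEQ check reading $G$'s database (justified via direct-product hardness), (iii) install the cache and prove an invariant linking $\calV$ to $G$'s database, and (iv) swap the roles of $x$ and $v$ via a renaming isometry on the database registers (\Cref{claim:compressed-oracle-renaming}). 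Your high-level picture of compressed-oracle bookkeeping plus direct-product hardness is in the right spirit, but the crucial missing idea is that the subspace check itself must encode $x$.
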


We also instantiate the classical oracle using indistinguishability obfuscation, to get a compiler in the plain model, and prove its security for the class of pseudorandom functions under an operational security definition for cryptographic functionalities.

\begin{theorem}[Informal]
    Assuming post-quantum iO and LWE (or alternatively subexponentially secure iO and OWFs), there exists one-time sampling programs for constrained PRFs. 
\end{theorem}

\noindent
\textbf{Impossibilities.} To complement our constructions in the classical oracle model and the plain model, we also give two new negative results. The first negative result shows we cannot hope to one-time program \emph{all randomized}  functionalities in the plain model, even under the weakest possible operational security definitions. This impossibility is inspired by the work of \cite{ananth2020secure,alagic2020impossibility}. We tweak the idea to work with randomized circuits that can only be evaluated once. 

\begin{theorem}[Informal]
    Assuming LWE and quantum FHE, there exists a family of circuits with high min-entropy outputs but no secure one-time sampling programs exist for them. 
\end{theorem}

We also show that having  high min-entropy outputs is not a sufficient condition to have a secure one-time programs. Our second impossibility result shows that there exists a family of randomized functions with high min-entropy and is unlearnable under a single physical query. But it cannot be one-time programmed even in the classical
oracle model, even under the weakest possible operational security definitions\footnote{This function is securely one-time programmable under the single-effective-query simulation-based definition, but in a "meaningless" sense since both the simulator and the real-world adversary can fully learn the functionality. This demonstrates the separations and relationships between several of our definitions.}. 
\ifexabs \else 

We demonstrate the definitions presented in this work and their corresponding impossibilities and/or constructions in \Cref{fig:definitions_impossibilities_figure}. We recommend the readers to come back to this figure after going through the technical overview.
\fi


\jiahui{expanded the following applications more}
\paragraph{Applications.} Using the techniques we developed for one-time programs, we construct the following one-time cryptographic primitives: one-time signatures, one-time NIZK proofs and public-key quantum money.

\paragraph{Future Work.}
For future work, some possible directions and applications may be one-time MPC, one-time encryptions with advanced security. As discussed in a concurrent work \cite{gunn2024quantum}, another interesting application is one-time tokens for large language models. 

\ifexabs
\else
\begin{figure}[pt]
    \centering
\begin{center}
\begin{tabular}{||c | c | c ||} 
 \hline
 Definition & Impossibilities & Construction   \\ [0.5ex] 
 \hline\hline
 Single physical query, & Strong impossibility  & For single physical   \\ 
 quantum-output, simulation-based  & in oracle model  & query learnable   \\
  (\Cref{def:single-query-simulation-security}) & (\Cref{sec:tech_overview_definitional}) & (trivial) functions only \\
   & & \cite{broadbent2013quantum} \\
 \hline
 Single physical query, & Impossibility for generic & N/A  \\
classical-output, simulation-based  & construction in oracle model  &   \\
 (\Cref{def:single-query-classical-output-simulation-security}) & (\Cref{par:impossibility}, \Cref{sec:impossibility_oracle_model}) &   \\
 \hline
 Single effective query & Impossibility for generic  & For all functions  \\
  quantum-output, & constructions in plain  & (with proper randomness   \\
simulation-based (\Cref{def:simulation-style-otp-security}) &  model (\Cref{par:impossibility}, \Cref{sec:impossibility})  & length) in classical oracle model \\ 
 \hline
 Operational definitions & Impossibility for generic & For random functions \\  
  (\Cref{sec:operational_defs}) & constructions in plain   & in classical oracle model; \\  
 & model (\Cref{par:impossibility}, \Cref{sec:impossibility}) & For constrained PRF, NIZK, signatures \\
 & & in plain model \\
 \hline
\end{tabular}
\end{center}
\caption{Definitions with Impossibilities and Constructions.
The exact impossibility results and positive results for operational definitions depend on which definition of single-query model we work with. See \Cref{sec:unlearnability_defs}, \Cref{sec:operational_defs}, and \Cref{sec:impossibility} for details.}
\label{fig:definitions_impossibilities_figure}
\end{figure}
\fi

\else
\maketitle
\thispagestyle{empty}
\begin{abstract}
    One-time programs (Goldwasser, Kalai and Rothblum, \ifacm 
    \\ \else\fi CRYPTO 2008) are programs that can be run on any single input of a user's choice, but not on a second input. Classically, they are unachievable without trusted hardware, but the destructive nature of quantum measurements seems to provide an alternate path to constructing them. Unfortunately, Broadbent, Gutoski and Stebila (CRYPTO 2013) showed that even with quantum techniques, 
    a strong notion of one-time programs, similar to ideal obfuscation, cannot be achieved for any non-trivial quantum function. On the positive side, Ben-David and Sattath (Quantum, 2023) showed how to construct a quantum one-time program for a certain (probabilistic) digital signature scheme, under a weaker notion of one-time program security. There is a vast gap between achievable and provably impossible notions of one-time program security, and it is unclear what functionalities are one-time program\ifacm 
    -\else\fi mable and which are not, under the achievable notions of security.
     
     In this work, we present new, meaningful, yet achievable definitions of one-time program security for {\em probabilistic} classical functions. We show how to construct one time programs satisfying these definitions for all functions in the classical oracle model and for constrained pseudorandom functions in the plain model. Finally, we examine the limits of these notions: we show a class of functions which cannot be one-time programmed in the plain model, as well as a class of functions which appears to be highly random given a single query, but whose quantum one-time program  leaks the entire function even in the oracle model.
\end{abstract}
\newpage 
\pagenumbering{roman}
\setcounter{tocdepth}{2} 
\tableofcontents
\newpage 

\pagenumbering{arabic}
\section{\ifexabs Extended Abstract \else Introduction \fi}
\label{sec:intro}
\sloppy The notion of one-time programs, first proposed by Goldwasser, Kalai and Rothblum~\cite{goldwasser2008one}, allows us to compile a program into one that can be run on a single input of a user's choice, but only one. If realizable, one-time programs would have wide-ranging applications in software protection, digital rights management, electronic tokens and electronic cash. Unfortunately, one-time programs immediately run into a fundamental barrier: software can be copied multiple times at will, and therefore, if it can be run on a single input of a user's choice, it can also be run on as many inputs as desired.

To circumvent this barrier, \cite{goldwasser2008one} designed a one-time program with the assistance of a specialized stateful hardware device that they called a \textit{one-time memory}. A one-time memory is a device instantiated with two strings $(s_0, s_1)$; it takes as input a choice bit $b \in \{0,1\}$, outputs $s_b$ and then {\em self-destructs}. Using one-time memory devices, Goldwasser et al.~showed how to compile any program into a one-time program, assuming one-way functions exist. Goyal et al.~\cite{DBLP:conf/tcc/GoyalISVW10} extended these results by achieving unconditional security against malicious parties and using a weaker type of one-time memories that store single bits. Notwithstanding these developments, the security of these schemes rests on shaky grounds: security relies on how much one is willing to trust the impenetrability of these hardware devices in the hands of a motivated and resourceful adversary who may be willing to mount sophisticated side-channel attacks. Which brings up the motivating question of our paper: {\em is there any other way to construct one-time programs}?

One might hope that the quantum no-cloning theorem~\cite{WoottersZurek} might give us a solution. The no-cloning theorem states that quantum information cannot be generically copied, so if one can encode the given program into an appropriate quantum state, one might expect to circumvent the barrier. However, there is a simple impossibility result by Broadbent, Gutoski and Stebila~\cite{broadbent2013quantum} that rules out {quantum} one-time versions of any \textit{deterministic} program. Indeed, given a candidate quantum one-time program state $\ket{\psi_f}$, an adversary can evaluate $f$ many times on different inputs as follows: it first evaluates the program on some input $x$, measures the output register to obtain $f(x)$. Since $f$ is deterministic, the measurement does not disturb the state of the program at all (if the computation is perfectly correct). The adversary then uncomputes the first evaluation, restoring the initial program state. She can repeat this process on as many inputs as she wishes.

While this impossibility result rules out one-time programs for deterministic functionalities, it raises the following natural question:
\begin{center}
    \textit{Can we obtain one-time programs for randomized functionalities?}
\end{center}
More concretely, can we construct quantum one-time programs for randomized functions $f:\calX \times \calR \rightarrow \calY$ that lets the user choose the input $x \in \calX$ but not the randomness $r \in \calR$? One might hope that by forcing the evaluation procedure to utilize the {\em inherent randomness} of quantum information in sampling $r \leftarrow \calR$, measuring the output would collapse the program state in a way that does not allow further evaluations. However, once again, \cite{broadbent2013quantum} showed that it is impossible to compile \textit{any quantum channel} into a one-time program, unless it is (essentially) learnable with just one query. This is a much more general impossibility; in fact, it rules out non-trivial one-time programs for classical randomized functions. (We refer the reader to Section~\ref{sec:tech_overview} for a description of this impossibility result.) 

On the other hand, more recently, Ben-David and Sattath~\cite{ben2023quantum} demonstrated the first instance of a one-time program for a certain randomized function. In particular, they construct a digital signature scheme where the (randomized) signing procedure can be compiled into a one-time program that the user can use to generate a single signature for a message of her choice.

At a first glance, this positive result might seem like a contradiction to the \cite{broadbent2013quantum} impossibility; however, that is not so, and the difference lies in which definition of one-time programs one achieves. Ben-David and Sattath~\cite{ben2023quantum} achieve a much weaker notion of one-time security than what was proven to be impossible by \cite{broadbent2013quantum}. On the one hand, \cite{broadbent2013quantum} demanded that an adversarial user should not be able to do \textit{anything} other than evaluate the one-time program on a single input, an ideal obfuscation-like guarantee~\cite{AC:Hada00,barak2001possibility}. On the other hand, the positive result of \cite{ben2023quantum} only claimed security in the sense that an adversarial user cannot output two different valid signatures. 

The starting point of this paper is that there is a vast gap between these two security notions. Within the gap, one could imagine several meaningful and useful intermediate notions of quantum one-time programs for classical randomized functions. For example, strengthening the \cite{ben2023quantum} definition, one could imagine requiring that the user should not even be able to verify the correctness of two input-output pairs (and not just be unable to produce them). Such a definition is a meaningful strengthening in the context of indistinguishability games (such as in pseudorandom functions) rather than unpredictability games (such as in digital signatures).
One could also imagine realizing one-time programs for a wider class of functions than the signature tokens of \cite{ben2023quantum}.

In this work, we revisit notions of quantum one-time programs and make progress on these questions. We propose a number of security notions of quantum one-time programs for randomized functions; give constructions both in the plain model and a classical oracle model; and examine the limits of these notions by showing negative results. 
We next describe our contributions in more detail.

\subsection{Our Results}
\label{sec:our_results}

\paragraph{Definitions.} Our first contribution is definitional. We give correctness and security definitions for one-time programs of classical randomized circuits, which we call one-time sampling programs. 

For correctness of a one-time sampling program for a classical $f$, any honest user can choose its own input $x$ and the evaluation gives $f(x;r)$ for some random $r$. For security, we lay out a list of different notions of security that we might desire from the one-time sampling program.

We make a few attempts on a simulation-based definition: the desired one-time sampling program functionality should be indistinguishable from an idealized functionality, where we are allowed to make a single quantum query to a ''randomized oracle'' for the target functionality $f$. 
However, these definitions run into several strong impossibility results, unless assuming hardware assumptions. \ifnum\qip=1 (See our technical manuscript Figure 1 and discussions in technical overview Section 2.1). \else \fi

We therefore explore a possible weakening on the single quantum query access we allow in the ideal world. Inspired by the compressed oracle technique in \cite{zhandry19compressed} used to record queries for quantum random oracles, we re-define the single-query access oracle in the ideal world. Very informally, the randomized oracle would record queries so that it allows only one "informative" query to be made, but potentially many more dummy queries. We then give a new simulation-based definition based on this oracle we call single-\emph{effective}-query oracle\footnote{We refer to the traditional single query oracle which allows literally one query as single-\emph{physical}-query oracle.}, which allows us to bypass the above impossibility results.

We additionally introduce a weaker but highly useful security definition called operational definition, in which the adversary cannot "evaluate" twice given a one-time program\footnote{Throughout the work, we may use the terms "one-time programs" and "one-time sampling programs" interchangeably. But they both refer to one-time sampling programs unless otherwise specified.}.

\paragraph{Constructions.} 
We give a very generic construction for one-time sampling programs in the classical oracle model\footnote{A classical oracle is a classical circuit that can be accessed coherently by quantum users in a black-box manner.}, inspired by the one-time signature scheme in \cite{ben2023quantum}.
We allow an honest user to choose its own input and then generate a random string by measuring a ``signature token'' state. The evaluation is on the user's input together with this freshly generated randomness.

In particular, an honest evaluator does not need to run a classical circuit coherently on a quantum input, but only needs quantum memory and one measurement to evaluate the program in our construction. 
However, our construction has security even against adversaries who can evaluate large-depth quantum circuits coherently on quantum states. In other words, our construction can be honestly evaluated using a relatively weak quantum computer, but still has security against adversaries using a full quantum computer.

We prove its security under the single-effective-query \ifacm \\ \else\fi simulation-based definition.

\begin{theorem}(Informal)
  \label{thm:construction_informal}
  There exists a secure one-time sampling program for all functions in the classical oracle model, with respect to our simulation-based, single-effective-query model one-time sampling security.
\end{theorem}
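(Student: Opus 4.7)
The plan is to construct a one-time sampling program for an arbitrary randomized function $f:\mathcal{X}\times\mathcal{R}\to\mathcal{Y}$ by combining a \emph{quantum token} whose measurement in one of several bases yields the evaluator's randomness with a classical oracle that only releases an output once a freshly-measured token is presented back. Following the sketch in the preceding informal discussion, I would instantiate the token using subspace-based signature tokens in the spirit of \cite{ben2023quantum}: in the single-bit input case, provide the state $\ket{A}$ for a uniformly random dimension-$n/2$ subspace $A\subseteq\mathbb{F}_2^n$, and instantiate the classical oracle $O$ that hardcodes $A$ together with an internal random function $G$, accepts $(x,v)$ exactly when $x=0\wedge v\in A\setminus\{0\}$ or $x=1\wedge v\in A^\perp\setminus\{0\}$, and returns $f(x;G(v))$. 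The extension to general $\mathcal{X}$ is by running several such tokens in parallel and requiring a valid vector per coordinate of $x$.

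Correctness is immediate: an honest evaluator with input $x$ applies the basis change dictated by $x$ (identity for $x=0$, Hadamard for $x=1$), measures to obtain $v$ uniform over the nonzero vectors of the corresponding subspace, and a single oracle query returns $f(x;G(v))$; since $G$ behaves as a random oracle and the domain of $v$ has size $2^{n/2}-1$, the induced randomness $G(v)$ is statistically close to uniform in $\mathcal{R}$, which is the ``sufficiently long randomness'' condition in the statement.

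For security, I would construct a simulator $\mathsf{Sim}$ which, given single-effective-query access to a randomized oracle $\mathcal{F}$ for $f$, emulates the real execution as follows. $\mathsf{Sim}$ samples a token $\ket{A}$ on its own (this distribution is independent of $f$), emulates $G$ using Zhandry's compressed-oracle database $D$, and emulates $O$ by first coherently verifying the membership predicate using its knowledge of $A$ and then consulting $D$. The first time a query produces a freshly-assigned database slot together with a vector that passes the validity check, $\mathsf{Sim}$ forwards the corresponding input $x$ to $\mathcal{F}$, receives $y$, and plants $y$ into $D$ consistently with the compressed-oracle formalism; any later query is answered purely from $D$. By the direct-product hardness of $\ket{A}$, with overwhelming probability the adversary never supplies valid vectors for two distinct inputs, so at most one effective query to $\mathcal{F}$ is ever made and the simulation is faithful.

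The main obstacle is combining compressed-oracle bookkeeping with direct-product hardness coherently: the ``point of first extraction'' is not a classical event but a quantum condition defined through the database register. I would handle this by a hybrid argument: (i) purify both $G$ and the extra randomness inside $O$ into compressed-oracle registers, so the whole experiment becomes unitary up to the adversary's internal measurements; (ii) introduce a ``two-witness'' measurement on the final database that tests for valid vectors from both $A\setminus\{0\}$ and $A^\perp\setminus\{0\}$, and bound its acceptance probability by a reduction to direct-product hardness of subspace states \cite{ben2023quantum}; (iii) on the complementary event, show that the real-world view is statistically indistinguishable from $\mathsf{Sim}$'s output because the real execution can only reveal the $f$-value at a single input. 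The delicate part is step (ii): the measurement and subsequent witness extraction must be carried out without collapsing the database in a way that breaks the compressed-oracle invariants, which I expect to handle via the purified measurement / recording-query framework of \cite{zhandry19compressed}.
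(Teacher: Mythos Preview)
Your construction and correctness argument match the paper exactly, and your instinct to purify $G$ as a compressed oracle and to invoke direct-product hardness is right. The gap is in the simulator.

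You propose that $\mathsf{Sim}$ waits for ``the first time a query produces a freshly-assigned database slot'', forwards that single $x$ to the SEQ oracle, and thereafter answers only from the local database. This cannot work, because the single-effective-query model is \emph{not} a single-physical-query model: the SEQ oracle allows the adversary (and hence must allow the simulator) to query on $x_1$, uncompute so that the recorded database becomes empty again, and then query on a distinct $x_2$. In the real world this is exactly what the evaluator can do with the stateless oracle $\mathcal{O}_{f,G,A}$ and the unmeasured state $\ket{A}$. A simulator that has already burned its one physical call on $x_1$ has no way to produce $f(x_2;H(x_2))$ at that point, so your step (iii) --- ``the real execution can only reveal the $f$-value at a single input'' --- is false as stated: it can reveal different inputs at different times, just not two of them simultaneously.

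The paper's simulator solves this by never trying to identify a ``first extraction'' at all. It queries the SEQ oracle on \emph{every} incoming $(x,v,u)$ and additionally makes two ``probe'' queries on $x\oplus 1$ purely to read off (via the flip bit $b$) whether the SEQ oracle's internal database currently records a different input; depending on the answer it CNOTs $v$ into a local vector cache $\mathcal{V}$. The heart of the proof is then not a conditional argument but an \emph{isometry}: one shows (Claim~\ref{claim:SEQ-proof-hyb4-invariant} and Claim~\ref{claim:SEQ-hyb4-hyb5}) that after the direct-product-hardness step the oracle's internal state is always supported on $\ket{0,\emptyset}$ or $\ket{x,\{(v,r)\}}$ with $v$ uniquely determined by $x$, so the unitary swapping $\ket{x,\{(v,r)\}}\mapsto \ket{v,\{(x,r)\}}$ converts the compressed oracle for $G$ (indexed by $v$) into the compressed oracle for $H$ (indexed by $x$) that lives inside the SEQ oracle. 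Direct-product hardness is used exactly once, to justify adding a coherent ``SEQ check'' on $G$'s database ($\mathsf{Hyb}_2\to\mathsf{Hyb}_3$), not to condition the final distribution on a good event.

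In short: keep the construction and the compressed-oracle/direct-product toolbox, but replace the one-shot-extraction simulator by one that forwards every query to $f_{\$,1}$ and maintains a cache synchronized with the SEQ database; the indistinguishability argument should go through a sequence of hybrids culminating in the $v\leftrightarrow x$ swap, not through conditioning on ``at most one witness''.
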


We also instantiate the classical oracle using indistinguishability obfuscation, to get a compiler in the plain model, and prove its security for the class of pseudorandom functions under an operational security definition for cryptographic functionalities.

\begin{theorem}[Informal]
    Assuming post-quantum iO and LWE (or alternatively subexponentially secure iO and OWFs), there exists one-time sampling programs for constrained PRFs. 
\end{theorem}

\paragraph{Impossibilities.} To complement our constructions in the classical oracle model and the plain model, we also give two new negative results. The first negative result shows we cannot hope to one-time program \emph{all randomized}  functionalities in the plain model, even under the weakest possible operational security definitions. This impossibility is inspired by the work of \cite{ananth2020secure,alagic2020impossibility}. We tweak the idea to work with randomized circuits that can only be evaluated once. 

\begin{theorem}[Informal]
    Assuming LWE and quantum FHE, there exists a family of circuits with high min-entropy outputs but no secure one-time sampling programs exist for them. 
\end{theorem}

We also show that having  high min-entropy outputs is not a sufficient condition to have a secure one-time programs.
Our second impossibility result shows that there exists a family of randomized functions with high min-entropy and is unlearnable under a single physical query. But it cannot be one-time programmed even in the classical
oracle model, even under the weakest possible operational security definitions\footnote{This function is securely one-time programmable under the single-effective-query simulation-based definition, but in a "meaningless" sense since both the simulator and the real-world adversary can fully learn the functionality. This demonstrates the separations and relationships between several of our definitions.}. 
\ifexabs \else 

We demonstrate the definitions presented in this work and their corresponding impossibilities and/or constructions in \Cref{fig:definitions_impossibilities_figure}. We recommend the readers to come back to this figure after going through the technical overview.
\fi


\paragraph{Applications.} Using the techniques we developed for one-time programs, we construct the following one-time cryptographic primitives:
\begin{itemize}
    \item \textbf{One-Time Signatures.} We compile a wide class of existing signature schemes to add signature tokens, which allow a delegated party to sign exactly one message of their choice. Notably, our construction only changes the signing process  while leaving the verification almost unmodified, unlike \cite{ben2023quantum}'s construction. Thus, it enables signature tokens for \emph{existing} schemes with keys which are already distributed.
    \item \textbf{One-Time NIZK Proofs.} We show how a proving authority can delegate to a subsidiary the ability to non-interactively prove a single (true) statement in \ifacm \\ \else\fi
    zero-knowledge.
    \item \textbf{Public-Key Quantum Money.} We show that one-time programs satisfying a mild notion of security imply public-key quantum money.
\end{itemize}

\paragraph{Future Work}
For future work, some possible directions and applications may be one-time MPC, one-time encryptions with advanced security. As discussed in a concurrent work \cite{gunn2024quantum}, another interesting application is one-time tokens for large language models. 

\ifexabs
\else
\ifacm 
\begin{table*}
\caption{Definitions with Impossibilities and Constructions.
The exact impossibility results and positive results for operational definitions depend on which definition of single-query model we work with. 
\ifacm See the full version for details \else See \Cref{sec:unlearnability_defs}, \Cref{sec:operational_defs}, and \Cref{sec:impossibility} for details. \fi}
\centering
\begin{center}
\begin{tabular}{||c | c | c ||} 
 \hline
 Definition & Impossibilities & Construction   \\ [0.5ex] 
 \hline\hline
 Single physical query, & Strong impossibility  & For single physical   \\ 
 quantum-output, simulation-based  & in oracle model  & query learnable   \\
  \ifacm\else(\Cref{def:single-query-simulation-security})\fi & (\Cref{sec:tech_overview_definitional}) & (trivial) functions only \\
   & & \cite{broadbent2013quantum} \\
 \hline
 Single physical query, & Impossibility for generic & N/A  \\
classical-output, simulation-based  & construction in oracle model  &   \\
 \ifacm\else(\Cref{def:single-query-classical-output-simulation-security})\fi & (\Cref{par:impossibility}\ifacm\else, \Cref{sec:impossibility_oracle_model}\fi) &   \\
 \hline
 Single effective query & Impossibility for generic  & For all functions  \\
  quantum-output, & constructions in plain  & (with proper randomness   \\
simulation-based \ifacm\else(\Cref{def:simulation-style-otp-security})\fi &  model (\Cref{par:impossibility}\ifacm\else, \Cref{sec:impossibility}\fi)  & length) in classical oracle model \\ 
 \hline
 Operational definitions & Impossibility for generic & For random functions \\  
  \ifacm\else(\Cref{sec:operational_defs})\fi & constructions in plain   & in classical oracle model; \\  
 & model (\Cref{par:impossibility}\ifacm\else, \Cref{sec:impossibility}\fi) & For constrained PRF, NIZK, signatures \\
 & & in plain model \\
 \hline
\end{tabular}
\end{center}
\label{fig:definitions_impossibilities_figure}
\end{table*}
\else \begin{figure}[pt] 
\centering
\begin{center}
\begin{tabular}{||c | c | c ||} 
 \hline
 Definition & Impossibilities & Construction   \\ [0.5ex] 
 \hline\hline
 Single physical query, & Strong impossibility  & For single physical   \\ 
 quantum-output, simulation-based  & in oracle model  & query learnable   \\
  \ifacm\else(\Cref{def:single-query-simulation-security})\fi & (\Cref{sec:tech_overview_definitional}) & (trivial) functions only \\
   & & \cite{broadbent2013quantum} \\
 \hline
 Single physical query, & Impossibility for generic & N/A  \\
classical-output, simulation-based  & construction in oracle model  &   \\
 \ifacm\else(\Cref{def:single-query-classical-output-simulation-security})\fi & (\Cref{par:impossibility}\ifacm\else, \Cref{sec:impossibility_oracle_model}\fi) &   \\
 \hline
 Single effective query & Impossibility for generic  & For all functions  \\
  quantum-output, & constructions in plain  & (with proper randomness   \\
simulation-based \ifacm\else(\Cref{def:simulation-style-otp-security})\fi &  model (\Cref{par:impossibility}\ifacm\else, \Cref{sec:impossibility}\fi)  & length) in classical oracle model \\ 
 \hline
 Operational definitions & Impossibility for generic & For random functions \\  
  \ifacm\else(\Cref{sec:operational_defs})\fi & constructions in plain   & in classical oracle model; \\  
 & model (\Cref{par:impossibility}\ifacm\else, \Cref{sec:impossibility}\fi) & For constrained PRF, NIZK, signatures \\
 & & in plain model \\
 \hline
\end{tabular}
\end{center}
\caption{Definitions with Impossibilities and Constructions.
The exact impossibility results and positive results for operational definitions depend on which definition of single-query model we work with. 
\ifacm See the full version for details \else See \Cref{sec:unlearnability_defs}, \Cref{sec:operational_defs}, and \Cref{sec:impossibility} for details. \fi}
\label{fig:definitions_impossibilities_figure}
 \end{figure} 
 \fi
\fi


\section{Technical Overview}
\label{sec:tech_overview}

\subsection{Definitional Work}
\label{sec:tech_overview_definitional}
\paragraph{First Attempt at Defining One-Time Sampling Programs.}
As we discussed in the introduction, we cannot achieve one-time security for deterministic classical functions without hardware assumptions, even after encoding them into quantum states: by applying the gentle measurement lemma \cite{aaronson2004limitations,winter1999coding}, any adversary can repair the program state after a measurement on the program's output that gives a deterministic outcome.

We therefore resort to considering classical \emph{randomized} computation, which we model as the following procedure: the user (adversary) can pick its own input $x$; the program samples a random string $r$ for the user and outputs the evaluation $f(x,r)$ for the user, for some deterministic function
$f$. Note that it's essential that the user does \emph{not} get to pick their own randomness $r$ -- otherwise the evaluation is deterministic again and is subject to the above attack.

For correctness, we need to guarantee that after an honest evaluation, the user gets the outcome $f(x,r)$ for its own choice of $x$ and a uniformly random $r$. For security, the hope is that when the output of $f$ looks "random enough" (e.g. $f$ is a hash function or a pseudorandom function), the adversary should not be able to do more than evaluating the program honestly once.
We discuss several candidate definitions, the corresponding impossibility results as well as our solutions that circumvent the impossibilities.

Ideally, we would establish a \emph{simulation-based} security definition. This might require the existence of a QPT algorithm $\Sim$ which can produce the adversary's real-world view given a single query to $f$:
\[
    \OTP(f) \approx \Sim^{f_1}
\]
where $f_1$ denotes that $\Sim$ may query $f$ a single time. Indeed, such a definition is formalized, and subsequently ruled out, by Broadbent, Gutoski, and Stebila~\cite{broadbent2013quantum}.

This definition can be adapted to sampling programs by considering sampling $f$ from a function family $\calF$ at the start of the experiment. Additionally, to prevent a trivial definition which can be satisfied by $\Sim$ choosing its own $f$, the distinguisher gets access to the sampled $f$:
\[
    \{f, \OTP(f)\}_{f\gets \calF} \approx \{f, \Sim^{f_1}\}_{f\gets \calF}
\]
Unfortunately, this candidate definition suffers from impossibility results of its own.

\paragraph{Impossibility Results for the "Traditional" Simulation-Based Definitions.}
\label{par:discussions_sim_security_barrier}
As often applicable to simulation-based definitions, 
the first impossibility results from the separation between the simulated oracle world and a plain model where the one-time program can be accessed in a possibly non-black-box way. Our one-time program we give to $\calA$ consists of plain-model circuits (actual code) and quantum states, instead of oracle circuits.
Our simulator is given only oracle access to $f$.
Once $\calA$ has non-black-box access to the given one-time program, $\calA$ may be able to perform various attacks that the simulator cannot do: for instance, homomorphic evaluation on the one-time program. Then one can show that there exists a family of circuits, even though "unlearnable" when only given query access, can always be fully learned (i.e. the adversary can fully recover the functionality) once given non-black-box access.
As demonstrated in \cite{alagic2020impossibility, ananth2020secure}, this type of non-black-box attacks are applicable even if the obfuscation program is a quantum state.

One may wonder if we can simply use the above impossibility result above for quantum VBB directly as an impossibility result for the one-time program in the plain model. However, there are subtleties we need to deal with: the circuit in the quantum obfuscated program in the above results (\cite{alagic2020impossibility, ananth2020secure}) is deterministic, which will give a trivial impossibility result in the one-time program setting, irrelevant to non-black-box access. Moreover, the adversary receiving the one-time program is only able to evaluate once and the program may get destroyed. 
In our case, we need a sampling program with high min-entropy outputs where one can still apply a non-black-box attack with one single evaluation. We design a slightly contrived secret-key encryption circuit that leads to the impossibility result in the plain model -- we will elaborate its details in a later paragraph \ref{par:impossibility} and formally in \ifacm the full version.\else\Cref{sec:impossibility}\fi. For now, let us proceed with the discussion on the definitions.


\paragraph{Barriers for Stateless One-Time Programs.}
\label{par:stateless_barriers}
Even more problematic, the above definition encounters impossibilities even in the \emph{oracle} model, where we ensure that the program received by $\cA$ consists of oracle-aided circuits, preventing the non-black-box attack described earlier from applying.

This limitation primarily arises from the fact that $\Sim$ is given a stateful oracle, while $\cA$ is provided with a stateless one-time program (which includes a stateless oracle).
\jiahui{explain $\cO_f$}
To illustrate, consider the following $\cA$ and distinguisher $\cD$: $\cA$ receives a possibly oracle-aided program and simply passes the program itself to $\cD$. 
Let $\cO_f$ be a stateless oracle for $f$ that outputs $y = f(x,r)$ on any input $(x,r)$ and not restricted in the number of queries that it can answer.
$\cD$ is given arbitrary oracle access to $\cO_f$ so $\cD$ can perform the following attack using gentle measurement \ifacm\else(\Cref{lem:gentle_measure})\fi and un-computation:
\begin{enumerate}
    \item Evaluate the program given by $\cA$ on a $\ket{x_1}_\inp \ket{0}_\out \ket{0}_\chk$ where the input register $\inp$ contains $x_1, r_1$, some arbitrary 
$x_1$ of $\cD$'s choice and some randomness $r_1$ sampled by the program. $\out$ is an output register and $\chk$ is an additional register in $\cD$'s memory.

 \item Get outcome $\ket{x_1}_\inp \ket{r_1, y_1 = f(x_1, r_1)}_\out \ket{0}_\chk$.

\item But $\cD$ does not proceed to measure the register $\out$. Instead it performs a gentle measurement by checking if the $y_1$ value in $\out$ is equal to the correct $y_1 = f(x_1, r_1)$, writing the outcome in register $\chk$. It can do so because it has access to $\cO_f$. Then it measures the bit in register $\chk$.

\item Since the above measurement gives outcome $1$ with probability 1, $\cD$ can uncompute the above results and make sure that the program state is undisturbed. Then, it
can evaluate the program again on some different $x_2$ of its choice.

\item But when given a simulator's output, $\Sim$ cannot produce a program that contains more information than what's given in a single quantum oracle access to $f$. Therefore, unless $\Sim$ can "learn" $f$ in a single quantum query and produce a program that performs very closely to a real program on most inputs, $\cD$ may easily detect the difference between two worlds.
\end{enumerate}

The above argument is formalized in \cite{broadbent2013quantum}, which rules out stateless one-time programs for quantum channels even in the oracle model unless the function can be learned in a single query (for example, a constant function). The above simulation-based definition discussed can be viewed as a subcase of \cite{broadbent2013quantum}'s definition. 
\jiahui{fact check this}
Only in this trivial case, $\Sim$ can fully recover the functionality of $f$ and make up a program that looks like a real world program, since both $\Sim$ and $\calA$ can learn everything about $f$.

To get around the above oracle-model attack, we first consider the following weakening: what if we limit both the adversary and simulator to output only \emph{classical information}?
Intuitively, this requires both $\cA$ and $\Sim$ to dequantize and "compress" what they can learn from the program/oracle into a piece of classical information, so that $\cA$ cannot output the entire functionality unless it has "learned" the classical description of the functionality. However, we will show in \ref{par:impossibility_oracle} that there even exists a function with high min-entropy output  such that its classical description can be "learned" given any stateless, oracle-based one-time sampling program, but is unlearnable given only a single query. Thus we will need to explore other avenues

These impossibility results appear to stem more from a definitional limitation than a fundamental obstacle. The adversary is always given a stateless program, but the oracle given to the simulator is by definition strongly stateful: it shuts down after answering any single query (we call such an oracle single \emph{physical} query oracle). Therefore, $\Sim$ is more restricted than the $\cA$ in real world.  

\paragraph{The Single-Effective-Query Model.}
To avoid the above issue, we consider a weakening on the restriction of the "single query" which $\Sim$ can make. 
In the traditional one-time security, $\Sim$ can merely make one physical query, but $\cA$ and $\cD$ can actually make many queries, as long as the measurements on those queries they make are "gentle" (for example, a query where the outcome $f(x,r)$ is unmeasured and later uncomputed) or repeated (for example, two classical queries on the same $(x,r)$). 

In the single-\emph{effective}-query model, we relax $\Sim$'s single-physical-query restriction to also allow multiple queries, as long as they are "gentle" or repeated. 
We will define a stateful oracle $f_{\SEQ}$ which tracks at all times which evaluations $f(x;r)$ the adversary has knowledge about. If $f_{\SEQ}$ receives a query to some $x'$ while it knows the adversary has knowledge about an evaluation on $x\neq x'$, it will refuse to answer.
Using this oracle, we may define single-effective query \ifacm \\ \else\fi
simulation-security in the same manner as our previous attempt by giving the simulator access to the single-effective-query oracle $f_{\SEQ}$ instead of the single-physical-query oracle $f_1$: 
\[
    \OTP(f) \approx \Sim^{f_{\SEQ}}
\]

The reader may be concerned that since $f$ is not sampled from any distribution here, this definition is subject to the previously discussed impossibility for deterministic functions. As we will see shortly, the randomization of $f$ is directly baked in to the definition of $f_{\SEQ}$.



\paragraph{Defining the Single-Effective-Query Oracle.}
\label{par:compressed_inspired_def}
To define the single-effective query oracle $f_{\SEQ}$, we use techniques from compressed random oracles, which were introduced by Zhandry to analyze security in the quantum-accessible random oracle model (QROM)~\cite{zhandry19compressed}.
 Very roughly speaking, a compressed oracle gives an efficient method of simulating quantum query access to a random oracle on the fly by lazily sampling responses in superposition which can be "forgotten" as necessary.

The first main idea in \cite{zhandry19compressed} compressed oracle technique is to take a purified view on the joint view of the adversary's query register and the oracle: evaluating a random function in the adversary's view is equivalent to evaluating on some function $H$ from a uniform superposition over all functions (of corresponding input and output length) $\sum_{H} \ket{H}_\cH$. 
When an adversary makes a query of the form  $\sum_{x,u} \alpha_{x,u}\ket{x,u}$, the oracle applies the operation
\begin{align*}
    & \sum_{x,u} \alpha_{x,u}\ket{x,u} \otimes \sum_H\ket{H} 
    \Rightarrow 
    \sum_{x,u, H} \alpha_{x,u}\ket{x,u+H(x)} \otimes  \ket{H}
\end{align*}
Zhandry's second contribution is a method to "compress" the exponentially large superposition into a small database. It will be instructive to first consider the uncompressed version, so we defer details about the compressed version to later.

To define $f_{\SEQ}$, we allow it to maintain a purified version of $H$, which it represents as a truth table. In other words, it maintains a register $\calH = (\calH_x)_{x\in \calX}$ which is initialized to 
\[
    \ket{H_\emptyset}_{\calH} \coloneqq \sum_{H:\calX \rightarrow \calR} \bigotimes_{x\in X} \ket{H(x)}_{\calH_x} 
    = \bigotimes_{x\in X} \sum_{r \in \calR} \ket{r}_{\calH_x}  
\]
When $f_{\SEQ}$ decides to answer a query $\ket{x, u}$, it computes $\ket{x, u\oplus f(x;H(x))}$ by reading register $\calH_x$ in the computational basis. The first query made results in the joint state
\ifacm
\begin{align*}
    \ket{x^*,u}_\cQ \otimes \ket{H_\emptyset} 
    \overset{U_{f_{\$}}}{\longrightarrow}
    &\sum_{r} \ket{x^*,u \oplus f(x^*, r)}_\cQ \otimes \ket{r}_{\cH_{ x^*}} 
    \\&\otimes \sum_{H:H(x^*) = r}\bigotimes_{x \in \cX, x\neq x^*} \ket{H(x)}_{\cH_x}
\end{align*}
\else
\[
    \ket{x^*,u}_\cQ \otimes \ket{H_\emptyset} 
    \overset{U_{f_{\$}}}{\longrightarrow}
    \sum_{r} \ket{x^*,u \oplus f(x^*, r)}_\cQ \otimes \ket{r}_{\cH_{ x^*}} \otimes \sum_{H:H(x^*) = r}\bigotimes_{x \in \cX, x\neq x^*} \ket{H(x)}_{\cH_x}
\]
\fi

If $f(x;r)$ were to uniquely determine $r$, then measuring $f(x^*, H(x^*))$ would fully collapse register $\calH_{x^*}$ while leaving the others untouched. Afterwards, the single-effective-query oracle $f_{\SEQ}$ could detect which input was evaluated and measured by comparing each register $\calH_x$ to the uniform superposition $\sum_{r\in \calR}\ket{r}$. It could then use this information to decide whether to answer further queries. 
On the other hand, if there were many collisions $f(x^*;r^*) = f(x^*;r^*_2)$ or the adversary erased its knowledge of $f(x^*;r^*)$ by querying on the same register again, then $\calH_{x^*}$ might not be fully collapsed. 
In this case, it is actually beneficial that $f_{\SEQ}$ does not completely consider $x^*$ to have been queried, since this represents a "gentle" query which would allow the adversary to continue evaluating a real one-time program.

When we switch to the compressed version of $H$, collapsing $\calH_{x^*}$ to $\ket{r^*}$ corresponds to recording $(x^*, r^*)$ in a database $D$. Since the adversary's queries may be in superposition, the database register $\calH$ may become entangled with the adversary. In other words, the general state of the system is $\sum_{a,D}\alpha_{a,D}\ket{a}_{\calA} \otimes\ket{D}_{\calH}$ where $\calA$ belongs to the adversary and $\calH$ belongs to $f_{\SEQ}$.
Using this view, $f_{\SEQ}$ may directly read the currently recorded query off of its database register to decide whether to answer a new query. The entanglement between the adversary's register and the database register enables $f_{\SEQ}$ to answer or reject new queries $x$ precisely when the adversary does not have another outstanding query $x'$.
As a result, the database register will always contain databases with at most one entry.

\paragraph{Functions for which SEQ Access is Meaningful.} 
The single-effective-query simulation definition captures all functions, including those that are trivially one-time programmable. Similarly to the notion of ideal or virtual black-box obfuscation, any "unlearnability" properties depend on the interaction of the function with the obfuscation definition. 
For example, deterministic functions can be fully learned given access to an SEQ oracle, since measuring evaluations will never restrict further queries.

Intuitively, a function must satisfy two loose properties in order to have any notion of unlearnability with SEQ access:
\begin{itemize}
    \item \textbf{High Randomness.} To restrict further queries, learning (via measuring) $f(x;r)$ must collapse the SEQ oracle's internal state, causing $(x,r)$ to be recorded in the purified oracle $H$.
    \item \textbf{Unforgeability.} To have any hope that $f$ has properties that cannot be learned with SEQ access, $f$ cannot be learnable given, say, a single evaluation $f(x;r)$.
\end{itemize}
As an example, truly random functions exemplify both of these properties. A truly random function has maximal randomness on every input and $f(x;r)$ is independent of $f(x';r')$. We formally explore SEQ access to truly random functions and a few other function families in 
\ifacm the full version. \else \Cref{sec:seq-meaningful}.\fi

\subsection{Positive Results}
\label{par:construction_oracle_intro}

\paragraph{Construction with Classical Oracles.}
We now give an overview on our construction using classical oracles . We show security with respect to the simulation-based definition where the simulator queries a single-effective-query oracle.

Our construction in the oracle model is inspired by the use of the "hidden subspace states" in the literatures of quantum money \cite{aaronson2012quantum}, signature tokens and quantum copy protection \cite{ben2023quantum,coladangelo2021hidden}. 
A subspace state $\ket{A}$ is a uniform superposition over all vectors in some randomly chosen, secret subspace $A \subset \F_2^\lambda$. Specifically, $\ket{A} \propto \sum_{v \in A} \ket{v}$, where dimension of $A$ is $\lambda/2$ and $\lambda$ is the security parameter. These parameters ensure that $A$ has exponentially many elements but is still exponentially small compared to the entire space. 

At a high-level, our one-time scheme requires an authorized user to query an oracle on subspace vectors of $A$ or its dual subspace $A^\perp$. Let $f$ be the function we want to one-time program. Consider the simple case where $x$ is a single bit in $\{0,1\}$. Let $G$ be a PRG or extractor (which can be modeled as a random oracle since we already work in the oracle model).
The one-time program consists of a copy of the subspace state $\ket{A}$ along with access to the following classical oracle:
\begin{align*}
    \cO(x, v) &= \begin{cases}
                            f(x, G(v)) & \text{ if } x = 0, v \in A \\
                            f(x, G(v)) & \text{ if } x = 1, v \in A^\perp \\
                            \bot & \text{ otherwise }
                    \end{cases}.
\end{align*}

To evaluate on input $x$, an honest user will measure the state $\ket{A}$ to obtain a uniform random vector in subspace $A$, if $x = 0$; or apply a binary QFT to $\ket{A}$ and measure to obtain a uniform random vector in the dual subspace $A^\perp$, if $x=1$. It then inputs $(x,v)$ into the oracle $\cO$ and will obtain the evaluation $\cO(x, G(v))$ where the randomness $G(v)$ is uniformly random after putting the subspace vector into the random oracle.

For security, we leverage an "unclonability" property of the state $\ket{A}$ (\cite{ben2023quantum,bartusek2023obfuscation}) called "direct-product hardness": an adversary, given one copy of $\ket{A}$, polynomially bounded in query to the above oracle should not be able to produce two vectors  $v, v'$ which satisfy either of the following: (1) $v \in A, v' \in A^\perp$; (2) $v, v' \in A$ or $v,v' \in A^\perp$ but $v \neq v'$.

First, we consider a simpler scenario: this evaluation is destructive to the subspace state if the user has obtained the outcome $f(x, G(v))$ and the function $f$ behaves random enough so that measuring the output $f(x, G(v))$ is (computationally) equivalent to having measured the subspace state $v$. Now, it will be hard for the user to make a second query into the oracle $\cO$ on a different input $(x', v')$ so that either $x \neq x'$ or $v \neq v'$ because it would lead to breaking the direct-product hardness property mentioned above. 

More generally, however, $f$ may be only somewhat random, or the adversary may perform superposition queries. In these cases, $\ket{A}$ will be only partially collapsed in the real program, potentially allowing further queries. This partial collapse also corresponds to a partial collapse of the single-effective-query oracle's database register, similarly restricting further queries.


To establish security, the main gap that the simulator needs to bridge is the usage of a subspace state $\ket{A}$ versus a purified random oracle to control query access. Additionally, the real world evaluates $f(x, G(v))$, where $v$ is a subspace vector corresponding to $x$, while the ideal world evaluates $f(x, H(x))$ directly.\footnote{Although $H$ and $G$ are both random oracles, we differentiate them to emphasize that they act on different domains.}
If we were to purify $G$ as a compressed oracle, then $\ket{A}$ collapsing corresponds to $G$ recording some subspace vector $v$ in its database. 
At a high level, this allows the simulator to bridge the aforementioned gap by using a careful caching routine to ensure that $\ket{A}$ collapses/$v$ is recorded in the cache if and only if $x$ is recorded in $H$. 
Using the direct product hardness property, we can be confident that at most one $v$ and corresponding $x$ are recorded in the simulator. Thus, to show that the simulator is indistinguishable from the real one-time program, we can simply swap the role of $x$ and $v$ in the oracle, changing between $G$ and $H$. We provide more details in \ifacm the full version.\else\Cref{sec:seq-construction}.\fi

\paragraph{Operational Security for Cryptographic Functionalities.}
\label{par:operational_secure_intro}
 While the classical oracle construction is clean, implementing the oracle itself with concrete code will bump into the plain model versus black-box obfuscation barrier again. 
We cannot hope to make one-time sampling programs for all functions (not even for all high min-entropy output functions) in the plain model, due to the counter-example we provide in the first paragraph of \ref{par:impossibility} we will soon come to.
Moreover, the simulation definition we achieve above captures functions all the way from those that can be meaningfully one-time programmed, like a random function, to those "meaningless" one-time programs of for example a constant function, which can be learned in a single query.  

One may wonder, what are some meaningful functionalities we can implement a one-time program with and what are some possible security notions we can realize for them in the plain model?

We consider a series of relaxed security notions we call "operational one-time security definition", which can be implied by the simulation-based definition. The intuition of these security definitions is to characterize "no QPT adversary can evaluate the program twice".

Consider a cryptographic functionality $f$, we define the security game as follows:
the QPT adversary $\cA$ receives a one-time program for $f$ and will output its own choice of two input-randomness pairs $(x_1,r_1), (x_2, r_2)$. For each $(x_i,r_i)$, $\cA$ needs to answer some challenges from the challenger with respect to the cryptographic functionality $f$. The security guarantees that $\cA$'s probability of winning both challenges for $i \in \{1,2\}$ is upper bounded by its advantage in a cryptographic security game of winning a single such challenge, but without having acess to the one-time program.

For some cryptographic functionalities, this cryptographic challenge is simply to compute $y_i = f(x_i, r_i)$. A good example is a one-time signature scheme: $\cA$ produces two messages of its own choice, but it should not be able to produce valid signatures for both of them with non-negligible probability.

More generically, $\cA$ produces two message-randomness pairs and gives them to the challenger. Then the challenger prepares some challenges independently for $i \in \{1,2\}$ and $\cA$ has to provide answers so that $\cA$'s inputs, the challenges and final answer need to satisfy a predicate.  In the above signature example, the predicate is simply verifying if the signature is a valid one for the message. Another slightly more contrived predicate is answering challenges for a pseudorandomness game of a PRF: receiving an $\OTP$ for a PRF, no QPT adversary can produce two input-randomness pairs $(x_1, r_1), (x_2, r_2)$ of its own choice such that it can win the pseudorandomness game with respect to both of these inputs. That is, the challenger will flip two independent uniform bits for each $i \in \{1,2\}$ to decide whether to let $y_i = \prf(x_i,r_i)$ or let $y_i$ be real random. The security says that $\cA$'s overall advantage should not be noticeably larger than $1/2$; $\cA$ can always evaluate once and get to answer one of the challenges with probability 1, but for the other challenge it can only make a random guess.

\paragraph{One-Time Program for PRFs in the Plain Model.}
\label{par:construction_plain_model_intro}
 However, one cannot hope to achieve a one-time program construction that is secure for all functions in the \emph{plain model}, even if we restrict ourselves to the weakest operational definition and high min-entropy output functions. As aforementioned, we give the counter-example of such a circuit (assuming some mild computational assumptions) in the next paragraph \ref{par:impossibility}.
 
We therefore turn to considering constructions for specific functionalities in the plain model and a give a secure construction for a family of PRFs, with respect to the aforentioned security guarantee: no QPT adversary can produce two input-randomness pairs $(x_1, r_1), (x_2, r_2)$ of its own choice such that it can win the pseudorandomness game with respect to both of these inputs.

To replace the oracle in the above construction, we use $\iO$ (which stands for indistinguishability obfuscation, \cite{barak2001possibility}) which guarantees that the obfuscations of two functionally-equivalent circuits should be indistinguishable.

The construction in the plain model bears similarities to the one in the oracle model.
Let $\prf_k(\cdot)$ be the PRF as our major functionality. In our actual construction, we use another $\prf$ $G$ on the subspace vector 
$v$ to extract randomness, but we omit it here for clarity of presentation. 
We give out a subspace state $\ket{A}$ and an $\iO$ of a program.

The following program is a simplification of the actual program we put into $\iO$:
\begin{align*}
    \prf_{k,A}(x, v) &= \begin{cases}
                            \prf_{k}(x, v) & \text{ if } x = 0, v \in A \\
                            \prf_k(x, v) & \text{ if } x = 1, v \in A^\perp \\
                            \bot & \text{ otherwise }
                    \end{cases}.
\end{align*}
To show security, we utilize a constrained PRF (\cite{boneh2017constrained}):  a constrained PRF key $k_C$ constrained to a circuit $C$ will allow us to evaluate on inputs $x$ that satisfy $C(x) = 1$ and output $\bot$ on the inputs that don't satisfy.  The constrained pseudorandomness security guarantees that the adversary should not be able distinguish between $(x, \prf_k(x))$ and $(x, y \gets \text{random})$, where $\cA$ can choose $x$ such that $C(x) = 0$ and  $k$ is the un-constrained key. 

In our proof, we can use hybrid arguments to show that we can  use the adversary to  violate the constrained pseudorandomness security. 
Let us denote $A^0 = A, A^1 = A^\perp$.
First we invoke the security of $\iO$: we change the above program to one using a constrained key $k_C$ to evaluate $\prf$, where $C(x,v) = 1$  if and only if $v \in A^x$. The circuit has the equivalent functionality as the original one. Next, we invoke a
computational version of subspace state direct-product hardness property: we change the one-time security game to rejecting all adversary's chosen inputs $(x_1, v_1), (x_2, v_2)$ such that both $v_1 \in A^{x_1}$ and $v_2 \in A^{x_2}$ hold. Such a rejection only happens with negligible probability due to the direct-product hardness property. Finally, the adversary must be able to produce some $(x,v)$ where $C(x,v) = 0$ and distinguish $\prf_k(x,v)$ from a random value. We therefore use it to break the constrained pseudorandomness security.

\paragraph{More Applications and Implications: Generic Signature Tokens, NIZK, Quantum Money}
We show a way to lift signature schemes that satisfy a property called blind unforgeability to possess one-time security. Unlike the \cite{ben2023quantum} signature token scheme where the verification key has to be updated once we delegate a one-time signing token to some delegatee, our signature token scheme can use an existing public verification procedure.

Apart from the above one-time PRF in the plain model, we also instantiate one-time NIZK from iO and LWE in the plain model, using a similar construction as in the one-time PRF and the NIZK from iO in \cite{sahai2014use}. The proof requires more careful handling because  the NIZK proof also has publicly-verifiable property.

Finally, we show that one-time program for publicly verifiable programs (e.g. signature, NIZK) implies public-key quantum money. Despite the destructible nature of the one-time program, we can design a public verification procedure that gently tests a program's capability of computing a function and use a one-time program token state as the banknote.

\subsection{Impossibility Results}
\label{par:impossibility}
\paragraph{Impossibility Result in the Plain Model.}
 In this paragraph, we come back to the discussions about impossibility results in the paragraph "Barriers for stateless one-time programs"(\ref{par:discussions_sim_security_barrier}).
We describe the high level idea on constructing the program used to show an impossibility result in the plain model, inspired by the approach in \cite{ananth2020secure,alagic2020impossibility}. This impossibility holds \emph{even for the weakest definition} we consider: $\cA$ is not able to produce two input-output pairs after getting one copy of the $\OTP$.

We have provided a table in \Cref{fig:definitions_impossibilities_figure} in \Cref{sec:our_results}  that demonstrates the several security definitions we discuss in this work, their corresponding impossibility results and positive results. Please refer to the table so that the relationships between the several definitions proposed in this work are clearer.


We design an encryption circuit $C$ with a random "hidden point" such that when having non-black-box access to the circuit, one can "extract" this hidden point using a quantum fully homomorphic encryption on the one-time program. However, when having only oracle access, one cannot find this hidden point within any polynomial queries.

Let $\SKE$ be a secret key encryption scheme. Let $a,b$ be two randomly chosen values in $\{0,1\}^n$.
\begin{description}

    \item Input: $(x \in \{0,1\}^n, r \in \{0,1\}^\ell)$

    \item Hardcoded:  $(a, b, \Enc.\sk)$

    \item \quad if $x = a$:
       output $\SKE.\Enc(\Enc.\sk, b; r)$

    \item \quad  else: output $\SKE.\Enc(\Enc.\sk, x; r)$. 
\end{description}
The above circuit also comes with some classical auxiliary information, given directly to $\cA$ in the real world (and $\Sim$ in the simulated world, apart from giving oracle access).
Let $\qhe$ be a quantum fully homomorphic encryption with semantic security. We also need a compute-and-compare obfuscation program (which can be built from LWE). A compute-and-compare obfuscation program is obfuscation of a circuit $\CC[f,m,y]$ that does the following: on input $x$, checks if $f(x) = y$; if so, output secret message $m$, else output $\bot$. The obfuscation security guarantees that when $y$ has a high entropy in the view of the adversary, the program is indistinguishable from a dummy program always outputing $\bot$ (a distributional generalization of a point function obfuscation).

In the auxiliary information, we give out $\ct_a = \qhe.\Enc(a)$ and the encryption key, evaluation key $\qhe.\pk$ for the QFHE scheme. We also give the compute-and-compare obfuscation for the following program:
    \begin{align*}
      \CC[f, (\SKE.\sk, \qhe.\sk), b] = \begin{cases}
                            (\SKE.\sk, \qhe.\sk) & \text{ if }    f(x) = b \\
                            \bot & \text{ otherwise }
                        \end{cases}
    \end{align*}
where $f(x) = \SKE.\Dec(\SKE.\sk, \qhe.\Dec(\qhe.\sk, x))$.
Any adversary with non-black-box access to the program can homomorphically encrypt the program and then evaluate to get a QFHE encryption of a ciphertext $\SKE.\Enc(b)$ (doubly encrypted by $\SKE$ and then $\qhe$ ). This ciphertext, once put into the above $\CC$ obfuscation program, will give out all the secrets we need to recover the functionality of the circuit $C$.

However, when only given oracle access, we can invoke the obfuscation security of the compute-and-compare program and the semantic security of QFHE, finally removing the information of $b$ completely from the oracle, rendering the functionality as a regular SKE encryption sheme (which behaves like a random oracle in the classical oracle model). The actual proof is more intricate, involving a combination of hybrid arguments, quantum query lower bounds and induction, since the secret information on $a,b$ is scattered around in the auxiliary input. We direct readers to \ifacm the full version \else \Cref{sec:impossibility} \fi for details. 

\paragraph{Impossibility Result in the Oracle Model.}
\label{par:impossibility_oracle}
In this section, we show a circuit family with high-entropy output which cannot be one-time programmed in the oracle model, with respect to the classical-output simulator definition discussed in \Cref{sec:tech_overview_definitional}. It can be one-time programmed with respect to our single-effective-query simulator definition, but only in a "meaningless" sense, since both the simulator and the real-world adversary can fully learn the functionality.

In short, it demonstrates several separations: (1) It separates a single-physical-query unlearnable function from single-effective-query unlearnable: one can fully recover the functionality once having a single effective query oracle, but one cannot output two input-output pairs when having only one physical query.
(2) It is a single-physical-query unlearnable function that cannot be securely one-time programmed with respect to the operational definition where we require the adversary to output two different correct evaluations. (3) Having high min-entropy output distributions is not sufficient to prevent the adversary from evaluating twice (i.e. "meaningfully" one-time programmed). We would also like to make a note that this result does not contradict our result on the single-effective-query unlearnable families of functions \ifacm mentioned in the full version \else in \Cref{sec:exampls_seq_unlearnable}\fi because it is not truly random or pairwise-independent.




Now consider the following circuit. An adversary receives an oracle-based one-time program and a simulator gets only one (physical) query to the functionality's oracle. Both are required to output a piece of classical information to a distinguisher.

Let $a$ be a uniformly random string in $\{0,1\}^n$. Let $k$ be a random PRF key. The following $\prf_k(\cdot)$ maps $\{0,1\}^{2n} \to \{0,1\}^{2n}$. Let our circuit be the following:

\begin{align}
\label{eqn:partially-deterministic_intro}
    f_{a, k}(x; r) = 
    \begin{dcases}
        (a, \mathsf{PRF}_k(0 \| r))&\text{if } x=0,\\
        (k, \mathsf{PRF}_k(a \| r))&\text{if } x=a,\\
        (0, \mathsf{PRF}_k(x \| r))&\text{otherwise.}
    \end{dcases}
\end{align}
When $\cA$ is given an actual program, even using an oracle-aided circuit, $\cA$ can do the following: evaluate the program on input $0$ (and some randomness $r$ it cannot control) to get output $(a, \prf_k(0\Vert r))$; but instead of measuring the entire output, only measure the first $n$-bits to get $a$ with probability 1; evaluate the program again on $a$ and some $r'$ to obtain $(k, \prf_k(a \vert r'))$. Now it can reconstruct the classical description of the entire circuit using $a$ and $k$.

But when given only a single \emph{physical} query, we can remove the information of $k$ using \cite{BBBV97} argument since for a random $k$, no adversary should have non-negligible query weight on $k$ with just one single query. 
\ifexabs
\else
\subsection{Concurrent Work and Related Works}
\paragraph{Concurrent Work.}
A concurrent and independent work \cite{gunn2024quantum} presents a construction of quantum one-time programs for randomized classical functions. While our main constructions are very similar, there are some differences between our works which we outline next. (1) We undertake a more comprehensive study of security definitions for quantum one-time programs and come up with both simulation definitions in the oracle model as well as operational definitions in the oracle and plain model. \cite{gunn2024quantum} focuses on the oracle model.  
(2) We show simulation-based security for our construction in the oracle model, which is stronger than the security definition used in \cite{gunn2024quantum}. 
(3) We show an impossibility result for generic one-time randomized programs in the plain model; (4) We give constructions for PRFs and NIZKs in the plain model whereas all constructions in \cite{gunn2024quantum} are in the oracle model; (5) We also show a generic way to lift a plain signature scheme satisfying a security notion called blind unforgeability to one-time signature tokens. (5) On the other hand, the oracle construction in \cite{gunn2024quantum} is more generic by using any signature token state as the quantum part of the one-time program, whereas we use the subspace state (namely, the signature token state in \cite{ben2023quantum}). 

\paragraph{One-Time Programs.} One-time programs were first proposed by Goldwasser, Kalai, and Rothblum~\cite{C:GolKalRot08} and further studied in a number of followup works~\cite{DBLP:conf/tcc/GoyalISVW10,TCC:GoyGoy17,EC:ACEGMPRT22}. Although these are impossible in the plain model, a number of alternative models have been proposed to enable them, ranging from hardware assumptions to protein sequencing. Broadbent, Gutoski, and Stebila~\cite{broadbent2013quantum} asked the question of whether quantum mechanics can act as a stand-in for hardware assumptions. However, they found that quantum one-time programs are only possible for ``trivial'' functions, which can be learned in a single query, and are generally impossible for deterministic classical functions. To get around this impossibility, they use the much stronger one-time memory model, which allows them to achieve one-time programs for general quantum channels.
A pair of later works circumvented the impossibility by allowing the program to output an incorrect answer with some probability~\cite{NATURE:RKBFW2018advantage,NATURE:RKFW21probabilistic}. Although their results are quite interesting, they do not give formal security definitions for their scheme, and seem to assume a weaker adversarial model where the evaluator must make many intermediate physical measurements in an online manner. In contrast, we present a formal treatment with an adversary who may perform arbitrary quantum computations on the one time program as a whole.

\cite{chung2019cryptography}
develops a first quantum one-time program for classical message-authentication codes, assuming stateless classical hardware tokens. 

Besides, \cite{liu2020quantum} studied security of classical one-time memory under quantum superposition attacks.  \cite{liu2023depth} builds quantum one-time memory with quantum random oracle 
in the depth-bounded adversary model, where the honest party only needs a short-term quantum memory but the adversary, which attempts to maintain a quantum memory for a longer term cannot perform attacks due to bounded quantum depth.

\paragraph{Signature Tokens.} Signature tokens are a special case of one-time programs that allow the evaluator to sign a single message, and no more. They were proposed by Ben-David and Sattath~\cite{ben2023quantum} in the oracle model and subsequently generalized to the plain model using indistinguishability obfuscation~\cite{C:CLLZ21}. Both of these works consider a very specific form of one-time security: an adversarial evaluator should not be able to output two (whole) valid signatures. 
\fi

\section{Preliminaries}

\subsection{Quantum Information and Computation}

We provide some basics frequently used in this work and refer to \cite{nc02} for comprehensive details.

A projection is a linear operator $P$ on a Hilbert space that satisfies the property: $P^2 = P$ and is Hermitian $P^\dagger = P$.

A projective-valued measurement (PVM)  a generalization of this idea to an entire set of measurement outcomes. A PVM is a collection of projection operators $\{P_i\}$ that are associated with the outcomes of a measurement, and they satisfy two important properties: (1) orthogonality: $P_iP_j = 0, \forall i \neq j$; (2) completeness: $\sum_i P_i = \mathbf{I}$ where $\mathbf{I}$ is the identity operator.

\begin{definition}[Trace distance]\label{def:tracenorm}
Let $\rho,\sigma\in \C^{2^n\times 2^n}$ be the density matrices of two quantum states. The trace distance between $\rho$ and $\sigma$ is
\begin{align*}
    \|\rho-\sigma\|_{\mathrm{tr}}:=\frac{1}{2}\sqrt{\Tr[(\rho-\sigma)^\dagger (\rho-\sigma)]},
\end{align*}
\end{definition}

Here, we only state a key lemma for our construction: the Gentle Measurement Lemma proposed by Aaronson \cite{aaronson2004limitations}, which gives a way to perform measurements without destroying the state.

\begin{lemma}[Gentle Measurement Lemma \cite{aaronson2004limitations}] 
\label{lem:gentle_measure}
Suppose a measurement on a mixed state $\rho$ yields a
particular outcome with probability 
$1-\epsilon$.  Then after
the measurement, one can recover a state $\tilde{\rho}$ such that $ \left\lVert \tilde{\rho} - \rho \right\rVert_{\mathrm{tr}} \leq \sqrt{\epsilon}$.
Here $\lVert\cdot\rVert_{\mathrm{tr}}$ is the trace distance (defined in Definition~\ref{def:tracenorm}).
\end{lemma}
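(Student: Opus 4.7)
I would associate the designated outcome with a POVM element $E$ satisfying $\Tr(E\rho) = 1-\epsilon$, and take $\tilde\rho := \sqrt{E}\rho\sqrt{E}/(1-\epsilon)$ to be the post-measurement state conditioned on seeing that outcome. The task is then to show $\|\rho - \tilde\rho\|_{\mathrm{tr}} \leq \sqrt{\epsilon}$.

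The core step is a purification-plus-fidelity argument. Let $\ket{\Psi}_{AR}$ be any purification of $\rho$ on an ancilla register $R$. The vector $\ket{\Phi}_{AR} := (\sqrt{E} \otimes I_R)\ket{\Psi}/\sqrt{1-\epsilon}$ is a unit vector whose reduced state on $A$ is exactly $\tilde\rho$, and its overlap with $\ket{\Psi}$ is
\[
    \braket{\Psi|\Phi} \;=\; \frac{\Tr(\sqrt{E}\rho)}{\sqrt{1-\epsilon}}.
\]
The key inequality is the operator bound $\sqrt{E} \succeq E$ on the interval $0 \preceq E \preceq I$, which follows from the spectral description (the scalar function $t \mapsto \sqrt{t} - t$ is nonnegative on $[0,1]$). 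This immediately gives $\Tr(\sqrt{E}\rho) \geq \Tr(E\rho) = 1-\epsilon$, and hence $|\braket{\Psi|\Phi}|^2 \geq 1-\epsilon$.

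From here the argument closes quickly. For two pure states, the Fuchs--van de Graaf identity yields $\|\ketbra{\Psi} - \ketbra{\Phi}\|_{\mathrm{tr}} = \sqrt{1 - |\braket{\Psi|\Phi}|^2} \leq \sqrt{\epsilon}$ in the standard $\tfrac{1}{2}\|\cdot\|_1$ convention. Since the partial trace over $R$ is a CPTP map and is contractive under trace distance, I conclude $\|\rho - \tilde\rho\|_{\mathrm{tr}} \leq \sqrt{\epsilon}$. (The Hilbert--Schmidt-based quantity used in the paper's Definition above is dominated by the standard trace distance, so the same bound transfers.) The only nontrivial step is the operator inequality $\sqrt{E} \succeq E$; everything else is essentially bookkeeping, so I expect no real obstacle beyond choosing the right purification viewpoint so that the reduced state is exactly $\tilde\rho$.
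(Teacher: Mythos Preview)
The paper does not actually prove this lemma; it is stated as a cited result from Aaronson~\cite{aaronson2004limitations} (and originally Winter) without proof. Your argument is a correct and standard proof of the statement: the purification trick together with the operator inequality $\sqrt{E}\succeq E$ on $[0,I]$ is exactly the usual route, and your observation that the paper's (nonstandard) Frobenius-type ``trace distance'' in Definition~\ref{def:tracenorm} is dominated by the genuine trace distance cleanly handles the normalization mismatch.
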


\subsection{Quantum Query Model}
\label{sec:prelim_quantum_query_classical_oracle}
We consider the quantum query model in this work, which gives quantum circuits access to some oracles. 
\begin{definition}[Classical Oracle]
\label{def:classic_oracle}
A classical oracle $\mathcal{O}$ is a unitary transformation of the form $U_f \ket{x,y, z} \rightarrow \ket{x, y+f(x), z}$ for classical function $f: \{0,1\}^n \rightarrow \{0,1\}^m$. Note that a classical oracle can be queried in quantum superposition.
\end{definition}
In the rest of the paper, unless specified otherwise, we refer to the word ``oracle'' as ``classical oracle''. 
A quantum oracle algorithm with oracle access to $\mathcal{O}$ is a sequence of local unitaries $U_i$ and oracle queries $U_f$. Thus, the query complexity of a quantum oracle algorithm is defined as the number of oracle calls to $\mathcal{O}$.

\subsection{Compressed Random Oracles}
\label{sec:compressedRO}
Zhandry~\cite{zhandry19compressed} gives an efficient method of simulating a random oracle on the fly. The method maintains a database register $\calD$ which enables it to answer queries. At the start of time, the database is initialized to $\ket{\emptyset}$, representing the even superposition over all possible functions $H:\calX \rightarrow \calY$. As the random oracle is queried, the database is updated. A database $D$ takes the form of a set containing pairs $(x,y)$. The state $\ket{D}$ represents the even superposition over all functions $H:\calX \rightarrow \calY$ which are consistent with $H(x) = y$ for all $(x, y)\in D$. At any given time, the database register $\calD$ is in superposition over one or more databases $\ket{D}$. We write $D(x) = y$ if $(x,y)\in D$. If no such entry exists, then we write $D(x) = \bot$.

In detail, the compressed oracle is defined using the following procedures:
\begin{itemize}
    \item $\Decomp$ is defined by $\ket{x, u} \otimes \ket{D} \mapsto \ket{x, u} \otimes \Decomp_x \ket{D}$, where $\Decomp_x$ is defined as follows for any $D$ such that $D(x) = \bot$:
    \begin{align}
        \Decomp_x \ket{D} &\coloneqq \frac{1}{\sqrt{|\calY|}}\sum_{y\in \calY} \ket{D\cup \{(x,y)\}}
        \\
        \Decomp_x \left(\frac{1}{\sqrt{|\calY|}}\sum_{y\in \calY} \ket{D\cup \{(x,y)\}} \right) &\coloneqq \ket{D}
        \\
        \Decomp_x \left(\frac{1}{\sqrt{|\calY|}}\sum_{y\in \calY} (-1)^{y\cdot u} \ket{D\cup \{(x,y)\}} \right) &\coloneqq \left(\frac{1}{\sqrt{|\calY|}}\sum_{y\in \calY} (-1)^{y\cdot u} \ket{D\cup \{(x,y)\}} \right) \text{ for } u\neq 0
    \end{align}
    \item $\CO'$ maps $\ket{x, u, D} \mapsto \ket{x, u\oplus D(x), D}$
\end{itemize}
On query in register $\calQ$, the compressed oracle applies 
\[
    \CO = \Decomp \circ \CO' \circ\Decomp 
\]
to registers $(\calQ, \calD)$.

In more descriptive words, if $D$ already is specified on $x$, and moreover if the corresponding $y$ registers are in a state orthogonal to the uniform superposition (i.e. if we apply a QFT to the register, the resulting state contains no $\ket{0}$), then there is no need to decompress and
$\Decomp$
is the identity. On the other hand, if $D$ is specified at $x$ and the
corresponding $y$ registers are in the state of the uniform superposition, $\Decomp$ will remove $x$ and
the $y$ register superposition from $D$.

We mention a few other results about compressed oracles. First, any adversary who successfully finds valid input/output pairs from the random oracle must cause the corresponding input/output pairs to be recorded in the compressed version of the oracle.

\begin{lemma}[\cite{zhandry19compressed}, Lemma 5]
   \label{lem:zhandry_lemma5} 
   Consider a quantum algorithm $\calA$ making queries to a random oracle
$H$ and outputting tuples $(x_1, \cdots , x_k, y_1, \cdots , y_k)$
. Let $R$ be a collection of such
tuples. Suppose with probability $p$, $\calA$ outputs a tuple such that (1) the tuple is in
$R$ and (2) $H(x_i) = y_i$ for all $i$. Now consider running $\calA$ with the compressed oracle defined in \Cref{sec:compressedRO},
and suppose the database $D$ is measured after $\calA$ produces its output. Let $p'$
be the probability that (1) the tuple is in $R$, and (2) $D(x_i) = y_i$ for all $i$ (and in
particular $D(x_i) \neq \bot$). Then:
   $$ \sqrt{p} \leq \sqrt{p'} + \sqrt{k/\vert \calY \vert}$$
\end{lemma}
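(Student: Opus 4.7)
The plan is to work in the purified random-oracle representation, where both the standard and the compressed views correspond to isometrically equivalent descriptions of the same joint pure state on $\calA$'s registers and the oracle register, related by the unitary $\Decomp$ applied across all input slots. Let $|\psi\rangle$ denote the joint state at the end of $\calA$'s execution in the compressed representation, and let $|\Psi\rangle$ denote the decompressed counterpart. I define three projectors, all diagonal in the computational basis of $\calA$'s output register: $\Pi_R$ projects onto tuples $(x_1,\ldots,x_k,y_1,\ldots,y_k)\in R$; $\Pi_D$ additionally selects compressed databases with $D(x_i)=y_i$ for every $i$; and $\Pi_H$ is the analogue of $\Pi_D$ in the decompressed view, selecting $H(x_i)=y_i$ for every $i$. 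Then $p = \|\Pi_R\Pi_H|\Psi\rangle\|^2$ and, writing $\Pi_D'$ for the decompressed-view conjugate of $\Pi_D$, $p' = \|\Pi_R\Pi_D'|\Psi\rangle\|^2$.

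The first step is the decomposition $\Pi_H = \Pi_H\Pi_D' + \Pi_H(I-\Pi_D')$ combined with the triangle inequality for the norm,
\[
\sqrt{p} \;=\; \|\Pi_R\Pi_H|\Psi\rangle\| \;\leq\; \|\Pi_R\Pi_H\Pi_D'|\Psi\rangle\| \;+\; \|\Pi_R\Pi_H(I-\Pi_D')|\Psi\rangle\| \;\leq\; \sqrt{p'} \;+\; \|\Pi_R\Pi_H(I-\Pi_D')|\Psi\rangle\|,
\]
where the last inequality uses that $\Pi_R$ and $\Pi_H$ commute (both are diagonal in $\calA$'s output register) and that $\Pi_H$ has operator norm at most $1$. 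It remains to bound the residual term by $\sqrt{k/|\calY|}$, which will yield the lemma upon taking square roots.

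That residual term squared equals the probability of the ``bad'' event that the output tuple lies in $R$ and $H(x_i)=y_i$ for all $i$, yet $D(x_j)\neq y_j$ for some $j$. The plan is to first commute a measurement of $\calA$'s output register (which commutes with all three projectors) to the front, reducing to a per-$(x_1,\ldots,x_k,y_1,\ldots,y_k)$ analysis, and then union-bound over the offending coordinate $j\in[k]$, reducing further to the claim that $\Pr[H(x_j)=y_j \wedge D(x_j)\neq y_j] \leq 1/|\calY|$ for each fixed $j$. For a fixed $j$, the structural rules of $\Decomp_{x_j}$ from the excerpt give this directly: if $D(x_j)=\bot$ then the decompressed $H(x_j)$-register sits in the uniform superposition $|\calY|^{-1/2}\sum_y|y\rangle$, so any specific $y_j$ is measured with probability exactly $1/|\calY|$; if instead $D(x_j) = y'\neq y_j$ with $y'$ in the computational basis, then $H(x_j)$ equals $y'$ in the decompressed view and the bad event has probability $0$; and databases whose $y$-register is in a superposition orthogonal to the uniform state are fixed by $\Decomp_{x_j}$ and collapse similarly under measurement.

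The main obstacle is the careful accounting in this last step when $\calA$'s registers are entangled with the database register in a general superposition, and in particular arguing that the ``locality'' of $\Decomp_{x_j}$ allows one to isolate the $x_j$-slot of the database and extract the $1/|\calY|$ factor independently of whatever the remaining slots look like. This is handled by writing the post-measurement state (after measuring $\calA$'s output) as a product across the input slots of the database register, applying the $\Decomp_{x_j}$ rule only on the $x_j$-slot, and noting that the uniform superposition there contributes an inner product of magnitude $1/\sqrt{|\calY|}$ against $|y_j\rangle$. Summing the $1/|\calY|$ contributions across $j\in[k]$ via the union bound yields $k/|\calY|$, and taking the square root completes the proof.
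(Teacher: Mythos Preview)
The paper does not prove this lemma; it is cited from \cite{zhandry19compressed} without proof, so there is no in-paper argument to compare against. Your plan follows the standard route and is structurally sound: purify, triangle-inequality split, then bound the residual $\|\Pi_R\Pi_H(I-\Pi_D')\ket{\Psi}\|$ by a per-coordinate analysis of $\Decomp_{x_j}$.

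Two issues. First, a minor factual slip: when $D(x_j)=y'\neq y_j$, the decompressed $x_j$-slot is \emph{not} $\ket{y'}$ with certainty. The paper's own appendix computation gives $\Decomp_{x_j}\ket{y'}=(1-1/|\calY|)\ket{y'}-\tfrac{1}{|\calY|}\sum_{y''\neq y'}\ket{y''}+\tfrac{1}{\sqrt{|\calY|}}\ket{\bot}$, so the amplitude on $\ket{y_j}$ is $-1/|\calY|$, not zero. This is dominated by the $\bot$ case and does not change the final bound.

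Second, and more substantive: reading $\|\Pi_R\Pi_H(I-\Pi_D')\ket{\Psi}\|^2$ as the ``probability that $H(x_i)=y_i$ for all $i$ yet $D(x_j)\neq y_j$ for some $j$'' is only a heuristic, because $\Pi_H$ and $\Pi_D'$ do not commute and there is no joint distribution over $H$-values and $D$-values. Measuring $D$ first and then decompressing to measure $H$ does define a distribution on which your per-$j$ union bound is valid, but the resulting quantity $\sum_{D\text{ bad}}|\alpha_D|^2\,\|\Pi_H U\ket{D}\|^2$ differs from the required $\bigl\|\sum_{D\text{ bad}}\alpha_D\,\Pi_H U\ket{D}\bigr\|^2$ by cross terms. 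To make the step rigorous you should work with norms directly: write $I-\Pi_D=\sum_{j=1}^k\bigl(\prod_{i<j}Q_i\bigr)(I-Q_j)$ as a sum of mutually orthogonal projectors (where $Q_i$ is ``slot $x_i$ equals $y_i$''), bound each piece by the single-slot operator norm $\|\bra{y_j}\Decomp_{x_j}(I-\ketbra{y_j})\|$, and combine via the triangle inequality plus Cauchy--Schwarz over the $k$ pieces. Your last paragraph is pointing at exactly this locality, but the probabilistic language hides the place where interference between different compressed databases must be controlled.
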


Second, the probability of the compressed oracle's database containing a collision after $q$ queries is bounded in terms of $q$ and the size of the range.

\begin{lemma}[\cite{zhandry19compressed}]\label{lem:compressed-collision}
    For any adversary with query access to a compressed oracle $G:\calX \rightarrow \calY$, if the database register is measured after $q$ queries to obtain $D$, then $D$ contains a collision with probability at most $O(q^3/|\calY|)$.

    As a corollary of this and \Cref{lem:zhandry_lemma5}, any adversary making $q$ queries to a random oracle finds a collision with probability at most $O(q^3/|\calY|)$.
\end{lemma}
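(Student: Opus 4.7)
The plan is to bound, by induction on the number of queries $t$, the $\ell_2$-norm of the ``bad'' component of the joint adversary/database state---the part supported on databases $D$ containing a collision $(x,y),(x',y)\in D$ with $x\ne x'$. Writing $\Pi_{\mathrm{col}}$ for the projector onto this bad subspace and $\ket{\psi_t}$ for the global state after $t$ queries, I would show that
\[
    \bigl\|\Pi_{\mathrm{col}}\ket{\psi_{t}}\bigr\|
    \;\le\;
    \bigl\|\Pi_{\mathrm{col}}\ket{\psi_{t-1}}\bigr\|
    \;+\;
    O\!\bigl(\sqrt{t/|\calY|}\bigr),
\]
which upon summation gives $\|\Pi_{\mathrm{col}}\ket{\psi_q}\| \le O(\sqrt{q^3/|\calY|})$, and squaring yields the stated bound.

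The key step is the per-query increment. Since $\CO = \Decomp\circ\CO'\circ\Decomp$ and the adversary's local unitary on $\calQ$ cannot act on $\calD$, the bad-norm can only change during the two applications of $\Decomp$. The bad subspace contains all databases with more than $t-1$ entries sharing a $y$-value only if $\Decomp_x$ adjoins a new pair $(x,y)$. By the definition of $\Decomp_x$ recalled in \Cref{sec:compressedRO}, on any basis database $D$ with $D(x)=\bot$ it outputs the uniform superposition $\tfrac{1}{\sqrt{|\calY|}}\sum_y \ket{D\cup\{(x,y)\}}$; the component of this superposition lying in $\Pi_{\mathrm{col}}$ has norm at most $\sqrt{|D|/|\calY|} \le \sqrt{(t-1)/|\calY|}$, because at most $|D|$ of the $|\calY|$ possible values of $y$ create a collision with an existing entry. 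Since each basis database before the query has size at most $t-1$, the triangle inequality, combined with the fact that $\CO'$ and the adversary's unitaries are norm-preserving, yields the claimed $O(\sqrt{t/|\calY|})$ increment; one has to be slightly careful about the ``recompression'' branch of $\Decomp$ (which removes an entry), but this branch cannot increase the number of databases with collisions.

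The corollary is then immediate from \Cref{lem:zhandry_lemma5} applied with $k=2$ and $R$ being the set of tuples $(x_1,x_2,y_1,y_2)$ with $x_1\ne x_2$ and $y_1=y_2$: if the adversary outputs a random oracle collision with probability $p$, then after running the compressed oracle version and measuring $\calD$ we have $\sqrt{p} \le \sqrt{p'} + \sqrt{2/|\calY|}$, and the first part of the lemma bounds $p' \le O(q^3/|\calY|)$, so $p \le O(q^3/|\calY|)$.

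The main technical obstacle I anticipate is handling the ``recompression'' case of $\Decomp$ correctly: when the pre-query state on register $\cD$ has weight on databases already containing $x$ whose $y$-register is in the uniform superposition, $\Decomp_x$ erases that entry rather than creating a new one, and one must argue that this cannot inject additional collision mass. The cleanest way to handle this is to work directly with the action of $\Decomp_x$ as a partial isometry that maps the ``$x$-not-recorded'' subspace into the ``$x$-recorded orthogonal to uniform'' subspace, verify unitarity, and bound the norm contribution only from the component that genuinely appends a fresh $(x,y)$ to a database of size at most $t-1$.
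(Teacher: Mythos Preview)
The paper does not prove this lemma; it is stated in the preliminaries as a result imported directly from \cite{zhandry19compressed} (both the database-collision bound and the corollary about finding collisions in a random oracle). So there is no in-paper proof to compare against.

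Your proposal is the standard argument from Zhandry's original work and is correct in outline: track the $\ell_2$-mass on collision-containing databases, observe that only the two $\Decomp$ applications can change this mass (since $\CO'$ and the adversary's unitary commute with $\Pi_{\mathrm{col}}$), bound the increment at step $t$ by $O(\sqrt{t/|\calY|})$ because adjoining a fresh $(x,y)$ to a size-$(t{-}1)$ database hits an existing $y$ for at most $t{-}1$ of the $|\calY|$ choices, and sum. Your treatment of the recompression branch is also right---removing an entry cannot create a collision, and the partial-isometry viewpoint you describe is exactly how Zhandry handles it. The corollary step via \Cref{lem:zhandry_lemma5} with $k=2$ is correct.
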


We mention a few facts about the compressed oracle that are not explicitly mentioned in the original work. First, if some $x$ has never been queried to the oracle, i.e. has $0$ amplitude in all prior queries, then $D(x) = \bot$. This is evident from the definition of $\Decomp$. Second, the compressed oracle acts identically on all inputs, up to their names. In other words, querying $x_1$ is the same as renaming $x_1$ to $x_2$ in both the query register and the database register, then querying $x_2$, then undoing the renaming.

\begin{claim}\label{claim:compressed-oracle-renaming}
    Let $\mathsf{SWITCH}_{x_1,x_2}$ be the unitary which maps any database $D$ to the unique $D'$ defined by $D'(x_1) = D(x_2)$, $D'(x_2) = D(x_1)$, and $D'(x_3) = D(x_3)$ for all $x_3\notin \{x_1, x_2\}$ and acts as the identity on all orthogonal states. Let $U_{x_1,x_2}$ be the unitary which maps $\ket{x_1}\mapsto \ket{x_2}$, $\ket{x_2}\mapsto \ket{x_1}$ and acts as the idenity on all orthogonal states. Then
    \[
        \CO  
        = 
        (U_{x_1,x_2} \otimes I \otimes \mathsf{SWITCH}_{x_1,x_2}) \CO (U_{x_1,x_2} \otimes I \otimes \mathsf{SWITCH}_{x_1,x_2})
    \]
\end{claim}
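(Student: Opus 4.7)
The plan is to show that each building block of $\CO = \Decomp \circ \CO' \circ \Decomp$ commutes with conjugation by $V := U_{x_1,x_2} \otimes I \otimes \mathsf{SWITCH}_{x_1,x_2}$. Note that $V$ is an involution ($V^2 = I$), so the claim is equivalent to $V\CO = \CO V$, i.e., $V$ is a symmetry of $\CO$. It then suffices to prove that $V\CO' = \CO'V$ and $V\Decomp = \Decomp V$ separately; composing these gives the result. Throughout, let $\pi$ denote the transposition of $x_1$ and $x_2$, so that $U_{x_1,x_2}\ket{x} = \ket{\pi(x)}$, and let $\pi \cdot D$ denote the database obtained by swapping the entries at $x_1$ and $x_2$, so that $(\pi\cdot D)(x) = D(\pi(x))$. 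In particular, $\mathsf{SWITCH}_{x_1,x_2}\ket{D} = \ket{\pi\cdot D}$.

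For $\CO'$, the verification is direct. Starting from a computational-basis state $\ket{x,u,D}$:
\[
    V\,\CO'\,V \ket{x,u,D} = V\,\CO' \ket{\pi(x),u,\pi\cdot D} = V \ket{\pi(x), u \oplus (\pi\cdot D)(\pi(x)), \pi\cdot D}.
\]
By definition, $(\pi\cdot D)(\pi(x)) = D(x)$, so applying $V$ again brings us back to $\ket{x, u\oplus D(x), D} = \CO'\ket{x,u,D}$. Since this holds on a spanning set, $V\CO'V = \CO'$.

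For $\Decomp$, the key identity to check is the \emph{equivariance} $\pi\cdot \Decomp_x \ket{D} = \Decomp_{\pi(x)} \ket{\pi\cdot D}$, which I would verify case by case according to the three-line definition of $\Decomp_x$:
\begin{itemize}
    \item If $D(x) = \bot$ then $(\pi\cdot D)(\pi(x)) = \bot$, and both sides evaluate to the uniform superposition $\tfrac{1}{\sqrt{|\calY|}}\sum_y \ket{\pi\cdot D \cup \{(\pi(x), y)\}}$, using that $\pi\cdot(D\cup\{(x,y)\}) = (\pi\cdot D)\cup\{(\pi(x),y)\}$.
    \item The uniform-superposition case (the second line) is the inverse of the first, and transports by the same identity.
    \item The Fourier-dual case (the third line, $u \neq 0$) behaves identically, since $\mathsf{SWITCH}_{x_1,x_2}$ acts on the slot index only and leaves the phase factors $(-1)^{y\cdot u}$ untouched.
\end{itemize}
Granted this equivariance, applying $V$, then $\Decomp$, then $V$ to $\ket{x,u}\otimes\ket{D}$ yields $\ket{x,u}\otimes(\pi\cdot\Decomp_{\pi(x)}\ket{\pi\cdot D}) = \ket{x,u}\otimes \Decomp_x\ket{D} = \Decomp(\ket{x,u}\otimes\ket{D})$. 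Composing the two identities gives $V\CO V = V\Decomp V \cdot V\CO' V \cdot V\Decomp V = \Decomp\circ\CO'\circ\Decomp = \CO$, which rearranges to the claimed equation.

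The only potentially fiddly step is the third case of $\Decomp_x$: one must be careful that the phase register indexing $(-1)^{y\cdot u}$ is not affected by $\mathsf{SWITCH}_{x_1,x_2}$, since the switch permutes slots (indexed by the domain element $x$) rather than $y$-values themselves. Once this is observed, the equivariance is immediate and the rest of the argument is routine bookkeeping.
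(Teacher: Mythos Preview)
Your proposal is correct and follows exactly the same approach as the paper: the paper's proof simply states that $(U_{x_1,x_2} \otimes I \otimes \mathsf{SWITCH}_{x_1,x_2})$ commutes with both $\Decomp$ and $\CO'$ and is self-inverse, while you spell out the commutation checks explicitly on basis states. Your added detail (the equivariance verification for each defining case of $\Decomp_x$) is sound and fills in what the paper leaves as an observation.
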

\begin{proof}
    This follows from the observation that $(U_{x_1,x_2} \otimes I \otimes \mathsf{SWITCH}_{x_1,x_2})$ commutes with both $\Decomp$ and $\CO'$ and it is self-inverse.
\end{proof}

\subsection{Subspace States and Direct Product Hardness}
\label{sec:subspace_state_prelims}

In this subsection, we provide the basic definitions and properties of subspace states. 

For any subspace $A$, its complement is $A^\perp = \{ b \in \mathbb{F}^n \,|\,  \langle a, b\rangle \bmod 2 = 0 \,,\, \forall a \in A \}$. It satisfies $\dim(A) + \dim(A^\perp) = n$. We also let $|A| = 2^{\dim(A)}$ denote the size of the subspace $A$.
\begin{definition}[Subspace States]
For any subspace $A \subseteq \mathbb{F}_2^n$, the subspace state $\ket A$ is defined as $$ \ket{A} = \frac{1}{\sqrt{|A|}}\sum_{a \in A} \ket a \,.$$
\end{definition}
Note that given $A$, the subspace state $\ket A$ can be constructed efficiently.

\begin{definition}[Subspace Membership Oracles]
    A subspace membership oracle $\cO_A$ for subspace $A$ is a classical oracle that outputs 1 on input vector $v$ if and only if $v \in A$.
\end{definition}

\paragraph{Query Lower Bounds for Direct-Product Hardness}
\label{sec:prelim_IT_direct_product}

We now present a query lower bound result for "cloning" the subspace states. These are useful for our security proof in the classical oracle model. 

The theorem states that when given one copy of a random subspace state, it requires exponentially many queries to the membership oracles to produce two vectors either: one is in the primal subspace, the other in the dual subspace; or both are in the same subspace but are different.

\begin{theorem}[Direct-Product Hardness, \cite{ben2023quantum,bartusek2023obfuscation}]
\label{thm: direct product oracle}
Let $A \subseteq \mathbb{F}_2^n$ be a uniformly random subspace of dimension $n/2$. 
Let $\epsilon > 0$ be such that $1/\epsilon = o(2^{n/2})$. Given one copy of $\ket{A}$, and quantum access to membership oracles for $A$ and $A^{\perp}$, an adversary needs $\Omega(\sqrt{\epsilon} 2^{n/2})$ queries to output with probability at least $\epsilon$ \emph{either} of the following: (1) a pair $(v,w)$ such that $v \in A$ and $w \in A^{\perp}$; (2) $v, w$ are both in $A$ or $A^\perp$, $v \neq w$.
\end{theorem}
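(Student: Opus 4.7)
The plan is to split the disjunction into its two cases and reduce each to prior direct-product lower bounds for subspace states. First, I would normalize the oracle model: the quantum membership oracles $\mathcal{O}_A,\mathcal{O}_{A^\perp}$ can be simulated (up to a constant query overhead) by the coherent inner-product oracles used in Aaronson--Christiano and subsequent work, and the random choice of $A$ can be equated with choosing a random pair of dual bases. With this reformulation, the experiment is exactly the canonical "direct product" query game on one copy of $\ket{A}$.

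For case~(1), producing $(v,w)$ with $v\in A$ and $w\in A^{\perp}$ both nonzero, I would invoke the original direct product hardness theorem of Aaronson--Christiano as tightened in \cite{ben2023quantum}. The intuition there is that one copy of $\ket{A}$ lets the adversary measure to obtain a single uniformly random vector in $A$ or, after a Hadamard transform, a single uniformly random vector in $A^{\perp}$, but never both jointly; the query lower bound is proved via a polynomial/adversary-method argument showing that mixing the oracles cannot produce a vector on the "other side" with probability more than $O(q^2 / 2^{n/2})$, which yields the claimed $q = \Omega(\sqrt{\epsilon}\, 2^{n/2})$ bound by rearranging.

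For case~(2), producing two distinct vectors $v\neq w$ lying in the same subspace (say $A$), the argument is subtler because measuring $\ket{A}$ already yields one random nonzero vector in $A$ at no query cost. I would reduce to case~(1) by exploiting duality. Given an adversary $\mathcal{B}$ that outputs distinct $v,w\in A$ with probability $\epsilon$ using $q$ queries, build an adversary $\mathcal{C}$ for case~(1) as follows: run $\mathcal{B}$ coherently and, in parallel on a purified copy of the workspace, apply the Hadamard transform to the register that originally held $\ket{A}$ and measure to obtain a candidate $u\in A^{\perp}$. The key observation is that if $\mathcal{B}$ actually learns two distinct elements of $A$, then by a gentle-measurement argument (using $\mathcal{O}_A$ to verify membership) the register holding $\ket{A}$ must remain essentially undisturbed, so the dual measurement yields a valid $A^{\perp}$ vector; combined with $v\in A$ from $\mathcal{B}$'s output this contradicts case~(1). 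The equivalent statement for pairs in $A^{\perp}$ follows by swapping roles via the Fourier transform.

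The main obstacle is case~(2): one must rule out the possibility that $\mathcal{B}$ destructively consumed the input $\ket{A}$ while producing $v,w$, since then the dual measurement in $\mathcal{C}$ would yield garbage. This is where the bulk of the technical work lives, and I would handle it by a hybrid argument along the computation of $\mathcal{B}$, showing that the reduced density matrix on the input register stays close to $\ket{A}\bra{A}$ whenever the output is a valid pair in $A$; the loss in this closeness contributes the polynomial overhead absorbed into the $\sqrt{\epsilon}$ factor. A union bound over the two cases then yields the stated $\Omega(\sqrt{\epsilon}\, 2^{n/2})$ query lower bound.
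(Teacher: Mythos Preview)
The paper does not prove this theorem; it is stated in the preliminaries as a result imported from \cite{ben2023quantum,bartusek2023obfuscation}, so there is no in-paper argument to compare against. For case~(1) your plan is the standard one and is fine.

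Your reduction for case~(2), however, has a real gap. You assert that if $\mathcal{B}$ outputs two distinct $v,w\in A$ then, by gentle measurement, the register that originally held $\ket{A}$ must remain close to $\ket{A}$. Gentle measurement does not give you this: verifying ``the outputs lie in $A$'' via $\mathcal{O}_A$ succeeds with probability $1$ and hence barely disturbs the \emph{global} state, but it places no constraint on the marginal of the input register. A concrete adversarial pattern makes this explicit: $\mathcal{B}$ CNOTs the input register into an ancilla and measures the ancilla to obtain $v\in A$, collapsing the input register to $\ket{v}$; it then spends its query budget searching for a second $w\in A$. Conditioned on $\mathcal{B}$ succeeding, the input register holds $\ket{v}$, not $\ket{A}$, and applying $H^{\otimes n}$ and measuring yields a uniformly random element of $\mathbb{F}_2^n$, which lands in $A^\perp$ only with probability $2^{-n/2}$. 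Your proposed patch---a hybrid along $\mathcal{B}$'s computation showing that the reduced state on the input register stays close to $\ket{A}\bra{A}$ whenever the output is a valid pair in $A$---is precisely the statement that fails in this example, so the hybrid will not close.

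The cited references handle case~(2) by a different mechanism than ``recover $\ket{A}$ from $\mathcal{B}$'s residual state and Hadamard it.'' If you want to push a reduction to case~(1), you need an argument that extracts an $A^\perp$ vector without assuming anything about what $\mathcal{B}$ leaves behind on the input register; alternatively, case~(2) can be attacked directly (e.g.\ by reducing the task of producing a second, distinct $A$-vector to an unstructured-search-type problem over the hidden subspace), which is closer in spirit to how the strengthened bound is obtained in \cite{bartusek2023obfuscation}.
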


\subsection{Tokenized Signature Definitions}\label{sec:token-sig-defs}

A tokenized signature scheme is a signature scheme $(\Gen, \Sign, \Verify)$ equipped with two additional algorithms $\mathsf{GenTok}$ and $\mathsf{TokSign}$. $\mathsf{GenTok}$ takes in a signing key $\sk$ and outputs a quantum token $\ket{T}$. We overload it to also take in an integer $n$, in which case it outputs $n$ signing tokens. $\mathsf{TokSign}$ takes in a quantum token $\ket{T}$ and a message $m$, then outputs a signature $\sigma$ on $m$.

A tokenized signature scheme must satisfy correctness and tokenized unforgeability. Correctness requires that a signature token can be used to generate a valid signature on any $m$:
\[
    \Pr\left[\Verify(\vk, m, \sigma) = \mathsf{Reject} : 
    \begin{array}{c}
         (\sk, \vk) \gets \Gen(1^\secpar) 
         \\
         \ket{T} \gets \mathsf{GenTok}(\sk)
         \\
         \sigma \gets \mathsf{TokSign}(m, \ket{T})
    \end{array}
    \right]
    = \negl  
\]

\begin{definition}\label{def:token-unforge}
    A tokenized signature scheme $(\Gen, \Sign, \Verify, \mathsf{GenTok}, \mathsf{TokSign})$ has tokenized unforgeability if for every QPT adversary $\adv$, 
    \[
        \Pr\left[\Verify(\vk, m_i, \sigma_i) = \mathsf{Accept}\ \forall i\in [n+1]
        :
        \begin{array}{c}
             (\sk, \vk) \gets \Gen(1^\secpar) 
             \\
             \bigotimes_{i=1}^{n} \ket{T_i} \gets \mathsf{GenTok}(\sk)
             \\
             ((m_1, \sigma_1), \dots, (m_{n+1}, \sigma_{n+1}) \gets \adv(\vk, \bigotimes_{i=1}^{n} \ket{T_i})
        \end{array}
        \right]
        = \negl
    \]
\end{definition}

\section{Definitions of One-Time Sampling Programs}
\label{sec:definitions}
A one-time sampling program compiler $\OTP$ compiles a randomized function $f$ into a quantum state $\OTP(f)$ that can then be evaluated on any input $x$ to learn $f(x, R)$ for a uniformly random string $R$. The syntax and correctness of $\OTP$ are defined below. Then in the rest of the section, we will then present various notions of security and discuss their feasibility as well as their relative strengths. Also, we will sometimes refer to the one-time sampling programs as ``one-time programs'' for short.\\

Let $\calF$ be a family of randomized classical functions, such that any $f \in \calF$ takes as input $x \in \calX$, then samples $R \getsr \calR$, and outputs $f(x, R)$. For simplicity, let $\cX = \bit^m$ for some parameter $m \in \bbN$, and let $1^m$ be implicit in the description of $f$. 

\begin{definition}[Syntax of $\OTP$]
    Let $\calF$ be a family of randomized classical functions. For any given $f \in \cF$, define $\cX, \cR, \cY$ to be the sets for which $f: \cX \times \cR \to \cY$, and let $m$ be the bit-length of inputs to $f$; i.e. let $\cX = \bit^m$.
    
    Next, $\OTP$ is the following set of quantum polynomial-time algorithms:
    \begin{description}
        \item $\generate(1^\lambda, f)$: Takes as input the security parameter $1^\lambda$ and a description of the function $f \in \calF$, and outputs a quantum state $\OTP(f)$. We assume that $1^m$ is implicit in $\OTP(f)$.
        \item $\Eval(\OTP(f), x)$: Takes as input a quantum state $\OTP(f)$ and classical input $x \in \{0,1\}^m$, and outputs a classical value $y \in \cY$. 
    \end{description}
\end{definition}

\begin{definition}[Correctness]
\label{def:otp_correctness}
    $\OTP$ satisfies correctness for a given function family $\cF$ if $\generate$ and $\mathsf{Eval}$ run in polynomial time, and for every $f \in \calF$ and $x \in \{0,1\}^m$, there exists a negligible function $\negl(\cdot)$, such that for all $\lambda \in \N$, the distributions
    \begin{align*}
        \{\mathsf{Eval}(\generate(1^\lambda, f), x)\} \quad \text{ and } \quad \{f(x, r)\}
    \end{align*}
    are $\negl(\lambda)$-close in statistical distance, where the randomness in the second distribution $\{f(x, r)\}$ is over the choice of $r \getsr \cR$.
\end{definition}

\begin{remark}
    In this work, we consider sampling uniform randomness as the input. Such a distribution suffices for many applications we will discuss (one time signatures, one-time encryptions). 
\end{remark}

\subsection{Simulation-based Security Definitions for One-Time Sampling Programs}
Ideally we want to achieve a \textit{simulation-based} security definition where we define the desired one-time program functionality and insist that the real scheme be indistinguishable from this idealized functionality. 
A natural simulation-based security definition along these is as follows. In the real world, the challenger samples $f \leftarrow \calF$, and gives the adversary the output $\OTP(f)$ of the one-time program obfuscator. On the other hand, in the ideal world, the challenger samples the function $f \leftarrow \calF$, and the simulator is allowed to make one (quantum) query to an oracle $O_{f(\cdot, \$)}^{(1)}$ that implements the randomized function $f$: on query input $x$, the oracle samples $r \leftarrow \calR$ and outputs $f(x, r)$. This single-query oracle is modeled as a \textit{stateful} oracle that shuts down after being queried once (regardless of the query being quantum/classical).

In the end, the distinguisher is given the output of the adversary (in the real world), or that of the simulator (in the ideal world) and is tasked with telling apart these two worlds. In both worlds, the distinguisher also receives an oracle $O_f$ that implements the deterministic functionality of $f$: given query $(x, r)$, it outputs $f(x, r)$. 

\begin{definition}[Single physical query simulation-based one-time security]\label{def:single-query-simulation-security}
    Fix input length $n$ and security parameter $\lambda$. For all (non-uniform) quantum polynomial-time adversaries $\calA$, there is a (non-uniform) quantum polynomial-time simulator $\Sim$ that is given single (quantum) query access to $f$ and classical auxiliary information $\aux_f$ such that for any QPT 
    distinguisher $\calD$, and for any $f \in \calF$, there exists a negligible $\negl(\cdot)$ such that for all $\lambda \in \N$,
    \begin{align*}
        \left|\Pr[1 \leftarrow \calD^{O_f}(\calA(1^\lambda, \OTP(f), \aux_f),\aux_f)] - \Pr[1 \leftarrow \calD^{O_f}(\Sim^{O^{(1)}_{f(\cdot, \$)}}(\aux_f),\aux_f)] \right| \le \negl(\lambda).
    \end{align*}
where both $\cA$ and $\Sim$ are allowed to output oracle-aided circuits using oracles given to them (respectively).   
$\cO_f$ is a stateless oracle for $f$ that outputs $y = f(x,r)$ ion any input $(x,r)$ and not restricted in the number of queries that it can answer.
\end{definition}

\begin{remark}[Auxiliary Input]
\label{remark:auxiliary input}
    The auxiliary information $\aux_f$ is a piece of classical, public information sampled together when sampling/choosing $f$. Therefore, all parties, $\cA, 
\cD, \Sim$ get to see $\aux_f$.

    For instance, when $f$ is a signing function or a publicly-verifiable proof algorithm, this $\aux_f$ can be the public verification key.
\end{remark}
\jiahui{Add a comment on what this $\aux_f$ can look like}
\jiahui{ Impossibility works even if relaxed:
$\cO_f$ is a verification oracle for $f$ that outputs 1 on input $(x,r,y)$ if and only if $f(x,r) = y$}

The above definition is a worst-case definition for all $f$, but we will see in the following discussions that even if we relax the definition to an average case $f$ sampled from the function family, a strong impossibility result still holds.

\paragraph{Impossibilities for \Cref{def:single-query-simulation-security}}
As often applicable to simulation-based definitions, such as Virtual-Black-Box Obfuscation (\cite{barak2001possibility}) 
the above definition suffers from several impossibilities.

The first impossibility results from the separation between the simulated oracle world and a plain model where the one-time program can be accessed in a possibly non-black-box way.
While our simulator is given only oracle access to $f$, 
the program we give to $\calA$ consists of actual code (and quantum states) .
Once $\calA$ has non-black-box access to the given one-time program, $\calA$ may be able to perform various attacks where the simulator cannot do: for instance, evaluating homomorphically on the one-time program. As demonstrated in \cite{alagic2020impossibility, ananth2020secure}, this type of non-black-box attacks are applicable even if the program is a quantum state.

Formalizing the actual non-black-box attack takes some effort due to the randomized evaluation on $f$ in our setting.
Nevertheless, we show that even for one-time programs with a sampling circuit, there exists circuits that can never be securely one-time programmed when given non-black-box access in \Cref{sec:impossibility}.

\paragraph{Barriers for Stateless One-Time Programs}
Even worse, the above definition suffers from impossibilities even in the \emph{oracle} model, where we make sure that the program $\cA$ receives also consists of oracle-aided circuits, so that the above non-black-box attack does not apply.  

This barrier mainly results from the fact that we give $\Sim$ a stateful oracle but $\cA$ a stateless one-time program (which contains a stateless oracle).

Consider the following $\cA$ and distinguisher $\cD$: $\cA$ receives a possibly oracle-aided program and simply outputs the program itself to $\cD$. $\cD$ is given arbitrary oracle access to $\cO_f$ so $\cD$ can perform the following attack using gentle measurement (\Cref{lem:gentle_measure}) and un-computation, discussed in \Cref{sec:tech_overview_definitional}, paragraph "Overcoming barriers for stateless one-time programs". 

Conclusively, unless the function $f$ itself is "learnable" through a single oracle query (for example, a constant function), one cannot achieve the above definition. Only in this trivial case, the simulator can fully recover the functionality of $f$ and make up a program that looks like a real world program, since both $\Sim$ and $\calA$ can learn everything about $f$.
This argument is formalized in \cite{broadbent2013quantum}, which rules out stateless one-time programs even in the oracle model. Note that this impossibility holds even if we consider a randomized $f$ sampled from the function family and or give a verification oracle that verifies whether a computation regarding $f$ is correct, instead of full access to $f$. 

To get around the above oracle-model attack, we first consider the following weakening: what if we limit both the adversary and simulator to output only \emph{classical information}?
Intuitively, this requires both $\cA$ and $\Sim$ to dequantize and "compress" what they can learn from the program/oracle into a piece of classical information.

\begin{definition}[Single physical query classical-output simulation-based one-time security]\label{def:single-query-classical-output-simulation-security}
Let $\lambda$ be the security parameter.
For all (non-uniform) quantum polynomial-time adversaries $\calA'$ with \textit{classical} output, there is a (non-uniform) quantum polynomial-time simulator $\Sim$ that is given single (quantum) query access to $f$ such that for any QPT
distinguisher $\calD$, for any $f \in \calF_\lambda$, there exists a negligible $\negl(\cdot)$ such that:
    \begin{align*}
        \left|\Pr[1 \leftarrow \calD^{O_f}(\calA'(1^\lambda, \OTP(f), \aux_f),\aux_f)] - \Pr[1 \leftarrow \calD^{O_f}(\Sim^{O^{(1)}_{f(\cdot, \$)}}(\aux_f),\aux_f)] \right| \le \negl(\lambda).
    \end{align*}
     where both $\cA$ and $\Sim$ are allowed to output oracle-aided circuits, but only classical information.

$\cO_f$ is a stateless oracle for $f$ that outputs $y = f(x,r)$ on any input $(x,r)$ and not restricted in the number of queries that it can answer.
\end{definition}

However, we will demonstrate in \Cref{sec:impossibility} that even for $f$ with high min-entropy output, there exists a family of functions such that: no single physical query simulator can 
"learn" much about 
$f$ when only given oracle access to stateful oracle $\cO^{(1)}_{f(\cdot), \$}$, but there exists an efficient $\calA$  that can fully recover the functionality of $f$ when given a stateless oracle-aided one-time sampling program. Therefore, $\cA$ can output a classical output that separates itself from $\Sim$.

\paragraph{Re-defining the stateful Single-Query Oracle} These barriers inspire us to look into the definition of "single query oracle" we give to $\Sim$ and consider a weakening on the restriction of the "single-query" we allow $\Sim$ to make: $\Sim$ can merely make one physical query, but $\cA$ and $\cD$ can actually make many queries, as long as the measurements on those queries they make are "gentle".

Is it possible to allow $\Sim$ to make "gentle" queries to the oracle for $f$, just as the adversary and distinguisher can do with their stateless oracles? We hope that \emph{for certain functionalities} with sampled random inputs, we will be able to detect whether $\Sim$ makes a destructive (but meaningful) measurement on its query or a gentle (but likely uninformative) query, so that the stateful oracle can turn off once it has made a "destructive" query, but stay on when it has only made gentle queries. 

The second weaker security definition we propose examines the above "gentle measurement attack" by the distinguisher more closely, and redefines what a single query means in the quantum query model. We develop this notion in the following subsection.


\subsection{Single effective query (SEQ) simulation-based one-time security}
\label{sec:seq_oracle_definition}

In this section, we describe an alternative model of access which we call the single-effective-query (SEQ) model. This model intuitively allows the adversary to make multiple queries, but it can only have information about one evaluation at a time. 

For example, the adversary can make a query, then uncompute it, and then make a query on a different input. After any query, the adversary only has information about at most one point of the function, so they never make more than a single effective query. We will use the compressed oracle technique to record queries the adversary makes and enforce that they cannot make more than a single effective query.\newline

Let $f: \cX \times \cR \to \cY$ be a randomized function, where the user specifies $x$ and $r$ is chosen randomly\footnote{In general, this function may or may not output its randomness $r$.}. Let $f_{\$}$ be an implementation of $f$ that uses a random oracle $H$ to compute the randomness. On receiving input $x$, $f_{\$}$ computes $r = H(x)$ and outputs 
\[f_{\$}(x) = f(x,H(x))\]

The single effective query (SEQ) oracle $f_{\$, 1}$ augments $f_{\$}$ with the ability to recognize how many distinct (effective) queries the user has made to $H$ by implementing $H$ as a compressed oracle that records the user's queries. If answering a given query would increase the number of recorded queries to $2$ or more, then $f_{\$,1}$ does not answer it. Also, when $f_{\$,1}$ does answer a query, it will flip a bit $b$ to indicate that it has done so. The formal definition $f_{\$, 1}$ is given below.

\paragraph{Compressed Single-Effective-Query Oracle $f_{\$, 1}$}


We define $f_{\$, 1}$ to implement the random oracle $H$ using the compressed oracle technique described in \Cref{sec:compressedRO}. In this case, the oracle's $\calH$ register contains a superposition over databases $D$, which are sets of input/output pairs $(x, r)$. When $f_{\$, 1}$ is initialized, $\calH$ is initialized to $\ket{\emptyset}$.

$f_{\$,1}$ responds to queries by an isometry implementing the following procedure coherently on basis states $\ket{x, u, b}_{\calQ}\otimes \ket{D}_{\calH}$:
\begin{enumerate}
    \item If $(x', r) \in D$ for some $x'\neq x$ and some $r\in \calR$, skip the following steps.
    \item Otherwise, prepare $\ket{x, 0}$ in a register $\calQ'$ and query it to the compressed random oracle $H$.
    \item Apply the isometry 
    \[
        \ket{x, u, b}_{\calQ} \otimes \ket{x, r}_{\calQ'} 
        \mapsto 
        \ket{x, u\oplus f(x;r), b\oplus 1}_{\calQ} \otimes \ket{x, r}_{\calQ'}
    \]
    to registers $(\calQ, \calQ')$
    \item Uncompute step 2 by querying $\calQ'$ to the compressed oracle $H$.
\end{enumerate}


It is not hard to see that $\calH$ records at most one query at a time.

\begin{claim}
\label{claim:single_entry_database}
    Let $A$ be an oracle algorithm interacting with $f_{\$,1}$. After every query in this interaction, $\calH$ is supported on databases $D$ with at most one entry.
\end{claim}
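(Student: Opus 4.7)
The plan is to proceed by induction on the number of queries made by $A$ to $f_{\$,1}$. The base case is immediate: before any queries, $\calH$ is initialized to $\ket{\emptyset}$, a database with zero entries. For the inductive step, assume the $\calH$ register is supported on databases $D$ with $|D|\le 1$ immediately before some query, and consider the action of the query procedure described in \Cref{sec:seq_oracle_definition}. By linearity it suffices to analyze its action on a basis state $\ket{x, u, b}_{\calQ}\otimes \ket{D}_{\calH}$ with $|D|\le 1$.

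I would split the analysis according to the coherent check in step~1. In the first case, there exists $x'\neq x$ and some $r\in\calR$ with $(x', r)\in D$; steps 2--4 are skipped, the $\calH$-register is unchanged, and hence still has at most one entry. In the second case, $D$ is either empty or of the form $\{(x, r')\}$, and steps 2--4 all execute. Here the key observation I would invoke is a \emph{locality} property of the compressed random oracle: a query on a computational-basis input $x$ acts only on the sub-register $\calH_x$ of $\calH$ associated to input $x$, and as the identity on $\calH_{x''}$ for every $x''\neq x$. This is clear from the definitions of $\Decomp_x$ and $\CO'$ in \Cref{sec:compressedRO}, which only read or touch the slot corresponding to the queried input.

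Applying this observation, step~2 can only introduce or modify an entry at position $x$ of $D$, step~3 acts purely on the adversary registers $\calQ$ (using $\calQ'$ as a control, but touching no part of $\calH$), and step~4 is another compressed-oracle query on input $x$, which again can only affect $\calH_x$. Consequently, throughout steps 2--4 the $\calH$ register is supported on databases whose only possible non-trivial entry is at input $x$, so the resulting $D$ satisfies $|D|\le 1$. Combining both cases gives the invariant after the query, completing the induction.

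The main subtlety I anticipate is that step~3 creates entanglement between $\calQ$ and $\calQ'$ via the XOR of $f(x; r)$, so step~4 is \emph{not} a clean uncomputation of step~2; one might worry that residual amplitude is pushed into parts of $\calH$ outside of $\calH_x$. The locality property defuses this concern: whatever superposition over $\calH_x$ is created by step~2 is further transformed by step~4, but no amplitude ever migrates to any $\calH_{x''}$ with $x''\neq x$. Thus the potentially messy interaction between steps 2--4 is irrelevant to the claim, and the argument reduces to verifying the locality property of $\CO$ and tracking which registers each step touches.
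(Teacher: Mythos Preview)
Your proposal is correct and takes essentially the same approach as the paper's own proof: both argue by induction on the number of queries, split on the step~1 check, and then invoke the locality property of the compressed oracle (that $\Decomp_x$ and $\CO'$ can only affect the database at position $x$) to conclude that steps 2--4 keep the support on databases with at most one entry. The paper packages the inductive step as invariance of a particular subspace of $\calQ\times\calH$, whereas you do a direct case analysis, but the content is the same.
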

\begin{proof}
    This is clearly true after $0$ queries. We proceed by induction. 
    Consider the projector onto the space spanned by states of the form
    \begin{align*}
        \ket{x, u}_{\calQ} &\otimes \ket{\emptyset}
        \\
        \ket{x, u}_{\calQ} &\otimes \ket{\{(x,r)\}}
    \end{align*}
    for some $x\in \calX$, $u\in\calY$, and $r \in \calR$. $f_{\$,1}$ acts as the identity on all states outside of this space. Furthermore, this space is invariant under queries to $H$. This follows from the fact that the compressed oracle operation $\Decomp$ maps $\ket{x, u}\otimes \ket{D} \mapsto \ket{x, u}\otimes \ket{D'}$ where $D$ and $D'$ are the same, except for the possibility that $D(x) \neq D'(x)$, and the fact that the compressed oracle operation $\CO'$ does not modify $\ket{D}$. Finally, $f_{\$, 1}$ only operates on $\calH$ by querying $H$ on states in this space, so it is also invariant on the space.
\end{proof}

\begin{definition}[Single-Effective-Query Simulation-based One-Time Program Security]\label{def:simulation-style-otp-security}
Let $\lambda$ be the security parameter.
Let $f$ be a function in a function family $\cF$ that possibly comes with a public, classical auxiliary input $\aux_f$. Let $D$ be a distinguisher that is bounded to make a polynomial number of oracle queries but is otherwise computationally unbounded. $D$ is given a one-time program (real or simulated), a classical plaintext description of $f$, as well as $\aux_f$. 

    $\OTP(f)$ is a \textbf{secure OTP} for $\cF$ if there exists a QPT simulator $\Sim$ such that for every $f\in \calF$ and all distinguishers $D$, there exists a  negligible function $\negl(\cdot)$ such that for all $\lambda\in \N$: 
    \begin{equation}
        \left| \Pr\left[1\gets D(\left(\OTP(f), f, \aux_f\right)\right] - \Pr\left[1\gets D\left(\Sim^{f_{\$,1}}(\aux_f), f, \aux_f \right) \right]\right| \leq \negl(\lambda)
    \end{equation}

\end{definition}
Note that this is in the oracle model, so both $\OTP$ and $\Sim$ can output oracle-aided programs.

This definition helps us circumvent the strong impossibility result aforementioned: $\Sim$ can also make up a program that uses the single effective query oracle $f_{\$, 1}$. If the distinguisher tries to gently query the oracle by only checking the correctness of evaluations, $\Sim$ can also gently query $f_{\$, 1}$ oracle as well, which does not prevent further queries.

\subsection{Single-Query Learning Game and Learnability}
\label{sec:unlearnability_defs}

In this section, we will give some further characterization and generalization of what types of functions can be made into one-time sampling programs with a meaningful security notion.

Eventually, we want the adversary not to evaluate the program twice. Clearly, this is only possible if the function itself cannot be learned with a single oracle query or at least it should be hard to learn two input-output pairs given one oracle query. We formalize such unlearnability through several definitions in this section.

The first definition is the most natural, which requires the adversary to be able to evaluate on two different inputs of its own choice correctly.
\begin{definition}[Single-Query Learning Game]
\label{def:single_query_learning_game}
A learning game for a function family $\cF = \{\cF_\lambda: [N] \to [M]\}$, a distribution family $\cD = \{D_f\}$, 
and an adversary $\calA$ is denoted as 
$\LG^\calA_{\mathcal{F}, \cD}(1^\lambda)$ 
which consists of the following steps:
\begin{enumerate}
    \item \textbf{Sampling Phase}: At the beginning of the game, the challenger takes a security parameter $\lambda$ and samples a function $(f, \aux_f) \gets \cF_\lambda$ at random according to distribution $\cD_f$,  where $\aux_f$ is some \emph{classical} auxiliary information.
    
    \item \textbf{Query Phase}: $\calA$ then gets a single oracle access to $f$\footnote{If we consider a single physical (effective, resp.) query to the oracle, then the learnability property will correspond to single physical (effective, resp.) query learnability. Unless otherwise specified (by emphasizing whether the query is physical/effective), the remaining definitions in the rest of this work refer to both cases.} and some classical auxiliary information $\aux_f$;
    \item \textbf{Challenge Phase}: 
    \begin{enumerate}
        \item $\calA$ outputs two input-output tuples $(x_1, r_1; y_1), (x_2, r_2; y_2)$ where $(x_1, r_1) \neq (x_2, r_2)$.
        \item Challenger checks if $f(x_1, r_1) = y_1$ and $ f(x_2, r_2) = y2$.
    \end{enumerate}
Challenger outputs 1 if and only if both the above checks pass.
\end{enumerate}
  
\end{definition}

\begin{definition}[Single-Query $\gamma$-Unlearnability]
\label{remark:gamma_unlearnability}
 Let $\lambda$ be the security parameter and let $\gamma = \gamma(\lambda)$ be a function. A function family $\calF = \{\calF_\lambda\}_{\lambda \in \N}$ is single-query $\gamma$-unlearnable if for all (non-uniform) quantum polynomial-time adversaries $\calA$,
  \begin{align*}
        \Pr_{f \leftarrow \calF_n} [\LG^\calA_{\mathcal{F}, \cD}(1^\lambda) = 1] \le \gamma(\lambda).
    \end{align*}
\end{definition}

The most often used unlearnability definition refered to in this work is when $\gamma = \negl(\lambda)$.
We will often refer to the following definition as "Single-Query unlearnable" for short.

\begin{definition}[Single-query $\negl(\lambda)$-unlearnable functions]\label{def:single-query-unlearnable}
    Let $\lambda$ be the security parameter. A function family $\calF = \{\calF_\lambda\}_{\lambda \in \N}$ is single-query unlearnable if for all (non-uniform) quantum polynomial-time adversaries $\calA$,
  \begin{align*}
        \Pr_{f \leftarrow \calF_n} [\LG^\calA_{\mathcal{F}, \cD}(1^\lambda) = 1] \le \negl(\lambda).
    \end{align*}
\end{definition}

\begin{remark}[Examples of Single-query $\negl(\lambda)$-Unlearnable Functions]
\label{remark:negl_unlearnable_example}
     For example, a random function $\cF: \calX \to \calY$, when the range size $\calY$ of $f$ is superpolynomially large in the security parameter, it is easy to show that it is single-physical-query $\negl(\lambda)$-unlearnable via existing quantum random oracle techniques, such as by applying Theorem 4.2 in \cite{yamakawa2021classical} or Lemma 5 in \cite{zhandry19compressed}. We refer to \Cref{sec:exampls_seq_unlearnable} for more details. 
\end{remark}

\begin{remark}[Examples of Single-Query $\gamma$-Unlearnability]
\label{remark:examples_gamma}
    For some functionalities, $\calA$ may be able to learn two input-outputs with a larger probability, (e.g. for a binary outcome random function, a random guess would succeed with probability at least $1/2$), but not non-negligibly larger than some threshold $\gamma(\lambda)$. 
\end{remark}

\paragraph{Generalized Unlearnability}
Note that the above single-query learnability is not strong enough when we want $f$ to be a cryptographic functionality: the adversary may learn something important about $(x_1, r_1, y_1), (x_2,r_2,y_2)$ without outputting the entire input-output pair.

For example, when $f$ is a pseudorandom function: it may not suffice to guarantee that $\calA$ cannot compute correctly two pairs of input-outputs. We should also rule out $\calA$'s ability to win an indistinguishability-based pseudorandomness game for  \emph{both inputs} $(x_1, r_1, y_1), (x_2,r_2,y_2)$. 

Before going into this more generic definition, we first define a notion of "predicate" important to our generic definition.
\begin{definition}
[Predicate]
\label{def:predicate}
    A predicate $P(f, x, r, z, \ans)$ is a binary outcome function that runs a program $f$ on a some input $(x,r)$ to get output $y$, and 
    outputs 0/1 depending on whether the tuple $(x, r, y, z, \ans) $ satisfies a binary relation $R_f$  corresponding to $f$: $(x, r, y, z, \ans) \in R_f$.
 $z$ is some auxiliary input that specifies the
relation.
\end{definition}

Note that the above predicate definition implicates that with the capability to evaluate $f$ (even if using only oracle) access, one has the ability to verify whether the predicate $P(f,x,r,z,\ans)$ is satisfied.

\begin{remark}
We provide two concrete examples:

\begin{enumerate}
    \item A first concrete example for the above predicate is a secret-key  encryption scheme: $f$ is an encryption function. The predicate is encrypting a message $x$ using $r$ and $\ans$ is an alleged valid ciphertext on message $x$ using randomness $r$. Then the predicate $P$ is simply encrypting $x$ using $r$ to check if $\ans$ is the corresponding ciphertext.

    \item Another example is when $f$ is a signing function. The predicate signs message $x$ using randomness $r$ and checks if $\ans$ is a valid signature for message $x$ with randomness $r$.

    \item When $f$ is a PRF, a possible predicate is to check if an alleged evaluation $y$ is indeed the evaluation $\prf(k, x\vert r)$. 
\end{enumerate}
\end{remark}

\begin{definition}[Generalized Single-Query Learning Game]
\label{def:generalized_learning_game}
 learning game for a sampler $\samp$ (which samples a function in $\mathcal{F}_\lambda$), a predicate $P = \{P_\lambda\}$, 
and an adversary $\calA$ is denoted as $\GLG^\calA_{\samp, P}(1^\lambda)$, which consists the following steps:
\begin{enumerate}
    \item \textbf{Sampling Phase}: At the beginning of the game, the challenger samples $(f, \aux_f) \gets \samp(1^\lambda)$, where $\aux_f$ is some \emph{classical} auxiliary information.
  
    \textbf{Query Phase}: $\calA$ then gets a single oracle access to $f$ and also gets $\aux_f$;
    \item \textbf{Challenge Phase}: 
    \begin{enumerate}
        \item $\calA$ outputs two input-randomness pairs $(x_1, r_1), (x_2, r_2)$ where $(x_1, r_1) \neq (x_2, r_2)$.
        \item Challenger prepares challenges $\ell_1, \ell_2$ i.i.d using $(x_1, r_1)$ and $(x_2, r_2)$ respectively and sends them to $\calA$.

        \item $\calA$ outputs answers $\ans_1, \ans_2$ for challenges $\ell_1, \ell_2$.
    \end{enumerate}
\end{enumerate}
The game outputs $1$ if and only if the answers satisfy the predicate $P_\lambda(f, x_1, r_1, \ell_1, \ans_1)$ and $P_\lambda(f, x_2, r_2, \ell_2, \ans_2)$ are both satisfied.

\end{definition}

\begin{definition}[Generalized Single-query $\gamma$-Unlearnable functions]
\label{def:generalized-single-query-unlearnable}
    Let $\lambda$ be the security parameter. A function family $\calF = \{\calF_\lambda\}_{\lambda \in \N}$ is generalized single-query $\gamma$ unlearnable for some $\gamma = \gamma(\lambda)$ if for all (non-uniform) quantum polynomial-time adversaries $\calA$,
  \begin{align*}
        \Pr_{(f, \aux_f) \leftarrow \samp(1^\lambda)}[\GLG^\calA_{\samp, P}(1^\lambda) = 1] \le \gamma. 
    \end{align*}
\end{definition}

\begin{remark}[Example]
To demonstrate how the above definition works on a concrete level,
we give an example of PRF: in the challenge phase, the adversary will provide  $(x_1, r_1), (x_2, r_2)$. The challenger then samples two independent, uniform random bits $b_1, b_2$: if $b_1 = 0$, let $y_1$ be the $\prf$ evaluation on $(x_1, r_1)$; else let $y_1$ be a random value. We assign $y_2$ correspondingly using $b_2$ similarly. The security guarantees that $\calA$ should not be able to output  correct guesses for both $b_1, b_2$ with overall probability non-negligibly larger than $1/2$: $\cA$ supposedly always has the power to compute one of them correctly with probability 1; but for the other challenge, $\cA$ should not be able to win with probability larger than it can do in a regular pseudorandomness game, when it was not given the power to evaluate the PRF.
\end{remark}

All the above definitions are well-defined for single-physical-query oracles or single effective query oracles. We also make the following simple observation:

\begin{claim} \label{claim:physical_unlearnable_imply_effective_unlearnable}
Single-effective-query $\gamma$-unlearnability implies single-physical-query $\gamma$-unlearnability.
\end{claim}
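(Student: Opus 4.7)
The plan is to prove the contrapositive: if $\cF$ is not single-physical-query $\gamma$-unlearnable, then it is not single-effective-query $\gamma$-unlearnable either. Concretely, suppose $\cA$ is a QPT adversary for the (generalized) single-physical-query learning game whose winning probability exceeds $\gamma(\lambda)$. The plan is to construct an SEQ adversary $\cA'$, with essentially the same running time, that achieves the same winning probability when given access to $f_{\$,1}$ in place of the single-physical-query oracle $O^{(1)}_{f(\cdot,\$)}$.

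The construction of $\cA'$ is syntactic: $\cA'$ receives the same auxiliary input $\aux_f$, internally runs $\cA$, and answers $\cA$'s unique oracle query by forwarding it to $f_{\$,1}$ (ignoring the extra flag bit $b$ that $f_{\$,1}$ flips). In the challenge phase $\cA'$ simply outputs whatever $\cA$ outputs. After this single invocation, $\cA'$ makes no further queries, so the SEQ oracle's internal database is irrelevant and $\cA'$ behaves exactly as $\cA$ would against $O^{(1)}_{f(\cdot,\$)}$. The winning condition of the learning game depends only on $\cA'$'s outputs, so its success probability equals that of $\cA$, which exceeds $\gamma$, contradicting the SEQ $\gamma$-unlearnability of $\cF$.

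The only non-trivial step is to verify that a single query to $f_{\$,1}$ produces, on the adversary's registers, the same quantum channel as a single query to $O^{(1)}_{f(\cdot,\$)}$. Here we would invoke the standard equivalence between the compressed oracle and a truly random oracle (Zhandry~\cite{zhandry19compressed}): the reduced state on the query registers after any query to the compressed implementation of $H$ coincides with the reduced state produced by a uniformly random $H$. Since the database starts in $\ket{\emptyset}$, the gatekeeping check inside $f_{\$,1}$ (which skips the response when $(x',r)\in D$ for some $x'\neq x$) is vacuous for the first query, so $f_{\$,1}$ on one query implements $x\mapsto f(x,H(x))$ for a uniformly random $H$, which is exactly the distribution implemented by $O^{(1)}_{f(\cdot,\$)}$. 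Hence $\cA'$ and $\cA$ have identical output distributions.

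I expect no real obstacle beyond a careful bookkeeping of the flag register and of the fact that the same argument works verbatim whether we use \Cref{def:single_query_learning_game} or the generalized variant \Cref{def:generalized_learning_game}, since the challenge phase in both games is independent of the type of oracle access the adversary used. The proof is essentially a one-line observation: SEQ access is strictly more permissive than physical-query access, so any distinguishing/learning advantage available in the latter model is also available in the former.
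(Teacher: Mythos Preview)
Your proposal is correct and follows the same essential idea as the paper: the SEQ oracle is at least as permissive as the single-physical-query oracle, so any attack in the physical-query model transfers to the SEQ model. The paper dispenses with this in a single sentence (noting that anything unlearnable with a single effective query is unlearnable with a single physical query), whereas you give the explicit contrapositive reduction and justify why the first SEQ query is distributed identically to a physical query via the empty-database/compressed-oracle equivalence---more detail than the paper provides, but the same argument.
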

The single-physical-query 
oracle is a strictly stronger oracle and therefore anything unlearnable with a single-effective query is unlearnable with a single physical query.

\begin{remark}
\label{remark:separation_physical_effective}
   However, the other way of the above implication is not true. We will give a counter example in \Cref{sec:impossibility_oracle_model}.
\end{remark}
\jiahui{ add a remark about testing quantum adv and refer to appendix }

\subsection{Operational security definitions}
\label{sec:operational_defs}



In this section, inspired by the above unlearnability definitions, we consider a further relaxation of the above simulation based definition, which we can realize security for certain functionalities in the \emph{plain model} without oracles in \Cref{sec:construction_plain_model}.

In these definitions, we can partially characterize the operations an adversary will do.

\paragraph{Strong Operational One-Time Security}  We first give a one-time security that relaxes the simulation based definitions \Cref{def:simulation-style-otp-security} and \Cref{def:single-query-classical-output-simulation-security}.

The following definition says that given a one time program for $f$, any $QPT$ (or polynomial quantum query in the oracle model) adversary should not be able to learn two input-output pairs with a noticeably larger probability than 
a simulator with single query (physical/effective, resp.) to the oracle.

\begin{definition}[Strong Operational Security]\label{def:strong-operational-security}
   A one-time (sampling) program for a function family $\calF_\lambda$ satisfies strong operational security if: for all (non-uniform) quantum polynomial-time adversaries $\calA$ receiving one copy of $\OTP(f), \aux_f, (f, \aux_f) \gets \calF_\lambda$, there is a (non-uniform) quantum polynomial-time simulator $\Sim$ that is given single (physical/effective, resp.) quantum query access to $f$, there exists a negligible function $\negl(\lambda)$, the following holds for all $\lambda \in \N$:
  \begin{align*}
          & \lvert \Pr_{f,\aux_f \leftarrow \calF_\lambda} [f(x_1, r_1) = y_1 \wedge f(x_2,r_2) = y_2: ((x_1,r_1, y_1), (x_2, r_2,y_2)) \leftarrow \calA(\OTP(f), \aux_f)] - \\ 
          &\Pr_{f,\aux_f \leftarrow \calF_\lambda} [\LG^\Sim_{f, \cD}(1^\lambda) = 1] \rvert \le \negl(\lambda).
    \end{align*}
where $\LG^\Sim_{\mathcal{F}, \cD}(1^\lambda)$ is the single query learnability game defined in  
\Cref{remark:gamma_unlearnability} and  $(x_1, r_1) \neq (x_2, r_2)$. 
\end{definition}

\begin{remark}
    We will call the above $\gamma$-strong operational one-time security if we have $\Pr[\LG^\Sim_{f,\cD}(1^\lambda)] = \gamma$.
\end{remark}


\paragraph{Generalized One-time Security}
Similar to the discussions in \Cref{sec:unlearnability_defs} the above security is not sufficient when we want $f$ to be a cryptographic functionality: the adversary may learn something important about $(x_1, r_1, y_1), (x_2,r_2,y_2)$ without outputting the entire input-output pair. 


Corresponding to the above generalized learning game in \Cref{def:generalized_learning_game}, we give the following definition:

\begin{definition}
[Generalized Operational One-Time Security Game]
\label{def:generalized_operational_otp_security_game}
A Generalized Operational One-Time Security game for a sampler $\samp$ (which samples a function in $\mathcal{F}_\lambda$), a predicate $P = \{P_\lambda\}$, 
and an adversary $\calA$ is denoted as $\GOTP^\calA_{\samp, P}(1^\lambda)$, which consists the following steps:
\begin{enumerate}
    \item \textbf{Sampling Phase}: At the beginning of the game, the challenger samples $(f, \aux_f) \gets \samp(1^\lambda)$, where $\aux$ is some \emph{classical} auxiliary information.
  
    \textbf{Query Phase}: $\calA$ then gets a single copy $\OTP(f)$ and classical auxiliary information $\aux_f$;
    \item \textbf{Challenge Phase}: 
    \begin{enumerate}
        \item $\calA$ outputs two input-randomness pairs $(x_1, r_1), (x_2, r_2)$ where $(x_1, r_1) \neq (x_2, r_2)$.
        \item Challenger prepares challenges $\ell_1, \ell_2$ i.i.d using $(x_1, r_1)$ and $(x_2, r_2)$ respectively and sends them to $\calA$.\bhaskar{What distribution are $\ell_1, \ell_2$ sampled from? Also, does the distribution that $\ell_i$ is sampled from depend on $(x_i, r_i)$?}

        \item $\calA$ outputs answers $\ans_1, \ans_2$ for challenges $\ell_1, \ell_2$.
    \end{enumerate}
\end{enumerate}
The game outputs $1$ if and only if the predicate $P_\lambda(f, x_1, r_1, \ell_1, \ans_1)$ and $P_\lambda(f, x_2, r_2, \ell_2, \ans_2)$ are both satisfied.

\end{definition}

\begin{definition}[Generalized Strong Operational Security]\label{def:gen-strong-operational-security}

   A one-time (sampling) program for a function family $\calF_\lambda$ satisfies generalized strong operational security if: for all (non-uniform) quantum polynomial-time adversaries $\calA$ 
   there is a (non-uniform) quantum polynomial-time simulator $\Sim$ that is given single (physical/effective, resp.) quantum query access to $f$, there exists a negligible function $\negl(\lambda)$, the following holds for all $\lambda \in \N$:
  \begin{align*}
          &\lvert \Pr_{f, \aux_f\leftarrow \calF_\lambda} [\GOTP_{P, \samp}^\calA = 1] - \Pr_{f,\aux_f \leftarrow \calF_\lambda} [\GLG^\Sim_{\mathcal{F}, \cD}(1^\lambda)  = 1] \rvert \le \negl(\lambda).
    \end{align*}
    where $\GLG^\Sim_{\samp, P}(1^\lambda)$ is the single query learnability game defined in \Cref{def:generalized_learning_game}.
\end{definition}

\begin{remark}[Example]
    We will give a formal concrete example for the above definition for PRFs for the above in \Cref{def:generalized_operational_prf}.
\end{remark}

\paragraph{Weak Operational One-time Security} We finally present a relatively limited but intuitive definition: no efficient quantum adversary should be able to output two distinct samples, i.e.~tuples of the form $(x, r, f(x, r))$ with non-negligible probability, given the one-time program $\OTP(f)$ for $f$.

We can observe that this definition is only applicable to single-query $\negl(\lambda)$-unlearnable functions defined in \Cref{def:single_query_learning_game}. But this function class already covers many cryptographic applications that have search-based security, such as one-time signatures, encryptions and proofs.

\begin{definition}[Weak operational one-time security]\label{def:weak-operational-security}
    A one-time (sampling) program for a function family $\calF_\lambda$ satisfies weak operational one-time security if: for all (non-uniform) quantum polynomial-time adversaries $\calA$ 
   there exists a negligible function $\negl(\lambda)$, the following holds for all $\lambda \in \N$:
    \begin{align*}
        \Pr_{f,\aux_f\leftarrow \calF_\lambda}[ f(x_1, r_1) = y_1 \wedge f(x_2, r_2) = y_2 : ((x_1, r_1, y_1), (x_2, r_2, y_2)) \leftarrow \calA(\OTP(f), \aux_f)] \le \negl(\lambda).
    \end{align*}
    where $(x_1, r_1) \neq (x_2, r_2)$.
\end{definition}

\begin{remark}
\label{remark:weak_op_def_equivalent_strong_op}
    It is to observe that for a family of functions $\cF_\lambda$ which are $\gamma$-unlearnable for \emph{any inverse polynomial} $\gamma$ (i.e. $\negl(\lambda)$-unlearnable), \Cref{def:weak-operational-security} and \Cref{def:strong-operational-security} are equivalent. 

    First, it is easy to observe that \Cref{def:strong-operational-security} implies \Cref{def:weak-operational-security}.
    When $\cF_\lambda$ is $\negl(\lambda)$-unlearnable,  the winning probability of the learning game for $\Sim$ is $\negl(\lambda)$, which makes $\Pr_{f,\aux_f \leftarrow \calF_\lambda} [f(x_1, r_1) = y_1 \wedge f(x_2,r_2) = y_2: ((x_1,r_1, y_1), (x_2, r_2,y_2)) \leftarrow \calA(\OTP(f), \aux_f)]$  negligible. Therefore, if a construction satisfies strong operational security, then the adversary's probability of outputing two input-output pairs must be negligible and therefore it satisfies weak operational security. If a construction satisfies weak operational security,  the adversary's probability of outputing two input-output pairs must be negligible and therefore its difference with $\Sim$'s probability of outputing two input-output pairs must be negligible since both values are negligible, satisfying strong operational security.
\end{remark}


\paragraph{Operational one-time security for verifiable functions} Another natural definition could require security against adversaries to produce two input-output $(x, f(x, r))$ pairs without necessarily providing the randomness $r$ used to generate the output. This notion would make most sense when there is an (not necessarily efficient) verification algorithm $\mathsf{Verify}_f$ that takes pairs of the form $(x, y)$ and either accepts or rejects. Call such function families verifiable. 


\begin{definition} A randomized function family $\calF =\{f : \calX \times \calR \rightarrow \calY\}$ is \textbf{verifiable} if there is an efficient procedure to sample a function along with an associated verification key $(f, vk_f) \leftarrow \calF$, and there exists an efficient verification procedure $\Ver$ such that for all $x \in \calX$,
\begin{align*}
    \Pr_{r\leftarrow \calR}[1 \leftarrow \Ver_{\calF}(vk_f, x, f(x, r))] \ge 1 - \negl(\lambda).
\end{align*}
\end{definition}

\begin{definition}[Operational one-time security for verifiable functions]\label{def:verifiable-operational-security}
    A one-time (sampling) program for a function family $\calF_\lambda$ satisfies verifiable operational one-time security if: for all (non-uniform) quantum polynomial-time adversaries $\calA$ 
   there exists a negligible function $\negl(\lambda)$, the following holds for all $\lambda \in \N$:
    \begin{align*}
        \Pr_{(f,vk_f)\leftarrow \calF_\lambda}[ \Ver_{\calF} (vk_f, x_1, y_1) = \Ver_{\calF} (vk_f, x_2, y_2) = 1 : ((x_1, y_1), (x_2, y_2)) \leftarrow \calA(\OTP(f), vk_f)] \le \negl(\lambda).
    \end{align*}
    where $x_1 \neq x_2$.
\end{definition}

\jiahui{add a comment about the relation between Sim-based def and generalized def}

\newcommand{\bfr}{\mathbf{r}}
\newcommand{\bfR}{\mathbf{R}}

\subsection{Relationships among the definitions}
\label{sec:relations_defs}
\jiahui{I will come back to make changes to this section}
\begin{figure}[H]
    \centering
\begin{tikzpicture}[
    box/.style={rectangle,draw,align=center,text width=5cm},
    arrow/.style={-{Stealth[scale=1.5]}}]

\node (singlephysicalquery) [box] {Single-(physical) query simulation-based security\\\Cref{def:single-query-simulation-security}};
\node (singlephysicalqueryclassicaloutput) [box, below of=singlephysicalquery,xshift=-4cm,yshift=-2cm] {Single-query classical-output simulation-based security \\\Cref{def:single-query-classical-output-simulation-security}};
\node (SEQ) [box, below of=singlephysicalquery,xshift=4cm,yshift=-2cm] {Single Effective Query simulation-based security \\\Cref{def:simulation-style-otp-security}};
\node (generalizedoperational) [box, below of=SEQ,xshift=-4cm,yshift=-2cm] {Generalized Strong operational security\\\Cref{def:gen-strong-operational-security}};
\node (strongoperational) [box, below of=SEQ,xshift=-4cm,yshift=-4cm] {Strong operational security\\\Cref{def:strong-operational-security}};
\node (weakoperational) [box, below of=strongoperational,yshift=-1cm] {Weak operational security\\\Cref{def:weak-operational-security}};

\draw [arrow] (singlephysicalquery) -- (singlephysicalqueryclassicaloutput);
\draw [arrow] (singlephysicalquery) -- (SEQ);
\draw [arrow] (singlephysicalqueryclassicaloutput) -- 
(strongoperational);
\draw [arrow] (SEQ) -- 
(generalizedoperational);
\draw [arrow] (strongoperational) -- (weakoperational);
\draw[arrow](generalizedoperational) -- (strongoperational);
\draw [arrow] (weakoperational) -- node[right,text width=5cm,align=center]{Single-query $\negl$- \\unlearnable \Cref{def:single-query-unlearnable}}(strongoperational);
\end{tikzpicture}
    \caption{Relationships between our OTP definitions}
    \label{fig:relations_defs}
\end{figure}

We sort out the relations among the definitions we discussed in \Cref{fig:relations_defs}.

Note that as we discussed in \Cref{remark:weak_op_def_equivalent_strong_op}: the weak operational security also implies strong operational security when the function family is single (physical/effective)-query $\negl(\lambda)$-unlearnable. The implication from generalized strong operational security to strong operational security is also easy to see and we omit the proof.

The next two lemmas are true simply because the adversary in the hypothesized security definition (\Cref{def:single-query-simulation-security}) is stronger than the adversary in the implied security definitions (\Cref{def:single-query-classical-output-simulation-security} and \Cref{def:simulation-style-otp-security}).

\begin{lemma}
    Suppose $\OTP$ is a one-time program compiler that satisfies the single-query simulation-based security definition (Definition~\ref{def:single-query-simulation-security}) for a function family $\calF$. Then, it also satisfies the single-query classical-output simulation-based security definition (Definition~\ref{def:single-query-classical-output-simulation-security}) for $\calF$.
\end{lemma}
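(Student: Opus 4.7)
The plan is to observe that the hypothesis is simply a stronger quantification than the conclusion, so the implication follows by specialization. Concretely, any QPT adversary $\calA'$ that outputs classical information is a fortiori a QPT adversary whose output lies in the larger space considered by \Cref{def:single-query-simulation-security} (quantum states or oracle-aided circuits). Thus, for each such $\calA'$, I can invoke the hypothesized single-query simulation-based security to obtain a QPT simulator $\Sim$ with single (quantum) query access to $f$ such that the real and ideal distributions are computationally indistinguishable by \emph{every} QPT distinguisher $\calD$ with oracle access to $O_f$.

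The first step is to fix an arbitrary $\calA'$ satisfying the hypothesis of \Cref{def:single-query-classical-output-simulation-security} (namely, a QPT algorithm with classical output). I then define $\calA \coloneqq \calA'$ as the corresponding adversary in \Cref{def:single-query-simulation-security}; this is well-typed, since the quantifier in the stronger definition ranges over \emph{all} QPT adversaries, without any restriction on the output register. Next, I apply the assumed single-query simulation-based security to this $\calA$, obtaining a QPT simulator $\Sim^{O^{(1)}_{f(\cdot,\$)}}$ such that for every QPT distinguisher $\calD$ and every $f \in \calF$,
\[
    \left|\Pr\!\left[1 \leftarrow \calD^{O_f}(\calA(1^\lambda, \OTP(f), \aux_f),\aux_f)\right] - \Pr\!\left[1 \leftarrow \calD^{O_f}(\Sim^{O^{(1)}_{f(\cdot, \$)}}(\aux_f),\aux_f)\right] \right| \le \negl(\lambda).
\]
Since this holds for \emph{every} distinguisher $\calD$, in particular it holds for every distinguisher that only inspects classical information in the input register, which is exactly the quantifier needed in \Cref{def:single-query-classical-output-simulation-security}. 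Substituting $\calA = \calA'$ then yields the required bound, completing the proof.

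There is no real obstacle here, since the argument is purely definitional. The only mild subtlety is that one must verify the syntactic compatibility: the output of $\calA'$ is classical, but \Cref{def:single-query-simulation-security} allows $\calA$ (and hence its simulator) to produce oracle-aided quantum circuits, so passing $\calA'$ through the stronger definition is allowed without modification, and the simulator's output is then distinguishable-equivalent to $\calA'$'s classical output via the class of all QPT distinguishers. No change to $\Sim$ is needed to fit it into the conclusion.
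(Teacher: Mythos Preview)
Your argument is correct and matches the paper's own justification, which simply observes (without a detailed proof) that the adversary class in \Cref{def:single-query-simulation-security} contains that of \Cref{def:single-query-classical-output-simulation-security}. One minor caveat to your final sentence: \Cref{def:single-query-classical-output-simulation-security} also constrains $\Sim$ to produce only classical output, so strictly speaking the $\Sim$ obtained from the stronger definition should be post-processed by measuring its output register in the computational basis; since $\calA'$'s output is already classical, composing any distinguisher with this measurement is itself a valid distinguisher, so indistinguishability is preserved and the modified simulator satisfies the syntactic requirement.
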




\begin{lemma}
    Suppose $\OTP$ is a one-time program compiler that satisfies the single-query simulation-based security definition (Definition~\ref{def:single-query-simulation-security}) for a function family $\calF$. Then, it also satisfies the single-effective-query simulation-based security definition (Definition~\ref{def:simulation-style-otp-security}) for $\calF$.
\end{lemma}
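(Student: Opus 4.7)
The plan is to show that a simulator with single-physical-query access to $O^{(1)}_{f(\cdot,\$)}$ can be promoted to a simulator with single-effective-query access to $f_{\$,1}$, since the latter access model is strictly more permissive. First, I would take the identity adversary $\cA_{\mathrm{id}}$ that simply forwards $(\OTP(f), \aux_f)$ unchanged to the distinguisher, and invoke Def.~\ref{def:single-query-simulation-security} to obtain a simulator $\Sim$ making at most one physical query to $O^{(1)}_{f(\cdot,\$)}$ such that $(\OTP(f), \aux_f) \approx_c (\Sim^{O^{(1)}_{f(\cdot,\$)}}(\aux_f), \aux_f)$ for any QPT distinguisher with oracle access to $O_f$. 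Then I would define $\Sim'$ for Def.~\ref{def:simulation-style-otp-security} by running $\Sim$ internally and routing $\Sim$'s unique query to $f_{\$,1}$ instead of to $O^{(1)}_{f(\cdot,\$)}$, initializing the database register $\calH$ to $\ket{\emptyset}$.

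The correctness of the reduction rests on the observation that a single invocation of $f_{\$,1}$ starting from the fresh database state $\ket{\emptyset}_{\calH}$ is operationally equivalent to a single query to $O^{(1)}_{f(\cdot,\$)}$. Concretely, on a basis input $\ket{x, u, 0}_{\calQ}\otimes \ket{\emptyset}_{\calH}$, the isometry of $f_{\$,1}$ produces $\ket{x, u\oplus f(x; r), 1}_{\calQ}$ where $r = H(x)$ is sampled uniformly from $\cR$ (by the semantics of the compressed random oracle, whose $\ket{\emptyset}$ represents the uniform superposition over all functions $H:\calX\rightarrow \cR$); this matches exactly the semantics of $O^{(1)}_{f(\cdot,\$)}$ which samples a fresh $r \gets \cR$ and returns $f(x,r)$. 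Linearity extends the equivalence to arbitrary superposition inputs. Since $\Sim$ makes only one physical query, $\Sim'$ also makes only one, trivially respecting the SEQ bound of one effective query; moreover $\Sim'$ never touches the $b$ bit or leaves residual entanglement with $\calH$ that would be distinguishable from the output of $O^{(1)}_{f(\cdot,\$)}$. Hence $\Sim'^{f_{\$,1}}(\aux_f)$ and $\Sim^{O^{(1)}_{f(\cdot,\$)}}(\aux_f)$ yield identically distributed outputs, and the indistinguishability from Def.~\ref{def:single-query-simulation-security} transfers to Def.~\ref{def:simulation-style-otp-security}.

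The main obstacle I anticipate is the distinguisher class mismatch: Def.~\ref{def:single-query-simulation-security} quantifies over QPT distinguishers with oracle access to $O_f$, whereas Def.~\ref{def:simulation-style-otp-security} allows query-bounded but otherwise computationally unbounded distinguishers holding the plain description of $f$. Closing this gap requires noting either that the simulator from Def.~\ref{def:single-query-simulation-security} actually achieves statistical closeness (in which case the unbounded distinguisher inherits the guarantee unchanged), or that for the purposes of this implication both definitions should be read under a common distinguisher class, consistent with the paper's informal assertion that ``the adversary in the hypothesized security definition is stronger than the adversary in the implied security definitions.'' I would state this alignment explicitly at the outset of the proof so the comparison between $\OTP(f)$ and $\Sim'^{f_{\$,1}}(\aux_f)$ is well-posed, and the rest of the argument then reduces to the single-query equivalence established above.
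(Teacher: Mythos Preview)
The paper does not give a proof of this lemma at all; it merely asserts, in the sentence preceding the two lemmas, that ``the adversary in the hypothesized security definition is stronger than the adversary in the implied security definitions.'' Your proposal is therefore considerably more detailed than anything in the paper, and it correctly isolates the two places where care is needed: (i) converting the single-physical-query simulator into one that uses $f_{\$,1}$, and (ii) the distinguisher-class mismatch between QPT distinguishers with oracle $O_f$ (Def.~\ref{def:single-query-simulation-security}) and query-bounded but computationally unbounded distinguishers holding $f$ in the clear (Def.~\ref{def:simulation-style-otp-security}). Point (ii) is a genuine gap in the paper's one-line justification, and you are right to flag it; the implication as stated really does require aligning the distinguisher classes, which the paper never does explicitly.

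On point (i), your ``linearity extends the equivalence'' step is too quick. On a basis input $\ket{x,u,0}$ the two oracles agree, but on a superposition $\sum_x \alpha_x \ket{x,u,0}$ they need not: $O^{(1)}_{f(\cdot,\$)}$ as described (``samples $r\leftarrow\calR$ and outputs $f(x,r)$'') most naturally samples a \emph{single} $r$ and applies the unitary $\ket{x,u}\mapsto\ket{x,u\oplus f(x,r)}$ across the whole superposition, whereas $f_{\$,1}$ uses $r=H(x)$ and hence different randomness for different $x$-branches, leaving the querier entangled with the database. These two channels are not identical on superposition inputs, so the one-query-to-one-query translation is not literally the identity. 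The fix is either to interpret $O^{(1)}_{f(\cdot,\$)}$ as itself using a fresh random function (so that it coincides with one call to $f_\$$), or to observe that whatever $\Sim$ achieves with a strictly weaker oracle, $\Sim'$ can achieve by first simulating that weaker oracle internally using its SEQ access together with its own coins---but you should state which route you take rather than appeal to linearity.
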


\begin{lemma}
    Suppose $\OTP$ is a one-time program compiler that satisfies the single-query classical-output simulation-based security (Definition~\ref{def:single-query-classical-output-simulation-security}) for a function family $\calF$. Then, it satisfies  strong operational security (Definition~\ref{def:strong-operational-security}) $\calF$.
\end{lemma}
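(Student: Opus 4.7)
The plan is to reduce strong operational security (\Cref{def:strong-operational-security}) to classical-output simulation security (\Cref{def:single-query-classical-output-simulation-security}) via a straightforward verification distinguisher. The key observation is that the winning condition of the operational game is itself efficiently checkable using the oracle $O_f$ that the simulation-based distinguisher already has access to.

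First, given any QPT adversary $\calA$ for the strong operational security game, I would note that $\calA$'s output --- the two tuples $((x_1,r_1,y_1),(x_2,r_2,y_2))$ --- is entirely classical, so $\calA$ qualifies as a valid classical-output adversary in the sense of \Cref{def:single-query-classical-output-simulation-security}. Invoking the hypothesis then yields a QPT simulator $\Sim$ with single (quantum) query access to $O^{(1)}_{f(\cdot,\$)}$ satisfying the indistinguishability bound against every QPT distinguisher that has oracle access to $O_f$. Next, I would instantiate the distinguisher $\calD^{O_f}$ to be the verification procedure which, on classical input $((x_1,r_1,y_1),(x_2,r_2,y_2))$ together with $\aux_f$, makes two classical queries to $O_f$ at the points $(x_1,r_1)$ and $(x_2,r_2)$, and outputs $1$ iff both answers agree with the claimed $y_1,y_2$ and $(x_1,r_1) \neq (x_2,r_2)$. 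This $\calD$ is QPT, so the simulation bound applies. Plugging $\calD$ into the inequality gives, for every $f \in \calF_\lambda$,
\[
    \bigl| \Pr[\calA \text{ wins operational game}] - \Pr[\Sim \text{'s output verifies}] \bigr| \le \negl(\lambda),
\]
and averaging over $(f,\aux_f)\gets \calF_\lambda$ (worst-case implies average-case) preserves this bound. Since $\Pr[\Sim\text{'s output verifies}]$ equals the probability that $\Sim$ wins the single-query learning game $\LG^\Sim$ of \Cref{def:single_query_learning_game}, the required inequality of \Cref{def:strong-operational-security} follows with $\Sim$ as the learning-game simulator.

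The only subtlety I anticipate, and it is not a serious obstacle, is the mild mismatch between the oracle $O^{(1)}_{f(\cdot,\$)}$ used in \Cref{def:single-query-classical-output-simulation-security} and the ``single oracle access to $f$'' granted to the learning-game adversary in \Cref{def:single_query_learning_game}. This is easily bridged: a single physical query to $O^{(1)}_{f(\cdot,\$)}$ can be faithfully simulated from a single query to $f$ by having the learning-game simulator locally sample $r \gets \calR$ once and then query the map $\ket{x,z} \mapsto \ket{x,z \oplus f(x,r)}$ on its query register. Hence $\Sim$ can be used unchanged in the learning game, and the entire reduction goes through without any computational overhead beyond a two-query classical verification.
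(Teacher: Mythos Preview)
Your proposal is correct and follows essentially the same approach as the paper. The paper's proof is a single-sentence observation that the operational-security adversary and simulator are a special case of the classical-output simulation adversary and simulator; you have spelled out exactly how, by naming the verification distinguisher $\calD^{O_f}$ and checking the quantifier structure, and you have additionally addressed an oracle-translation subtlety the paper leaves implicit.
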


\begin{proof}
We can observe by looking into the definition~\ref{def:strong-operational-security} that the adversary and simulator in this definition are simply a special case of the classical-output adversary and simulator, by outputting two correctly evaluated input-output pairs. 
\end{proof}

\begin{lemma}
    Suppose $\OTP$ is a one-time program compiler that satisfies the single-effective-query simulation-based security (Definition~\ref{def:simulation-style-otp-security}) for a function family $\calF$. Then, it satisfies generalized strong operational security (Definition~\ref{def:gen-strong-operational-security}) $\calF$.
\end{lemma}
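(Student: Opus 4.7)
The plan is to turn any adversary $\calA$ that wins the $\GOTP$ game into a simulator $\Sim'$ that wins the $\GLG$ game with essentially the same probability, by running $\calA$ internally against a simulated OTP obtained from the SEQ simulator $\Sim$. The SEQ indistinguishability is then invoked through a single carefully constructed distinguisher $D$.

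First I would construct $\Sim'$. $\Sim'$ receives $\aux_f$ and has single-effective-query access to $f$, i.e.\ access to $f_{\$,1}$. In its query phase, $\Sim'$ invokes the SEQ simulator to compute $\tilde\rho \gets \Sim^{f_{\$,1}}(\aux_f)$, which produces a (possibly oracle-aided) simulated OTP. $\Sim'$ then runs the first stage of $\calA$ on input $(\tilde\rho, \aux_f)$, answering any oracle queries using the oracle packaged inside $\Sim$'s output. When $\calA$ outputs $(x_1,r_1),(x_2,r_2)$, $\Sim'$ forwards this pair to the $\GLG$ challenger. In the challenge phase, $\Sim'$ receives $\ell_1,\ell_2$, passes them to the continuation of $\calA$, and relays the responses $\ans_1,\ans_2$ back to the challenger. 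By construction, the view of $\calA$ inside $\Sim'$ is exactly the view $\calA$ would see in the $\GOTP$ game if the real $\OTP(f)$ were replaced by $\Sim^{f_{\$,1}}(\aux_f)$.

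Next I would reduce the gap between $\Pr[\GOTP^{\calA}_{\samp,P}=1]$ and $\Pr[\GLG^{\Sim'}_{\samp,P}=1]$ to the SEQ simulation indistinguishability, via a distinguisher $D$ as allowed by \Cref{def:simulation-style-otp-security}. $D$ takes input $(\rho,f,\aux_f)$, where $\rho$ is either $\OTP(f)$ or $\Sim^{f_{\$,1}}(\aux_f)$, and internally simulates the entire $\GOTP$ interaction: it runs $\calA(\rho,\aux_f)$, obtains $(x_1,r_1),(x_2,r_2)$, samples the challenges $\ell_1,\ell_2$ exactly as the $\GOTP$ challenger would, feeds them to $\calA$, obtains $\ans_1,\ans_2$, and outputs $1$ iff $P_\lambda(f,x_1,r_1,\ell_1,\ans_1)\wedge P_\lambda(f,x_2,r_2,\ell_2,\ans_2)$ holds. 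Crucially, $D$ has the plaintext description of $f$, so sampling challenges and evaluating $P_\lambda$ require no extra oracle queries beyond those of $\calA$; $D$ is therefore query-bounded and (computationally) unbounded otherwise, matching the distinguisher class allowed by the SEQ definition.

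Finally I would combine: by construction,
\[
\Pr\!\big[D\to 1 \,\big|\, \rho=\OTP(f)\big] \;=\; \Pr[\GOTP^{\calA}_{\samp,P}=1 \mid f,\aux_f],
\]
\[
\Pr\!\big[D\to 1 \,\big|\, \rho=\Sim^{f_{\$,1}}(\aux_f)\big] \;=\; \Pr[\GLG^{\Sim'}_{\samp,P}=1 \mid f,\aux_f].
\]
\Cref{def:simulation-style-otp-security} gives that these differ by $\negl(\lambda)$ for every fixed $f\in\calF$, so averaging over $(f,\aux_f)\gets\samp(1^\lambda)$ preserves the negligible bound. This is precisely the conclusion demanded by \Cref{def:gen-strong-operational-security}. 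There is no real obstacle, as the argument is a direct simulation-style reduction; the only point needing care is ensuring that the oracle embedded in $\tilde\rho$ is transported from $\Sim$ to $\calA$ consistently inside $\Sim'$ and inside $D$, which is syntactically permitted because both $\OTP$ and $\Sim$ are explicitly allowed to output oracle-aided programs.
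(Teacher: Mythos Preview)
Your proposal is correct and takes essentially the same approach as the paper: both recognize that the $\GOTP$/$\GLG$ game can be embedded into the SEQ simulation framework by having the adversary/simulator output its residual quantum state together with $(x_1,r_1),(x_2,r_2)$, and letting a distinguisher (who knows $f$ in the clear) play the challenge phase and evaluate the predicate. Your write-up is simply a fully spelled-out version of the paper's one-paragraph sketch, with the simulator $\Sim'$ and distinguisher $D$ made explicit.
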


\begin{proof}
The difference between definition~\ref{def:gen-strong-operational-security} (resp. \Cref{def:generalized_learning_game}) and \Cref{def:strong-operational-security} (respectively \Cref{def:single_query_learning_game}) is that we can view the adversary/simulator as outputting a potentially quantum state together with its own choice of $(x_1, r_1)$ and $(x_2, r_2)$ in the challenge phase; then the quantum state is going to answer the challenger's challenges. Therefore, it is a special case of the single-physical/effective-query (depending which oracle we give to $\Sim$) simulation definition with quantum outputs, but not necessarily a special case of the simulation definition with classical outputs.
\end{proof}

\section{Single-Effective-Query Construction in the Classical Oracle Model}
\subsection{Construction}\label{sec:construction}\label{sec:seq-construction}
For a function $f:\mathcal{X} \times \mathcal{R} \rightarrow \mathcal{Y}$ with $\cX = \{0,1\}^m$, and security parameter $\lambda$, we construct a one-time program $\OTP(f)$ for $f$ as described in \Cref{fig:otp-construction}.

\begin{figure}[h]
\begin{mdframed}
$\generate(1^\lambda, f)$:
\begin{enumerate}
  \item Let $n = \lambda$. Then sample a random oracle $G: \{0,1\}^{m \cdot n} \rightarrow \mathcal{R}$.
  \item For each $i \in [m]$: sample a random subspace $A_i \subseteq \mathbb{F}_2^n$ of dimension $n/2$, and create the membership oracles $\mathcal{O}_{A_i^0}, \mathcal{O}_{A_i^1}$ for $A_i$ and $A_i^\perp$ respectively.
  \begin{align*}
    \mathcal{O}_{A_i^0}(v) = \begin{dcases}
      1 &\text{if }v \in A_i \backslash \{0\},\\
      0 &\text{otherwise},
    \end{dcases}
    \quad \text{ and }\quad
    \mathcal{O}_{A_i^1}(v) = \begin{dcases}
      1 &\text{if }v \in A_i^\perp \backslash \{0\},\\
      0 &\text{otherwise}.
    \end{dcases}
  \end{align*}
  Also let $\bfA = (A_i)_{i \in [m]}$.
  \item Create oracle $\mathcal{O}_{f, G, \bfA}$ that takes as input $x \in \{0,1\}^m$, $\vecv  = (v_1, \ldots, v_m)$ where $v_i \in \bbF_2^n$, and $u \in \cY$, and outputs
  \begin{align*}
    \mathcal{O}_{f, G, \bfA}(x, \vecv, u) = \begin{dcases}
      (x, \vecv, u \oplus f(x, G(\vecv))) &\text{if }\mathcal{O}_{A^{x_i}_i}(v_i) = 1 \text{ for all }i\in[m],\\
      (x, \vecv, u) &\text{otherwise}
    \end{dcases}
  \end{align*}
  \item Output $\OTP(f) = \left(\left(\ket{A_i}\right)_{i\in[m]}, \mathcal{O}_{f, G, \bfA}\right)$. 
\end{enumerate}

\noindent $\Eval(\OTP(f), x)$:
\begin{enumerate}
    \item For each $i \in [m]$, compute:
    \[\ket{\psi_i} = \left(H^{\otimes n}\right)^{x_i} \ket{A_i}\]
    \item Prepare the following state on the query register $\cQ$:
    \[\ket{x}_{\cQ_\cX} \otimes \ket{\psi_1} \otimes \dots \otimes \ket{\psi_m} \otimes \ket{0}_{\cQ_\cY}\]
    where $0 \in \cY$. Apply $\cO_{f, G, \bfA}$ to $\cQ$.
    \item Measure the $\cQ_\cY$ register to obtain a value $y \in \cY$ and output $y$.
\end{enumerate}
\end{mdframed}
\caption{Construction of $\OTP(f)$}\label{fig:otp-construction}
\end{figure}

\begin{theorem}\label{thm:oracle-construction-is-correct}
    For any function $f \in \cF$ that maps $\cX \times \cR \to \cY$ where $\cX = \{0,1\}^m$, the OTP construction given in \Cref{fig:otp-construction} satisfies correctness (\Cref{def:otp_correctness}).
\end{theorem}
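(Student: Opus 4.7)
The plan is to track the quantum state through $\Eval$ and then reduce correctness to a statement about the output distribution of a uniform random query to the oracle. First, I would observe that $\ket{A_i}$ is the uniform superposition over $A_i$, so $H^{\otimes n}\ket{A_i} = \ket{A_i^\perp}$, and therefore $\ket{\psi_i}$ is the uniform superposition over $A_i^{x_i}$, where I write $A_i^0 = A_i$ and $A_i^1 = A_i^\perp$. Taking the tensor product gives
\[
    \ket{\psi_1}\otimes\cdots\otimes\ket{\psi_m}
    = \frac{1}{\sqrt{2^{mn/2}}} \sum_{\vecv \in A_1^{x_1}\times\cdots\times A_m^{x_m}} \ket{\vecv},
\]
a uniform superposition over $\bfA^{\vecx} := A_1^{x_1}\times\cdots\times A_m^{x_m}$.

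Next I would apply $\cO_{f,G,\bfA}$ to the query register. Since the membership oracles $\cO_{A_i^{x_i}}$ reject $v_i = 0$, the XOR into the output register places $f(x, G(\vecv))$ on basis states where every $v_i\in A_i^{x_i}\setminus\{0\}$ and leaves the output register as $0$ on the remaining basis states (where some $v_i = 0$). Let $\cE$ be the event that a uniformly chosen $\vecv \in \bfA^{\vecx}$ satisfies $v_i\neq 0$ for all $i$. Measuring the output register $\cQ_{\cY}$, the probability of outcome $y^*$ is
\[
    \Pr[y = y^*\mid G] = \Pr[\cE]\cdot \Pr_{\vecv\mid \cE}[f(x, G(\vecv)) = y^*] + \Pr[\neg\cE]\cdot \mathbb{1}[y^* = 0].
\]

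The key step is to average over the random oracle $G$. For any fixed $\vecv \in \bfA^{\vecx}$ with all $v_i \neq 0$, the value $G(\vecv)$ is uniform on $\cR$, so $\Pr_G[f(x, G(\vecv)) = y^*] = \Pr_{r\gets\cR}[f(x,r) = y^*]$. Hence
\[
    \E_G\bigl[\Pr[y = y^*\mid G]\bigr] = \Pr[\cE]\cdot\Pr_{r\gets\cR}[f(x,r) = y^*] + \Pr[\neg\cE]\cdot\mathbb{1}[y^* = 0].
\]
A direct computation then bounds the total variation distance from $\{f(x,r)\}$ by $\Pr[\neg\cE]$.

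Finally, I would bound $\Pr[\neg\cE]$ combinatorially: $|A_i^{x_i}| = 2^{n/2}$, so $\Pr[\neg\cE] = 1 - (1 - 2^{-n/2})^m \le m\cdot 2^{-n/2}$. Since $n = \lambda$ and $m = \poly(\lambda)$, this is $\negl(\lambda)$, which gives the desired statistical closeness. The only conceptually non-routine step is the averaging-over-$G$ argument; everything else is bookkeeping on amplitudes and subspace sizes. I do not expect significant obstacles beyond making sure that the random oracle $G$ is treated as part of the randomness of $\generate$ in the correctness statement, and keeping the edge case $v_i = 0$ explicit throughout.
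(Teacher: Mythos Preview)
Your proposal is correct and follows essentially the same approach as the paper's proof, just with more explicit detail: the paper's argument records the same three observations (that $\ket{\psi_i} = \ket{A_i^{x_i}}$, that an overwhelming fraction of amplitude lands on $\vecv$ with all $v_i \neq 0$, and that $G(\vecv)$ is uniform on $\cR$ over the randomness of $G$) but does not spell out the total-variation computation or the $m\cdot 2^{-\lambda/2}$ bound on $\Pr[\neg\cE]$.
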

\begin{proof}
    For every $i \in [m]$,
    \[\ket{\psi_i} = \left(H^{\otimes n}\right)^{x_i} \ket{A_i} = 
    \begin{cases}
        \ket{A_i}, & x_i = 0\\
        \ket{A_i^\perp}, & x_i = 1
    \end{cases}\]
    Then the state $\ket{\psi_1} \otimes \dots \otimes \ket{\psi_m}$ gives an overwhelming fraction of its amplitude to values $\vecv = (v_1, \dots, v_m)$ for which $\mathcal{O}_{A^{x_i}_i}(v_i) = 1$ for all $i\in[m]$.

    Then after applying $\cO_{f, G, \bfA}$ to $\cQ$, the state of the $\cQ$ register gives an overwhelming fraction of its amplitude to values $(x, \vecv, y)$ such that $y = f(x, G(\vecv))$.

    Note that $G(\vecv)$ is uniformly random over $\cR$ due to the randomness of $G$. Then the output of $\Eval(1^\lambda, \OTP(f), x)$ is negligibly close in statistical distance to $f(x, R)$, where $R$ is uniformly random over $\cR$.
\end{proof}

\begin{theorem}\label{thm:oracle-construction-is-secure}
Let $\lambda$ be the security parameter. Let $m, \ell \in \N$.
    For any function $f \in \cF: \cX \times \cR \to \{0,1\}^{\ell}$ for which we let $\cX = \{0,1\}^m$, the OTP construction given in \Cref{fig:otp-construction} satisfies the single-effective query simulation-based OTP security notion of \Cref{def:simulation-style-otp-security}.
\end{theorem}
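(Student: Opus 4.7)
The plan is to exhibit a simulator $\Sim$ that, given single-effective-query access to $f_{\$,1}$, perfectly emulates the distribution of $\OTP(f)$ up to negligible terms, and then to argue indistinguishability by a sequence of hybrids that migrate from the real scheme to this simulator. At a high level, $\Sim$ samples the subspaces $A_i$ and hands the distinguisher the states $\ket{A_i}$ together with a simulated oracle $\widetilde{\cO}_{f,\bfA}$. Internally, $\Sim$ maintains its own compressed random oracle for $G$ and a small cache that keeps track of at most one pair $(x^*, \vecv^*)$ that has been ``meaningfully'' queried. Whenever $\widetilde{\cO}_{f,\bfA}$ is invoked on a basis state $(x, \vecv, u)$ whose $\vecv$ passes all subspace membership checks, $\Sim$ uses its compressed $G$-oracle to either reuse or record $\vecv$, and then invokes $f_{\$,1}$ on $x$ to obtain $f(x, H(x))$, which it XORs into $u$. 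The rest of the oracle behaves as the identity, exactly as in the construction.

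The first step is to replace the real random oracle $G$ by Zhandry's compressed version (\Cref{sec:compressedRO}); this is perfectly indistinguishable. Next I would argue that the only queries that affect the database of $G$ (i.e., actually record a pair $(\vecv, r)$) are those supported on vectors $\vecv = (v_1,\dots,v_m)$ with $v_i \in A_i^{x_i}\setminus\{0\}$ for all $i$, since the oracle acts as the identity elsewhere and hence does not entangle the query register with $G$'s database. Now invoke the direct-product hardness of \Cref{thm: direct product oracle} (applied coordinate-wise via a hybrid over the $m$ subspaces): except with negligible probability over the whole interaction, the compressed-$G$ database never simultaneously records two entries $(\vecv, r)$ and $(\vecv', r')$ corresponding to distinct $x \neq x'$. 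In other words, the ``live'' part of the database is at all times supported on vectors consistent with a single $x^* \in \{0,1\}^m$.

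Given this invariant, the second step is to migrate the way the oracle produces the output register. Introduce an intermediate hybrid in which $\widetilde{\cO}_{f,\bfA}$ first checks membership, then reads/records $\vecv$ in the compressed $G$-database, and finally computes $f(x, r)$ by looking up the (now essentially unique) cached $x^* \mapsto r$ rather than by evaluating $f\circ G$ directly on $\vecv$. Using \Cref{claim:compressed-oracle-renaming} together with the single-$x^*$ invariant, this swap from ``indexed by $\vecv$'' to ``indexed by $x^*$'' is indistinguishable: both versions act identically on the subspace of states consistent with at most one recorded $x$, and the orthogonal space has negligible weight by the direct-product bound. At this point the only role of $G$ is to maintain a record of which $x$ has been effectively queried, and the randomness $r$ is drawn fresh per $x$, exactly matching the behaviour of the compressed oracle that $f_{\$,1}$ uses internally. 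Replacing the simulator's private compressed oracle by queries to $f_{\$,1}$ is then a syntactic step: each membership-passing query triggers a query to $f_{\$,1}$ on the associated $x$, and \Cref{claim:single_entry_database} guarantees that $f_{\$,1}$ never refuses such a query, since our invariant ensures the database is supported on at most one entry.

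I expect the main obstacle to be establishing the second hybrid cleanly: we need that the compressed $G$-database, the collapse of the subspace states, and the evaluator's output register become entangled in a coherent way that mimics the purified view of $f_{\$,1}$. Concretely, the difficulty is to show that whenever the distinguisher can non-trivially distinguish ``gentle'' from ``destructive'' uses of the one-time program in the real world, the corresponding bookkeeping in $\Sim$ reproduces the same pattern of collapses in $f_{\$,1}$. The key technical ingredient here is \Cref{lem:zhandry_lemma5}, which lets us bound the probability that the adversary's outcomes are consistent with two distinct recorded inputs, combined with \Cref{thm: direct product oracle} applied to the product structure across the $m$ coordinates. Once this invariant is in place, the remaining hybrids are essentially syntactic rewrites, and the overall distinguishing advantage is bounded by $O(m)$ applications of the direct-product bound plus the standard compressed-oracle losses, all of which are negligible for $n = \lambda$.
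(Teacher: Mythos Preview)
Your overall architecture is right: compress $G$, use direct-product hardness to enforce a single-$x$ invariant on the database, then swap the role of $\vecv$ and $x$ so that the randomness becomes indexed by $x$ and can be answered via $f_{\$,1}$. But there is a concrete missing idea in how the simulator interfaces with $f_{\$,1}$, and two of the cited tools do not do what you need.

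First, the sentence ``\Cref{claim:single_entry_database} guarantees that $f_{\$,1}$ never refuses such a query'' is wrong: that claim only says $f_{\$,1}$'s database has at most one entry, and $f_{\$,1}$ \emph{does} refuse a query on $x$ whenever its database records some $x'\neq x$. The hard part of the simulator is precisely to make its cached vector $\vecv$ collapse exactly when $f_{\$,1}$'s internal database records the corresponding $x$, so that the adversary's view of the subspace states collapsing is identical in both worlds. Your simulator has no mechanism for sensing $f_{\$,1}$'s internal state, yet it needs one: the paper's simulator probes $f_{\$,1}$ on the dummy input $x\oplus 1$ and inspects the returned flip bit $b$ (which is flipped iff $f_{\$,1}$ answered, i.e.\ iff no entry on some $x'\neq x\oplus 1$ is recorded), and only then CNOTs $\vecv$ into or out of its cache $\calV$. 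Without this probe-and-cache step, the cache and $f_{\$,1}$'s database drift apart and the final ``syntactic'' hybrid does not close.

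Second, the $\vecv\leftrightarrow x$ swap requires $\vecv$ to uniquely determine $x$ on the support of the state. A vector $v_i\in (A_i\cap A_i^\perp)\setminus\{0\}$ is consistent with both $x_i=0$ and $x_i=1$, so the bijection fails; the paper inserts an extra hybrid tightening the check from $v_i\in A_i^{x_i}\setminus\{0\}$ to $v_i\in A_i^{x_i}\setminus(A_i\cap A_i^\perp)$ (again via direct-product hardness) before attempting the swap, and then establishes an explicit invariant that the joint (cache, database) state is always supported on $\ket{0,\emptyset}$ or $\ket{x,\{(\vecv,r)\}}$ with $\vecv$ determining $x$. That invariant is what makes \Cref{claim:compressed-oracle-renaming} applicable in the swap step. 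Finally, \Cref{lem:zhandry_lemma5} is an extraction lemma for algorithms that \emph{output} input/output pairs; it plays no role in this simulation proof, whose backbone is the cache invariant plus direct-product hardness rather than any post-hoc recording bound.
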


In general, $f(x, r)$ may or may not output its randomness $r$. This will not modify our proof.

Our construction does not have any requirements on $f$'s input lengths, for either the message or randomness. However, we remark that the SEQ simulation definition may not be very meaningful when the randomness is small. For example, if $|\calR|$ is only polynomially large, then any measurement made to $f$ can at most disturb the program state by $1-1/\poly$. Thus, the adversary may be able to perform a second query with a $1/\poly$ success rate.


The above theorem combined with the relationships between security definitions in \Cref{sec:relations_defs} 
directly gives the following corollary:
\begin{corollary}
\label{cor:otp_for_unlearnable}
For function families satisfying the unlearnability definitions in 
\Cref{def:generalized-single-query-unlearnable} (or \Cref{remark:gamma_unlearnability}, \Cref{def:single-query-unlearnable} respectively), there exists secure one-time sampling programs for them in the classical oracle model with respect to security definition \Cref{def:gen-strong-operational-security} (or \Cref{def:strong-operational-security}, \Cref{def:weak-operational-security}) resp.).
\end{corollary}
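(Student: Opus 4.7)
The plan is to exhibit a simulator $\Sim$ which, given oracle access to $f_{\$,1}$, prepares exactly the subspace states $\bigotimes_{i\in[m]}\ket{A_i}$ of the real construction, but implements the oracle $\cO_{f,G,\bfA}$ in a modified way. Internally, $\Sim$ replaces the random oracle $G$ by a compressed random oracle (as in \Cref{sec:compressedRO}) with its own database register $\cH_G$, and on a basis query $\ket{x,\vecv,u}$ first checks the subspace-membership oracles $\cO_{A_i^{x_i}}$ for all $i\in[m]$. If any check fails, $\Sim$ acts as the identity on $u$. Otherwise, $\Sim$ coherently queries $f_{\$,1}$ on input $x$, XORing the answer into the $u$ register, in lieu of evaluating $f(x,G(\vecv))$ directly. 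The auxiliary bookkeeping (reading/writing $G(\vecv)$ from its compressed database) is inserted so that the state of $\cH_G$ tracks subspace-state collapse on the $\vecv$ side exactly when the state of $f_{\$,1}$'s database $\cH_H$ tracks collapse on the $x$ side.

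The analysis proceeds by a sequence of hybrids. Hybrid $0$ is the real world $\OTP(f)$, with $G$ represented by its purification as a truth-table register. Hybrid $1$ compresses $G$ using the standard Zhandry isometries; this step is perfect by the equivalence of the compressed and purified random oracle. Hybrid $2$ replaces, in the branch where all subspace-membership checks pass, the evaluation of $f(x,G(\vecv))$ by a coherent query to $f_{\$,1}$ on $x$. To show Hybrids $1$ and $2$ are statistically close, I would invoke \Cref{thm: direct product oracle} (for each coordinate $i\in[m]$) to conclude that with overwhelming probability the adversary's joint query pattern on the membership oracles never produces, in any single coordinate, two distinct vectors both in $A_i$ or both in $A_i^\perp$, nor a vector in $A_i$ together with one in $A_i^\perp$. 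Consequently, across all branches of the adversary's superposition that survive the membership check, the "passing" input register on coordinate $i$ is supported on vectors lying in a single side $A_i^{b_i}$, which commits the adversary to a unique $x = b_1\cdots b_m$. Conditioned on this event, the compressed database of $G$ holds at most one entry $(\vecv, r)$ that is ever read, and the map $(\vecv,r)\mapsto (x,r)$ from that entry to an entry of $\cH_H$ is a bijection on the relevant subspace; swapping the names of the two databases is a unitary that intertwines Hybrid $1$ with Hybrid $2$.

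Having made this swap, Hybrid $2$ is exactly what $\Sim^{f_{\$,1}}$ produces, so composing with the bound from \Cref{thm: direct product oracle} yields the required negligible distinguishing advantage against any query-bounded distinguisher $D$. A few routine but careful pieces will need to be verified: that the "passing" projector, the compressed-oracle isometries for $G$, and the compressed-oracle isometries inside $f_{\$,1}$ all commute appropriately on the invariant subspace where the database of $G$ has at most one entry keyed by a vector in $\bigcup_{b\in\{0,1\}^m}\prod_i A_i^{b_i}$; and that the "gentle" query behaviour of the adversary (queries that do not collapse the subspaces) on the real oracle corresponds, under the intertwining, to gentle queries on $f_{\$,1}$ (which therefore do not trigger its refusal-to-answer condition by \Cref{claim:single_entry_database}).

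The main obstacle I expect is precisely this last intertwining step: formalizing the correspondence between partial collapse of the subspace registers $\ket{A_i}$ on the real side and partial writing into the compressed database on the simulated side, so that the SEQ oracle's refusal condition fires in exactly those branches where a second distinct $x$ would have been needed on the real side. The direct-product hardness gives the right global bound, but turning that into a branchwise argument requires carefully defining the "good" invariant subspace and checking that every operation of both worlds preserves it up to negligible error in trace distance; this is the heart of the proof and the place where the bookkeeping between $G$'s compressed database and $f_{\$,1}$'s compressed database has to be set up precisely.
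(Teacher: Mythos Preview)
You are doing far more work than the corollary requires, and in doing so you are essentially re-proving \Cref{thm:oracle-construction-is-secure} rather than deriving the corollary from it. In the paper, the corollary is a two-line consequence: \Cref{thm:oracle-construction-is-secure} already establishes that the construction in \Cref{fig:otp-construction} satisfies single-effective-query simulation-based security (\Cref{def:simulation-style-otp-security}); then the implication lemmas in \Cref{sec:relations_defs} (SEQ simulation security $\Rightarrow$ generalized strong operational security $\Rightarrow$ strong operational security, together with \Cref{remark:weak_op_def_equivalent_strong_op} for the weak operational case) transfer this to the operational definitions. The unlearnability hypotheses on $\calF$ enter only at the very end, to bound the simulator's winning probability in the corresponding learning game. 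No new simulator needs to be built, and no hybrid argument needs to be run.

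Even viewed as a sketch of \Cref{thm:oracle-construction-is-secure}, your three-hybrid outline compresses too much. The paper's actual proof uses eight hybrids ($\Hyb_0$ through $\Hyb_7$), and the difficulty you correctly flag---matching partial collapse of the subspace registers with partial recording in the SEQ oracle's database---is handled there via an explicit cache register $\calV$ (resp.\ $\calR_x$) and the invariants proved in \Cref{claim:SEQ-proof-hyb4-invariant} and \Cref{claim:SEQ-proof-hyb5-invariant}, which certify that the internal state is always supported on $\ket{0,\emptyset}$ or $\ket{x,\{(v,r)\}}$ with $v_i\in A_i^{x_i}\setminus(A_i\cap A_i^\perp)$. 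Your ``global commitment to a single $x$'' via direct-product hardness is not how the paper proceeds: direct-product hardness (\Cref{thm: direct product oracle}) is invoked only locally, in \Cref{claim:SEQ-hyb1-hyb2} and \Cref{claim:SEQ-hyb2-hyb3}, to bound the mass on the bad events $v_i\in (A_i\cap A_i^\perp)\setminus\{0\}$ and ``SEQ check fires''; the swap between $G$'s database (keyed by $\vecv$) and $H$'s database (keyed by $x$) is then an exact unitary intertwining (\Cref{claim:SEQ-hyb4-hyb5}), not an approximate one.
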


\subsection{Proof of Security (\Cref{thm:oracle-construction-is-secure})}
The simulator is defined in \Cref{fig:simulator}.

\paragraph{Intuition.} To gain intuition about the simulator, we first recall the differences between the real and ideal worlds.
In the real world, the OTP uses randomness $G(v)$ to evaluate $f$, where $v$ is a measurement of $\ket{A}$ in a basis corresponding to the chosen input $x$. If $f$ is sufficiently random, then measuring $f(x, G(v))$ may collapse $\ket{A}$, preventing further queries. On the other hand, in the ideal world, the SEQ oracle uses randomness $H(x)$ to evaluate $f$. If $f$ is sufficiently random, then measuring $f(x;H(x))$ may collapse the SEQ oracle's internal state, preventing further queries.

The main gaps that the simulator needs to bridge between these worlds are the usage of $\ket{A}$ versus the usage of an internal state to control query access, as well as the usage of $G(v)$ versus $H(x)$. 
Since $G$ and $H$ are internal to the OTP oracle and SEQ oracle, respectively, the latter is not an issue even if $f$ outputs its randomness directly.
The simulator addresses the former by maintaining a cache for subspace vectors. If it detects that the SEQ oracle will not permit other queries, it stores the most recent subspace vector locally. This collapses $\ket{A}$ in the view of the adversary, ensuring that a successful $f$ evaluation in the ideal world looks similar to if $f$ were evaluated using $\ket{A}$ in the real world.

\begin{figure}[H]
\begin{mdframed}
\begin{enumerate}
    \item For each $i \in [m]$, sample a subspace $A_i \subseteq \bbF_2^\secpar$ of dimension $\secpar/2$ uniformly at random. 
    \item Initialize vector cache register $\cV = \cV_1 \times \dots \times \cV_m$ to $\ket{0^\secpar}_{\cV_1} \otimes \dots \otimes \ket{0^\secpar}_{\cV_m}$.
    \item Prepare an oracle $\cO_{\Sim}$ as follows.
    \begin{enumerate}
        \item $\cO_{\Sim}$ acts on a query register $\calQ = (\calQ_x, \calQ_\vecv, \calQ_u)$, which contains superpositions over states of the form $\ket{x, \vecv, u}$, where $x \in \cX$, $\vecv = (v_1, \dots, v_m) \in \bit^{n \cdot m}$, $u \in \cY$.
        \item If $\cV$ contains $0^{n \cdot m}$ or $\vecv$, and if $v_i \in A_i^{x_i}\backslash\{0\}$ for all $i\in [m]$, then $\cO_{\Sim}$ does the following:
        \begin{enumerate}
            \item Prepare $\ket{x\oplus 1, 0, 0}$ in a register $\calQ' = (\calQ'_x, \calQ'_u, \cB')$, then query $f_{\$,1}$ on register $\calQ'$. \label{oraclesim:cachecheck1} 
            \item If $\cB'$ has value $0$, apply a CNOT from register $\calQ_\vecv$ to register $\calV$.
            \item Uncompute step \ref{oraclesim:cachecheck1}

            
            
            \item Query $f_{\$,1}$ on $\ket{x, u, 0}_{\calQ_x, \calQ_u, \cB}$.

            \item Prepare $\ket{x\oplus 1, 0, 0}$ in a register $\calQ' = (\calQ'_x, \calQ'_u, \cB')$, then query $f_{\$,1}$ on register $\calQ'$. \label{oraclesim:cachecheck2}
            \item If $\cB'$ has value $0$, apply a CNOT from register $\calQ_\vecv$ to register $\calV$. 
            \item Uncompute step \ref{oraclesim:cachecheck2}.
        \end{enumerate}
    \end{enumerate}
    \item Output $\left(\left(\ket{A_i}\right)_{i \in [m]}, \cO_{\Sim}\right)$.
\end{enumerate}
\end{mdframed}
\caption{Simulator $\Sim^{f_{\$,1}}$}\label{fig:simulator}
\end{figure}

\paragraph{Analysis of the Simulator.} 
Consider the following hybrid experiments:
\begin{itemize}
    \item $\underline{\Hyb_0}:$ The real distribution $\OTP(f)$. Recall that $\OTP(f)$ outputs an oracle $\calO_{f,G,A}$ which acts as follows on input $(x, v, u)$:
    \begin{enumerate}
        \item \textbf{Vector Check:} It checks that $v_i\in A^{x_i}_i\backslash \{0\}$ for all $i\in [\secpar]$. If not, it immediately outputs $(x, v, u)$.
        \item \textbf{Evaluation:} Compute $u\oplus f(x; G(v))$.
        \item Output $(x, v, u\oplus f(x; H(v))$.
    \end{enumerate}
    
    \item $\underline{\Hyb_1}:$ The only difference from $\Hyb_0$ is that $G$ is implemented as a compressed oracle. $\calO_{f,G,A}$ maintains the compressed oracle's database register $\calD$ internally. To evaluate $f$ in step 2 \jiahui{which figure?}, it queries $\ket{v, 0}_{\calQ'}$ to $G$ in register $\calQ'$ to obtain $\ket{v, G(v)}_{\calQ'}$, then applies the isometry
    \[
        \ket{x, v, u}_{\calQ} \otimes \ket{v, r}_{\calQ'} 
        \mapsto 
        \ket{x, v, u \oplus f(x;r)}_{\calQ} \otimes \ket{v, r}_{\calQ'}
    \]
    to registers $\calQ$ and $\calQ'$, and finally queries $G$ on register $\calQ'$ again to reset it to $\ket{v, 0}_{\calQ'}$.

    \item $\underline{\Hyb_2}:$ The only difference from $\Hyb_1$ is in step 1 of $\calO_{f,G,A}$. Instead of checking that $v_i\in A^{x_i}_i\backslash \{0\}$, it checks that $v_i \in A^{x_i}_i\backslash (A_i \cap A^\perp_i)$. 

    \item $\underline{\Hyb_3}:$ The only difference from $\Hyb_2$ is we add a single-effective-query check to $\calO_{f,G,A}$. It now answers basis state queries $\ket{x, v, u}$ as follows:
    \begin{enumerate}
        \item \textbf{Vector Check:} Check that $v_i\in A_i^{x_i}\backslash (A_i \cap A_i^\perp)$ for all $i\in [\secpar]$. If not, immediately output $(x, v, u)$.
        \item \textbf{SEQ Check {\color{red}(New)}:} Look inside $G$'s compressed database register $\calD$ to see if there is an entry of the form $(v', r)$ for some $r$ and $v'\notin\{0, v\}$. If so, then immediately output register $\calQ$. \label{proof:SEQ-hyb3-seqcheck}
        \item \textbf{Evaluation:} Compute $\ket{x, v, v} \mapsto \ket{x, v, u\oplus f(x; G(v))}$ on register $\calQ$. This involves querying the compressed oracle $G$ twice, as described in $\Hyb_1$.
        \item Output register $\calQ$.
    \end{enumerate}

    \item $\underline{\Hyb_4}:$ The only difference from $\Hyb_2$ is we add a caching routine to $\calO_{f, G,A}$.\footnote{Intuitively, this hybrid will ensure that whenever some $v$ corresponding to input $x$ is recorded, $x$ is also recorded. Thus, the internal oracle state will look like $\ket{0, \emptyset}$ or $\ket{x, \{(v, r)\}}$ in $\Hyb_4$. This intuition is made formal in \Cref{claim:SEQ-proof-hyb4-invariant}.}
    The oracle maintains a register $\calR_x$ which is initialized to $\ket{0}$. On receiving query $(x, v, u)$, the oracle $\calO_{f,G,A}$ does the following:
    \begin{enumerate}
        \item \textbf{Vector Check:} Check that $v_i\in A_i^{x_i}\backslash (A_i \cap A_i^\perp)$ for all $i\in [\secpar]$. If not, immediately output $(x, v, u)$.
        
        \item \textbf{SEQ Check:} Do the single-effective query check that was added in $\Hyb_3$.
        
        \item \textbf{Cache 1 {\color{red}(New)}:} Look inside $G$'s compressed database register $\calD$. If it contains a nonempty database $D \neq \emptyset$, then perform a CNOT from register $\calQ_x$ to register $\calR_x$.
        \label{proof:SEQ-hyb4-cache1}
        \jiahui{Remind the readers what is register $\calR_x$ ?}
        \item \textbf{Evaluation:} Apply the isometry $\ket{x,v, u}_{\calQ} \mapsto \ket{x, v, u \oplus f(x; G(v))}_{\calQ}$ to register $\calQ$. This involves querying the compressed oracle $G$ twice, as described in $\Hyb_1$.
        
        \item \textbf{Cache 2 {\color{red}(New)}:} Look inside $G$'s compressed database register $\calD$. If there is an entry of the form $(v', r)$ for some $r$ and $v'$, then perform a CNOT from register $\calQ_x$ to register $\calR_x$.
        \label{proof:SEQ-hyb4-cache2}

       \item It outputs register $\calQ$.
    \end{enumerate}

    \item $\underline{\Hyb_5}:$ In this hybrid, $\calO_{f, G, A}$ swaps the role of $v$ and $x$ in the cache and compressed oracle.\footnote{Intuitively, this modifies the internal oracle state from $\ket{x, \{(v, r)\}}$ to $\ket{v, \{(x, r)\}}$, without modifying the case where the oracle state would be $\ket{0, \emptyset}$.}
    The oracle, which we rename to $\calO_{f, H, A}$, maintains a cache register $\calV$ and a compressed oracle $H:\calX \rightarrow \calR$ instead of $G:\bbF_2^n \rightarrow \calR$. On query $(x, v, u)$, it does the following:
    \begin{enumerate}
        \item \textbf{Vector Check:} Check that $v_i\in A_i^{x_i}\backslash (A_i \cap A_i^\perp)$ for all $i\in [\secpar]$. If not, immediately output $(x, v, u)$.
        \item \textbf{SEQ Check {\color{red}(Modified)}:} Look inside $\calV$ to see if it contains some $v' \notin \{0, v\}$. If so, immediately output $\calQ$.\label{proof:SEQ-hyb5-SEQcheck}
        \item \textbf{Cache 1 {\color{red}(Modified)}:} Look inside $H$'s compressed database register $\calD$. 
        If it contains a nonempty database $D \neq \emptyset$, then perform a CNOT from register $\calQ_x$ to register $\calR_x$.
        \label{proof:SEQ-hyb5-cache1}
        \item \textbf{Evaluation {\color{red}(Modified)}:} Applies the isometry $\ket{x,v, u}_{\calQ} \mapsto \ket{x, v, u \oplus f(x; H(x))}_{\calQ}$ to register $\calQ$. This involves querying the compressed oracle $H$ twice, analogously to the procedure in $\Hyb_1$. \label{proof:SEQ-hyb5-eval}
        \item \textbf{Cache 2 {\color{red}(Modified)}:} It looks inside $H$'s compressed database register $\calD$. 
        If it contains a nonempty database $D \neq \emptyset$, then perform a CNOT from register $\calQ_x$ to register $\calR_x$.
        \label{proof:SEQ-hyb5-cache2}
        \item Output register $\calQ$.
    \end{enumerate}

    \item $\underline{\Hyb_6}:$ The only difference from $\Hyb_4$ is a change to the SEQ check in step \ref{proof:SEQ-hyb5-SEQcheck}. Instead of looking inside $\calV$ to see if it contains some $v' \notin \{0, v\}$, the oracle $\calO_{f, H, A}$ instead looks inside $H$'s compressed database register $\calD$ to see if there is an entry of the form $(x', r)$ for $x\neq x'$.

    \item $\underline{\Hyb_7 = \Sim}:$ The only differences from $\Hyb_5$ are in the caching routine in steps \ref{proof:SEQ-hyb5-cache1} and \ref{proof:SEQ-hyb5-cache2}. It replaces each of these steps with the following procedure:
    \begin{enumerate}
        \item Prepare a $\ket{0}$ state in register $\calB$. Controlled on $H$'s database register having an entry of the form $(x', r)$ for $x' \neq x \oplus 1$, apply a NOT operation to register $\calB$.\label{proof:SEQ-hyb6-cache}
        \item If register $\calB$ contains $\ket{1}$, apply a CNOT operation from register $\calQ_v$ to register $\calV$.
        \item Uncompute step \ref{proof:SEQ-hyb6-cache}.
    \end{enumerate}
\end{itemize}

$\Hyb_0$ is perfectly indistinguishable from $\Hyb_1$ by the properties of a compressed random oracle. We now show indistinguishability for each of the other sequential pairs of hybrid experiments.

\begin{claim}\label{claim:SEQ-hyb1-hyb2}
    $\Hyb_1$ is computationally indistinguishable from $\Hyb_2$.
\end{claim}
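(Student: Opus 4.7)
The two hybrids differ only in the vector check performed by $\calO_{f,G,\bfA}$: $\Hyb_1$ accepts when each $v_i \in A_i^{x_i}\setminus\{0\}$, whereas $\Hyb_2$ accepts when each $v_i \in A_i^{x_i}\setminus (A_i\cap A_i^\perp)$. Since $A_i\cap A_i^\perp \subseteq A_i^{x_i}$ for either value of $x_i$, the two oracles agree on every input except those $(x,\vecv,u)$ satisfying $v_i\in A_i^{x_i}\setminus\{0\}$ for every $i$ and having at least one coordinate $v_j\in (A_j\cap A_j^\perp)\setminus\{0\}$. A nonzero vector $v_j\in A_j\cap A_j^\perp$ lies simultaneously in $A_j$ and $A_j^\perp$, so producing such a $v_j$ immediately yields a pair $(v_j, v_j)$ violating case~(1) of \Cref{thm: direct product oracle}.

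The plan is a standard quantum one-way-to-hiding (O2H) style reduction. Given a distinguisher $\calD$ with advantage $\epsilon$ between $\Hyb_1$ and $\Hyb_2$, I construct an adversary $\calB$ against \Cref{thm: direct product oracle} as follows. $\calB$ picks an index $j \in [m]$ uniformly at random, receives the external copy of $\ket{A_j}$ together with membership oracles $\calO_{A_j^0}, \calO_{A_j^1}$, and samples the remaining subspaces $\{A_i\}_{i\neq j}$, the function $f$ (drawn from the environment distribution), and the random oracle $G$ locally (the latter lazily as a compressed oracle, exactly as in $\Hyb_1$). Using these, $\calB$ perfectly simulates $\calO_{f,G,\bfA}$ by performing the coordinate-$j$ part of the vector check through the challenge membership oracles and the remaining coordinates using its own explicit descriptions of the $A_i$. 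Finally, $\calB$ chooses a uniformly random query index $k \in [q]$, halts $\calD$ immediately before its $k$th oracle call, measures the $\vecv$ register of that query, and outputs the $j$th block of the measurement outcome as its challenge answer.

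The analysis then combines a standard hybrid-between-oracles bound with the pigeonhole over $j$. If $\calD$'s distinguishing advantage is $\epsilon$, the O2H-style inequality implies that the total query weight on inputs in the symmetric difference of the two oracles is at least $\Omega(\epsilon^2/q)$. Averaging over the uniform choice of $k \in [q]$ and $j \in [m]$, $\calB$ extracts with probability at least $\Omega(\epsilon^2/(q^2 m))$ a nonzero element of $A_j\cap A_j^\perp$, and hence a valid direct-product pair. For $\epsilon$ non-negligible this contradicts the bound $\Omega(\sqrt{\epsilon}\,2^{n/2})$ from \Cref{thm: direct product oracle}, since $q$ and $m$ are polynomial in $\lambda = n$, so $\epsilon$ must be negligible.

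The main subtlety is that the differing region can be large (an $A_j\cap A_j^\perp$ can in principle have dimension up to $n/2$), so we cannot rely on a sparse-puncturing argument; we instead rely on the fact that extracting \emph{any} nonzero vector of $A_j\cap A_j^\perp$ already breaks direct-product hardness, which makes the size of the differing set work in our favor. The remaining routine care is that the compressed-oracle implementation of $G$ and the membership-oracle calls on coordinate $j$ are invoked coherently, so that $\calB$'s simulated view is indistinguishable from $\Hyb_1$ up to the halted query; this is immediate once $\calB$ runs the vector check as a purely classical coherent predicate on the query register before touching $G$.
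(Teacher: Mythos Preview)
Your proof is correct and follows essentially the same reduction as the paper: embed the challenge subspace at a uniformly random coordinate, simulate the oracle using the membership oracles for that coordinate, and extract from some query a nonzero vector in $A_j\cap A_j^\perp$, which already constitutes a direct-product pair. Two minor remarks: in this theorem $f$ is fixed (the security notion quantifies over all $f$), so the reduction should receive $f$ as non-uniform advice rather than ``sample'' it; and the paper phrases the extraction as measuring the bad-condition predicate at every query (a semi-classical O2H style argument) rather than at one random query as you do, but both variants give the needed inverse-polynomial success probability.
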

\begin{proof}
    The only difference between these is that $\Hyb_2$ additionally returns early whenever some $v_i \in (A_i\cap A_i^\perp) \backslash \{0\}$. This condition can only occur with negligible probability; otherwise, we can break direct product hardness for subspace states (\Cref{thm: direct product oracle}) by embedding a challenge subspace state in a random index $i^*$, using polynomially many queries to the challenge membership oracles to run $\Hyb_1$, and after every query, if $v_i \in (A_i\cap A_i^\perp) \backslash \{0\}$ for some $i \in \secpar$, measuring the query register to obtain $v$. Since $i^*$ is independent of the adversary's view, $i=i^*$ with probability $1/m$ whenever this occurs. Note that $m$ is polynomial in $\secpar$.
\end{proof}

\begin{claim}\label{claim:SEQ-hyb2-hyb3}
    $\Hyb_2$ is computationally indistinguishable from $\Hyb_3$.
\end{claim}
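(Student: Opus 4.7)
The plan is to show that the only divergence between $\Hyb_2$ and $\Hyb_3$ is a ``bad event'' in the compressed database that is ruled out by direct-product hardness. First, I would observe that entries in the compressed-oracle database $\calD$ of $G$ can only be introduced by queries that have passed the vector check, so any entry $(\vecv', r)$ in $\calD$ certifies that there exists some $x'$ with $v'_i \in A_i^{x'_i} \setminus (A_i \cap A_i^\perp)$ for every $i \in [m]$. The SEQ check added in $\Hyb_3$ aborts a query precisely when $\calD$ already contains such an entry with $\vecv' \notin \{0, \vecv\}$; on all other basis states $\Hyb_3$ behaves identically to $\Hyb_2$. So it suffices to upper bound the amplitude the distinguisher places on database states containing two distinct valid vectors.

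Next, I would show that this amplitude is negligible via a reduction to direct-product hardness (\Cref{thm: direct product oracle}). Whenever two distinct valid vectors $\vecv \neq \vecv'$, recorded in response to queries with inputs $x$ and $x'$, both appear in $\calD$, there must exist an index $i$ with $v_i \neq v'_i$. If $x_i = x'_i$, then $v_i$ and $v'_i$ are two distinct nonzero elements of the same subspace (either $A_i$ or $A_i^\perp$); if $x_i \neq x'_i$, then one of them lies in $A_i \setminus A_i^\perp$ and the other in $A_i^\perp \setminus A_i$. Either outcome breaks direct-product hardness of $A_i$.

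The reduction then proceeds as follows. A direct-product attacker $\calB$ receives $\ket{A^*}$ and membership oracles for $A^*$ and $(A^*)^\perp$. It guesses $i^* \getsr [m]$, embeds the challenge at coordinate $i^*$, samples the remaining subspaces $\{A_i\}_{i \neq i^*}$ itself, and simulates $\Hyb_2$ for the distinguisher; this is efficient because $\calB$ maintains the compressed oracle $G$ internally and can perform vector checks using its own subspaces together with the challenge membership oracles. After the distinguisher halts, $\calB$ measures $\calD$ and, whenever it observes two distinct entries $(\vecv, r), (\vecv', r')$, outputs the pair $(v_{i^*}, v'_{i^*})$. By a standard compressed-oracle extraction argument in the spirit of \Cref{lem:zhandry_lemma5}, the probability that this measurement yields two valid distinct vectors is polynomially related to the $\Hyb_2$--$\Hyb_3$ distinguishing advantage, and averaging over $i^*$ transfers at least a $1/m$ fraction of this probability into a direct-product attack on $A_{i^*}$, which must therefore be negligible.

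The main obstacle will be relating the distinguishing advantage between $\Hyb_2$ and $\Hyb_3$ to the probability that a final measurement of $\calD$ reveals two distinct valid entries. Intuitively these are the same quantity up to a $\sqrt{\cdot}$ loss, since $\Hyb_3$ differs from $\Hyb_2$ exactly on the bad subspace; formalizing this requires tracking the amplitude on a ``two-distinct-valid-entries'' projector through a polynomial sequence of queries, but this is routine compressed-oracle bookkeeping along the lines of \cite{zhandry19compressed} rather than a fundamental obstacle.
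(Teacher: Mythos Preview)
Your high-level strategy matches the paper's: reduce to direct-product hardness by embedding the challenge subspace at a random coordinate $i^*$ and extracting two distinct valid vectors when the bad event occurs. The case analysis on $x_i = x'_i$ versus $x_i \neq x'_i$, and the observation that every entry in $\calD$ must have passed the vector check, are both correct and are exactly what the paper uses.

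There is, however, a genuine gap in your extraction step. The event on which $\Hyb_2$ and $\Hyb_3$ diverge is not ``$\calD$ contains two distinct valid entries,'' but rather ``the current query vector $\vecv$ differs from the (single) entry $\vecv'$ already recorded in $\calD$.'' At the moment of divergence $\calD$ holds only $\vecv'$; the second vector lives in the query register $\calQ_\vecv$. More seriously, your plan to measure $\calD$ only \emph{after the distinguisher halts} can fail outright: an adversary can trigger the SEQ condition (query on $\vecv_2$ while $\vecv_1$ is recorded), then uncompute both queries, leaving $\calD$ empty at the end. In this scenario $\Hyb_2$ and $\Hyb_3$ are observably different, yet your reduction extracts nothing. \Cref{lem:zhandry_lemma5} does not rescue this, since it relates the adversary's \emph{output} to final database contents, not transient database states to one another; the ``routine bookkeeping'' you defer is in fact the heart of the argument.

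The paper sidesteps this by having the reduction run $\Hyb_3$ with one modification: at every query it \emph{measures} the SEQ-check predicate instead of applying it coherently, and halts the first time the measurement fires. At that instant it measures $\calQ_\vecv$ (obtaining $\vecv$) and $\calD$ (obtaining $(\vecv',r)$) and outputs $(v_{i^*}, v'_{i^*})$. The probability that some such measurement fires directly upper-bounds the $\Hyb_2$--$\Hyb_3$ distinguishing gap, so no reasoning about what survives in $\calD$ at the end is needed. Your proposal is easily repaired along these lines (or, equivalently, by guessing a uniformly random query index and measuring there, at a $1/q$ loss), but as written the final-measurement strategy does not work.
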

\begin{proof}
    It is sufficient to show that the single-effective-query check added in step \ref{proof:SEQ-hyb3-seqcheck} causes an early output only with probability $\negl(\secpar)$. We reduce this fact to the direct product hardness of subspace states (\Cref{thm: direct product oracle}). Say that some adversary $\cA_{\OTP}$ caused this event to occur with noticeable probability. We construct an adversary $\cA_{DP}$ to break direct product hardness as follows: 
    \begin{enumerate}
        \item $\cA_{DP}$ receives from the challenger $\left(\ket{A_*}, \cO_{A_*}, \cO_{A_*^\perp}\right)$ for some subspace $A_* \subseteq \bbF_2^n$ of dimension $n/2$ sampled uniformly at random.
        \item $\cA_{DP}$ samples an index $i^* \getsr [m]$ in which to embed $A_*$, and they set $\left(\ket{A_i}, \cO_{A_i}, \cO_{A_i^\perp}\right) = \left(\ket{A_*}, \cO_{A_*}, \cO_{A_*^\perp}\right)$. 
        \item For each $i \in [m] \backslash \{i^*\}$, $\cA_{DP}$ samples a subspace $A_i \subseteq \bbF_2^n$ of dimension $n/2$ uniformly at random. Then they prepare $\left(\ket{A_i}, \cO_{A_i}, \cO_{A_i^\perp}\right)$.
        \item $\cA_{DP}$ uses $\left(\ket{A_i}, \cO_{A_i}, \cO_{A_i^\perp}\right)_{i \in [m]}$ to construct $\OTP(f)$, as described in $\cH_3$. Then they run $\cA_{\OTP}$ on this construction, with the following modification: in \ref{proof:SEQ-hyb3-seqcheck}, it \emph{measures} the early return condition, instead of checking it coherently. 
        \item $\cA_{DP}$ terminates $\Hyb_3$ as soon as the measurement result indicates to return early. Then, it measures registers $\calQ_v$ and the oracle database $\calD$ to obtain $v$ and $(v', r)$. It outputs $v_{i^*}$ and $v'_{i^*}$.
    \end{enumerate}

    Let $\nu(\secp)$ be the probability that $\cA_{DP}$ finds and outputs two vectors $v_{i^*}$ and $v'_{i^*}$. $\nu(\secp)$ is non-negligible because otherwise, $\cA_{\OTP}$ would trigger the early return condition with only negligible probability.

    By definition of the early return condition, there is at least one index where $v_i\neq v'_i$. With probability $\geq \frac{1}{m}$, this index is $i^*$ because the value of $i^*$ is independent of $\cA_{\OTP}$'s view. Furthermore, by definition of $\Hyb_2$, the only vectors $w$ that are queried to $G$ are those that satisfy $w_i \in (A\cup A^\perp) \backslash(A\cap A^\perp)$, so at all times, the entries $(w,r)$ in $G$'s database satisfy this form. Therefore $v'_i \in (A\cup A^\perp) \backslash (A\cap A^\perp)$. Similarly, if step \ref{proof:SEQ-hyb3-seqcheck} is reached, then $v_i\in (A\cup A^\perp) \backslash (A\cap A^\perp)$. 
    
    Therefore whenever $v_{i^*} \neq v'_{i^*}$, $\cA_{DP}$ wins the direct product hardness game. This occurs with probability $\geq \frac{\nu(\secp)}{m}$. By \Cref{thm: direct product oracle}, $\nu(\secp)$ must be negligible.
\end{proof}

To aid in showing that $\Hyb_3$ is indistinguishable from $\Hyb_4$, we prove that the cache introduced in $\Hyb_4$ maintains the invariant that $x$ is cached if and only if some $v$ uniquely corresponding to $x$ is recorded in the compressed oracle $G$.

\begin{claim}\label{claim:SEQ-proof-hyb4-invariant}
    After every query in $\Hyb_4$, the internal states of $\calO_{f,G,A}$, consisting of the cache register $\calR_x$ and $G$'s database register $\calD$, lies entirely within the space spanned by states of the form
    \begin{align*}
        \ket{0}_{\calR_x} &\otimes \ket{\emptyset}_{\calD}
        \\
        \ket{x}_{\calR_x} &\otimes \ket{\{(v, r)\}}_{\calD}
    \end{align*}
    for some $x\in \calX$, $r\in \calR$, and $v\in \{0,1\}^{m\cdot n}$ such that $v_i\in A_i^{x_i}\backslash (A_i \cap A_i^\perp)$ for all $i\in [\secpar]$. 
    
    In other words, if the state at time $t$ is $\rho^t_{\calR_x,\calD}$ and $\Pi_{\Hyb_4}$ projects onto this space, then
    \[
        \Tr[\Pi_{\Hyb_4} \rho^t_{\calR_x,\calD}] = 1
    \]
\end{claim}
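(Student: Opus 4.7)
The plan is to prove the invariant by induction on the number of queries $t$, establishing the stronger claim that the joint state on the adversary register, the query register $\calQ$, the cache $\calR_x$, and the database $\calD$ always lies in $\calH_{\text{adv}} \otimes \calH_\calQ \otimes S$, where $S \subseteq \calH_{\calR_x} \otimes \calH_\calD$ is the subspace spanned by the two families of kets in the claim. The reduced state on $(\calR_x, \calD)$ then has support in $S$, which yields the desired trace statement $\Tr[\Pi_{\Hyb_4} \rho^t_{\calR_x, \calD}] = 1$. The base case $t=0$ is immediate since these registers are initialized to $\ket{0}_{\calR_x} \otimes \ket{\emptyset}_\calD \in S$. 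Adversary-side unitaries between queries preserve $S$ trivially as they do not touch $\calR_x$ or $\calD$, so it suffices to show that each oracle query unitary $U_Q$ preserves $\calH_\calQ \otimes S$.

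The key structural observation driving the induction is that the modified vector check in $\Hyb_4$ only admits $v$ with $v_i \in A_i^{x_i} \setminus (A_i \cap A_i^\perp)$ for all $i$, and since $A_i^0 \setminus (A_i \cap A_i^\perp)$ and $A_i^1 \setminus (A_i \cap A_i^\perp)$ are disjoint, any such $v$ uniquely determines its corresponding input $x(v)$. Consequently, any entry $(v, r)$ recorded in $\calD$ has a unique ``owner'' $x(v)$, which is precisely the value that the cache $\calR_x$ should hold.

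For the inductive step, I would analyze $U_Q$ on basis states of $\calH_\calQ \otimes S$ by case analysis on the two generating types. \emph{Empty case:} if the pre-query $(\calR_x, \calD)$ state is $\ket{0, \emptyset}$, then the vector check either rejects (leaving the state unchanged) or accepts; in the latter case the SEQ check passes trivially, Cache 1 is a no-op, the evaluation double-queries the compressed oracle $G$ with the intermediate isometry $u \mapsto u \oplus f(x; G(v))$ and possibly leaves $\calD$ in a superposition over $\ket{\emptyset}$ and $\ket{\{(v, r)\}}$ branches (with $v$ still obeying the vector condition), and Cache 2 CNOTs $x$ into $\calR_x$ exactly on the nonempty branches, producing a state in $\calH_\calQ \otimes S$. \emph{Nonempty case:} if the pre-query state is $\ket{x', \{(v', r')\}}$ with $x' = x(v')$, then either the vector check rejects, or the SEQ check kills every branch with $v \neq v'$, so that on the surviving branches $v = v'$ forces $x = x' = x(v)$ by the uniqueness observation; Cache 1 then resets $\calR_x$ to $0$ by XORing $x = x'$ into $\calR_x = x'$, the evaluation modifies only the $(v, \cdot)$ entry of $\calD$ (possibly emptying it), and Cache 2 restores $\calR_x = x$ on the branches where $\calD$ remains nonempty, again landing in $\calH_\calQ \otimes S$.

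The main obstacle I anticipate is making the branch-wise reasoning rigorous despite the fact that all of these steps are performed coherently: Cache 1, Cache 2, and the SEQ check are controlled on the compressed database register without any measurement, and the evaluation step acts on $\calD$ through $\Decomp \circ \CO' \circ \Decomp$, which can create fresh superpositions over ``empty'' and ``single-entry'' database states from either generating type of $S$. I would handle this by expanding each operation explicitly on basis states using the action of $\Decomp$ and $\CO'$ described in \Cref{sec:compressedRO}, verifying invariance of $\calH_\calQ \otimes S$ in each sub-case, and then invoking linearity of $U_Q$ to extend the conclusion to all states in $\calH_{\text{adv}} \otimes \calH_\calQ \otimes S$, closing the induction.
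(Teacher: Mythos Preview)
Your proposal is correct and follows essentially the same approach as the paper: induction on queries, the key observation that the modified vector check makes $v$ uniquely determine its owner $x(v)$, and a step-by-step walk through vector check, SEQ check, Cache~1, evaluation, Cache~2 on basis states to show the oracle unitary preserves the subspace $S$. The paper organizes its case analysis by step rather than by the empty/nonempty type of the $(\calR_x,\calD)$ state as you do, but the substance of the argument is identical.
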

\begin{proof}
    This is true at when $f_{\$,1}$ is initialized, since registers $\calR_x, \calD$ are initialized to $\ket{0, \emptyset}$. We now show that $\calO_{f, G, A}$ is invariant on the space determined by $\Pi_{\Hyb_4}$.
    It suffices to consider the action of $\calO_{f, G,A}$ on these basis states. 

    We go through the operations of $\calO_{f, G,A}$ step-by-step. If step 1 (vector check) does not return early, then $v_i\in A_i^{x_i}\backslash (A_i \cap A_i^\perp)$ for all $i\in [\secpar]$. In step 2 (SEQ check), there are two cases.
    If $D$ has an entry $(v', r)$, then the SEQ check causes $\calO_{f,G,A}$ to return register $\calQ$ without modifying its internal state.
    On the other hand, if $v = v'$ or if $D = \emptyset$, then $\calO_{f,G,A}$ proceeds to step 3. 

    We claim that at the end of step 3 (cache 1), $\calR_x$ contains $\ket{0}$. 
    If $D = \emptyset$, step 3 leaves $\calR_x$ as $\ket{0}$. 
    If $v = v'$, $\calO_{f,G,A}$ applies a CNOT operation from $\calQ_x$ to $\calR_x$. 
    Before this operation, $\calQ_x$ contained a value $x'$ such that $v'_i=v_i \in A_i^{x_i}$ for all $i\in [\secpar]$.
    Since $v_i \notin (A\cap A^\perp)$ for all $i\in [\secpar]$, $v$ uniquely determines $x$, so $x' = x$.
    Therefore after applying the CNOT, register $\calQ_x$ contains $\ket{x \oplus x} = \ket{0}$.

    In step 4 (evaluation), the oracle queries $G$ on $v$. During this, the compressed oracle modifies its database register $\calD$, but the new databases $D'$ in the support of the state satisfy $D'(w) = D(w)$ for all $w\neq v$.
    Since $D(w) = \bot$ for all $w\neq v$, register $\calD$ is supported on $\ket{\emptyset}$ and $\ket{\{(v, r)\}}$ for some $r\in \calR$ at the end of this step.

    Applying step 5 (cache 2) produces $\ket{x}_{\calR_x} \otimes \ket{\{(v, r)\}}_{\calD}$ when $D= \{(v, r)\}$ and produces $\ket{0}_{\calR_x} \otimes \ket{\emptyset}_{\calD}$ when $D=\emptyset$. Step 6 does not modify these registers further.
\end{proof}

\begin{claim}\label{claim:SEQ-hyb3-hyb4}
    $\Hyb_3$ and $\Hyb_4$ are perfectly indistinguishable.
\end{claim}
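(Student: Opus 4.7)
The plan is to exhibit an isometry $V$ acting only on the oracle's internal database register $\calD$ such that the $\Hyb_4$ state equals $(I \otimes V)$ applied to the $\Hyb_3$ state after every query, where $V$ maps $\calD$ into $\calR_x \otimes \calD$ by sending $\ket{\emptyset}_\calD$ to $\ket{0}_{\calR_x}\otimes\ket{\emptyset}_\calD$ and $\ket{\{(v,r)\}}_\calD$ to $\ket{x(v)}_{\calR_x}\otimes \ket{\{(v,r)\}}_\calD$, where $x(v)$ is the unique $x$ satisfying $v_i \in A_i^{x_i}\setminus (A_i\cap A_i^\perp)$ for all $i \in [m]$ (extended arbitrarily on databases that do not arise by the invariant). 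Since $V$ is an isometry acting only on internal oracle registers, if $\ket{\psi_4} = (I \otimes V)\ket{\psi_3}$ holds after every query, then $V^\dagger V = I_\calD$ implies that tracing out the internal registers yields identical adversary-side states in both hybrids, giving perfect indistinguishability.

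I would prove the relation by induction on the query number. The base case is immediate since the initial $\Hyb_4$ state $\ket{0}_{\calR_x}\otimes \ket{\emptyset}_\calD$ equals $V$ applied to $\Hyb_3$'s initial state $\ket{\emptyset}_\calD$. For the induction step, the adversary's unitary and the oracle's vector check, SEQ check, evaluation, and output steps act identically in both hybrids and do not touch $\calR_x$, so they preserve the isometric relation. The only new operations to analyze are cache 1 and cache 2. In the branch where both checks pass, one verifies: (i) cache 1 resets $\calR_x$ to $\ket{0}$, either because $D = \emptyset$ and cache 1 does nothing (with $\calR_x$ already at $\ket{0}$) or because $D = \{(v,r)\}$ with the query vector forced to match $v$ by the SEQ check, so $\calQ_x$ holds $x(v) = \calR_x$ and the CNOT zeroes $\calR_x$; (ii) evaluation modifies $\calD$ but leaves $\calR_x$ at $\ket{0}$; and (iii) cache 2 flips $\calR_x$ to $\calQ_x = x$ whenever the post-evaluation database is nonempty, which equals $x_{\mathsf{cache}}$ of the new database. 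In the branch where a check fails, neither cache operation executes and $\calR_x$ is unchanged, consistent with $V$ applied to the unchanged $\Hyb_3$ state.

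The main obstacle will be carrying out this case analysis coherently, since the checks may partially pass and partially fail across different basis states of a superposition. A clean way to handle this is to decompose each oracle step into its action on the ``pass'' subspace and ``fail'' subspace (invoking \Cref{claim:SEQ-proof-hyb4-invariant} to restrict attention to the relevant basis states of $\calR_x \otimes \calD$), and verify that cache 1 composed with evaluation composed with cache 2 implements $V \circ \mathrm{eval} \circ V^\dagger$ on the pass subspace of $(I\otimes V)\ket{\psi_3}$ while acting as the identity on the fail subspace. The crucial algebraic identity is that in the pass case, the query's $x$ coincides with $x(v)$ because the vector check forces $v_i \in A_i^{x_i}\setminus(A_i\cap A_i^\perp)$ and this uniquely determines $x$, which is precisely what makes cache 1's CNOT correctly zero out $\calR_x$ regardless of whether the previous database was empty or already contained $(v,r)$.
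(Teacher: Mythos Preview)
Your approach is essentially the same as the paper's. The paper also defines an isometry (packaged as a unitary $U$ on $\calR_x\otimes\calD$) sending $\ket{0,(v,r)}\mapsto\ket{x_v,(v,r)}$ and argues that $\Hyb_4$'s oracle is obtained from $\Hyb_3$'s by conjugating with $U$; your proposal unpacks exactly this conjugation into the cache-1/evaluation/cache-2 case analysis, using the same key fact that the vector check pins down $x = x(v)$ uniquely so the CNOTs correctly install and remove the cached value.
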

\begin{proof}
    By \Cref{claim:SEQ-proof-hyb4-invariant}, the state of registers $(\calX, \calD)$ in $\Hyb_3$ is always supported on states of the form $\ket{0, \emptyset}$ or $\ket{x, (v, r)}$ where $x\in \calX$, $r\in \calR$, and $v\in \{0,1\}^{m\cdot n}$ such that $v_i\in A_i^{x_i}\backslash (A_i \cap A_i^\perp)$ for all $i\in [\secpar]$. This condition implies that $v$ uniquely determines $x$, which we now denote by $x_v$. Therefore there is an isometry mapping 
    \[
        \ket{0, (v, r)}_{\calX, \calD} \mapsto \ket{x_v, (v, r)}_{\calX, \calD}
    \] 
    Let $U$ be a unitary which implements this isometry. The state of $\Hyb_4$ at any time $t$ can be generated by running $\Hyb_3$ until time $t$ with the following modification: before each query, apply $U^\dagger$ and after answering it, apply $U$.
\end{proof}

\begin{claim}\label{claim:SEQ-hyb4-hyb5}
    $\Hyb_4$ and $\Hyb_5$ are perfectly indistinguishable.
\end{claim}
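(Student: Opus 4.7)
The plan is to mirror the strategy of \Cref{claim:SEQ-proof-hyb4-invariant} and \Cref{claim:SEQ-hyb3-hyb4}: first establish an invariant on the internal state of $\Hyb_5$ analogous to the one for $\Hyb_4$, then construct an explicit isometry that intertwines the two oracles.

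First I would prove a $\Hyb_5$-analogue of \Cref{claim:SEQ-proof-hyb4-invariant}: after every query in $\Hyb_5$, the joint state of the vector cache $\calV$ and $H$'s compressed database register $\calD_H$ lies within the span of basis states of the form $\ket{0}_{\calV}\otimes \ket{\emptyset}_{\calD_H}$ and $\ket{v}_{\calV}\otimes \ket{\{(x,r)\}}_{\calD_H}$, where $x\in\calX$, $r\in\calR$, and $v\in \{0,1\}^{m\cdot n}$ satisfies $v_i\in A_i^{x_i}\backslash (A_i\cap A_i^\perp)$ for all $i\in[m]$. The proof proceeds exactly as in \Cref{claim:SEQ-proof-hyb4-invariant}: the vector check ensures that any surviving $v$ uniquely determines $x$; the SEQ check against $\calV$ guarantees that we only reach the caching/evaluation steps if $\calV$ is either $\ket{0}$ or already $\ket{v}$; the compressed oracle query in the evaluation step can only add an entry at index $x$ to $\calD_H$; and the subsequent cache-2 step then writes $v$ into $\calV$ exactly when $\calD_H$ becomes non-empty.

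Next I would define an isometry $V$ between $\Hyb_4$'s internal registers $(\calR_x, \calD_G)$ and $\Hyb_5$'s internal registers $(\calV, \calD_H)$ by
\begin{align*}
    \ket{0}_{\calR_x}\otimes \ket{\emptyset}_{\calD_G} &\;\longmapsto\; \ket{0}_{\calV}\otimes \ket{\emptyset}_{\calD_H},\\
    \ket{x}_{\calR_x}\otimes \ket{\{(v,r)\}}_{\calD_G} &\;\longmapsto\; \ket{v}_{\calV}\otimes \ket{\{(x,r)\}}_{\calD_H},
\end{align*}
for every admissible $(x,v,r)$ (those with $v_i\in A_i^{x_i}\backslash(A_i\cap A_i^\perp)$). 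By the two invariants, $V$ is a well-defined isometry from the support of $\Hyb_4$'s internal state to the support of $\Hyb_5$'s internal state; its inverse is defined using the same uniqueness property used in \Cref{claim:SEQ-hyb3-hyb4} ($v$ determines $x$ and vice versa, given the subspace conditions). Initial states of both hybrids correspond under $V$ since both oracles start in $\ket{0,\emptyset}$.

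I would then verify step-by-step that the oracle operations intertwine with $V$ on the supported subspace: the vector checks are identical; the SEQ check in $\Hyb_5$ inspects $\calV$ for a $v'\notin\{0,v\}$, which under $V$ corresponds exactly to $\Hyb_4$'s check for an entry $(v',r)\in \calD_G$ with $v'\notin \{0,v\}$; the cache-$1$ and cache-$2$ CNOT-from-$\calQ_x$ gates become CNOT-from-$\calQ_x$ operations conditioned on the database being non-empty in both hybrids, which match because $V$ sends "$\calD_G$ non-empty" to "$\calD_H$ non-empty" and leaves $\calR_x/\calV$ in the $\ket{0}$ subspace before caching on both sides (by the same argument as in \Cref{claim:SEQ-proof-hyb4-invariant}); finally, the evaluation steps map $u\mapsto u\oplus f(x;G(v))$ in $\Hyb_4$ and $u\mapsto u\oplus f(x;H(x))$ in $\Hyb_5$, and these agree because $V$ identifies the unique database entry $(v,r)$ in $\Hyb_4$ with $(x,r)$ in $\Hyb_5$, so the value $r$ of the randomness fed into $f$ is the same random variable on both sides (here I would appeal to \Cref{claim:compressed-oracle-renaming} to formally justify that querying the compressed oracle at $v$ in $\Hyb_4$ and at $x$ in $\Hyb_5$ produce the same joint distribution with the adversary).

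The main obstacle I anticipate is the evaluation step: one must argue carefully that, on the $V$-image subspace, the two compressed-oracle subroutines produce identically distributed entangled states between the query register and the database. The cleanest way is to use \Cref{claim:compressed-oracle-renaming} applied with the "renaming" $v\leftrightarrow x$ that is baked into $V$, together with the invariant that the only entry ever present in either database is the one that is linked via $V$. Once this is established, the adversary's view in both hybrids is produced from perfectly-matched states via identical unitaries on the query register, which yields perfect indistinguishability.
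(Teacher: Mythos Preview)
Your proposal is correct and mirrors the paper's proof: both define the role-swapping isometry $\ket{x,(v,r)}\mapsto\ket{v,(x,r)}$ on the internal registers, verify that it intertwines the two oracles step by step on the invariant-supported subspace, and handle the evaluation step via \Cref{claim:compressed-oracle-renaming} with the renaming $v\leftrightarrow x$. The only organizational difference is that the paper runs the induction using just the $\Hyb_4$ invariant and then derives the $\Hyb_5$ invariant afterwards as a corollary of the intertwining, whereas you propose to establish the $\Hyb_5$ invariant directly up front; either order works.

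One small correction for when you carry out the cache-step verification: in $\Hyb_5$ the cache CNOT should be from $\calQ_v$ into $\calV$ (so that $\calV$ ends up holding $v$), not from $\calQ_x$ as the hybrid description literally says---the paper's own proof makes this clear, showing the post-cache state in $\Hyb_5$ is $\ket{v,\{(x,r)\}}_{\calV,\calD}$, which is exactly what your isometry $V$ anticipates.
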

\begin{proof}
    Let $\ket{\psi^{\Hyb_4}_t}_{\calA, \calQ,\calV,\calD}$ be the joint state of the adversary and $\calO_{f,G,A}$ in $\Hyb_4$ when query $t$ is submitted and denote the state for $\Hyb_5$ similarly.
    Let $\calO_{f,G,A}^{\Hyb_4}$ and $\calO_{f,G,A}^{\Hyb_5}$ denote the OTP oracle's unitary operations in $\Hyb_4$ and $\Hyb_5$, respectively.
    Let $U$ be the unitary mapping
    \[
    \ket{x, (v,r)} \mapsto \ket{v, (x,r)}
    \]
    and acting as the identity on all orthogonal states. 
    We show by induction over the time $t$ that 
    \[
        (I_{A,\calQ} \otimes U_{\calV,\calD}) \ket{\psi^{\Hyb_4}_t}_{\calA, \calQ,\calV,\calD}
        =
        \ket{\psi^{\Hyb_5}_t}_{\calA, \calQ,\calV,\calD}
    \]

    This is clearly true for $t=0$, since both hybrids initialize $\calV, \calD$ to $\ket{0, \emptyset}$, which $U$ acts as the identity on. Now consider some time $t$. By the inductive hypothesis and linearity of quantum computation, it suffices to consider the actions of $\calO^{\Hyb_4}_{f,G,A}$ and $\calO^{\Hyb_5}_{f,G,A}$ on the \emph{same} basis state $\ket{x, v, u}_{\calQ} \otimes \ket{x', D}_{\calR_x, \calD}$. Furthermore, by \Cref{claim:SEQ-proof-hyb4-invariant}, we may restrict ourselves to basis states of the form
    \begin{align*}
        \ket{x, v, u}_{\calQ} &\otimes \ket{0, \emptyset}_{\calR_x, \calD}
        \\
        \ket{x, v, u}_{\calQ} &\otimes \ket{x', (v', r)}_{\calR_x, \calD}
    \end{align*}
    where $v'_i \in A_i^{x'_i}\backslash (A_i \cap A_i^\perp)$.

    Observe that step 1 (vector check) of $\calO^{\Hyb_4}_{f,G,A}$ and $\calO^{\Hyb_5}_{f,G,A}$ are identical, and that step 2 (SEQ check) is also identical after applying $U$ to registers $\calR_x=\calV$ and $\calD$. If either step prompts an early return, then the states are identical. Otherwise, the input state is now constrained to (1) $D= \emptyset$ and $x' = 0$ or (2) $D\in \{\{(v, r)\}\}_{r\in \calR}$ and $x = x'$. In either case, after step 3, the state of $\Hyb_3$ and $\Hyb_4$ are, respectively:
    \begin{align*}
        \ket{x,v,u}_{\calQ} &\otimes \ket{0, D_{v}}_{\calR_x, \calD}
        \\
        \ket{x,v,u}_{\calQ} &\otimes \ket{0, D_{x}}_{\calV, \calD}
    \end{align*}
    where either (1) $D_{v} = D_{x} = 0$ or $D_{v} = \{(v,r)\}$ and $D_{x} = \{(x,r)\}$ for some $r\in \calR$. In either case, $D_v$ is related to $D_x$ by $D_v(v) = D_x(x)$, $D_v(x) = \bot = D_x(v)$ and $D_v(w) = D_v(w)$ for all $w\notin \{x, v\}$. 
    
    By \Cref{claim:compressed-oracle-renaming}, the states after each query to $G$ (respectively $H$) in step 4 (evaluation) are identical up to renaming $v$ to $x$ in the compressed database. Thus, at the end of step 4, the states are, respectively:\footnote{$\ket{\psi_\emptyset}$ is not necessarily associated with a single $r$ value, due to the potential of collisions $f(x;r_1) = f(x;r_2)$ which might lead to partial compression.}
    \begin{gather*}
        \alpha_{\emptyset} \ket{x,v,\psi_\emptyset}_{\calQ} \otimes \ket{0, \emptyset}_{\calR_x, \calD} + \sum_{r} \alpha_{r} \ket{x,v, u \oplus f(x;r)}_{\calQ} \otimes \ket{0, \{(v, r)\}}_{\calR_x, \calD}
        \\
        \alpha_{\emptyset} \ket{x,v,\psi_\emptyset}_{\calQ} \otimes \ket{0, \emptyset}_{\calV, \calD} + \sum_{r} \alpha_{r} \ket{x,v, u \oplus f(x;r)}_{\calQ} \otimes \ket{0, \{(x, r)\}}_{\calV, \calD}
    \end{gather*}
    After applying step 5, the states become
    \begin{gather*}
        \alpha_{\emptyset} \ket{x,v,\psi_\emptyset}_{\calQ} \otimes \ket{0, \emptyset}_{\calR_x, \calD} + \sum_{r} \alpha_{r} \ket{x,v, u \oplus f(x;r)}_{\calQ} \otimes \ket{x, \{(v, r)\}}_{\calR_x, \calD}
        \\
        \alpha_{\emptyset} \ket{x,v,\psi_\emptyset}_{\calQ} \otimes \ket{0, \emptyset}_{\calV, \calD} + \sum_{r} \alpha_{r} \ket{x,v, u \oplus f(x;r)}_{\calQ} \otimes \ket{v, \{(x, r)\}}_{\calV, \calD}
    \end{gather*}
    Finally, $U^\dagger$ maps the latter to the former.
\end{proof}

To aid in the proof that $\Hyb_5$ and $\Hyb_6$ are indistinguishable, we show that a cache invariant holds for $\Hyb_5$ and $\Hyb_6$ which is similar to the one for $\Hyb_4$.

\begin{claim}\label{claim:SEQ-proof-hyb5-invariant}
    After every query in $\Hyb_5$, the internal states of $\calO_{f,G,A}$, consisting of the cache register $\calR_x$ and $G$'s database register $\calD$, lies entirely within the space spanned by states of the form
    \begin{align*}
        \ket{0}_{\calR_x} &\otimes \ket{\emptyset}_{\calD}
        \\
        \ket{x}_{\calR_x} &\otimes \ket{\{(v, r)\}}_{\calD}
    \end{align*}
    for some $x\in \calX$, $r\in \calR$, and $v\in \{0,1\}^{m\cdot n}$ such that $v_i\in A_i^{x_i}\backslash (A_i \cap A_i^\perp)$ for all $i\in [\secpar]$.
\end{claim}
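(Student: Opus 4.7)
The plan is to prove the invariant by induction on the number of queries, closely mirroring the argument used for Claim~\ref{claim:SEQ-proof-hyb4-invariant}. The base case is immediate: at initialization, the internal registers of the oracle are in the state $\ket{0}_{\calR_x} \otimes \ket{\emptyset}_{\calD}$, which lies in the claimed span.

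For the inductive step, I would assume that the invariant holds before the $t$-th query and, by linearity, consider basis-state inputs $\ket{x, v, u}_{\calQ} \otimes \ket{s, D}_{\calR_x, \calD}$ with $(s, D) \in \{(0, \emptyset)\} \cup \{(x', \{(v', r')\}) : v'_i \in A_i^{x'_i} \setminus (A_i \cap A_i^\perp) \text{ for all } i\}$. Then I would walk through the six steps of the $\Hyb_5$ oracle one at a time. The vector check either returns early (preserving the internal state) or restricts to $v_i \in A_i^{x_i} \setminus (A_i \cap A_i^\perp)$, which crucially implies that $v$ uniquely determines the corresponding $x$. The SEQ check either returns early or restricts to cases where the cache contents are compatible with $v$, forcing $s$ to agree with $x$ via the just-established bijection. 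The first cache step applies a CNOT from $\calQ_x$ into $\calR_x$ precisely when $D$ is nonempty, driving $\calR_x$ back to $\ket{0}$ in that branch. The evaluation step updates $\calD$ through two compressed-oracle queries and transforms the $\ket{\emptyset}$ branch into a superposition over $\ket{\emptyset}$ and single-entry databases whose unique entry is associated with the queried $(x, v)$; the final cache step again CNOTs $\calQ_x$ into $\calR_x$ on the singleton branch, yielding a state in the span of $\ket{0}_{\calR_x} \otimes \ket{\emptyset}_{\calD}$ and $\ket{x}_{\calR_x} \otimes \ket{\{(v, r)\}}_{\calD}$ as claimed.

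The main obstacle will be the careful bookkeeping of the two caching CNOTs against the compressed-oracle dynamics and the superposition structure of $\calD$, especially showing that after the evaluation step the database register's support is confined to empty or to singletons compatible with the current $(x, v)$ pair (so the second CNOT correctly flips $\calR_x$ rather than entangling spurious values). This bookkeeping is where the direct-product-hardness-derived constraint $v_i \in A_i^{x_i} \setminus (A_i \cap A_i^\perp)$ does real work: it guarantees the $v \leftrightarrow x$ correspondence that keeps the CNOTs consistent across the superposition.

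A cleaner alternative route that I would mention is to invoke Claim~\ref{claim:SEQ-hyb4-hyb5} directly: that claim already establishes $(I_{\calA, \calQ} \otimes U_{\calV, \calD})\ket{\psi^{\Hyb_4}_t} = \ket{\psi^{\Hyb_5}_t}$ for the unitary $U:\ket{x,(v,r)}\mapsto \ket{v,(x,r)}$ that exchanges the $v$ and $x$ labels between the cache and compressed-database registers. Since Claim~\ref{claim:SEQ-proof-hyb4-invariant} establishes the structurally identical invariant for $\Hyb_4$, transporting it through $U$ (respectively $U^\dagger$) yields the invariant in the form stated here for $\Hyb_5$, without re-doing the step-by-step analysis.
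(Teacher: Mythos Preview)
Your ``cleaner alternative route'' is exactly what the paper does: it transports the $\Hyb_4$ invariant (Claim~\ref{claim:SEQ-proof-hyb4-invariant}) through the unitary $U$ established in the proof of Claim~\ref{claim:SEQ-hyb4-hyb5}, and that is the entire argument. So your proposal is correct, and the second route you sketch coincides with the paper's proof.

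One caution about your primary (direct-induction) route. In $\Hyb_5$ the roles of $x$ and $v$ are swapped relative to $\Hyb_4$: the cache register is $\calV$ and stores a vector $v$, the compressed oracle is $H$ with database entries of the form $(x,r)$, and the caching CNOTs act from $\calQ_v$ into $\calV$ (this is what the proof of Claim~\ref{claim:SEQ-hyb4-hyb5} actually uses, and what the simulator in Figure~\ref{fig:simulator} does). Your step-by-step walkthrough instead speaks of database entries ``associated with $(x,v)$'' and CNOTs from $\calQ_x$ into $\calR_x$, i.e.\ the $\Hyb_4$ convention. Part of this confusion is induced by the paper itself: the statement of the claim and the textual description of $\Hyb_5$ reuse the $\Hyb_4$ register names, whereas the proof makes clear the intended invariant has the form $\ket{0}_{\calV}\otimes\ket{\emptyset}_{\calD}$ or $\ket{v}_{\calV}\otimes\ket{\{(x,r)\}}_{\calD}$. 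With that swap in mind, your direct induction goes through verbatim; without it, the argument as written does not match what the $\Hyb_5$ oracle actually does. The transport-via-$U$ route sidesteps this bookkeeping entirely, which is why the paper prefers it.
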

\begin{proof}
    As shown in the proof of \Cref{claim:SEQ-hyb4-hyb5}, the internal states of $\calO_{f,G,A}$ in $\Hyb_4$ and $\Hyb_5$ are related by the register renaming $\calR_x = \calV$ and a unitary $U$ mapping 
    \[
        \ket{x}_{\calR_x} \otimes \ket{\{(v, r)\}}_{\calD} 
        \mapsto 
        \ket{v}_{\calV} \otimes \ket{\{(x, r)\}}_{\calD} 
    \]
    and acting as the identity on all other states. By \Cref{claim:SEQ-proof-hyb4-invariant}, the internal state in $\Hyb_4$ is supported on $\ket{0, \emptyset}$, which $U$ acts trivially on, or $\ket{x, \{(v, r)\}}$ where $v_i \in A_i^{x_i}\backslash (A_i \cap A_i^\perp)$, which $U$ maps to $\ket{v, \{(x,r)\}}$.
\end{proof}

\begin{claim}\label{claim:SEQ-hyb5-hyb6}
    $\Hyb_5$ and $\Hyb_6$ are perfectly indistinguishable.
\end{claim}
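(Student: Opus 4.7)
The plan is to show, by induction on the number of queries, that for every adversary the joint state of the adversary register $\calA$, the query register $\calQ$, the cache register $\calV$, and the compressed-database register $\calD$ is \emph{identical} in $\Hyb_5$ and $\Hyb_6$ after each query. Since the only syntactic difference between the two hybrids is the conditional used in the SEQ check (checking $\calV$ for a $v'\notin\{0,v\}$ vs.\ checking $\calD$ for an entry $(x',r)$ with $x'\neq x$), it suffices to show that both checks project onto the same subspace whenever the oracle's internal state is in the invariant subspace guaranteed by \Cref{claim:SEQ-proof-hyb5-invariant}.

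The first step is to verify that the $\Hyb_5$ invariant also holds for $\Hyb_6$; this follows by the same argument as \Cref{claim:SEQ-proof-hyb5-invariant}, since Cache~1, Evaluation, and Cache~2 are unchanged, and the only queries allowed past the modified SEQ check correspond to states on which the cache/database updates preserve the invariant shape $\ket{0,\emptyset}$ or $\ket{v,\{(x,r)\}}$ with $v_i\in A_i^{x_i}\setminus(A_i\cap A_i^\perp)$. With the invariant in hand for both hybrids, the states $(\calV,\calD)$ are always supported on basis states of the form $\ket{0,\emptyset}$ or $\ket{v,\{(x,r)\}}$ in which $v$ uniquely determines $x$ (since $v_i\notin A_i\cap A_i^\perp$ selects one of $A_i, A_i^\perp$).

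The next step is a case analysis of the two SEQ checks on a basis query $\ket{x_q,v_q,u_q}_\calQ \otimes \ket{s}_{\calV,\calD}$ that has already passed the vector check, so that $v_{q,i}\in A_i^{x_{q,i}}\setminus(A_i\cap A_i^\perp)$ for all $i$. If $s=(0,\emptyset)$, neither SEQ check fires, and the subsequent (identical) Cache~1/Evaluation/Cache~2 steps produce the same output. If $s=(v,\{(x,r)\})$, then since $v_q$ uniquely determines $x_q$ via the vector check and $v$ uniquely determines $x$ via the invariant, the relation $v_q=v$ is equivalent to $x_q=x$. Hence ``$v\notin\{0,v_q\}$'' and ``$\exists(x',r)\in D$ with $x'\neq x_q$'' agree on basis states in the invariant subspace, so the two SEQ checks implement the same projector.

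The key step, and the one I expect to be the main obstacle, is handling the subtlety that the adversary can break the correlation between $\calQ_v$ and $\calV$ by acting on $\calQ$ between queries, potentially preparing a basis state with $x_q=x$ but $v_q\neq v$. To rule this out, I would carry the invariant one step further and argue inductively that the joint $(\calQ_v,\calV,\calD)$-register, restricted to the nonempty-cache branch, lies in the subspace where $v$ is fully determined by $x$ via the condition $v_i\in A_i^{x_i}\setminus(A_i\cap A_i^\perp)$; the CNOT in Cache~1 followed by the CNOT in Cache~2, together with the fact that Evaluation only touches $\calQ_u$ (not $\calQ_v$), keeps the purported $v_q$ coherent with the cached $v$ in both hybrids. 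Combining this strengthened invariant with the equivalence of the two SEQ checks on the invariant subspace, the induction closes and the two hybrids are perfectly indistinguishable.
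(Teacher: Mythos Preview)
Your induction-plus-invariant plan is exactly the paper's approach, and you are right to isolate the case $x_q = x'$ with $v_q \neq v'$ as the one needing care: the invariant of \Cref{claim:SEQ-proof-hyb5-invariant} only tells you that a valid $v$ uniquely determines its $x$, so $v' = v_q \Rightarrow x' = x_q$, but the converse does not follow because many valid $v$'s share each $x$. The paper's proof simply asserts the full biconditional on the invariant subspace; your instinct that something more is needed here is correct.

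However, your proposed fix does not close the gap. You want to maintain an invariant on the joint $(\calQ_v, \calV, \calD)$ register, and it is true that the oracle's own steps preserve coherence between $\calQ_v$ and $\calV$, since Cache~1, Evaluation, and Cache~2 never write to $\calQ_v$. But between oracle calls the register $\calQ$ is handed back to the adversary, who may apply an arbitrary unitary to it; no invariant constraining $\calQ_v$ can survive that step. Concretely, if after one call the internal state has a branch $\ket{v_1}_{\calV}\otimes\ket{\{(x,r)\}}_{\calD}$ and the adversary then submits $(x, v_2, u')$ with $v_2 \neq v_1$ both valid for the same $x$, the $\Hyb_5$ check fires (since $v_1 \notin \{0, v_2\}$) while the $\Hyb_6$ check does not (the sole database entry has $x' = x$). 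The two oracles act differently on this branch, so a perfect-indistinguishability argument along these lines cannot close.

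The clean repair is not a stronger invariant but another appeal to direct product hardness, as in \Cref{claim:SEQ-hyb2-hyb3}: an adversary placing noticeable weight on the bad branch has produced two distinct vectors $v_1 \neq v_2$ both passing the vector check for the same $x$, hence at some index $i$ two distinct elements of $A_i^{x_i} \setminus (A_i \cap A_i^\perp)$, contradicting \Cref{thm: direct product oracle}. This downgrades ``perfect'' to negligible distinguishing advantage, which is all that \Cref{thm:oracle-construction-is-secure} needs.
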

\begin{proof}
    The only change between these two hybrids is the single-effective-query check in step 2. 
    $\Hyb_5$ returns early if and only if $\calV$ contains some $v'\notin\{0,v\}$, whereas $\Hyb_6$ returns early if and only if $\calD$ contains a database with an entry $(x', r)$ for $x\neq x'$. Neither condition can occur for the first query, since the internal state of $\calO_{f, H, A}$ is initialized to $\ket{0, \emptyset}$. Thus, the internal states of the two hybrids after query 0 are identical. Inducting over the number of queries, the invariant show in \Cref{claim:SEQ-proof-hyb5-invariant} applies to both $\Hyb_5$ and $\Hyb_6$ when query $t+1$ is submitted (but not yet answered). The invariant shows that at this point, $v'\notin \{0,v\}$ if and only if $\calD$ contains an entry $(x', r)$ for $x\neq x'$. Therefore $\Hyb_5$ will return early when answering query $t+1$ if and only if $\Hyb_6$ will.
\end{proof}

\begin{claim}\label{claim:SEQ-hyb6-hyb7}
    $\Hyb_6$ and $\Hyb_7$ are perfectly indistinguishable.
\end{claim}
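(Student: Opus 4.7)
The plan is to argue that the only change from $\Hyb_6$ to $\Hyb_7$ --- replacing the caching trigger ``$D \neq \emptyset$'' with ``there exists $(x', r) \in D$ with $x' \neq x \oplus 1$'' --- has no effect on the support of the joint state at the moment the caching steps are executed. I would proceed by induction on the query count, maintaining as an inductive hypothesis that (i) the joint states of the adversary, $\calQ$, $\calV$, and $\calD$ in $\Hyb_6$ and $\Hyb_7$ coincide just before each query, and (ii) the invariant of \Cref{claim:SEQ-proof-hyb5-invariant} (which carries over from $\Hyb_5$ to $\Hyb_6$ via \Cref{claim:SEQ-hyb5-hyb6}) applies in both hybrids.

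For the inductive step, the vector check and the SEQ check are syntactically identical in the two hybrids, so they produce identical states. By the $\Hyb_6$ SEQ check, any basis component whose database has an entry $(x', r)$ with $x' \neq x$ is peeled off and returned early. Combined with the invariant, the surviving basis states have one of the forms $\ket{x, v, u}_{\calQ} \otimes \ket{0}_{\calV} \otimes \ket{\emptyset}_{\calD}$ or $\ket{x, v, u}_{\calQ} \otimes \ket{v'}_{\calV} \otimes \ket{\{(x, r)\}}_{\calD}$ where $v'_i \in A_i^{x_i} \setminus (A_i \cap A_i^\perp)$. On this support, the two caching conditions are equivalent: ``$D \neq \emptyset$'' holds iff $D = \{(x, r)\}$, which, since $x \neq x \oplus 1$, is also iff $D$ contains an entry whose first coordinate is different from $x \oplus 1$. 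Hence both hybrids gate the same CNOT from $\calQ_v$ to $\calV$ on exactly the same basis components, and the ancilla $\calB$ introduced in $\Hyb_7$ is uncomputed cleanly, leaving no residual entanglement.

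The evaluation step then queries the compressed oracle $H$ on $x$; this may add, remove, or modify an entry whose first coordinate is $x$, but preserves the property that every entry of $D$ has first coordinate $x$. So the same case analysis applies verbatim to Cache 2, and the post-query states of $\Hyb_6$ and $\Hyb_7$ again coincide and satisfy the invariant, closing the induction and giving perfect indistinguishability.

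The only real bookkeeping obstacle is verifying carefully that both the $\Hyb_6$ SEQ check and the compressed-oracle accesses in the evaluation step keep the database within the support described by the invariant, so that the equivalence of the two caching predicates is actually a statement about \emph{every} basis state that can arise. Once this is nailed down, no further computation is required.
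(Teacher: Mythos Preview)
Your proposal is correct and follows essentially the same route as the paper. The paper's proof is terser: it observes directly that the SEQ check in step~2 restricts the database to $\ket{\emptyset}$ or $\ket{\{(x,r)\}}$, so the $\Hyb_6$ trigger ``$D\neq\emptyset$'' and the $\Hyb_7$ trigger ``some entry with first coordinate $\neq x\oplus 1$'' are equivalent (since $x\neq x\oplus 1$); your version packages this same observation inside an explicit induction carrying the invariant of \Cref{claim:SEQ-proof-hyb5-invariant}, and additionally spells out that the evaluation step only touches entries with first coordinate $x$, which justifies the same equivalence at Cache~2.
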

\begin{proof}
    Observe that the new caching procedure in $\Hyb_7$ for steps \ref{proof:SEQ-hyb4-cache1} and \ref{proof:SEQ-hyb4-cache2} applies a CNOT operation from register $\calQ_v$ to register $\calV$ if and only if $H$'s database register contains an entry $(x', r)$ for $x' \neq x\oplus 1$. The single-effective-query check from step 2 ensures that $H$'s database register is in the span of $\ket{D}$ where $D = \emptyset$ or $D$ contains exactly one entry of the form $(x, r)$. Since $x\neq x\oplus 1$, the new caching procedure applies a CNOT if and only if $H$'s database register contains an entry of the form $x$. 
    
    On the other hand, steps \ref{proof:SEQ-hyb5-cache1} and \ref{proof:SEQ-hyb5-cache2} in $\Hyb_6$ apply the same CNOT operation if $H$'s database register is non-empty. The single-effective-query check in step 2 ensures that this occurs only when $H$ contains an entry of the form $x$. This is identical to the new caching procedure in $\Hyb_7$.
\end{proof}

\subsection{Which Functions is SEQ Access Meaningful For?}\label{sec:seq-meaningful}

Although it is possible to achieve SEQ simulation security for every function, not all functions are meaningfully restricted by SEQ access. For example, deterministic functions clearly can be fully learned with access to an SEQ oracle. 
In this section, we explore what properties imply that a function is unlearnable given SEQ access, culminating in a general criteria for achieving \Cref{def:weak-operational-security}.

Intuitively, a function must satisfy two loose properties in order to have any notion of unlearnability with SEQ access:
\begin{itemize}
    \item \textbf{High Min-Entropy.} If $f(x;r)$ is not sufficiently dependent on the randomness $r$, then measuring $f(x;r)$ may only gently measure $r$. In this case, the SEQ oracle will allow additional queries with some lower, but still inverse polynomial, amplitude.

    \item \textbf{Unforgeability.} If it is possible to compute some $f(x';r')$ given only $f(x;r)$, then the adversary could learn two function evaluations using one query.
\end{itemize}

We emphasize that any reasonable notion of unlearnability must be \emph{average-case} over the choice of $f$ from some family. Otherwise, an adversary could trivially learn everything about $f$ by receiving it as auxiliary input.

\paragraph{Truly Random Functions.}
As a concrete example, a truly random function exemplifies both of the above properties; it has maximal entropy on every input and $f(x;r)$ is completely independent of $f(x';r' )$. Indeed, we are able to show that any adversary with SEQ access to a truly random function cannot output two input/output pairs, except with negligible probability.

\begin{proposition}
\label{claim:random_function_unlearnable-body}
    Random functions with superpolynomial range size
   are \emph{single-effective-query} $\negl(\lambda)$-unlearnable \Cref{def:single-query-unlearnable}. More formally, for functions $\calF: \calX \times \calR \to \calY$, where $\vert\calR\vert = 2^\lambda$ and $\vert\calY\vert$ is superpolynomial in $\lambda$, and for all (non-uniform) quantum polynomial-time adversaries $\calA$, there exists a negligible function $\negl(\cdot)$ such that:
  \begin{align*}
        \Pr_{f \leftarrow \calF_n} [\calA^{f_{\$, 1}}(1^\lambda) \to (x_1,r_1,y_1 = f(x_1,r_1)), (x_2,r_2, y_2 = f(x_2,r_2))] \le \negl(\lambda).
    \end{align*}
$f$ is sampled uniformly at random from $\calF$ and $f_{\$, 1}$ is the single-effective-query oracle for $f$ defined in \Cref{sec:seq_oracle_definition}.
\end{proposition}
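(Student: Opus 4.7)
The plan is to reduce this claim to Zhandry's recording lemma (\Cref{lem:zhandry_lemma5}) combined with the single-entry database invariant established for the compressed $H$-oracle inside $f_{\$,1}$ in \Cref{claim:single_entry_database}. I will decompose the winning event by measuring the compressed database at the end of the experiment: let $W$ denote the event that $\calA$ outputs $(x_1, r_1, y_1), (x_2, r_2, y_2)$ with $(x_1, r_1) \neq (x_2, r_2)$ and $f(x_i, r_i) = y_i$ for both $i$, and let $E$ denote the additional event that $H(x_1) = r_1$ and $H(x_2) = r_2$, where $H$ is obtained by measuring and then arbitrarily extending the compressed oracle at the end. I will bound $\Pr[W \wedge E]$ and $\Pr[W \wedge \neg E]$ separately.

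For $\Pr[W \wedge E]$: this event implies that $\calA$ outputs a 4-tuple $(x_1, x_2, r_1, r_2)$ satisfying the relation $R = \{(x_1,x_2,r_1,r_2) : (x_1,r_1) \neq (x_2,r_2)\}$ together with $H(x_i) = r_i$ for both $i$. Applying \Cref{lem:zhandry_lemma5} with $k = 2$ gives $\sqrt{\Pr[W \wedge E]} \leq \sqrt{p'} + \sqrt{2/|\calR|}$, where $p'$ is the probability that the measured compressed database $D_H$ satisfies $D_H(x_i) = r_i$ for both $i$ and $(x_1,r_1) \neq (x_2,r_2)$. Such a $D_H$ must have at least two distinct entries, which is impossible by \Cref{claim:single_entry_database}, so $p' = 0$ and hence $\Pr[W \wedge E] \leq 2/|\calR| = 2 \cdot 2^{-\lambda}$.

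For $\Pr[W \wedge \neg E]$: when $\neg E$ holds, at least one index $i^* \in \{1,2\}$ satisfies $H(x_{i^*}) \neq r_{i^*}$. Since $f$ is a uniformly random function and $f_{\$,1}$ only ever reads $f$ at inputs of the form $(x, H(x))$, the value $f(x_{i^*}, r_{i^*})$ is information-theoretically independent of $\calA$'s view conditioned on $H(x_{i^*}) \neq r_{i^*}$. Hence $\Pr[y_{i^*} = f(x_{i^*}, r_{i^*}) \mid \neg E] \leq 1/|\calY|$, and a union bound over $i^* \in \{1,2\}$ yields $\Pr[W \wedge \neg E] \leq 2/|\calY|$, which is negligible by the assumption that $|\calY|$ is superpolynomial. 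Combining the two cases yields $\Pr[W] \leq 2/|\calR| + 2/|\calY| = \negl(\lambda)$.

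The main obstacle is formalizing the case split in the quantum setting, since $E$ is defined by a terminal measurement of the compressed $H$-register while the randomness of $f$ must also be handled jointly with the adversary's quantum state. To make this rigorous, I would purify $f$ itself, for instance by treating $f$ as a second compressed oracle that is queried only internally by $f_{\$,1}$ at the input $(x, H(x))$, so that at the end of the experiment the joint state of the adversary, the $H$-database and the $f$-database is well-defined; the projector defining $E$ and the projectors checking the output conditions then act on disjoint registers and can be analyzed as in the classical proof sketch above.
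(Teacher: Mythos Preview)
Your case split is natural, and the $W\wedge E$ branch can be made to work (note that \Cref{lem:zhandry_lemma5} is stated for an algorithm with pure query access to $H$, whereas the SEQ oracle additionally inspects $D_H$ in its step 1; so strictly you need the underlying fact---that checking $H(x_i)=r_i$ after decompression is close to checking $D_H(x_i)=r_i$ directly---rather than the lemma as a black box, but this holds for any final state by a direct $\Decomp$ calculation). The substantive gap is in $W\wedge \neg E$. Your independence claim asserts that $f(x_{i^*}, r_{i^*})$ is uniform conditioned on the \emph{terminally measured} $H$ satisfying $H(x_{i^*}) \neq r_{i^*}$. But $H$ is not a fixed classical function during the interaction: the SEQ oracle reads the $f$-register at $(x, r)$ for every $r$ carrying amplitude in the $H$-superposition at the moment of the query, and step 1 of $f_{\$,1}$ is a coherent check on the compressed $D_H$ register that, viewed in the uncompressed basis, is a non-diagonal projector mixing branches of $H$ at positions other than the currently queried $x$. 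So the classical picture ``if the final $H(x_{i^*}) \neq r_{i^*}$ then $f(x_{i^*}, r_{i^*})$ was never touched'' does not transfer automatically. What you actually need is that any entanglement between the adversary and the $f$-register at $(x_{i^*}, r_{i^*})$ forces the pair $(x_{i^*}, r_{i^*})$ to be recorded in $D_H$---and that is precisely the compressed-oracle chaining lemma (\Cref{lemma:compressed-chaining-carry-body}) that the paper states and proves separately.

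The paper sidesteps this by applying \Cref{lem:zhandry_lemma5} to $f$ rather than to $H$: the combined system consisting of the adversary together with all of the SEQ machinery (including its database check and its internal compressed $H$) accesses $f$ only through ordinary oracle queries, so the lemma applies cleanly and says that two valid output pairs force two recorded entries in $D_f$. The chaining lemma then transfers each $D_f$ entry $(x\|r, y)$ to a $D_H$ entry $(x, r)$, contradicting \Cref{claim:single_entry_database}. Your final-paragraph suggestion to purify $f$ as a second compressed oracle is exactly the right first step, but the remaining work is not the disjoint-register bookkeeping you describe---it is the chaining lemma itself.
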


To prove this claim, we introduce the following technical lemma about compositions $f\circ H$ of random functions where $f$ and $H$ are implemented as compressed oracles. It shows that if $f$ records a query $x\concat y$, then $H$ must record a corresponding query $x$ where $f(x) = y$. 
Intuitively, this implies that $f$ can only record a single query, since that is the restriction on $H$. Any input/output pairs that the adversary learns will be recorded by $f$, so they can only learn a single one.
We prove the technical lemma in \Cref{sec:compressed-chaining}.\footnote{\Cref{sec:compressed-chaining} also contains a related technical lemma which shows that if an adversary has access to $H\circ G$ and $H$ records an entry $(y, z)$, then $G$ records an entry $(x, y)$ for some $x$. The difference from the lemma mentioned here is that $H$ does not take additional input beyond $G$'s output.}


\begin{lemma}[Compressed Oracle Chaining]\label{lemma:compressed-chaining-carry-body}
    Let $G:\calX_G \rightarrow \calY$ and $H:\calX_H \times \calY \rightarrow \calZ$ be random oracles implemented by the compressed oracle technique. Let $\calX\subset \calX_G \times \calX_H$. Define the function $F: \calX \rightarrow \calZ$ by $F(x_g, x_h) = H(x_g, G(x_h))$.
    Consider running an interaction of an oracle algorithm with $F$ until query $t$, then measuring the internal state of $G$ and $H$ to obtain $D_G$ and $D_H$. 

    Let $E_t$ be the event that after the measurement at time $t$, for all $(x_G\concat y, z)\in D_H$, there exists a entry $(x_H, y) \in D_G$.
    Then
    \[
        \Pr[E_t]
        \geq 
        1 - 4t^2\left(\frac{2}{|\calY|} - \frac{1}{|\calY|^2}\right)
    \]
    

\end{lemma}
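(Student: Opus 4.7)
The plan is to proceed by induction on the query count $t$, bounding the failure probability $\Pr[\overline{E_t}]$, where $\overline{E_t}$ denotes the event that some $(x_G \concat y, z) \in D_H$ has no matching $(x_H, y) \in D_G$. At $t=0$ both compressed databases are empty and $E_0$ holds deterministically, so it suffices to bound the per-query increment $\Pr[\overline{E_t}] - \Pr[\overline{E_{t-1}}]$. Each query to $F$ is internally realized as three compressed-oracle sub-queries: a compute-$G$ query $U_G^{(1)}$ placing $y = G(x_h)$ into a workspace register, an $H$-query $U_H$ on $(x_g, y)$ producing $z$, and an uncompute-$G$ query $U_G^{(2)}$ resetting the workspace.

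The key structural observation is that neither $U_G^{(1)}$ nor $U_H$ alone can violate the invariant: $U_G^{(1)}$ only modifies $D_G$, and at the moment $U_H$ records a new $(x_g \concat y, z) \in D_H$, the matching $(x_h, y)$ is still present in $D_G$ since $U_G^{(2)}$ has not yet acted. Thus the invariant can be broken only if $U_G^{(2)}$ removes $(x_h, y)$ from $D_G$ while the $H$-entry persists. The compressed-oracle decompression operator $\Decomp_{x_h}$ invoked by $U_G^{(2)}$ clears this entry only when the corresponding $y$-subregister of $D_G$ lies in the uniform superposition over $\calY$, and the intermediate $U_H$ query generically entangles $y$ with the $D_H$-output $z$, breaking this uniform structure. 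The quantitative argument is to decompose the state just after $U_H$ into a ``well-bound'' component (where the $y$-subregister is fully entangled with $H$'s database so $\Decomp_{x_h}$ acts as identity, leaving $(x_h, y) \in D_G$) and a ``bad'' component in which the fresh $y$ coincides with the $y$-label of a previously recorded $D_H$ entry, permitting interference that can clear the $D_G$ entry. The squared amplitude of the bad component per query is controlled by the probability that a uniform $y \in \calY$ lies in a two-element distinguished set, which is exactly $2/|\calY| - 1/|\calY|^2$; carrying a factor of four from coherent (as opposed to classical) interference and summing over the $t$ queries yields the stated $4t^2(2/|\calY| - 1/|\calY|^2)$ bound.

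The main obstacle is the amplitude analysis inside the uncompute step, specifically identifying exactly which components of the joint state can simultaneously clear $D_G$'s entry and preserve $D_H$'s. An arguably cleaner alternative is a reduction to Zhandry's Lemma 5 directly: construct an adversary against the uncompressed oracles $(G, H)$ that, after running the $F$-interaction for $t$ queries, outputs all input-output pairs of $D_G$ together with every entry of $D_H$, and let $R$ be the set of outputs in which some $H$-tuple has second coordinate $y$ outside the image of the reported $D_G$ pairs. A violation of $E_t$ in the compressed-oracle world corresponds, via the lemma, to finding such a tuple in the real random-oracle world, which is bounded by the standard birthday/guessing bound $O(t^2/|\calY|)$ on producing an $H$-consistent tuple whose second coordinate is unqueried under $G$. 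This route avoids the fine-grained amplitude bookkeeping and yields the same asymptotic rate as required by the lemma statement.
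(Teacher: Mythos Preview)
Your overall strategy (norm induction on queries, isolating the decompression step) matches the paper's, but the mechanism is misidentified. The paper first simplifies a single $F$-query to $\Decomp_G \circ \CO'_G \circ U_H \circ \CO'_G \circ \Decomp_G$ using $\Decomp_G^2 = I$ and that $\Decomp_G$ commutes with $U_H$; the middle block $\CO'_G \circ U_H \circ \CO'_G$ then preserves the invariant \emph{exactly}, because when $U_H$ is applied the $\calQ_\calY$ register holds $y$ only if the decompressed $D_G$ contains $(x_h,y)$. What remains are \emph{two} $\Decomp_G$ applications per query, and both can orphan a $D_H$ entry: the initial one does so when the adversary re-queries an $x_h$ whose $D_G$-record already supports a prior $D_H$ entry. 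Your claim that ``$U_G^{(1)}$ only modifies $D_G$'' and hence cannot break the invariant overlooks that removing a $D_G$ entry orphans any matching $D_H$ record; in your three-step decomposition the final $\Decomp_G$ inside $U_G^{(1)}$ can already cause this before $U_H$ acts. The factor $4$ in the bound is $(2t)^2/t^2$, from two $\Decomp_G$ applications per query, not from a generic coherent-interference factor. The bad amplitude itself is not a collision of a ``fresh $y$'' with a prior $D_H$ label; it is the explicit leak $\Decomp_{G,x}\ket{D\cup(x,y)} = (1-\tfrac{1}{|\calY|})\ket{D\cup(x,y)} - \tfrac{1}{|\calY|}\sum_{y'\neq y}\ket{D\cup(x,y')} + \tfrac{1}{\sqrt{|\calY|}}\ket{D}$, whose last two terms have squared norm exactly $2/|\calY|-1/|\calY|^2$ and violate $E$ precisely when $x$ is the \emph{unique} $D_G$-preimage of some $y$ with $(x_g\concat y, z)\in D_H$ (the paper's case $E_{Y1,Z}$; the cases $E_{Y+,Z}$, $E_{Y,\bot}$, $E_{\bot}$ contribute nothing).

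The Lemma~5 alternative does not work as stated. The algorithm accesses only $F$, so in the real random-oracle world there are no databases $D_G, D_H$ to measure and output---these are purely compressed-oracle constructs. Moreover, the inequality in Lemma~5 runs the wrong way: it upper-bounds the real-world success probability in terms of compressed-database consistency ($\sqrt{p}\leq\sqrt{p'}+\sqrt{k/|\calY|}$), whereas you need to upper-bound the compressed-world bad event $\overline{E_t}$.
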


Note that in the statement of this lemma, $F$ is defined on a subset of $\calX_G\times \calX_H$. Thus, there may be dependence between $x_G$ and $x_H$ in that space. As a special case, this also covers when $x_G = x_H$, i.e. $F(x) = H(x, G(x))$. However, the statement is more general for additional flexibility.



\begin{proof}[Proof of \Cref{claim:random_function_unlearnable-body}]
    We modify our view on the use of the function $f$ to be the compressed oracle defined in \Cref{sec:compressedRO}. Recall that in our implementation of the SEQ oracle $f_{\$,1}$ definition in \Cref{sec:seq_oracle_definition}, we already maintain a compressed oracle database for a random oracle $H: \calX \to \calR$ which computes a fresh randomness $r = H(x)$. Let us call this database for compressed oracle of $H$ as $D_H$. 
    Now we additionally have a database $D_f$ for the compressed oracle of the function $\calF: \calX \times \calR \to \calY$. 
    
    By \Cref{claim:single_entry_database}, $D_H$ is a state in superposition of basis states where each represents a classical database that records only one entry $(x, r)$, where $r =h(x)$ for some random $h \gets H$. These basis states are orthogonal to each other since each state records a different value $x$ by our implementation of the SEQ oracle.
    
    We take a further low-level view on our SEQ oracle implementation for a random function $f$: we can consider the adversary and the implementation of the SEQ oracle together as a "semi-honest" adversary $\calA'$ that queries the (compressed) random oracle $f$ if and only if there is no entry in the database $D_H$. Note that the (compressed) random oracle $f$ is a regular compressed oracle without any query restrictions. $\calA'$ receives $\calA$'s query and simulates the SEQ oracle for $\calA$ using its access to oracle $f$. In the end, $\calA'$ outputs $\calA$'s output. It is easy to observe that their advantage is the same since $\calA'$ perfectly simulates the SEQ oracle for $\calA$.
    
    Let us denote $p : = \calA^{f_{\$, 1}}(1^\lambda) \to (x_1,r_1,y_1 = f(x_1,r_1)), (x_2,r_2, y_2 = f(x_2,r_2))$. By the above observation, we can also denote $p$ as $\calA'$'s advantage of outputting $(x_1,r_1,y_1 = f(x_1,r_1)), (x_2,r_2, y_2 = f(x_2,r_2))$. 
    By \Cref{lem:zhandry_lemma5},  we have $\sqrt{p} \leq \sqrt{p'} + \sqrt{2/\vert \calY \vert}$, where $p'$ is the probability that $D_f(x_1,r_1) = y_1, D_f(x_2,r_2) = y_2$ after a computational basis measurement on $D_f$. 
    The compressed oracle chaining lemma (\Cref{lemma:compressed-chaining-carry-body}) shows that whenever $D_f(x_1,r_1) = y_1$ and $D_f(x_2,r_2) = y_2$, there are corresponding entries $D_H(x_1) = r_1$ and $D_H(x_2) = r_2$ in $H$'s compressed database, except with probability $4q^2\left(2/|\calR| - 1/|\calR|^2\right)$, where $q$ is the number of queries the adversary has made. Since $D_H$ contains at most one entry at a time (\Cref{claim:single_entry_database}) and $\calR$ has superpolynomial size, $p'$ must be negligible in $\secpar$. Since $\calY$ also has superpolynomial size, $p$ must be negligible as well.
\end{proof}

As an immediate corollary of \Cref{claim:random_function_unlearnable-body}, psuedorandom functions are SEQ-unlearnable under \Cref{def:single-query-unlearnable}.

\paragraph{Pairwise Independent Functions.}
We are also able to relax the requirement that $f$ is a truly random function to just require that it is pairwise independent and has high entropy. Intuitively, the pairwise independence plays the role of unforgeability by ensuring the adversary cannot use one evaluation $f(x;r)$ to learn anything about other evaluations $f(x';r')$. We prove the following statement in \Cref{sec:pairwise-indep}.

\begin{proposition}
    Let $\calF$ be a family of functions mapping $\cX \times \cR \to \cY$ that satisfies:
    \begin{enumerate}
        \item \textit{Pairwise independence:} For any $(x, r, y), (x', r', y') \in \cX \times \cR \times \cY$ such that $(x, r) \neq (x', r')$, $\Pr_f[f(x, r) = y \land f(x', r') = y'] = \Pr_f[f(x, r) = y] \cdot \Pr_f[f(x', r') = y']$.
        \item \textit{High Randomness:} There is a negligible function $\nu(\secp)$ such that for any $(x, r, y) \in \cX \times \cR \times \cY$, 
        \[\Pr_f[f(x, r) = y] \leq \nu(n)\]
        \item $\frac{1}{|\cR|} = \negl(\secp)$
    \end{enumerate}
    Then $\cF$ is SEQ-$\negl(\secp)$-unlearnable.
\end{proposition}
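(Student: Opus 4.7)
The argument parallels the proof of Proposition~\ref{claim:random_function_unlearnable-body} for truly random functions, with pairwise independence taking the place of the compressed oracle chaining lemma applied to $f$. As before, $f_{\$,1}$ internally implements its randomness source as a compressed random oracle $H:\cX\to\cR$, and by Claim~\ref{claim:single_entry_database} its database register $\calH$ is supported on databases with at most one entry throughout the interaction. I will measure $\calH$ at the end of the experiment to obtain a classical database $D$ with $|D|\le 1$, and then measure the adversary's output to obtain triples $(x_1,r_1,y_1),(x_2,r_2,y_2)$ with $(x_1,r_1)\neq(x_2,r_2)$.

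The crux is a structural claim: conditioned on $D=\{(x^\ast,r^\ast)\}$, the adversary's classical output is a function of $f(x^\ast,r^\ast)$ and the adversary's private coins alone; conditioned on $D=\emptyset$, the output is independent of $f$. Intuitively, every evaluation of $f$ inside $f_{\$,1}$ is performed at $(x,H(x))$, so measuring $H$'s database to a singleton projects the joint state onto the branch where only the evaluations at $(x^\ast,r^\ast)$ were committed to, while all other evaluations were part of gentle queries and got uncomputed. I plan to formalize this by a joint-state invariant in the style of Claim~\ref{claim:SEQ-proof-hyb4-invariant}, inducting over queries on the combined register holding the adversary, $\calH$, and a purification of the sampling $f\gets\cF$.

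Granting the structural claim, the theorem follows cleanly from pairwise independence. Since $(x_1,r_1)\neq(x_2,r_2)$, there is an $i\in\{1,2\}$ with $(x_i,r_i)\neq(x^\ast,r^\ast)$, and if $D=\emptyset$ this holds for both. By pairwise independence the distribution of $f(x_i,r_i)$ conditioned on $f(x^\ast,r^\ast)=y^\ast$ equals its unconditional marginal, which the high-randomness hypothesis bounds pointwise by $\nu(\secp)$. Since the structural claim says the adversary's guess $y_i$ depends on $f$ only through $f(x^\ast,r^\ast)$ and its own coins, we conclude $\Pr[y_i=f(x_i,r_i)\mid D,\text{view}]\le\nu(\secp)$; since the adversary must be correct on both triples simultaneously, the overall success probability is at most $\nu(\secp)+\negl(\secp)$, with the additive $\negl$ absorbing lower-order quantum-bookkeeping losses (e.g.\ the $O(1/|\cR|)$ slack coming from Zhandry-style Lemma~\ref{lem:zhandry_lemma5} invocations used to pass between the real oracle and its compressed simulation).

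The main technical obstacle is the structural claim itself: it is intuitively natural from the compressed oracle picture but requires careful quantum bookkeeping, since the evaluations of $f$ are interleaved coherently with compressed $H$ queries, and ``the adversary committing to an effective query'' is a subspace condition rather than a classical one. A clean fallback, if the direct invariant proof turns out to be cumbersome, is a query-by-query hybrid argument that rewrites each coherent evaluation $f(x,H(x))$ on a non-committed branch as an evaluation of a fresh independent sample $f'\gets\cF$; pairwise independence makes each such swap exactly indistinguishable in the adversary's view, and after all swaps the adversary's view is independent of $f$ at every unrecorded point, at which stage the marginal bound $\nu(\secp)$ applies directly.
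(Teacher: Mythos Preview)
Your primary approach is essentially the paper's. The paper makes your ``structural claim'' precise by purifying the draw of $f$ on a register $\cF$ and defining the projector
\[
E_\cO \;=\; \ketbra{\emptyset}_{\cD_H}\otimes\ketbra{F_\emptyset}_\cF \;+\; \sum_{x,r,y}\ketbra{(x,r)}_{\cD_H}\otimes\ketbra{F_{x,r,y}}_\cF,
\]
where $\ket{F_\emptyset}$ is the uniform superposition over $\cF$ and $\ket{F_{x,r,y}}$ is uniform over $\{f:f(x,r)=y\}$. The main work is a norm calculation showing that each SEQ query moves the state out of the image of $E$ by at most $O(1/\sqrt{|\cR|})$; the only leaky step is the $\Decomp$ operation on $H$'s database, and bounding that leak reduces to a spectral bound on the Gram matrix of the vectors $\{\ket{F_{x,r,y}}\}_{r,y}$, which is where pairwise independence enters the invariant proof. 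Once the invariant holds, the final case analysis is exactly yours.

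Two corrections. First, the $O(1/|\cR|)$ slack does \emph{not} come from Lemma~\ref{lem:zhandry_lemma5}, which is about reading off classical input/output pairs from a compressed database and is not used here at all; it comes from the $\Decomp$-leakage above, and this is why hypothesis~(3) is needed. Second, your fallback hybrid is shakier than you suggest: ``non-committed branch'' is a time-varying subspace condition (the recorded point can change as the adversary uncomputes), and pairwise independence alone does not obviously justify replacing $f$ by an independent $f'$ on coherent evaluations that may touch many points across queries. Stick with the invariant route; the style of Claim~\ref{claim:SEQ-proof-hyb4-invariant} is the right analogy, but here the invariant is only approximate and the per-query loss must be quantified.
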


\paragraph{Computational Unforgeability.}
So far, we have considered functions which are unforgeable in a very strong, information-theoretic sense. We can also consider functions which satisfy a computational notion of unforgeability. It is important that this notion of unforgeability consider \emph{quantum} query access. We introduce the following generalization of quantum blind unforgeability for signatures~\cite{EC:AMRS20}.

\begin{definition}\label{def:gen-blind-unforge}
    Let $\calF = \{f:\calX \rightarrow \calY\}_f$ be a function family associated with a distribution $\Distr_{\calF}$ over function and auxiliary input pairs $(f, \aux_f)$ and let $P$ be a predicate on $f$, $\aux_f$, and a pair of strings $(x, s)$.

    $\calF$ and $\Distr_{\calF}$ are \emph{quantum blind unforgeable} with respect to $P$ if for every QTP adversary $\adv$ and blinding set $B\subset \calX\times \calR$,
    \[
        \Pr\left[
                x \in B
                \land 
                P(f, \aux_f, x, s) = \mathsf{Accept} 
            :
            \begin{array}{c}
                (f, \aux_f) \gets \Distr_{\calF}
                \\
                (x, s) \gets \adv^{f_B}(\aux_f)
            \end{array}
        \right]
    \]
    where $f_B$ denotes a (quantumly-accessible) oracle that takes as input $x$ then outputs $f(x)$ if $x\notin B$, and otherwise outputs $\bot$.
\end{definition}

We will show that if a randomized function $f$ is quantum blind unforgeable and uses the sampled randomness in a particular way, then it is hard to come up with two input/output pairs of $f$ when given SEQ access to it.

\begin{proposition}\label{claim:SEQ-unlearnable-blind-unforge}

    Let $\calF = \{f:\calX\times \calR \rightarrow \calY\}_{f}$ be a function family associated with distribution $\Distr$. If $|\calX|/|\calR|^2 = \negl(\secpar)$ and $(\calF, \Distr)$ is blind-unforgeable with respect to the predicate that outputs $\Accept$ on input $(f, (x, r), y)$ such that $f(x, r) = y$, then $(\calF, \Distr)$ is single-effective-query unlearnable under \Cref{def:weak-operational-security}.
\end{proposition}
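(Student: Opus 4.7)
The plan is to reduce breaking single-effective-query unlearnability to breaking the blind unforgeability of $(\calF, \Distr)$. Suppose toward contradiction that a QPT adversary $\adv$ with access to $f_{\$,1}$ outputs two distinct pairs $(x_1, r_1, y_1)$ and $(x_2, r_2, y_2)$ with $f(x_i, r_i) = y_i$ for both $i$, with non-negligible probability $\epsilon$; the goal is to use $\adv$ to build a blind forger $\adv'$ with non-negligible advantage.

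I would first exploit the structure of the SEQ oracle. By \Cref{claim:single_entry_database}, the compressed database register $\calH$ underlying $f_{\$,1}$ is always supported on databases with at most one entry. Deferring a measurement of $\calH$ to after $\adv$ halts yields some $D_H$ with $|D_H|\leq 1$, so at least one of the two (distinct) output pairs has $(x_i, r_i) \notin D_H$; picking $i\in\{1,2\}$ uniformly yields a ``forged'' pair $(x^\dagger, r^\dagger, y^\dagger)$ that is valid and unrecorded with probability at least $\epsilon/2 - \negl(\secpar)$.

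Next I would construct $\adv'$ as follows: it samples a blinding set $B$ by including each $(x, r)\in \calX\times \calR$ independently with probability $p$ (to be chosen), and then uses $f_B$ to simulate SEQ access for $\adv$ by maintaining its own compressed random oracle $H:\calX\rightarrow \calR$ together with the SEQ single-query tracking, computing $r = H(x)$ coherently on each query and calling $f_B$ at $(x, r)$, substituting a default value from $\calY$ whenever $f_B$ returns $\bot$. After $\adv$ outputs, $\adv'$ picks $i\in\{1,2\}$ uniformly and submits $((x_i, r_i), y_i)$ as its forgery. A standard quantum hybrid argument (in the spirit of one-way-to-hiding or BBBV) bounds the simulation distortion by $O(q\sqrt{p})$ where $q$ is the number of queries, while conditioned on successful simulation, $(x^\dagger, r^\dagger)\in B$ with probability $p$ independently of $\adv$'s view (since on unrecorded inputs $\adv$'s view is statistically independent of $B$), so setting $p = \Theta(\epsilon^2/q^2)$ yields a non-negligible advantage, contradicting blind unforgeability.

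The hard part will be the rigorous quantum simulation-error analysis: the $\adv$'s queries may be in superposition, so I must argue that no query concentrates amplitude on the blinded region beyond the random baseline $p$, which requires a careful hybrid over the $q$ queries. The parameter condition $|\calX|/|\calR|^2 = \negl(\secpar)$ enters here to ensure that the compressed oracle $H$ is effectively collision-free on $\calX$, so that the notion of an ``unrecorded'' pair is unambiguous and $\adv$ cannot exploit $H$-collisions to detect or evade the blinding. A secondary subtlety is ensuring that whenever the simulation succeeds, $\adv$'s view agrees with a real SEQ interaction with $f$ so that its output is a genuine $f$-evaluation; this follows because simulation success means no query touched $B$, and off $B$ the simulated and real oracles are identical.
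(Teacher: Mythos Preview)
Your plan has a genuine gap in the quantitative step. With a random blinding set of density $p$, the reduction's advantage is at best $\tfrac{\epsilon}{2}\cdot p - O(q\sqrt{p})$: the first term is the probability (in the \emph{real} SEQ experiment with an independently sampled $B$) that the chosen output is valid and lands in $B$, and the $O(q\sqrt{p})$ is the statistical distance between the real and $f_B$-simulated views given by any BBBV/O2H bound. For every $p\in(0,1]$ this expression is nonpositive once $\epsilon\le 2q$, which always holds; your choice $p=\Theta(\epsilon^2/q^2)$ gives $\epsilon^3/(2q^2)-O(\epsilon)<0$. The sentence ``conditioned on successful simulation, $(x^\dagger,r^\dagger)\in B$ with probability $p$'' presupposes a crisp simulation-success event, but quantum O2H only bounds a distinguishing advantage, not the probability of such an event; there is no clean conditioning available here. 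Relatedly, the ``unrecorded in $D_H$'' observation does no work for you: in the real experiment \emph{every} output is independent of $B$, recorded or not, so unrecorded-ness does not buy the independence you want in the simulated world.

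The paper's proof avoids this barrier by \emph{not} using a random blinding set at all. It first modifies the SEQ oracle through hybrids so that the randomness fed to $f$ is $r=G_e(x,G_s(x,H(x)))$ for an expanding random function $G_e:\calX\times\calR'\to\calR$ with $|\calR'|=\sqrt{|\calR|}$. Blind unforgeability is then invoked once, with the \emph{structured} blinding set $B$ equal to the complement of the image of $G_e$. The point is that the (modified) SEQ oracle queries $f$ only inside $\mathrm{im}(G_e)$, so the simulation with $f_B$ is \emph{perfect}: there is no $O(q\sqrt{p})$ loss. Hence any valid output $(x_i,r_i,y_i)$ must have $r_i\in\mathrm{im}(G_e)$. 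A separate ``preimage-knowledge'' lemma for expanding compressed oracles (\Cref{lemma:expanding-preimage-knowledge-body}) then forces matching entries in $G_e$'s database, and the chaining lemma (\Cref{lemma:compressed-chaining-carry-body}) pushes these down to two distinct entries in $H$'s database, contradicting $|D_H|\le 1$ from \Cref{claim:single_entry_database}. The hypothesis $|\calX|/|\calR|^2=\negl$ is consumed by the hybrid steps (random permutations $\to$ random functions $\to$ small-range distributions over $|\calX|$ indices) and by the collision bound needed to separate $r_1\neq r_2$; it is not about collision-freeness of $H$ as you suggested.
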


Intuitively, because the blind unforgeability property is also being applied to $r$, it also ensures that $f$ must be highly randomized. For example, if $f$ ignored its randomness $r$, then it would be trivially forgeable simply by outputting the same $f(x;r)$ with two different $r$ and $r'$.

To prove this claim, we use another technical lemma which shows that adversaries who can find images of an expanding random function $G_e$ must know corresponding preimages. 
At a high level, we can then use this to prove \Cref{claim:SEQ-unlearnable-blind-unforge} as follows. We will first embed a random expanding function $G_e$ into the randomness generation for $r$. Blind unforgeability will imply that if the adversary knows $(x_1,r_1, y_1)$ and $(x_2, r_2, y_2)$ such that $f(x_1, r_1) = y_1$ and $f(x_2, r_2) = y_2$, then $r_1$ and $r_2$ are in the image of the expanding function $G_e$. Then the technical lemma will imply that if $G_e$ were implemented as a compressed oracle, its database should contain corresponding entries. 
Finally, the chaining lemma will allow us to show that the SEQ oracle should also have two corresponding entries, which is a contradiction.
We prove the following technical lemma in \Cref{sec:expanding-preimage-knowledge}.

\begin{lemma}\label{lemma:expanding-preimage-knowledge-body}
    Let $G: \calX_1 \times \calX_2 \rightarrow \calY$ be a random function where $|\calX_2| < |\calY|$. Consider an oracle algorithm $A$ makes $q$ of queries to $G$, then outputs two vector of $k$ values $\vec{x^{(1)}} = \left(x_1^{(1)}\dots, x_k^{(1)}\right)$ and $\vec{y} = (y_1, \dots, y_k)$.
    Let $p$ be the probability that for every $i$, there exists an $x_i^{(2)}\in \calX$ such that $G\left(x_i^{(1)}, x_i^{(2)}\right) = y_i$.

    Now consider running the same experiment where $G$ is instead implemented as a compressed oracle, and measuring its database register after $A$ outputs to obtain $D$. Let $p'$ be the probability that for every $i$, there exists an $x_i^{(2)}\in \calX_2$ such that $D\left(x_i^{(1)}\concat x_i^{(2)} \right) = y_i$. If $k$ and $q$ are $\poly(\secpar)$ and $|\calX_2|^k/|\calY| = \negl(\secpar)$, then\footnote{We remark that the reliance on the number of queries is unlikely to be tight. A tighter bound might be achieved by performing a direct computation of the effects of querying $G$ on every $x\in \calX$ at the end of the experiment.}
    \[
        p \leq p' + \negl(\secpar)
    \]

\end{lemma}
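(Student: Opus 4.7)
The plan is to couple the real-random-oracle and compressed-oracle experiments via Zhandry's purification framework, then show that the only way the real event (underlying $p$) can occur without the compressed event (underlying $p'$) is if some $y_i$ has a preimage under an undetermined portion of $G$, which is a low-probability event.

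First I would invoke the standard equivalence from \cite{zhandry19compressed}: running $A$ against the compressed oracle, measuring its database register at the end to obtain $D$, and then sampling a function $\widetilde G$ uniformly at random among all functions consistent with $D$, yields a joint distribution whose marginal on $(\vec{x}^{(1)}, \vec{y}, \widetilde G)$ agrees exactly with the real-oracle experiment. Under this coupling, $p$ and $p'$ live in a single probability space. Since $\widetilde G$ agrees with $D$ on recorded inputs by construction, any database-recorded preimage is automatically a preimage under $\widetilde G$, so $p \geq p'$.

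Next I would upper-bound the gap $p - p'$ by a union bound. Define $B_i$ to be the event that $\widetilde G$ has a preimage of $y_i$ of the form $(x_i^{(1)}, \cdot)$ but $D$ records no entry of this form whose image is $y_i$. Then $p - p' \leq \sum_{i=1}^k \Pr[B_i]$. Condition on $D$ and on the adversary's classical output $(\vec{x}^{(1)}, \vec{y})$: for each of the at-most-$|\calX_2|$ pairs $(x_i^{(1)}, x^{(2)})$ not recorded in $D$, the value $\widetilde G(x_i^{(1)}, x^{(2)})$ is i.i.d.\ uniform on $\calY$, so $\Pr[B_i] \leq |\calX_2|/|\calY|$. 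Combining gives $p - p' \leq k \cdot |\calX_2|/|\calY|$, which is negligible since $k$ is polynomial and the hypothesis $|\calX_2|^k/|\calY| = \negl(\secpar)$ dominates $|\calX_2|/|\calY|$ for $|\calX_2| \geq 1$.

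The main technical step is the coupling in the first paragraph: one must carefully justify, using the purified view of the compressed oracle, that conditioning on the measured database $D$ together with the adversary's classical output leaves the undetermined entries of $\widetilde G$ uniform and independent. This is the place where the quantum nature of the compressed-oracle technique genuinely enters, since in the purified picture the database register is entangled with the adversary's workspace, and one needs the fact that the $\widetilde G$-extension is a well-defined classical post-processing. Once this foundation is in place, the remainder is a two-line union-bound that does not depend on the query count $q$, consistent with the remark after the statement that the $q$-dependence built into the stated bound is not expected to be tight.
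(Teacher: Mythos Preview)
The proposed coupling is not correct, and this is not a matter of needing a more careful justification --- the claim is false as stated. Measuring the compressed database in the computational basis to obtain $D$ and then resampling $\widetilde G$ uniformly among functions consistent with $D$ does \emph{not} reproduce the joint distribution of $(\text{adversary output}, G)$ from the real-oracle experiment. The compressed-database basis and the truth-table basis are related by the global decompression unitary $\prod_x \Decomp_x$, and measuring in one basis then the other is not the same as measuring in the second. A concrete counterexample: take $|\calX_1|=|\calX_2|=1$ and $k=1$, let the adversary query the unique input, measure the response register to learn $y_0=G(x)$, query again to XOR the answer back out, and output $y_1=y_0$. Here $p=1$, yet using the paper's formula for $\Decomp_x\ket{\{(x,y_0)\}}$ one sees that the post-experiment compressed database has amplitude $1/\sqrt{|\calY|}$ on $\ket{\emptyset}$ and amplitude $(1-1/|\calY|)$ on $\ket{\{(x,y_0)\}}$, so $p'=(1-1/|\calY|)^2$ and $p-p'=2/|\calY|-1/|\calY|^2$. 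This already exceeds your claimed bound $k|\calX_2|/|\calY|=1/|\calY|$. The underlying reason is that, conditioned on measuring $D=\emptyset$, the real $G(x)$ is \emph{not} uniform --- the adversary's classical output $y_0$ is perfectly correlated with it. Your intuition is the classical lazy-sampling picture, and it fails once the adversary can partially learn and then uncompute.

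The paper's proof takes a route that sidesteps this issue entirely. It builds an auxiliary adversary $B$ that runs $A$ and then \emph{guesses} a uniformly random $\vec{x^{(2)}}\in\calX_2^k$, so that $B$ outputs full input/output pairs rather than just claimed images. This puts $B$ squarely in the setting of Zhandry's extraction lemma (\Cref{lem:zhandry_lemma5}), yielding $\sqrt{p_B}\le\sqrt{p'_B}+\sqrt{k/|\calY|}$. The guessing step costs a multiplicative $1/|\calX_2|^k$ in $p_B$; to relate $p'_B$ back to $p'$ one invokes the compressed-oracle collision bound (\Cref{lem:compressed-collision}) to argue that, absent a collision in the database, any valid $\vec{x^{(2)}}$ is unique, so $B$'s guess also succeeds with probability exactly $1/|\calX_2|^k$ in that case. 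This is where the $q$-dependence enters (through the $O(q^3/|\calY|)$ collision probability), consistent with the footnote's remark that the $q$-dependence is likely not tight.
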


\begin{proof}[Proof of \Cref{claim:SEQ-unlearnable-blind-unforge}]
    We consider the following series of hybrid experiments.
    \begin{itemize}
        \item $\Hyb_0$ is the original SEQ unlearnability game. The adversary gets access to an SEQ oracle and attempts to output two valid input/output pairs. Recall that the SEQ oracle answers queries to $x$ by $f(x;H(x))$ for an internal random oracle $H$.
        
        \item $\Hyb_1$ modifies the SEQ oracle as follows. Sample random permutations $P_x:\calR\rightarrow \calR$. Instead of setting $r_x = H(x)$ and returning $f(x;r_x)$ on query $x$, set $r_x = P_x(H(x))$.
        
        \item $\Hyb_2$ modifies the SEQ oracle further by changing each $P_x$ to be a random function $G_x:\calX \times \calR \mapsto \calR$.
        
        \item $\Hyb_3$ modifies the SEQ oracle further by changing the random functions $G_x$ to be small range distributions. In more detail, let $\calR' \subset \calR$ be an arbitrary subspace with $|\calR'| = \sqrt{|\calR|}$. Let $G_{e,x}: \calR'\rightarrow \calR$ and $G_{s,x}: \calR\rightarrow \calR'$ be random functions, independently sampled for each $x\in \calX$. $G_{e,x}$ expands and $G_{s,x}$ shrinks. Then, replace each $G_x$ with $G_{e,x} \circ G_{s,x}$. 
        
        The aggregate result of this change can be expressed using two random functions $G_{e}:\calX\times \calR' \rightarrow \calR$ and $G_{s}:\calX\times \calR \rightarrow \calR'$. The SEQ oracle sets $r_x = G_e(x, G_s(x))$ when responding to a query on $x$.
        
        \item $\Hyb_4$ modifies the SEQ oracle further by implementing $G_e$ and $G_s$ as 
        compressed oracles.
    \end{itemize}
    To see that $\Hyb_0 = \Hyb_1$, fix any $x^*$ and consider adding just $P_{x^*}$. It is equivalent to answer queries to $x$ using $r_{x^*} = P_{x^*}\circ P_{x^*}^{-1}\circ H(x^*)$. Furthermore, the compressed database for $H$ may be modified to incorporate $P_x^{-1}$ using the isometry that maps any entry $(x^*,y)$ in the database to $(x, P_{x^*}^{-1}(y))$. The SEQ oracle only ever accesses the first element of any entry, which is not affected by the isometry. Since this is equivalent, it is also equivalent after doing the change for every possible $x^*$. $\Hyb_1$ and $\Hyb_2$ are indistinguishable to a poly-query adversary because random permutations cannot be distinguished from random functions in $q$ queries with advantage better than $O(q^3/|\calR|)$~\cite{QIC:Yuen14,QIC:Zha15}; doing this $|\calX|$ times results in advantage $O(q^3|\calX|/|\calR|) = \negl(\secpar)$. $\Hyb_2$ and $\Hyb_3$ are indistinguishable to a poly-query adversary because small-range distributions cannot be distinguished from a random function in $q$ queries with advantage better than $O(q^3/|\calR'|) $~\cite{zhandry2012construct}; doing this $|\calX|$ times results in advantage $O(q^3|\calX|/|\calR'|)=\negl(\secpar)$.

    We now show that the adversary's success probability in $\Hyb_4$ is negligible. First, in $\Hyb_3$ no adversary can output a tuple $((x, r), y)$ such that $f(x,r) = y$ and $r$ is not in the image of $G_e$, except with negligible probability.
    This follows from quantum blind unforgeability on the image of $G_e$ along with the observation that the SEQ oracle in $\Hyb_3$ only requires the ability to compute $f$ on in the image of $G_e$.


    Next, we claim that in $\Hyb_3$ if an adversary outputs two valid input/output tuples $((x_1, r_1), y_1)$ and $((x_2, r_2), y_2)$ where $x_1 \neq x_2$, then with overwhelming probability $r_1 \neq r_2$. Say the adversary did so with probability $p$. By the previous claim, whenever the adversary succeeds, there exist $r_1'$ and $r_2'$ such that $r_1 = G(x_1, r_1')$ and $r_2 = G(x_2,r_2')$, except with negligible probability. If $r_1=r_2$, then $G(x_1, r_1') = G(x_2, r_2')$. 
    Thus, we could find a collision in $G$ with probability $p/|\calR'|^2 - \negl(\secpar)$ by guessing $r_1'$ and $r_2'$. \Cref{lem:compressed-collision} shows that after $q$ queries, the probability of finding a collision is $O(q^3/|\calR|)$. 
    Since $|\calR'|^2/|\calR| = \negl(\secpar)$, we have $p\leq O(q^3|\calR^2|/|\calR|) = \negl(\secpar)$.

    \Cref{lemma:expanding-preimage-knowledge-body} shows that if we were to implement $G$ as a compressed oracle, then whenever the adversary finds two valid input/output tuples where $r_1 \neq r_2$, with overwhelming probability $G$'s compressed database contains entries $(x_1' \concat r_1', r_1)$ and $(x_2'\concat r_2', r_2)$. 
    By the chaining lemma (\Cref{lemma:compressed-chaining-carry-body}), whenever this occurs, $H$ also contains two entries, except with negligible probability. However, $H$ never contains more than one entry (\Cref{claim:single_entry_database}). Combining all of these facts together, any QPT adversary given SEQ access to $f_G\gets \Distr_G$ cannot output two distinct tuples $((x_1, r_1), y_1)$ and $((x_2, r_2), y_2)$ such that $f(x_1,r_1) = y_1$ and $f(x_2,r_2) = y_2$, except with negligible probability.
\end{proof}

In \Cref{sec:applications}, we use the techniques developed in this section to compile any signature scheme satisfying quantum blind unforgeability to enable signature tokens, almost without modifying the verification process.

\section{Construction in the Plain Model}
\label{sec:construction_plain_model}

In this section, we give a construction of a one-time sampling program in the plain model for constrained PRFs. Informally, a constrained PRF is a PRF with an additional algorithm $\mathsf{Constrain}(k, C)$ that takes as input the PRF key $k$ and a circuit $C$ and outputs a constrained key $k_C$, such that given $k_C$, it is easy to evaluate the value of the PRF on all inputs $x$ such that $C(x) = 1$. On the other hand, for all $x$ such that $C(x) = 0$, the output of the PRF should be pseudorandom. Constrained PRFs exist assuming LWE or subexponentially-secure iO and one-way functions. \anote{}

We prove (something stronger than)\footnote{In Section~\ref{sec:otp_prf_operational_def_generalized}, we give a pseudorandomness-based one-time security definition more suited to PRFs, and our plain model construction actually satisfies this stronger notion of security.} the following theorem:

\begin{theorem}

    Assuming the security of post-quantum indistinguishability obfuscation, and LWE (or alternatively, assuming sub-exponentially secure iO and \anote{injective?} OWFs), then there exists secure one-time sampling programs for constrained PRFs  $\prf:  \{0,1\}^{k} \times \{0,1\}^n \times \{0,1\}^{\ell} \to  \{0,1\}^m$  with respect to the weak operational security \Cref{def:weak-operational-security}.  Here we let $\lambda, k, m \in \N$ and $\ell \geq n\cdot \lambda$. $\{0,1\}^{k}$ is the key space for the PRF, $\{0,1\}^n$ the input space and $\{0,1\}^{\ell}$ the randomness space.
\end{theorem}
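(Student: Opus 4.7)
The plan is to instantiate the classical-oracle construction of \Cref{sec:seq-construction} in the plain model by replacing the random oracles with $\iO$ and an auxiliary PRF. Concretely, for $\prf: \{0,1\}^k \times \{0,1\}^n \times \{0,1\}^\ell \to \{0,1\}^m$, the one-time program $\OTP(\prf_k)$ would consist of $n$ independently sampled subspace states $\ket{A_1}\otimes\cdots\otimes\ket{A_n}$, each for a uniformly random $A_i\subseteq \mathbb{F}_2^\lambda$ of dimension $\lambda/2$, together with $\iO(P)$ where $P(x, v_1,\ldots,v_n)$ hardcodes parity-check matrices for each $A_i$, verifies $v_i\in A_i^{x_i}\setminus\{0\}$, and on success outputs $\prf_k\bigl(x,\, G_{k_G}(v_1\Vert\cdots\Vert v_n)\bigr)$ for an auxiliary length-expanding PRF $G_{k_G}:\{0,1\}^{n\lambda}\to\{0,1\}^\ell$ (using $\ell\geq n\lambda$). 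Correctness follows exactly as in \Cref{fig:otp-construction}: the honest evaluator applies $H^{\otimes\lambda}$ to $\ket{A_i}$ when $x_i=1$, measures to obtain $v_i\in A_i^{x_i}$, queries $\iO(P)$, and reports the resulting triple.

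Security proceeds via a hybrid argument sketched in \Cref{par:construction_plain_model_intro}. In $\Hyb_1$, I replace $\iO(P)$ with $\iO(P')$, where $P'$ uses a \emph{constrained} PRF key $k_C$ that permits evaluation precisely on inputs $(x,r)$ with $r=G_{k_G}(v)$ for some $v$ satisfying $v_i\in A_i^{x_i}\setminus\{0\}$ for all $i$. Since the membership checks in $P$ already filter away every other input before reaching $\prf_k$, the programs are functionally equivalent and the hop follows from post-quantum $\iO$ security. In $\Hyb_2$, I abort whenever the adversary's outputs $(x_1,r_1,y_1),(x_2,r_2,y_2)$ both lie in the constrained set, i.e.\ correspond to subspace-valid $v_i$'s; by the direct-product hardness of subspace states (\Cref{thm: direct product oracle}), lifted to the $\iO$ setting using the standard simulation of the parity-check program from the challenge membership oracles, this abort occurs only with negligible probability after a $1/n$ loss from guessing the index $i^*$ in which to embed the challenge subspace.

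It follows that in $\Hyb_2$ at least one output tuple $(x_i,r_i,y_i)$ has $(x_i,r_i)$ \emph{outside} the constrained set, so that $y_i=\prf_k(x_i,r_i)$ on an unconstrained input. I reduce this to constrained pseudorandomness of $\prf$: the reduction obtains $k_C$ from the cPRF challenger, reconstructs the $A_i$'s and $k_G$ itself, assembles $\OTP$, runs the adversary, and tests whether the returned $y_i$ equals the challenge value $y^*$ on $(x_i,r_i)$. A successful $\OTP$ adversary distinguishes $y^*=\prf_k(x_i,r_i)$ from uniform with advantage $\Pr[\mathrm{success}]-1/2^m$, which the cPRF hypothesis forces to be negligible. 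Combining the two branches bounds the overall winning probability in \Cref{def:weak-operational-security} by $\negl(\lambda)$.

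The main obstacle is the mismatch between the granularity of the operational definition, which quantifies over triples $(x_i,r_i,y_i)$ with $r_i\in\{0,1\}^\ell$, and the algebraic object the direct-product theorem controls, namely the subspace vectors $v_i$. To bridge it, I would structure $G_{k_G}$ as an \emph{extractable} expanding map, e.g.\ $G_{k_G}(v)=v\Vert F_{k_G}(v)$, so that a candidate $v$ can be read directly from any $r$ in the image and handed to the direct-product reduction; the constraint circuit embedded in $P'$ then needs only to verify this syntactic form together with subspace membership. A secondary subtlety is the lift of the information-theoretic \Cref{thm: direct product oracle} (which only assumes membership-oracle access) to an adversary that also sees $\iO$ of a subspace-checking program; this follows the now-standard template from the subspace-state/quantum-money literature, composing cleanly because the parity-check circuit is simulable from the challenge oracles up to polynomially many queries.
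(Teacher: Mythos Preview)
Your high-level hybrid structure matches the paper's: (i) swap to a constrained PRF key inside the $\iO$'d program via functional equivalence, (ii) rule out the case where both adversarial outputs correspond to valid subspace vectors via direct-product hardness, (iii) reduce the remaining case to constrained pseudorandomness. Where you diverge is in how steps (ii) and the bridge between $r$ and $v$ are instantiated, and one of these divergences is a genuine gap.

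The gap is your ``lift'' of \Cref{thm: direct product oracle} to the plain model. You write that ``the parity-check circuit is simulable from the challenge oracles,'' but this is not how the argument can go: the reduction must hand the adversary an $\iO$'d circuit, which requires concrete code, and you cannot place black-box oracle gates inside $\iO$. The ``standard template'' you allude to is not simulation-from-oracles but rather \emph{subspace-hiding obfuscation}: one argues that $\iO$ of the membership check is indistinguishable from $\iO$ of a random superspace, then finishes statistically. The paper packages this by working with \emph{coset} states $\ket{A_{s,s'}}$ together with $\shO(A+s),\shO(A^\perp+s')$ and invoking the \emph{computational} direct-product hardness of \Cref{thm: direct product comp} directly. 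That theorem is stated for adversaries holding obfuscated membership programs, so no lift is needed; your $\Hyb_2$ should cite it (or an analogous plain-subspace version proved via $\shO$) rather than the query-bounded \Cref{thm: direct product oracle}.

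A second, smaller problem is the tension in your choice of $G_{k_G}$. Correctness (\Cref{def:otp_correctness}) demands that the randomness $r$ produced by honest evaluation be \emph{statistically} close to uniform over $\{0,1\}^\ell$; your proposal $G_{k_G}(v)=v\Vert F_{k_G}(v)$ fails this since the first $n\lambda$ bits of $r$ lie in a coset of density $2^{-n\lambda/2}$. You chose that form precisely so the reduction can read $v$ off of the adversary's reported $r_i$, which is the right instinct. The paper threads this needle with an \emph{extracting invertible} PRF $F_2$: the extracting property gives statistical uniformity of $r=F_2(\sk_2,u)$ from the $n\lambda/2$ bits of min-entropy in $u$, while invertibility (via $F_2.\invert(\sk_2,\cdot)$, available because the reduction samples $\sk_2$ itself) lets both the constraint circuit $C_A$ and the direct-product reduction recover $u$ from each $r_i$ and check subspace membership.
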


As ingredients in the plain model construction, we will need a subspace-hiding obfuscation scheme (which can be constructed from iO and injective one-way functions) and an extracting invertible PRF (which can be constructed from either LWE or subexponentially-secure iO and one-way functions).\anote{}

\subsection{Preliminaries}
\subsubsection{Indistinguishability Obfuscation}
\begin{definition}[Indistinguishability Obfuscator (iO)~\cite{barak2001possibility,garg2016candidate,sahai2014use}]
A uniform PPT machine $\iO$ is an indistinguishability obfuscator for a circuit class $\{\Cs_\lambda\}_{\lambda \in \mathbb N}$ if the following conditions are satisfied:
\begin{itemize}
    \item For all $\lambda$, all $C \in \Cs_\lambda$, all inputs $x$, we have 
    \begin{align*}
        \Pr\left[\widehat{C}(x) = C(x) \,|\, \widehat{C} \gets \iO(1^\lambda, C) \right] = 1
    \end{align*}
    
    \item (Post-quantum security): For all (not necessarily uniform) QPT adversaries $(\samp, D)$, the following holds: if $\Pr[\forall x, C_0(x) = C_1(x) \,:\, (C_0, C_1, \sigma) \gets \samp(1^\lambda)] > 1 - \alpha(\lambda)$ for some negligible function $\alpha$, then there exists a negligible function $\beta$ such that:
    \begin{align*}
        & \Bigg|\Pr\left[D(\sigma, \iO(1^\lambda, C_0)) =1\,:\, (C_0, C_1, \sigma) \gets \samp(1^\lambda)\right] \\ 
        - & \Pr\left[D(\sigma, \iO(1^\lambda, C_1)) = 1\,:\, (C_0, C_1, \sigma) \gets \samp(1^\lambda)\right] \Bigg| \leq \beta(\lambda)
    \end{align*}
\end{itemize}
\end{definition}

Whenever we assume the existence of $\iO$ in the rest of the paper, we refer to $\iO$ for the class of polynomial-size circuits, i.e. when $\mathcal{C}_{\lambda}$ is the collection of all circuits of size at most $\lambda$.



\subsubsection{Subspace Hiding Obfuscation}
\label{sec: shO}

Subspace-hiding obfuscation was introduced by Zhandry~\cite{zhandry2017quantum} as a key component in constructing public-key quantum money. This notion requires that the obfuscation of a circuit that computes membership in a subspace $A$ is indistinguishable from the obfuscation of a circuit that computes membership in a uniformly random superspace of $A$ (of dimension sufficiently far from the full dimension). The formal definition is as follows.
\begin{definition}[\cite{zhandry2017quantum}]
\label{def:sho_security}
A subspace hiding obfuscator (shO) for a field $\F$ and dimensions $d_0, d_1$ is a PPT algorithm $\shO$ such that:
\begin{itemize}
    \item \textbf{Input.} $\shO$ takes as input the description of a linear subspace $S \subseteq \F^n$ of dimension $d \in \{d_0, d_1\}$.
    
    For concreteness, we will assume $S$ is given as a matrix whose rows form a basis for $S$.
    \item \textbf{Output.} $\shO$ outputs a circuit $\hat{S}$ that computes membership in $S$. Precisely, let $S(x)$ be the function that decides membership in $S$. Then there exists a negligible function $\negl$,
    \begin{align*}
        \Pr[\hat{S}(x) = S(x)~~\forall x : \hat{S} \leftarrow \shO(S)] \geq 1 - \negl(n)
    \end{align*}

    \item \textbf{Security.} For security, consider the following game between an adversary and a challenger.
    \begin{itemize}
        \item The adversary submits to the challenger a subspace $S_0$ of dimension $d_0$.
        \item The challenger samples a uniformly random subspace $S_1 \subseteq \F^n$ of dimension $d_1$ such that $S_0 \subseteq S_1$.
        
        It then runs $\hat{S} \leftarrow \shO(S_b)$, and gives $\hat{S}$ to the adversary.
        \item The adversary makes a guess $b'$ for $b$.
    \end{itemize}
    $\shO$ is secure if all QPT adversaries have negligible advantage in this game.
\end{itemize}
\end{definition}

Zhandry \cite{zhandry2017quantum} gives a construction of a subspace hiding obfuscator based on one-way functions and $\iO$.
\begin{theorem}[Theorem 6.3 in \cite{zhandry2017quantum}]
If injective one-way functions exist, then any indistinguishability obfuscator, appropriately padded, is also a subspace hiding obfuscator for field $\F$ and dimensions $d_0, d_1$, as long as $|\F|^{n-d_1}$ is exponential.
\end{theorem}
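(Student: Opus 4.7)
The plan is to reduce to the single-dimension increment case and then build a hybrid argument whose consecutive circuit pairs are glued together either by iO (for pairs with identical functionality) or by invoking one-wayness of the injective OWF $f$ (for the one pair where the functionality genuinely changes). Concretely, I would first observe that, by transitivity, it suffices to prove the theorem for the special case $d_1 = d_0 + 1$; repeating this basic step $d_1 - d_0$ times (with fresh internal randomness each round, folded into the description of the growing superspace) yields the general statement, because each intermediate distribution over super\-spaces matches what a truly uniformly random dimension-$d_1$ extension of $S_0$ induces after marginalizing.

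For the single-dimension step, I would replace the na\"ive membership circuit with a functionally equivalent OWF-hardcoded variant. Fix a basis $u_1,\dots,u_{n-d_0}$ of $S_0^\perp$ and pad the circuit $C_{S_0}$ so that, on input $x$, it computes the vector $y = (\langle u_1,x\rangle,\dots,\langle u_{n-d_0},x\rangle) \in \F^{n-d_0}$ and then compares $f(y)$ against a hardwired value $z = f(0)$. By the security of iO, this padded circuit's obfuscation is indistinguishable from the obfuscation of the plain membership circuit. In parallel, for $S_1 = S_0 \oplus \F v$ where $v$ is sampled at random from $\F^n \setminus S_0$, I would use a circuit that only checks the first $n - d_1 = n - d_0 - 1$ coordinates of $y$ (the ones corresponding to $S_1^\perp$), again via an OWF comparison against $f(0)$ in $\F^{n-d_1}$. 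The appropriate padding ensures that both circuits have identical size, so that iO is applicable at every hybrid step.

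The bridge between the ``$n-d_0$ checks'' circuit and the ``$n-d_1$ checks'' circuit is the heart of the proof. Here one introduces an intermediate circuit whose hardcoded target value is $f(r^\star)$ for a uniformly random $r^\star \in \F^{n-d_0}$ whose first $n-d_1$ coordinates are $0$ and whose last $d_1 - d_0 = 1$ coordinate is independent uniform. The claim is that switching the hardcoded target from $f(0)$ to $f(r^\star)$ is computationally undetectable: any distinguisher can be converted into an OWF inverter, using the fact that the adversary's only handle on $r^\star$ is through the obfuscated target $f(r^\star)$, and that by iO the rest of the circuit can be punctured to reveal no extra information about the last coordinate. The hypothesis that $|\F|^{n-d_1}$ is exponential is crucial precisely here: it guarantees that the min-entropy of the ``hidden coordinate(s)'' is superlogarithmic, so that the preimage of $f(r^\star)$ is hard to locate. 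Once the hardcoded target is distributed as $f(r^\star)$, another iO step lets me drop the final check entirely, because the circuit now accepts exactly $S_0 \oplus \F v'$ for the random shift $v'$ induced by $r^\star$, which has the same distribution as $S_1$.

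The main obstacle, as often with iO-based proofs, is the functional-inequivalence crossing: iO alone cannot move between circuits that accept different sets, so the OWF-injective step must be set up so that both ``before'' and ``after'' circuits are simultaneously iO-equivalent to a common intermediate that differs only in a hardwired value whose distribution is computationally indistinguishable by one-wayness. Getting the padding right, making sure the punctured circuit has no residual dependence on the missing coordinate, and verifying that the induced distribution over superspaces $S_1$ is indeed uniform conditioned on $S_0 \subseteq S_1$ are the details I would have to execute carefully, but each is standard puncturing-style bookkeeping. Iterating the one-dimension step $d_1 - d_0$ times, with a union bound over the hybrids, finishes the proof.
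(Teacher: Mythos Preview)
The paper does not prove this statement; it is quoted verbatim as Theorem~6.3 of \cite{zhandry2017quantum} and used as a black-box preliminary, so there is no in-paper proof to compare against. Your proposal, however, has genuine gaps.

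The crossing step you describe---replacing the hardcoded target $f(0)$ by $f(r^\star)$---is trivially distinguishable. The adversary \emph{chose} $S_0$, so it can evaluate the obfuscated circuit on any $x\in S_0$, for which $y=0$: the circuit with target $f(0)$ accepts, the circuit with target $f(r^\star)$ rejects whenever $r^\star\neq 0$. No one-wayness reduction can rescue a hybrid in which the adversary can simply run the program on a known accepting input and observe the answer flip. Relatedly, your entropy bookkeeping is off: in the single-dimension step, $r^\star$ carries only $\log|\F|$ bits of entropy (one $\F$-coordinate), which for $\F=\F_2$ is a single bit and is unrelated to $|\F|^{n-d_1}$.

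Even ignoring that, the circuit ``$f(y)=f(r^\star)$'' accepts exactly $\{x: y(x)=r^\star\}$, which by injectivity of $f$ is a single \emph{coset} $S_0+v^\star$, not the subspace $S_0\oplus\F v'$ you claim (the latter is a union of $|\F|$ cosets). So your ``final iO step'' to reach $S_1$-membership compares functionally inequivalent circuits and is illegitimate.

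The structural issue is that every iO step must preserve functionality exactly; the injective OWF has to be deployed so that the two circuits on either side of the one-wayness hybrid are \emph{functionally identical} (with the OWF image appearing as an unused or redundantly-used hardwired constant), and one-wayness then hides which constant is present. In your sequence, by contrast, the accepted set changes at the OWF step itself---precisely what iO cannot absorb. A workable hybrid must keep all of $S_0$ accepting throughout, and use the OWF-protected randomness to control whether the \emph{additional} vectors in $S_1\setminus S_0$ (which are hard to exhibit when $|\F|^{n-d_1}$ is exponential) are also accepted.
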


\subsubsection{Subspace Coset States and Computational Direct-Product Hardness}
\label{sec:prelim_comp_direct_product_hardness}

In this section, we provide the computational version for the direct product hardness property given in \Cref{sec:prelim_IT_direct_product}.

In order to use the direct product hardness property assuming only the security of $\iO$ and injective OWFs, we use a variant of the subspace states \Cref{sec:subspace_state_prelims} called "subspace coset states" as follows, often referred to as "coset states" for short.

\begin{definition}[Subspace Coset States, \cite{coladangelo2021hidden,vidick2021classical}]
For any subspace $A \subseteq \mathbb{F}_2^n$ and vectors $s, s' \in \mathbb{F}_2^n$, the coset state $\ket {A_{s,s'}}$ is defined as:
\begin{align*}
    \ket {A_{s,s'}} = \frac{1}{\sqrt{|A|}} \sum_{a \in A} (-1)^{\langle s', a\rangle} \ket {a + s}\,.
\end{align*}
\end{definition}

Note that by applying $H^{\otimes n}$, which is {\sf QFT} for $\F_2^n$, to the state $\ket {A_{s,s'}}$, one obtains exactly $\ket {A^{\perp}_{s', s}}$. 

Additionally, note that given $\ket A$ and $s, s'$, one can efficiently construct $\ket {A_{s, s'}}$ as follows: 
\begin{align*}
     & \sum_a \ket a \,\xrightarrow[]{\text{add } s}\, \sum_a \ket {a + s} \,\xrightarrow[]{H^{\otimes n}}\, \sum_{a' \in A^\perp} (-1)^{\langle a', s\rangle} \ket {a'} \\
      \xrightarrow[]{\text{adding } s'}\, & \sum_{a' \in A^\perp} (-1)^{\langle a', s\rangle} \ket {a' + s'} \,\xrightarrow[]{H^{\otimes n}}\, \sum_{a \in A} (-1)^{\langle a, s'\rangle} \ket {a + s}
\end{align*}

For a subspace $A$ and vectors $s,s'$, we define $A+s = \{v +s : v \in A\}$, and $A^{\perp}+s' = \{v +s': v \in A^{\perp}\}$.

We then present the computational version of the direct product hardness property of \Cref{thm: direct product oracle}, by combining two theorems from \cite{coladangelo2021hidden} and  \cite{chevalier2023semi}.

We have defined $\shO$ above. For our construction, we will need the following variant of $\shO$.
\begin{definition}[Coset Subspace Obfuscation Programs]
\label{def:coset_sho}
We denote $\shO(A+s)$ for the following program: 
$\iO(\shO_A(\cdot-s))$, where $\shO_A()$ denotes the subspace-hiding program $\shO(A)$, and $\shO$ is the subspace hiding obfuscator defined in Section \ref{sec: shO}. Therefore, $\shO_A(\cdot-s)$ is the program that on input $x$, runs program $\shO(A)$ on input $x-s$. $\iO(\shO_A(\cdot-s))$ is an indistinguishability obfuscation of $\shO_A(\cdot-s)$.

\end{definition}

\begin{theorem}[Computational Direct Product Hardness, \cite{coladangelo2021hidden,chevalier2023semi}]
\label{thm: direct product comp}
{Assume the existence of post-quantum $\iO$ and injective one-way function.} 
Let $A \subseteq \mathbb{F}_2^n$ be a uniformly random subspace of dimension $n/2$, and $s, s'$ be uniformly random in $\mathbb{F}_2^n$. 
Given one copy of $\ket{A_{s,s'}}$, $\mathsf{shO}(A+s)$ and $\mathsf{shO}(A^{\perp}+s')$, any polynomial time adversary outputs a pair $(v,w)$ with only negligible probability such that either of the following is satisfied: (1) $v \in A+s$ and $w \in A^{\perp}+s'$; (2) $v, w \in A+s$ or $v, w \in A^{\perp}+s'$  and $v \neq w$.
\end{theorem}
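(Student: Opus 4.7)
The plan is to combine Theorem~\ref{thm: direct product oracle} with a hybrid argument that leverages the subspace-hiding property of $\shO$. I would proceed by contradiction: assume some QPT adversary $\cA$ wins the computational direct product game with non-negligible probability $\epsilon$, and construct either a reduction breaking $\shO$ security or an IT adversary contradicting Theorem~\ref{thm: direct product oracle}. The sequence of hybrids I would use is $H_0$ (the original game) $\to$ $H_1$ (replace $\shO(A+s)$ with $\shO(T+s)$ for a uniformly random superspace $T \supseteq A$ of dimension $d_1 = n/2 + O(\log n)$) $\to$ $H_2$ (additionally replace $\shO(A^{\perp}+s')$ with $\shO(T^{\ast} + s')$ for $T^{\ast} \supseteq A^{\perp}$ of the same dimension).

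Indistinguishability of $H_0$ and $H_1$ should follow directly from $\shO$ security, recalling $\shO(A+s) = \iO(\shO(A)(\cdot - s))$ (Definition~\ref{def:coset_sho}). The distinguisher samples $A, s, s'$ itself, submits the linear subspace $A$ to the $\shO$ challenger, receives $\shO(S_b)$ for $S_b \in \{A, T\}$, wraps it with the shift by $s$ and applies an outer $\iO$ to form the coset obfuscation, and then runs $\cA$. Since the distinguisher knows $A$ (and hence $A^{\perp}$) explicitly, it can check whether $\cA$'s output satisfies the winning predicate, so any non-negligible gap in the winning probability across the two worlds would break $\shO$ security. An identical argument applied to $A^{\perp}$ handles the step from $H_1$ to $H_2$.

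In $H_2$, $\cA$ holds only $\ket{A_{s,s'}}$ together with obfuscations of the enlarged cosets. I would bound $\cA$'s winning probability here by reducing to the coset-state version of Theorem~\ref{thm: direct product oracle}. The extension from subspace states to coset states is straightforward: the Pauli unitary $Z^{s'}X^{s}$ maps $\ket{A}$ to $\ket{A_{s,s'}}$ up to a global phase and conjugates the membership oracles $O_A, O_{A^{\perp}}$ to $O_{A+s}, O_{A^{\perp}+s'}$, so the query lower bound of Theorem~\ref{thm: direct product oracle} transfers directly. Because $T$ can be represented as $A + \sfspan(u_1, \ldots, u_k)$ for $k = O(\log n)$ freshly sampled vectors $u_i$, membership in $T+s$ is decidable by enumerating $2^k$ cosets and invoking $O_{A+s}$, so a poly-query coset-state adversary can simulate the obfuscations seen by $\cA$ in $H_2$ with only polynomial overhead. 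Composing this simulation with $\cA$ yields a poly-query adversary whose success probability inherits $\cA$'s, which Theorem~\ref{thm: direct product oracle} then forces to be negligible.

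The main obstacle in this plan is the quantitative balance in the final reduction: $d_1$ must be small enough ($d_1 = n/2 + O(\log n)$) for the IT simulator to enumerate the $2^{d_1 - n/2}$ cosets efficiently, yet large enough that $|\F|^{n - d_1}$ remains exponential as required by Definition~\ref{def:sho_security}. A secondary subtlety is that the $\shO$ security game is phrased for linear subspaces while the statement concerns affine ones; this is bridged by the explicit form $\shO(A+s) = \iO(\shO(A)(\cdot - s))$, which isolates the dependence on $A$ inside the inner obfuscation and leaves the affine shift to the outer $\iO$ wrapper, so that a single application of $\shO$ security on the inner linear subspace followed by iO security on the outer program suffices to carry out each hybrid transition.
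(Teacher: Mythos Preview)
The paper does not prove this theorem; it states it as a known result imported from \cite{coladangelo2021hidden,chevalier2023semi}. Your hybrid structure ($H_0 \to H_1 \to H_2$ via two applications of $\shO$ security on the inner linear subspace, with the affine shift handled by the outer $\iO$ wrapper) is exactly the standard route those papers take, and your treatment of the linear/affine issue via Definition~\ref{def:coset_sho} is correct.

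There is, however, a real gap in your final reduction. In $H_2$ the computational adversary $\cA$ receives the actual programs $\shO(T+s)$ and $\shO(T^\ast+s')$, not oracle access to them. Your IT reduction, which has only $\ket{A}$ and the membership oracles $O_A,O_{A^\perp}$, does not know $A$ explicitly and therefore cannot form an explicit description of $T = A + \sfspan(u_1,\dots,u_k)$; without that description it cannot run $\shO$ to produce the obfuscated circuit. Your enumeration observation shows that the IT adversary can \emph{answer membership queries} to $T+s$ with $2^k$ calls to $O_{A+s}$, but $\cA$ is under no obligation to use the obfuscation as a black box, so oracle-level simulation does not suffice.

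The standard fix, and what the cited proofs actually do, is to reverse the sampling order and take a much larger gap. One sets $d_1 = 3n/4$ (not $n/2 + O(\log n)$), samples the superspaces $T \supseteq A$ and $T^\ast \supseteq A^\perp$ first, and then observes that conditioned on $(T,T^\ast)$ the subspace $A$ is uniform subject to $(T^\ast)^\perp \subseteq A \subseteq T$. The reduction now knows $T,T^\ast$ explicitly, so it can honestly compute $\shO(T+s)$ and $\shO(T^\ast+s')$; it obtains $\ket{A}$ from an IT challenger for the $\Theta(n)$-dimensional quotient problem inside $T/(T^\ast)^\perp$, to which Theorem~\ref{thm: direct product oracle} applies with exponential hardness. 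Your choice $d_1 = n/2 + O(\log n)$ would also fail here for a second reason: even after reversing the sampling, the quotient problem would live in $\F_2^{O(\log n)}$, where the information-theoretic bound of Theorem~\ref{thm: direct product oracle} is only polynomial and gives no contradiction.
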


\subsubsection{Puncturable, Constrained and Invertible PRFs}
\label{sec:constrained_prf}

\paragraph{Puncturable PRFs.}
A \emph{puncturable} PRF is a PRF augmented with a procedure that allows to ``puncture'' a PRF key $K$ at a set of points $S$, in such a way that the adversary with the punctured key can evaluate the PRF at all points except the points in $S$. Moreover, even given the punctured key, an adversary cannot distinguish between a uniformly random value and the evaluation of the PRF at a point in $S$ (with respect to the original unpunctured key). Formally:

\begin{definition}[(Post-quantum) Puncturable PRF]
\label{def:puncturable_prf}
A PRF family $F: \{0,1\}^{k(\lambda)} \times \{0,1\}^{n(\lambda)} \rightarrow \{0,1\}^{m(\lambda)}$ with key generation procedure $\kgen_F$ is said to be puncturable if there exists an algorithm $\puncture_F$, satisfying the following conditions:

\begin{itemize}
    \item \textbf{Functionality preserved under puncturing:} 
Let $S \subseteq \{0,1\}^{n(\lambda)}$. For all $x \in \{0, 1\}^{n(\lambda)}$ where $x \notin S$, we have that:
\begin{align*}
    \Pr[F(K,x) = F(K_S, x): K \gets \kgen(1^\lambda), K_S \gets \puncture_F(K, S)] = 1
\end{align*}

\item \textbf{Pseudorandom at punctured points:}
For every $QPT$ adversary $(A_1, A_2)$, there exists a negligible function $\negl$ such that the following holds. Consider an experiment where $K \gets \kgen_F(1^\lambda)$, $(S,\sigma) \gets A_1(1^{\lambda})$, and $K_S \gets \puncture_F(K, S)$. Then, for all $x \in S$,
\begin{align*}\left|\Pr[A_2(\sigma, K_S, S, F(K, x)) = 1] - \Pr_{r \gets \{0,1\}^{m(\lambda)}}[A_2(\sigma, K_S, S, r) = 1] \right|  \leq \negl(\lambda)
\end{align*}

\end{itemize}
\end{definition}

\paragraph{Constrained PRFs.}
A PRF $F : \mathcal{K} \times \mathcal{X} \to \mathcal{Y}$ 
has an additional key space $\mathcal{K}_\constrain$ and two additional algorithms $F.\constrain$ and $F.\constraineval$ as follows.
A constrained key $K_C$ with respect to a circuit $C$  enables the evaluation of $F(K, x)$ for all $x$ such that $C(x) = 1$ and no other $x$.

\begin{description}

\item $\keygen(1^\lambda, 1^n
) \to K$. On input the security parameter $\lambda$,
outputs the master secret key $K$.

\item $\eval(K, x) \to y \in \mathcal{Y}:$ On master secret key $K$ and value $x \in \mathcal{X}$,
outputs the evaluation $y \in \mathcal{Y}$.

\item $\constrain(K, C) \to \sk_C$:  takes as input a PRF key
$K \in \mathcal{K}$ and the description of a circuit $C$ (so that domain of $C$ $\subseteq \mathcal{X}$ ); outputs a
constrained key $K_C $. 

\item $\constraineval(K_C, x) \to y / \bot$:  On input a secret key $K_C$, and an input $x \in \{0, 1\}^n$, the constrained evaluation algorithm $\prf.\constraineval$ outputs an element $y \in \{0, 1\}^m$.
\end{description}

\begin{definition}[Constrained PRF Correctness]
A constrained PRF is correct for a circuit class $\mathcal{C}$ if $\msk \gets \prf.\keygen(1^\lambda)$, for
every circuit $C \in \mathcal{C}$ and input $x \in \{0, 1\}^n$
such that $C(x) = 1$, it is the case that:
\begin{align*}  F.\constraineval(F.\constrain(\msk,C), x) = F.\eval(\msk, x).
\end{align*}
    
\end{definition}

\begin{definition}[Adaptive single-key constrained pseudorandomness]
\label{def:constrain_prf_security}[Constrained PRF Security]
  We say that a $\cprf$ satisfies adaptive single-key constrained pseudorandomness security if for any stateful admissible QPT adversary $\adv$ there exists a negligible function $\negl(\cdot)$, such that for all $\lambda \in \N$, the following holds:
  \[
    \Pr\left[\adv^{\Eval(\msk, \cdot), \constrain(\msk, \cdot)}(r_b) = b:
      \begin{array}{cl}
        \msk \gets \keygen(1^{\lambda}),\ b \gets \bit \\
        x \gets \adv^{\Eval(\msk, \cdot), \constrain(\msk, \cdot)}(1^\lambda) \\
        r_0 \gets \{0,1\}^m,\ r_1 \gets \Eval(\msk, x)
      \end{array}
    \right] \leq \frac{1}{2} + \negl(\lambda).
  \]
  Here the adversary $\adv$ is said to be admissible as long as it satisfies the following conditions --- (1) it makes at most one query to the constrain oracle $\constrain(\msk, \cdot)$, and its queried circuit $C$ must be such that $C(x) = 0$, (2) it must send $x$ as one of its evaluation queries to $\Eval(\msk, \cdot)$.
\end{definition}

We only need single-key security of the above definition for our use.

\begin{remark}[Double-challenge security]
\label{remark:double_challenge_cprf}
   We will make a simple remark here on the following variant of the above security game: the adversary  submits two arbitrarily chosen $x_1, x_2$; the challenger chooses $r_{1,0}, r_{1,1},$ and $r_{2,0}, r_{2,1}$ independently as in the above security game. $\calA$ receives both $r_{1,b_1}$ and $r_{2,b_2}$ and has to guess both $b_1, b_2$ correctly. The winning probability of any $\adv$ in this "double-challenge" version of the game is upper bounded by the probability of it winning the single challenge game. We will make use of this fact later.
\end{remark}

The type of constrained PRF we use in this work can be built from standard lattice assumptions (\cite{brakerski2015constrained}) or alternatively from subexponentially-secure iO and OWFs \cite{boneh2017constraining}. 

\paragraph{Invertible PRFs}
\label{sec:invertible_prf_def}
An invertible pseudorandom function (IPF) is an injective PRF whose inverse function can be computed efficiently (given the secret key).

Therefore, it has the following additional algorithm apart from the PRF $\keygen$ and $\eval$:

\begin{description}
    \item  $\invert(\sk, y) \to  x$: on key $\sk$ and value $y \in \mathcal{Y}$, output a value $x \in \mathcal{X} \cup \{\bot\}$.
\end{description}


\begin{definition}[Correctness for Injective IPR]
\label{def:ipf_correctness}
A invertible PRF is correct if for all $\msk \gets \prf.\keygen(1^\lambda)$, it is the case that:
$$ F.\invert(\sk, F.\eval(\sk,x)) = x.
$$ 
and $F.\invert(\sk, y) = \bot$ where $y$ is not an image of any $x \in \calX$.
\end{definition}

In this paper, we only need the "regular" pseudorandomness property of the IPF and therefore we do not provide additional security definitions as in \cite{boneh2017constrained}.
We also do not need the IPF in our construction to be puncturable/constrainable.

In addition, we would like the IPF we use to act as extractors on their inputs:
\begin{definition}[Extracting PRF]
An extracting PRF with error $\epsilon(\cdot)$ for min-entropy $k(\cdot)$ is a 
PRF $F$ mapping $n(\lambda)$ bits to $\ell(\lambda)$ bits such that for all $\lambda$, if $X$ is any distribution over $n(\lambda)$
bits with min-entropy greater than $k(\lambda)$, then the statistical distance between $(\sk, F(\sk, X))$
and $(\sk, r \gets \{0,1\}^{\ell(\lambda)})$ is at most $\epsilon(\cdot)$, where $\sk \gets \kgen(1^\lambda)$. 
\end{definition}
In this work, we only need a weaker version of extracting property where the distribution does not give out $\sk$, but only a program that computes $F(\sk, \cdot)$. 
The constrained PRFs and extracting, invertible PRFs used in this work can all be obtained from LWE or from subexponentially-secure iO plus OWFs. \cite{sahai2014use,boneh2017constrained,brakerski2015constrained}. We will discuss in the appendix on a simple construction of an invertible and "extracting" PRF of our need based on the above works. \jiahui{Add later}

\subsubsection{One-Time Sampling Program for PRF: Indistinguishability Operational Definition}
\label{sec:otp_prf_operational_def_generalized}

In this section, we present the strengthened version of the operational one-time sampling program security for PRFs, which is a specific case of \Cref{def:strong-operational-security}.

\begin{definition}
[Indistinguishability Operational Security Definition for PRF One-Time Sampling Programs]
\label{def:generalized_operational_prf}
We define security through the following game:
\begin{enumerate}
    \item The challenger samples $\sk \gets \prf.\keygen(1^\lambda)$ and prepares the program $\OTP(\prf.\eval(\sk, \cdot))$. $\adv$ gets a copy of the one-time program for $\OTP(\prf.\eval(\sk, \cdot))$. 

    \item $\adv$ outputs two (input, randomness) pairs $(x_1, r_1), (x_2, r_2)$ , which satisfies $x_1 \neq x_2$ or $r_1 \neq r_2$, otherwise $\calA$ loses.

    \item Challenger samples two independent, uniform random bits $b_1 \gets \{0,1\}, b_2 \gets \{0,1\}$. 
    
    If $b_1 = 0$, then let $y_1 = \prf.\eval(\sk, x_1, r_1)$; else let $y_1 \gets \{0,1\}^m$.

     If $b_2 = 0$, then let $y_2 = \prf.\eval(\sk, x_2, r_2)$; else let $y_2 \gets \{0,1\}^m$.

    Challenger sends $(y_1, y_2)$ to $\adv$.
    
    \item $\adv$ outputs guesses $(b_1', b_2')$ for $(b_1, b_2)$ respectively. $\adv$ wins if and only if $b_1' = b_1$ and $b_2' = b_2$.
    \end{enumerate}
We say that a one-time sampling program for PRF satisfies indistinguishability operational security if for any QPT adversary $\adv$, there exists a negligible function $\negl(\cdot)$, such that for all $\lambda \in \N$, the following holds:
\begin{align*}
    \Pr\left[\adv \text{ wins the above game }\right] \leq \frac{1}{2} + \negl(\lambda).
\end{align*}
\end{definition}

\subsection{Construction and Security}
\label{sec:construction_io}

We first give the following building blocks for our construction. 

\begin{enumerate}
    \item A constrained PRF $   F_1: \{0,1\}^{k_1} \times \{0,1\}^{n+\ell}   \to \{0,1\}^m$, where $\{0,1\}^{k_1}$ is the key space and $\{0,1\}^{n+\ell}$ is the input space.

    Let $\sk_1 \gets F_1.\keygen(1^\lambda)$.
    
    \item An extracting invertible PRF $ F_2: \{0,1\}^{k_2} \times \{0,1\}^{n\cdot\lambda}  \to \{0,1\}^{\ell}$, where $\{0,1\}^{k_2}$ is the key space and $\{0,1\}^{\lambda}$ is the input space. $\ell \geq n\cdot \lambda$ and $\lambda$ is the security parameter. 
    
    The PRF is extracting with negligible error for inputs for min-entropy $n\cdot\lambda/2$. 

    Let $\sk_2 \gets F_2.\keygen(1^\lambda)$. 

    \jiahui{change to coset states}
    \item Sample $n$ random subspaces $A_1, \cdots, A_n$ independently from $\F^\lambda_2$, where each $\dim(A_i ) = \lambda/2$.

    Sample $2n$ random strings, $s_1, s_1' , \cdots s_n, s_n'$ each uniformly random from $\{0,1\}^\lambda$.

    Prepare the coset subspace-hiding obfuscation programs $\{ (\shO(A_1+s_1), \shO(A_1^\perp+s_1'), \cdots, (\shO(A_n+s_n), \shO(A_n^\perp+s_n'))\}$ as defined in \Cref{def:coset_sho}.

    For convenience, we will use the notation $\shO_{i}^0$ for $\shO(A_i+s_i)$, and $\shO_i^1$ for $\shO(A_i+s_i')^\perp$ for the rest of this section.
\end{enumerate}

The $\OTP(\cprf(\sk_1, \cdot))$ consists of the subspace states $(\ket{A_1}, \cdots, \ket{A_n})$ and an $\iO$ of the following program \Cref{fig:program_OTP_prf}:
\begin{figure}[hpt]
\centering
\begin{mdframed}[
  linecolor=black,
  leftmargin =4em,
  rightmargin=4em,
  usetwoside=false,
]

Hardcoded: $\sk_1, \sk_2, \{\shO_i^0,\shO_i^1\}_{i \in [n]}$.

On input $(x \in \{0,1\}^{n}, u = u_1 || u_2 || \cdots || u_n \in \{0,1\}^{n\cdot \lambda})$ (where each $u_i \in 
\mathbb{F}_2^\lambda$):

\begin{enumerate}
\item If for all $i \in [n]$, $\shO_i^{x_i}(u_i) = 1$, where $x_i$ is the $i$-th bit of $x$:

    \quad Let $r \gets F_2.\eval(\sk_2, u)$.
    
    \quad Output $(r, F_1.\eval(\sk_1, x || r))$.
\item Else:

    \quad Output $\bot$
\end{enumerate}

\end{mdframed}
\caption{Program $\cprf_{\OTP}$}
\label{fig:program_OTP_prf}
\end{figure}

\paragraph{Correctness}
The correctness follows from the extracting property of the PRF $F_2$: in any honest evaluation with $u\in\{0,1\}^{\lambda\cdot n}$ satisfies $u_i \in \shO_{A_i}^{x_i}$, for each $i \in [n]$.
Therefore $u$ has min-entropy $n \cdot \lambda/2$. By the extracting property of $F_2$ and the evaluation correctness of $F_1$, the above scheme satisfies correctness \Cref{def:otp_correctness}.

\subsubsection{Security Proof}
To prove security, we consider the following hybrids:

\paragraph{$H_0$:} 
In this hybrid, the challenger plays the original game defined in \Cref{def:generalized_operational_prf} using the above construction:
\begin{enumerate}
    \item The challenger prepares the program $\OTP(F_1.\eval(\sk_1, \cdot))$ as in \Cref{sec:construction_io}. $\adv$ gets a copy of the one-time program for $\OTP(F_1.\eval(\sk_1, \cdot))$. 

    \item $\adv$ outputs two (input, randomness) pairs $(x_1, r_1), (x_2, r_2)$ such that $x_1 \neq x_2$ or $r_1 \neq r_2$.

    \item Challenger samples two independent, uniform random bits $b_1 \gets \{0,1\}, b_2 \gets \{0,1\}$. 
    
    If $b_1 = 0$, then let $y_1 = F_1.\eval(\sk_1, x_1, r_1)$; else let $y_1 \gets \{0,1\}^m$.

     If $b_2 = 0$, then let $y_2 = F_1.\eval(\sk_1, x_2, r_2)$; else let $y_2 \gets \{0,1\}^m$.

    Challenger sends $(y_1, y_2)$ to $\adv$.
    
    \item $\adv$ outputs guesses $(b_1', b_2')$ for $(b_1, b_2)$ respectively. $\adv$ wins if and only if $b_1' = b_1$ and $b_2' = b_2$.
    \end{enumerate}

\paragraph{$H_1$:}
In this hybrid, the challenger modifies the original $\OTP$ program in \Cref{fig:program_OTP_prf} and the game as follows:

\begin{enumerate}
    \item The challenger prepares the program $\OTP(F_1.\eval(\sk_1, \cdot))$ as follows:
    \begin{enumerate}
        \item Sample $\msk \gets F_1.\keygen(1^\lambda), \sk_2 \gets F_2.\keygen(1^\lambda)$. Sample the subspaces and prepares the $\{\shO_{A_i}^b\}_{i \in [n], b \in \{0,1\}}$ programs. 

    \jiahui{add color to the differences}
        \item Compute constrained key $\sk_{A} \gets F_1.\constrain(\msk, C_{A})$ for the circuit $C_A$ in Figure~\ref{fig:circuit_C_A}, which takes as input $(x, r)$ and checks if the preimage of $r$ under $F_2$ are the correct coset vectors corresponding to $x$.

\begin{figure}[hpt]
\centering
\begin{mdframed}[
  linecolor=black,
  leftmargin =8em,
  rightmargin=8em,
usetwoside=false,
]

Hardcoded: $\sk_2,\{\shO_i^0,\shO_i^1\}_{i \in [n]}$.

On input $(x \in \{0,1\}^n, r \in \{0,1\}^\ell$):

\begin{enumerate}
\item Compute $(u = u_1 || u_2 || \cdots || u_n \in \{0,1\}^{n\cdot \lambda}) \gets F_2.\invert(\sk_2, r)$, where each $u_i \in 
\mathbb{F}_2^\lambda$. 

If the inversion result is $\bot$, output $0$.

\item If for all $i \in [n]$, $\shO_i^{x_i}(u_i) = 1$, where $x_i$ is the $i$-th bit of $x$:

    \quad Output 1.
\item Else:

    \quad Output $0$
\end{enumerate}

\end{mdframed}
\caption{Constraint Circuit $C_A$}
\label{fig:circuit_C_A}
\end{figure}
        
\item Prepare the $\iO$-ed classical program in the $\OTP(F_1(\cdot))$ scheme as in \Cref{fig:program_OTP_prf_hybrid1}, using the constrained key $\sk_A$ for $F_1$ : 

\begin{figure}[hpt]
\centering
\begin{mdframed}[
  linecolor=black,
  leftmargin =8em,
  rightmargin=8em,
  usetwoside=false,
]

Hardcoded: $\sk_A, \sk_2, C_A$.

On input $(x \in \{0,1\}^{n}, u = u_1 || u_2 || \cdots || u_n \in \{0,1\}^{n\cdot \lambda})$ (where each $u_i \in 
\mathbb{F}_2^\lambda$):

\begin{enumerate}
\item  Let $r \gets F_2.\eval(\sk_2, u)$.
    
 \item Output $(r, F_1.\constraineval(\sk_A, x || r))$.

\end{enumerate}

\end{mdframed}
\caption{Program $\cprf_{\OTP}$ in $H_1$}
\label{fig:program_OTP_prf_hybrid1}
\end{figure}

\end{enumerate}

    \item $\adv$ outputs two (input, randomness) pairs $(x_1, r_1), (x_2, r_2)$ such that $x_1 \neq x_2$ or $r_1 \neq r_2$.

    \item Challenger samples two independent, uniform random bits $b_1 \gets \{0,1\}, b_2 \gets \{0,1\}$. 
    
    If $b_1 = 0$, then let $y_1 = F_1.\eval(\sk_1 = \msk, x_1, r_1)$; else let $y_1 \gets \{0,1\}^m$.

     If $b_2 = 0$, then let $y_2 = F_1.\eval(\sk_1 = \msk, x_2, r_2)$; else let $y_2 \gets \{0,1\}^m$.

    Challenger sends $(y_1, y_2)$ to $\adv$.
    
    \item $\adv$ outputs guesses $(b_1', b_2')$ for $(b_1, b_2)$ respectively. $\adv$ wins if and only if $b_1' = b_1$ and $b_2' = b_2$.
    \end{enumerate}

\paragraph{$H_2$:}
In this hybrid, all steps are the same except in step 2, the challenger additionally checks if $F_2.\invert(\sk_2, r_1)$ and $F_2.\invert(\sk_2, r_2)$ are in the subspaces with respect to $x_1, x_2$: if so, abort the game and $\adv$ loses.

\begin{enumerate}
    \item The challenger prepares the program as in Hybrid 1. 

    \item $\adv$ outputs two (input, randomness) pairs $(x_1, r_1), (x_2, r_2)$ such that $x_1 \neq x_2$ or $r_1 \neq r_2$.

    {\color{red}Challenger makes the following check on $(x_1, r_1), (x_2, r_2)$:} 
{\color{red}
 \begin{enumerate}
     \item  Compute $(u_1 = u_{1,1} || u_{1,2} || \cdots || u_{1,n }\in \{0,1\}^{n\cdot \lambda}) \gets F_2.\invert(\sk_2, r_1)$, where each $u_{1,i} \in 
\mathbb{F}_2^\lambda$.
Similarly compute $u_2 \gets F_2.\invert(\sk_2, r_2)$.

If one of the inversion results is $\bot$, continue to step 3. 

\item If for both $j \in \{1,2\}$, for all $i \in [n]$, $\shO_i^{x_{j,i}}(u_{j,i}) = 1$, where $x_{j,i}$ is the $i$-th bit of $x_j$: \emph{abort and output 0.}

\item Else if there exists one of $j \in \{1,2\}$, $i \in [n]$ such that 
$\shO_i^{x_{j,i}}(u_{j,i}) = 0$, then continue to step 3.
 \end{enumerate}
}

    \item Challenger samples two independent, uniform random bits $b_1 \gets \{0,1\}, b_2 \gets \{0,1\}$. 
    
    If $b_1 = 0$, then let $y_1 = F_1.\eval(\sk_1, x_1, r_1)$; else let $y_1 \gets \{0,1\}^m$.

     If $b_2 = 0$, then let $y_2 = F_1.\eval(\sk_1, x_2, r_2)$; else let $y_2 \gets \{0,1\}^m$.

    Challenger sends $(y_1, y_2)$ to $\adv$.
    
    \item $\adv$ outputs guesses $(b_1', b_2')$ for $(b_1, b_2)$ respectively. $\adv$ wins if and only if $b_1' = b_1$ and $b_2' = b_2$.
    \end{enumerate}

\begin{claim}
    \label{claim:h0_h1}
    Assuming the post-quantum security  of $\iO$, The difference between $\adv$'s advantage in $H_0$ and $H_1$ is negligible.
\end{claim}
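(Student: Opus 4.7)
The plan is to reduce the indistinguishability of $H_0$ and $H_1$ to the post-quantum security of $\iO$ by showing that the two obfuscated programs (Figure~\ref{fig:program_OTP_prf} and Figure~\ref{fig:program_OTP_prf_hybrid1}) are functionally equivalent. Since every other component of the hybrids — the sampling of the subspaces, the preparation of the coset states, the construction of the $\shO_i^b$ programs, and the handling of the challenge phase using $\msk = \sk_1$ — is identical, any noticeable distinguishing advantage between $H_0$ and $H_1$ would yield an $\iO$ distinguisher. Concretely, an $\iO$-adversary would sample $\sk_1, \sk_2$, the subspaces $A_i$ and vectors $s_i, s_i'$, prepare the $\shO_i^b$'s and the coset states, compute $\sk_A \gets F_1.\constrain(\sk_1, C_A)$, submit the pair of circuits (the $H_0$ program and the $H_1$ program, appropriately padded to equal size) to the $\iO$ challenger, and then play out the remainder of the operational security game, returning $\adv$'s final guess bit as its own.

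The functional equivalence argument proceeds as follows. Fix any input $(x, u)$ with $u = u_1 \| \cdots \| u_n$, and let $r = F_2.\eval(\sk_2, u)$ as computed inside both programs. By the injectivity and correctness of $F_2$ (\Cref{def:ipf_correctness}), $F_2.\invert(\sk_2, r) = u$, so the constraint circuit $C_A$ on input $(x, r)$ evaluates exactly the check $\shO_i^{x_i}(u_i) = 1$ for all $i \in [n]$ — i.e., $C_A(x, r) = 1$ iff the $H_0$-program's check passes. In the passing case, by constrained PRF correctness, $F_1.\constraineval(\sk_A, x \| r) = F_1.\eval(\sk_1, x \| r)$, so both programs output $(r, F_1.\eval(\sk_1, x \| r))$. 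In the failing case, $C_A(x, r) = 0$ and the $H_1$ program's $F_1.\constraineval$ call returns $\bot$; with the (standard) convention that the $H_1$ program explicitly tests this and returns $\bot$ outright (which can always be baked in by composing with a post-processing gate), both programs return $\bot$. This gives identical input/output behavior, and invoking $\iO$ security a single time completes the reduction.

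The main subtlety — and the only step that requires care — is ensuring strict functional equivalence in the ``check fails'' branch: the program in Figure~\ref{fig:program_OTP_prf_hybrid1} as literally written returns $(r, \bot)$ rather than $\bot$. The clean fix is to have the $H_1$ program explicitly inline the check ``if $F_1.\constraineval(\sk_A, x\|r) = \bot$ then output $\bot$'', which does not change anything on accepting inputs and makes the two circuits output-equivalent on every input. Both programs must also be padded to the same size before being fed to $\iO$. Once these cosmetic adjustments are made, the equivalence is pointwise exact (not merely computational), and the post-quantum $\iO$ guarantee yields the required negligible bound on the difference of $\adv$'s advantages in $H_0$ and $H_1$.
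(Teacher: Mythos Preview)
Your proposal is correct and takes essentially the same approach as the paper: both argue that the two obfuscated circuits are functionally equivalent and then invoke $\iO$ security. The paper's proof is in fact just a two-sentence sketch (``we only change the time of when we check the input vectors $u$ are in the corresponding subspaces''), so your treatment is strictly more careful---in particular, your observation about the failing branch outputting $(r,\bot)$ versus $\bot$ and the need to patch the $H_1$ program accordingly is a genuine detail the paper glosses over.
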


\begin{proof}
    The program $\cprf_\OTP$ in $H_0$ and $H_1$ have the same functionality: we only change the time of when we check the input vectors $u$ are in the corresponding subspaces. Therefore, by the security of $\iO$, the above claim holds. 
\end{proof}

\begin{claim}
\label{claim:h1_h2}
    By the security  of subspace-hiding obfuscation (\Cref{def:sho_security}), the difference between $\adv$'s advantage in $H_1$ and $H_2$ is negligible.
\end{claim}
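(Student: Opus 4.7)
The plan is to show that $H_1$ and $H_2$ are indistinguishable by reducing the abort event in $H_2$ to the computational direct product hardness of coset states (\Cref{thm: direct product comp}), which itself is based on the security of subspace-hiding obfuscation and $\iO$. Observe that $H_1$ and $H_2$ are identical conditioned on the abort check not triggering; the adversary receives the same $\OTP$, the same oracle interactions, and the same challenge bits. Therefore it suffices to bound the probability of the abort event, i.e., the probability that $\adv$ outputs $(x_1,r_1)\neq (x_2,r_2)$ such that, letting $u_j = F_2.\invert(\sk_2, r_j)$ for $j\in\{1,2\}$, \emph{both} $u_1$ and $u_2$ are component-wise in the cosets $A_i^{x_{j,i}}+s_i^{x_{j,i}}$ (identifying $A_i^0 = A_i$, $A_i^1 = A_i^\perp$, and similarly for the shifts).

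Suppose for contradiction that the abort event occurs with non-negligible probability $\mu(\lambda)$. I would construct a reduction $\calB$ to \Cref{thm: direct product comp} as follows. $\calB$ receives a challenge $(\ket{A_{s,s'}}, \shO(A+s), \shO(A^\perp+s'))$ for a uniformly random $A\subseteq \F_2^\lambda$ of dimension $\lambda/2$ and uniform $s,s'$. It samples $i^* \getsr [n]$ and embeds the challenge at index $i^*$, i.e., sets $\ket{A_{i^*, s_{i^*}, s'_{i^*}}} := \ket{A_{s,s'}}$, $\shO_{i^*}^0 := \shO(A+s)$, and $\shO_{i^*}^1 := \shO(A^\perp+s')$. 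For every other $i\neq i^*$, it samples subspaces $A_i$ and shifts $s_i, s'_i$ honestly and computes the obfuscations itself. It then samples $\msk \gets F_1.\keygen(1^\lambda)$ and $\sk_2 \gets F_2.\keygen(1^\lambda)$, assembles the constrained circuit $C_A$ and the program of \Cref{fig:program_OTP_prf_hybrid1}, obfuscates it using $\iO$, and runs $\adv$ on this $\OTP$ together with the subspace (coset) states. The simulation is perfect because $\calB$ knows everything needed to prepare $H_1$ (including $\msk$, so it can honestly answer the challenge phase).

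When $\adv$ outputs $(x_1, r_1), (x_2, r_2)$, $\calB$ computes $u_j = F_2.\invert(\sk_2, r_j)$ for $j\in\{1,2\}$ (treating $\bot$ as failure), parses $u_j = u_{j,1}\|\cdots\|u_{j,n}$, and outputs $(u_{1, i^*}, u_{2, i^*})$. I claim that conditioned on the abort event, $\calB$ wins the direct-product game with probability at least $1/n$. Since $(x_1, r_1)\neq (x_2, r_2)$ and $F_2$ is injective (by the invertibility of $F_2$), we have $(x_1, u_1)\neq (x_2, u_2)$. Hence there exists a ``witness index'' $i^\star \in [n]$ such that either (a) $x_{1, i^\star}\neq x_{2, i^\star}$, in which case $u_{1, i^\star}\in A_{i^\star}+s_{i^\star}$ and $u_{2, i^\star} \in A_{i^\star}^\perp + s'_{i^\star}$ (or vice versa); or (b) $x_{1, i^\star}= x_{2, i^\star}$ and $u_{1, i^\star}\neq u_{2, i^\star}$, in which case both vectors lie in the same coset but are distinct. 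Both (a) and (b) are precisely the two winning conditions in \Cref{thm: direct product comp}. Since $i^*$ is sampled independently of $\adv$'s view, $i^*=i^\star$ with probability $\geq 1/n$, giving $\calB$ overall winning advantage $\geq \mu(\lambda)/n$. By \Cref{thm: direct product comp}, this must be negligible, so $\mu(\lambda)$ is negligible, which proves the claim.

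The main obstacle I anticipate is the case analysis for the ``witness index'' and the observation that in case (b) we really do get two \emph{distinct} vectors in the same coset (as opposed to the same vector twice); this relies crucially on the injectivity of $F_2$ to translate $r_1\neq r_2$ into $u_1\neq u_2$. A minor subtlety is that the abort check in $H_2$ first computes $F_2.\invert(\sk_2, r_j)$, and if either inversion returns $\bot$, $H_2$ proceeds to step 3 rather than aborting; hence the abort event implicitly assumes both inversions succeed, which is exactly what the reduction needs to extract the direct-product vectors.
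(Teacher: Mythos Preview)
Your proposal is correct and follows essentially the same approach as the paper's proof: both reduce the abort event to the computational direct-product hardness of coset states (\Cref{thm: direct product comp}) by embedding the challenge at a random index $i^*$, simulating the rest of $H_1$ (sampling $\msk,\sk_2$ and constraining on $C_A$), and inverting the adversary's $r_1,r_2$ via $F_2.\invert$ to extract the winning pair of vectors. Your explicit case split into (a) $x_{1,i^\star}\neq x_{2,i^\star}$ versus (b) equal bits but $u_{1,i^\star}\neq u_{2,i^\star}$, together with the appeal to injectivity of $F_2$, is in fact slightly more careful than the paper's one-line justification.
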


\begin{proof}
    We invoke the computational direct product hardness \Cref{thm: direct product comp}: when giving a subspace state, and subspace-hiding obfuscations for the corresponding primal and dual subspaces, it is hard to produce two different vectors in the subspaces. Therefore, the event that challenger aborts on the event in Step 2 of Hybrid 1 is negligible.

    Otherwise, if all the preimages of $r_1, r_2$ are valid subspace vectors, since we require $x_1 \neq x_2$ or $r_1 \neq r_2$, there must exist at least an index $i^* \in [n]$ such that $u_{1,i^*} \neq u_{2,i^*}$ . Therefore, we can build a reduction to break the computational direct product hardness property: the reduction can sample its own $\msk, \sk_2$ and constrain the key $\msk$ on the circuit $C_A$ since it is given the programs $\shO_{A_i}$'s. When receiving the adversary's output of $r_1, r_2$, it can invert them to find the vectors $u_{1,i^*}, u_{2,i^*}$ that help it break \Cref{thm: direct product comp}.
\end{proof}

\begin{claim}    \label{claim:hyb2_negligible}
    Assuming adaptive single-key constrained pseudorandomness of $\cprf F_1$,  then $\calA$'s advantage in $H_2$ is negligible.
\end{claim}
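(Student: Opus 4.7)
The plan is to reduce $\calA$'s winning probability in $H_2$ to the adaptive single-key constrained pseudorandomness of $F_1$ (Definition~\ref{def:constrain_prf_security}), using the double-challenge variant of Remark~\ref{remark:double_challenge_cprf} in one sub-case.

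First I would decompose the winning event based on which of $\calA$'s chosen pairs passes the constraint $C_A$. Writing $E_i$ for the event ``$C_A(x_i, r_i) = 0$'', the abort rule of $H_2$ forces $E_1 \vee E_2$ on any winning transcript; this partitions the winning probability into three sub-cases $S_1$ (only $E_1$), $S_2$ (only $E_2$), and $S_{12}$ (both). The goal is then to bound $\Pr[\calA \text{ wins} \wedge S_1] \leq \Pr[S_1]/2 + \negl(\lambda)$ (and symmetrically for $S_2$) and $\Pr[\calA\text{ wins}\wedge S_{12}] \leq \Pr[S_{12}]/4 + \negl(\lambda)$, so that summing and using $\Pr[S_1] + \Pr[S_2] + \Pr[S_{12}] \leq 1$ yields $\Pr[\calA \text{ wins in } H_2] \leq 1/2 - \Pr[S_{12}]/4 + \negl(\lambda) \leq 1/2 + \negl(\lambda)$.

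For $S_1$ (and symmetrically $S_2$) I would build a single-challenge reduction $\calB_1$ that obtains $\sk_A$ by querying the constraining oracle on $C_A$, sets up the OTP exactly as in step 1 of $H_2$ using $\sk_A$, and runs $\calA$. On receiving $(x_1, r_1), (x_2, r_2)$ with $S_1$ holding, it submits $(x_1, r_1)$ as its PRF challenge and uses the returned value as $y_1$; crucially, on $S_1$ we have $(x_2, r_2) \in C_A$, so $y_2$ can be produced in the exact $H_2$ distribution using $F_1.\constraineval(\sk_A, x_2, r_2)$ together with an internally-sampled bit $\tilde b_2$. Forwarding $b_1'$ as the guess to the PRF challenger and answering with a uniform coin outside $S_1$, the $\neg S_1$ random-guess baseline of $1/2$ matches exactly the single-challenge bound $1/2 + \negl(\lambda)$, which rearranges to $\Pr[b_1' = b_1 \wedge S_1] \leq \Pr[S_1]/2 + \negl(\lambda)$ and dominates $\Pr[\calA \text{ wins} \wedge S_1]$.

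For $S_{12}$ I would invoke the double-challenge variant: the reduction $\calB_{12}$ plays the double-challenge constrained PRF game, and on $S_{12}$ submits both adversary-chosen points $(x_1, r_1), (x_2, r_2)$ as the two challenges, using the two responses $(y_1^*, y_2^*)$ directly as $(y_1, y_2)$ and forwarding $(b_1', b_2')$ as its joint guess. Since the challenger's independent hidden bits $(b_1^*, b_2^*)$ cleanly stand in for the independent $(b_1, b_2)$ of $H_2$, the simulation on $S_{12}$ is perfect; the key quantitative input is the tight joint-guessing bound $1/4 + \negl(\lambda)$ that follows from single-challenge constrained PRF security applied independently to each coordinate (a strengthening of the statement in Remark~\ref{remark:double_challenge_cprf}), which after the analogous random-guess accounting outside $S_{12}$ yields $\Pr[\calA\text{ wins}\wedge S_{12}] \leq \Pr[S_{12}]/4 + \negl(\lambda)$. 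The main obstacle is precisely this $S_{12}$ sub-case: because neither supplied pair lies in $C_A$, $\sk_A$ alone cannot locally produce the real-vs-random distribution for either $y_i$, and attempting to reduce via two separate single-challenge games becomes circular (simulating $y_2$ requires evaluating $F_1$ at a non-constrained point without access to $\msk$). The double-challenge variant dissolves this circularity by outsourcing both $y_1$ and $y_2$ to the challenger in a single adaptive round.
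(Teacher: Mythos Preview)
Your approach is essentially the paper's: both partition on which of the adversary's two submitted points fails $C_A$, use a single-challenge constrained-PRF reduction when exactly one fails (computing the other $y_i$ with the constrained key), and invoke the double-challenge variant of Remark~\ref{remark:double_challenge_cprf} when both fail. Your write-up is considerably more careful than the paper's two-sentence sketch; in particular, you correctly observe that the accounting across the three sub-cases needs the double-challenge winning probability to be $1/4+\negl(\lambda)$ rather than merely the $1/2+\negl(\lambda)$ literally stated in the Remark, and you note that this strengthening follows from two applications of single-key security---a detail the paper elides.
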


\begin{proof}
    If there exists an $\adv$ that wins the game in $H_2$ with probability non-negligibly larger than $1/2$, then we can build a reduction $\calB$ to break the adaptive single-key constrained pseudorandomness in \Cref{def:constrain_prf_security}.

     Note that in this game, we have ruled out all $\adv$ that outputs $(x_1, r_1), (x_2, r_2)$ that satisfies $C_A(x_1, r_1) = C_A(x_2, r_2) =1$. That is, at least one of the above evaluations is 0. Therefore, the reduction, which makes a single key query on the circuit $C_A$, can use this input as the challenge input to the \Cref{def:constrain_prf_security} security game. If both inputs  satisfy that $C_A(x_1, r_1) = C_A(x_2, r_2) =0$, then we use the variant game in \Cref{remark:double_challenge_cprf}. In either case, the security of the constrained PRF guarantees that
     $\calA$'s winning probability in the game of $H_2$ is $1/2+\negl(\lambda)$.
\end{proof}

\section{Impossibility Results in the Plain Model and the Oracle Model}
\label{sec:impossibility}

In this section, we provide two different infeasibility results for one-time sampling programs that complement our positive results from two perspectives:
\begin{enumerate}
    \item Assuming LWE, there exists a family of one-query unlearnable, high min-entropy output functions where there are no insecure OTP for it \emph{in the plain model}, even with respect to the weak operational definition \Cref{def:weak-operational-security}.

    \item There exists a family of single \emph{physical} query unlearnable, high average min-entropy output, but partially deterministic functions where there are no insecure OTP for it, even in the oracle model, with respect to the weakest operational definition \Cref{def:weak-operational-security}.
\end{enumerate}

The first result in the non-black-box model is inspired by the non-black-box impossibility result of quantum obfuscation and copy protection in \cite{alagic2020impossibility,ananth2020secure}. However, the circuit family they use is a deterministic one. In order to show a nontrivial result for the one-time program, we design a family of randomized circuits which have almost full entropy output, but can nevertheless be "learned" through a single non-black-box evaluation.

\subsection{Preliminaries}

\subsubsection{Quantum Fully Homomorphic Encryption Scheme}
\label{sec:prelim_qfhe}

We give the definition of the type of QFHE we need for the construction in this section.

\begin{definition}
 [Quantum Fully Homomorphic Encryption]
 \label{def:qfhe}
 Let $\mathcal{M}$ be the Hilbert space associated with the message space (plaintexts), $\calC$ be
the Hilbert space associated with the ciphertexts, and $\calR_\ek$ be the Hilbert space associated with the
evaluation key. A quantum fully homomorphic encryption scheme is a tuple of QPT algorithms
$\qhe = (\keygen, \Enc, \Dec, \Eval)$ satisfying:
\begin{description}
    \item $\keygen(1^\lambda):$ a classical probabilistic algorithm that outputs a public key, a secret key as well as an evaluation key, $(\pk,\sk, \ek)$. 

\item $\Enc(\pk, \rho_\calM):$ takes as input a state $\rho_\calM$ in the space $L(\calM)$ and outputs a ciphertext $\sigma$ in $L(\calC)$.

\item $\Dec(\sk, \sigma):$ takes a quantum ciphertext $\sigma$, and outputs a state $\rho_\calM$ in the
message space 
$L(\calM)$.

\item $\Eval(\ek, U, \sigma_1, \cdots, \sigma_k)$ takes input of a quantum citcuit $U$ with $k$-qubits input and $k'$-qubits of outputs.
 Its output is a sequence of $k'$
quantum ciphertexts.
\end{description}
\end{definition}

The semantic security is analogous to the classical semantic security of FHE. We refer to \cite{broadbent2015quantum}.

\paragraph{Classical Ciphertexts for Classical Plaintexts} 
For the impossibility result, we require a QFHE scheme where ciphertexts
of classical plaintexts are also classical. Given any $x \in \{0, 1\}$, we want $\qhe.\Enc(\pk, \ket{x}\bra{x}$ to be
a computational basis state $\ket{z}\bra{z}$ for some $z \in \{0, 1\}^l$
(here, $l$ is the length of ciphertexts for 1-bit
messages). In this case, we write $\qhe.\Enc(pk,x)$. We also want the same to be true for evaluated
ciphertexts: if $U\ket{x}\bra{x} = \ket{y}\bra{y}$ for some basis state $x \in \{0,1\}^n, y \in \{0,1\}^l$, then we have:  $\qhe.\eval(\ek, U, \qhe.\Enc(\pk, \ket{x}\bra{x})) \to \qhe.\Enc(\pk, \ket{y}\bra{y})$ where the result is a classical ciphertext.

The QFHE schemes in \cite{brakerski2018quantum,mahadev2020classical} satisfy the above requirement. 

Note that we also need to evaluate on a possibly arbitrary polynomial depth circuit. The QFHE schemes in \cite{brakerski2018quantum,mahadev2020classical}  still require circular security to go beyond leveled FHE.

\subsubsection{Compute-and-Compare Obfuscation}
\label{sec:prelim_compute_compare}

\begin{definition}[Compute-and-Compare Program]
    Given a function $f:\{0,1\}^{\ell_{\sf in}} \to \{0,1\}^{\ell_{\sf out}}$ along with a target value $y \in \{0,1\}^{\ell_{\sf out}}$ and a message $z \in \{0,1\}^{\ell_{\sf msg}}$, we define the compute-and-compare program: 
    \begin{align*}
        \CC[f, y, z](x) = \begin{cases}
                            z & \text{ if } f(x) = y \\
                            \bot & \text{ otherwise }
                        \end{cases}
    \end{align*}
\end{definition}

We define the following class of \emph{unpredictable distributions} over pairs of the form $(\CC[f, y, z], \aux)$, where $\aux$ is auxiliary quantum information. These distributions are such that $y$ is computationally unpredictable given $f$ and $\aux$.

\begin{definition}[Unpredictable Distributions]
\label{def:cc_unpredictable_dist}
    We say that a family of distributions $D = \{D_\lambda\}$ where $D_{\lambda}$ is a distribution over pairs of the form $(\CC[f, y, z], \aux)$ where $\aux$ is a quantum state, belongs to the class of \emph{unpredictable distributions} if the following holds. There exists a negligible function $\negl$ such that, for all QPT algorithms $\adv$, 
    \begin{align*}
        \Pr_{ (\CC[f, y, z], \aux) \gets D_\lambda } \left[ A(1^\lambda, f, \aux) = y \right] \leq \negl(\lambda). 
    \end{align*}
\end{definition}

We assume that a program $P$ has an associated set of parameters $P.{\sf param}$ (e.g input size, output size, circuit size, etc.), which we are not required to hide.
\begin{definition}[Compute-and-Compare Obfuscation]
\label{def: cc obf}
    A PPT algorithm $\ccobf$ is an obfuscator for the class of unpredictable distributions (or sub-exponentially unpredictable distributions) if for any family of distributions $D = \{ D_{\lambda}\}$ belonging to the class, the following holds:
    \begin{itemize}
        \item Functionality Preserving: there exists a negligible function $\negl$ such that for all $\lambda$, every program $P$ in the support of $D_\lambda$, 
        \begin{align*}
            \Pr[\forall x,\,  \widetilde{P}(x) = P(x),\, \widetilde{P} \gets \ccobf(1^\lambda, P) ] \geq 1 - \negl(\lambda)
        \end{align*}
        \item Distributional Indistinguishability: there exists an efficient simulator $\Sim$ such that:
        \begin{align*}
            (\ccobf(1^\lambda, P), \aux) \approx_c (\Sim(1^\lambda, P.{\sf param}), \aux)
        \end{align*}
        where $(P, \aux) \gets D_\lambda$.
    \end{itemize}
\end{definition}

Combining the results of \cite{wichs2017obfuscating, goyal2017lockable} with those of \cite{C:Zhandry16}, we have the following two theorems. 
\justin{The Zhandry citation previously said zhandry2019magic. I assume this refers to "The Magic of ELFs", which I added as the citation.}
\jiahui{Yes thanks
!}

\begin{theorem}
\label{thm: cc from lwe}
Assuming the existence of the
quantum hardness of LWE, there exist obfuscators for unpredictable distributions, as in \Cref{def: cc obf}.
\end{theorem}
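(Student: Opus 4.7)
The plan is to combine the classical constructions of compute-and-compare obfuscation from LWE due to Wichs--Zirdelis~\cite{wichs2017obfuscating} and Goyal--Koppula--Waters~\cite{goyal2017lockable} with Zhandry's ``Magic of ELFs''~\cite{C:Zhandry16} techniques to lift security to the quantum setting. Concretely, one starts from the existing LWE-based obfuscator $\ccobf$ that on input $\CC[f, y, z]$ outputs an encoding of $f$ together with a ``lockbox'' that releases $z$ precisely on inputs $x$ with $f(x) = y$. Functionality preservation (up to negligible error) is inherited directly from the classical construction, so the entire work of the proof is to upgrade the distributional indistinguishability property to the setting where the auxiliary input $\aux$ is a quantum state and the distinguisher is a QPT algorithm.

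The main obstacle is that the classical security reduction for WZ/GKW uses unpredictability of $y$ in a semi-black-box manner; when $\aux$ is quantum, one can no longer naively clone it or rewind to extract $y$. I would address this in two steps. First, I would invoke the post-quantum hardness of LWE to argue that each underlying primitive in the WZ/GKW construction (dual-mode commitments, lattice trapdoors, and the LWE-based ``hiding'' step) is secure against QPT adversaries with quantum advice, since its classical security proof is a straightforward hybrid that only relies on the LWE indistinguishability assumption, which is known to hold against such adversaries. Second, in the step where the classical proof extracts $y$ from the adversary via Goldreich--Levin, I would replace this by the \emph{quantum} Goldreich--Levin theorem, which guarantees predictor-to-extractor conversion in the presence of quantum side information, together with the sub-exponentially-lossy / ELF-based argument of~\cite{C:Zhandry16} that turns average-case unpredictability (given quantum $\aux$) into indistinguishability.

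Putting this together, the hybrid sequence proceeds as follows. Start from the real obfuscation $\ccobf(1^\lambda, \CC[f, y, z])$ together with $\aux$. Using post-quantum LWE, switch the lockbox from the ``real'' branch to a ``simulation'' branch in which the released value $z$ is replaced by $\bot$ conditioned on $y$ being unknown; this relies only on the indistinguishability mode of the underlying dual-mode primitive. Next, apply the quantum Goldreich--Levin-based argument to conclude that no QPT distinguisher with quantum advice $\aux$ can tell the difference, by reducing any distinguishing advantage to a predictor for $y$ given $(f, \aux)$, contradicting the unpredictability of $D_\lambda$. Finally, note that the resulting distribution depends only on $P.\mathsf{param}$, so one can define the simulator $\Sim$ to output exactly this simulated obfuscation.

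I expect the main technical obstacle to be the handling of quantum auxiliary information in the unpredictability-to-indistinguishability step; the subtlety is that $\aux$ may be entangled with the challenger's randomness or re-used across hybrids, so one must carefully ensure each hybrid step goes through in the quantum auxiliary input model. Once this is done, the remaining hybrids are routine adaptations of the classical proof, and the overall theorem follows from quantum-hard LWE.
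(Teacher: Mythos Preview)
The paper does not prove this theorem. It is stated purely as a citation result: the sentence immediately preceding the theorem reads ``Combining the results of \cite{wichs2017obfuscating, goyal2017lockable} with those of \cite{C:Zhandry16}, we have the following two theorems,'' and no further argument is given. So in terms of matching the paper, you have already done everything it does by naming the WZ/GKW constructions and the Zhandry reference; the detailed hybrid plan you sketch goes well beyond anything in the paper.

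That said, some of the technical content of your plan is off. The ``Magic of ELFs'' machinery is not a tool for converting unpredictability-with-quantum-auxiliary-input into indistinguishability in the way you describe; ELFs are extremely lossy functions used to build auxiliary-input point obfuscation and related primitives via an entirely different mechanism (collapsing the output distribution), not via any predictor-to-extractor step. Likewise, the WZ/GKW security proofs do not proceed through a Goldreich--Levin extraction of $y$; they are straight-line LWE-based hybrid arguments, and the relevant observation for the quantum setting is simply that these reductions are non-rewinding, so they go through when $\aux$ is a quantum state and LWE is post-quantum hard. If you want an actual proof rather than a citation, the cleanest route is to trace through the WZ or GKW hybrids, note that each step is an LWE indistinguishability step that treats the adversary (and its auxiliary input) as a black box, and observe that the final unpredictability-based step is also a single black-box invocation---no quantum Goldreich--Levin or ELF argument is needed.
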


\subsubsection{Quantum Query Lower Bounds}

In the analysis of the unlearnability of circuits (\Cref{sec:impossibility}), we will use the following theorem from \cite{BBBV97} to bound the change in adversary's state when we change the oracle's input-output behavior at places where the adversary hardly ever queries on. 

    Let $\ket{\phi_i}$ be the state of the adversary after the $i$-th query to the oracle $\cO$, i.e. $\ket{\phi_i} = U_i \cO \cdots \cO U_2 \cO U_1 \ket{\phi_0}$, where $\ket{\phi_0}$ is the initial adversary state.

\begin{theorem}[\cite{BBBV97}] 
\label{thm:bbbv97_oraclechange}

We have defined the quantum query model for classical oracles in \Cref{sec:prelim_quantum_query_classical_oracle}. We give some further preliminaries here.

Let $\ket{\phi_i}$ be the superposition of oracle quantum algorithms $\mathcal{M}$ with oracle $\cO$ on input $x$ at time $i$. Define $W_y(\ket{\phi_i})$ to be the sum of squared magnitudes in $\ket{\phi_i}$ of configurations of $\mathcal{M}$ which are querying the oracle on string $y$. For $\epsilon  > 0$, let $F \subseteq [0, T-1] \times \Sigma^*$ be the set of time-string pairs such that 
$\sum_{(i,y) \in F} W_y(\ket{\phi_i}) \leq \epsilon^2/T$.

Now suppose the answer to each query $(i, y) \in F$ is modified to some arbitrary fixed $a_{i,y}$ (these answers need not be consistent with an oracle). Let $\ket{\phi_i'}$ be the superposition of $\mathcal{M}$ on input $x$ at time $i$ with oracle $\cO$ modified as stated above. Then $\left\| \ket{\phi_T} - \ket{\phi_T'} \right\|_{\mathrm{tr}}\leq \epsilon$.
\end{theorem}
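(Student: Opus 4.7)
I would prove this via the standard hybrid argument that interpolates between the original and modified oracle one query at a time. For each $j \in \{0, 1, \ldots, T\}$, the plan is to define an intermediate state $\ket{\psi_j}$ as the final state of $\mathcal{M}$ on input $x$ in an experiment that uses the original oracle $\cO$ for queries $1, \ldots, j$ and the modified oracle (with answers on $F$ replaced by $a_{i,y}$) for queries $j+1, \ldots, T$. Then $\ket{\psi_T} = \ket{\phi_T}$ and $\ket{\psi_0} = \ket{\phi_T'}$, so by the triangle inequality it suffices to control the per-step differences $\| \ket{\psi_{j+1}} - \ket{\psi_j} \|$ and sum them.

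For each summand, the key observation is that hybrids $\psi_j$ and $\psi_{j+1}$ evolve \emph{identically} through the first $j$ queries, and both arrive at the same state $\ket{\phi_j}$ --- that is, the state of the unmodified algorithm at time $j$, which is the quantity referenced by $W_y(\ket{\phi_i})$ in the hypothesis. The two hybrids differ only in the oracle used at query $j+1$. Decomposing $\ket{\phi_j} = \ket{\phi_j^{\mathrm{in}}} + \ket{\phi_j^{\mathrm{out}}}$, where $\ket{\phi_j^{\mathrm{in}}}$ is the projection onto those configurations that would query some $y$ with $(j, y) \in F$, the two oracles agree on $\ket{\phi_j^{\mathrm{out}}}$, and both act as norm-non-increasing operations on $\ket{\phi_j^{\mathrm{in}}}$. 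Since all subsequent local unitaries and oracle calls are norm-preserving, this propagates unchanged to give $\| \ket{\psi_{j+1}} - \ket{\psi_j} \| \leq 2 \| \ket{\phi_j^{\mathrm{in}}} \| = 2 \sqrt{W_{F_j}(\ket{\phi_j})}$, where $W_{F_j}(\ket{\phi_j}) := \sum_{y \,:\, (j,y) \in F} W_y(\ket{\phi_j})$.

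Combining the per-step bounds and applying the Cauchy--Schwarz inequality to the resulting sum of square roots gives
\begin{equation*}
    \| \ket{\phi_T} - \ket{\phi_T'} \| \;\leq\; 2 \sum_{j=0}^{T-1} \sqrt{W_{F_j}(\ket{\phi_j})} \;\leq\; 2 \sqrt{T \cdot \sum_{(i, y) \in F} W_y(\ket{\phi_i})} \;\leq\; 2\sqrt{T \cdot \epsilon^2/T} \;=\; 2\epsilon,
\end{equation*}
and the Euclidean-norm bound on pure states translates to a trace-distance bound of the same order via the standard inequality $\|\ket{\phi}\bra{\phi} - \ket{\phi'}\bra{\phi'}\|_{\mathrm{tr}} \leq \|\ket{\phi} - \ket{\phi'}\|$. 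The main technical obstacle is the constant: the naive triangle inequality in the per-step bound introduces a factor of $2$ that would weaken the final constant, and recovering the sharp $\epsilon$ claimed in the statement requires a slightly finer accounting (tracking $\mathrm{Re}\langle \psi_j \vert \psi_{j+1} \rangle$ directly, or exploiting that both oracles produce unit vectors from $\ket{\phi_j^{\mathrm{in}}}$ that differ only in the answer register). This is the actual argument in \cite{BBBV97}; everything else in the plan is routine.
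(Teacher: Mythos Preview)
The paper does not prove this theorem; it is stated in the preliminaries as a known result imported from \cite{BBBV97}, so there is no paper proof to compare against. Your plan is the standard hybrid argument from the original BBBV paper and is correct, including your honest flag about the constant (the paper's statement is just quoting the sharp form from \cite{BBBV97}, not deriving it).
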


\subsection{Impossibility Result for Single-Query Security in the Plain Model: for fully randomized functions}

In this section, we present a lower bound/impossibility result for a generic one-time program in the plain model (i.e. without using black-box oracles). The result states that there is no way to construct a generic one-time sampling program for \emph{all randomized functionalities} if the programs allow \emph{non-black-box} access.

Our result is inspired by the non-black-box impossibility result of quantum obfuscation and copy protection in \cite{alagic2020impossibility,ananth2020secure}. However, the circuit family they use is a deterministic one. In order to show a nontrivial result for the one-time program, we design a family of randomized circuits which have almost full entropy output, but can nevertheless be "learned" through a single non-black-box evaluation.

\begin{theorem}
\label{thm:plain_model_impossibility}

 Assuming the post-quantum security of LWE and QFHE, there exists a family of randomized circuits which are single-query $\negl(\lambda)$-unlearnable \Cref{def:single-query-unlearnable}, but
 not one-time program secure with respect to the the weak operational one-time security \Cref{def:weak-operational-security}.
\end{theorem}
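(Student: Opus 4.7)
Following the sketch in the technical overview of the plain-model impossibility, I would take $n = \lambda$ and sample uniformly random $a, b \in \{0,1\}^n$, a symmetric key $\SKE.\sk \gets \SKE.\keygen(1^\lambda)$, and a QFHE keypair $(\qhe.\pk, \qhe.\sk) \gets \qhe.\keygen(1^\lambda)$. The randomized function is $C_{a,b,\SKE.\sk}$ which on input $(x,r)$ outputs $\SKE.\Enc(\SKE.\sk, b; r)$ if $x=a$ and $\SKE.\Enc(\SKE.\sk, x; r)$ otherwise. Its classical auxiliary information is $\aux = (\qhe.\pk,\ \ct_a = \qhe.\Enc(\qhe.\pk, a),\ \widetilde{P})$, where $\widetilde{P} \gets \ccobf(\CC[f, b, (\SKE.\sk, \qhe.\sk)])$ and $f(c) = \SKE.\Dec(\SKE.\sk, \qhe.\Dec(\qhe.\sk, c))$. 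To falsify \Cref{def:weak-operational-security}, given any hypothetical $\OTP(C_{a,b,\SKE.\sk})$ together with $\aux$, the adversary uses QFHE homomorphic evaluation to run $\Eval(\OTP(C_{a,b,\SKE.\sk}), \cdot)$ coherently on $\ct_a$, obtaining a $\qhe$-ciphertext whose plaintext is $\SKE.\Enc(\SKE.\sk, b; r)$ for some sampled $r$. Feeding this into $\widetilde{P}$ triggers the compute-and-compare match and releases $(\SKE.\sk, \qhe.\sk)$. With $\SKE.\sk$ in hand the adversary then outputs $(x_i, r_i, \SKE.\Enc(\SKE.\sk, x_i; r_i))$ for any distinct $(x_1,r_1), (x_2,r_2)$ with $x_1, x_2 \neq a$, which holds with overwhelming probability since $a$ is uniform on $\{0,1\}^n$.

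\textbf{Single-query unlearnability via a hybrid sequence.} The harder direction is showing that the family is single-query $\negl(\lambda)$-unlearnable in the sense of \Cref{def:single-query-unlearnable}: any QPT $\calA$ with $\aux$ and one quantum query to $C$ produces two valid pairs with only negligible probability. The proposed plan is a sequence $\Hyb_0,\ldots,\Hyb_4$ that progressively erases informative content from $\aux$. Hybrid $\Hyb_1$ replaces $\widetilde{P}$ with $\ccobf.\Sim(\text{params})$ using the distributional indistinguishability of \Cref{def: cc obf}; hybrid $\Hyb_2$ replaces $\ct_a$ with $\qhe.\Enc(\qhe.\pk, 0^n)$ via QFHE semantic security (\Cref{def:qfhe}), which is sound once $\qhe.\sk$ has been excised from the simulated $\CC$ program; hybrid $\Hyb_3$ invokes \Cref{thm:bbbv97_oraclechange} to reprogram $C$ at $x = a$ to the generic branch $\SKE.\Enc(\SKE.\sk, a; r)$. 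In $\Hyb_3$ the value $a$ is information-theoretically hidden from $\aux$, so the single-query simulator's query weight on any fixed $a$ is at most $2^{-n}$ and the BBBV change is statistically $\negl(\lambda)$-close. Finally, $\Hyb_4$ features an $\aux$ independent of $a,b$ and a plain oracle $\SKE.\Enc(\SKE.\sk, \cdot)$; since $\calA$ only gets one evaluation, producing a second valid ciphertext reduces to the pseudorandomness of SKE encryptions (instantiate $\SKE$ so that ciphertexts look uniform without $\sk$) and is therefore negligible.

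\textbf{Main obstacle: circular dependencies resolved by induction.} The delicate point is $\Hyb_1$: the $\CC$-unpredictability game of \Cref{def:cc_unpredictable_dist} hands the distinguisher $f$, which packages both $\SKE.\sk$ and $\qhe.\sk$; an unrestricted such distinguisher would decrypt $\ct_a$, query $C$ on $a$, and decrypt to read off $b$, so naively $b$ is predictable. The way out is that the unpredictability distinguisher inherits only the \emph{single} query access of the unlearnability game, and one must argue inductively that even when granted $(\qhe.\sk, \SKE.\sk)$, its single-query weight on $\ket{a}$ is negligible, because producing the classical pointer $a$ from $\ct_a$ using only a single $C$-evaluation would itself break QFHE semantic security in a \emph{sub-hybrid}. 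A symmetric nested argument is required in $\Hyb_2$ to simulate $\widetilde{P}$ without $\qhe.\sk$, invoking CC security internally. This layered ordering---each hybrid's reduction appeals to a later hybrid's conclusion, resolved inductively by alternating BBBV, QFHE, and CC security steps---is precisely the ``combination of hybrid arguments, quantum query lower bounds, and induction'' alluded to in the excerpt. Combining the non-black-box attack with this unlearnability analysis yields a family of high min-entropy randomized circuits that is single-query $\negl(\lambda)$-unlearnable yet admits no secure plain-model OTP, proving the theorem.
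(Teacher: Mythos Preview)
Your non-black-box attack is correct and matches the paper's Claim~\ref{claim:plain_model_learnable} essentially verbatim.

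The unlearnability argument, however, has the hybrid ordering backwards, and your proposed ``inductive resolution'' of the resulting circularity does not go through. In your $\Hyb_1$ you want to invoke compute-and-compare security (\Cref{def: cc obf}) to replace $\widetilde P$ by the simulator. But the unpredictability game of \Cref{def:cc_unpredictable_dist} hands the adversary the function $f = \SKE.\Dec(\SKE.\sk,\qhe.\Dec(\qhe.\sk,\cdot))$ in the clear, which reveals \emph{both} $\SKE.\sk$ and $\qhe.\sk$. With $\qhe.\sk$ the adversary simply decrypts $\ct_a$ to obtain $a$---no $C$-query is needed for this step---then uses its one oracle query on $(a,r)$ to obtain $\SKE.\Enc(\SKE.\sk,b;r)$, and finally decrypts with $\SKE.\sk$ to recover $b$. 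Your sentence ``even when granted $(\qhe.\sk,\SKE.\sk)$, its single-query weight on $\ket{a}$ is negligible'' is therefore false: once $\qhe.\sk$ is in hand, the weight on $a$ can be $1$. The appeal to ``QFHE semantic security in a sub-hybrid'' cannot help, because the CC reduction already possesses $\qhe.\sk$.

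The paper's proof (Claim~\ref{claim:oracle_unlearnable_impossibility}) fixes this by reversing the order. It proves by induction over queries that the total weight on $x=a$ is negligible. Within each inductive step the hybrids run in the order (i) BBBV reprogram the oracle at $a$ to the generic branch (valid by the inductive hypothesis on prior weights), (ii) \emph{then} CC security, (iii) \emph{then} QFHE security. The point of doing BBBV first is that after step (i) the oracle $\cO_C'$ is independent of $a$ and $b$, so the CC reduction can prepare $\cO_C'$ on its own from $\SKE.\sk$; the only remaining occurrence of $b$ is inside $\widetilde P$, and since $b$ is uniform it is unpredictable given $(f,\aux)$. Once the total weight on $a$ is negligible, a final global BBBV swap removes the $a$-branch for all $T$ queries, after which the dummying of $\aux$ and the reduction to PRF/SKE single-query unlearnability proceed as you outlined for your $\Hyb_4$.
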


We construct the following circuit which has high-entropy outputs and is 1-query unlearnable with only (quantum) single-query access to the function and access to a piece of classical information, but once put into any OTP in the plain model, is insecure.

First we give a few building blocks for the following circuit $C$.
\begin{enumerate}

\item Let $n = n(\lambda),  \ell = \ell(\lambda)$ be polynomials in the security parameter $\lambda$.

    \item Let $\SKE = (\SKE.\keygen, \SKE.\Enc, \SKE.\Dec)$ be any secret key encryption scheme. 

    The $\SKE$ scheme only needs to satisfy relatively weak security notion to be single-query unlearnable. For the sake of convenience, we use the textbook construction of post-quantum IND-CPA secure SKE from PRFs (\cite{zhandry2012quantumprf}):

    \begin{description}
           \item $\SKE.\keygen(1^\lambda): \sk \gets \prf.\keygen(\lambda)$

    \item  $\SKE.\Enc(\sk, m \in \{0,1\}^n) \to \ct:$ samples $r \gets \{0,1\}^\ell$; output $\ct \gets (r, \prf.\eval(\sk, r) \oplus m)$.

    \item $\SKE.\Dec(\sk, \ct)$: parse $\ct := (r, \ct')$; compute $m  := \prf.\eval(\sk, r) \oplus \ct'$.
    \end{description}

    Let $\SKE.\sk \gets \SKE.\keygen(1^\lambda)$ be the key we use in the following circuit construction.

\item Let $\qhe = (\qhe.\keygen, \qhe.\Enc, \qhe.\Dec, \qhe.\Eval)$ be a quantum fully homomorphic encryption scheme. Let the keys be $(\qhe.\pk, \qhe.\sk) \gets \qhe.\keygen(1^\lambda)$. Without loss of generality, we consider the evaluation key to be part of $\qhe.\pk$.

\item Let $\CC = \CC.\obf$ be a compute-and-compare obfuscation scheme in \Cref{sec:prelim_compute_compare}.

\item Let $a \gets \{0,1\}^n,b\gets \{0,1\}^n$ be two uniformly random strings.
\end{enumerate}
 
We design the following with auxiliary information $\aux = (\ct_a = \qhe.\Enc(\qhe.\pk, a); \tildeP = \\ \CC.\obf(\SKE.\Dec(\SKE.\sk,\qhe.\Dec(\qhe.\sk,\cdot)), b, (\SKE.\sk, \qhe.\sk)), \qhe.\pk)$.

\begin{mdframed}
\begin{description}

    \item Input: $(x \in \{0,1\}^n, r \in \{0,1\}^\ell)$

    \item Hardcoded:  $(a, b, \Enc.\sk)$

    \item if $x = a$:
    \begin{description}
    \item   output $\SKE.\Enc(\Enc.\sk, b; r)$
\end{description}

    \item else:
    \begin{description}
        \item output $\SKE.\Enc(\Enc.\sk, x; r)$. 
    \end{description}

\end{description}
\end{mdframed}

Note that the program $\tildeP$ in the auxiliary information $\aux$ is a compute-and-compare obfuscation program of the following circuit:

    \begin{align*}
      \CC[f, (\SKE.\sk, \qhe.\sk), b] = \begin{cases}
                            (\SKE.\sk, \qhe.\sk) & \text{ if }    f(x) = b \\
                            \bot & \text{ otherwise }
                        \end{cases}
    \end{align*}
where $f(x) = \SKE.\Dec(\SKE.\sk, \qhe.\Dec(\qhe.\sk, x))$.

\begin{claim}
\label{claim:plain_model_learnable} 
The above circuit with auxiliary information $(C, \aux)$  can be perfectly reconstructed by any QPT adversary given any one-time program with correctness of the above circuit.
\end{claim}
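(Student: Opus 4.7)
The plan is to describe an explicit QPT extractor that, on input $(\OTP(C), \aux)$, recovers the classical description $(a, b, \SKE.\sk)$ of $C$ with overwhelming probability. The key idea, following \cite{alagic2020impossibility, ananth2020secure}, is to use $\qhe$ to evaluate the one-time program \emph{homomorphically} on the encrypted input $\ct_a$, so that the one-time-ness of the program is spent inside the QFHE evaluation; the resulting ciphertext then ``unlocks'' $\tildeP$ because decrypting through both layers yields the hidden target $b$.

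The extractor proceeds as follows. (i) Using $\qhe.\pk$ (and the evaluation key in it), homomorphically evaluate the $\Eval$ procedure of $\OTP(C)$ -- viewed as a quantum circuit acting on $(\OTP(C), \ket{x,r})$ -- on the ciphertext $\ct_a = \qhe.\Enc(\qhe.\pk, a)$ together with a QFHE encryption of a fresh register for the sampled randomness $r$. By QFHE correctness on quantum circuits and by correctness of $\OTP$ (\Cref{def:otp_correctness}), the output is, up to negligible error, a classical QFHE ciphertext $\ct' = \qhe.\Enc\bigl(\qhe.\pk,\ \SKE.\Enc(\SKE.\sk, b; r)\bigr)$ for a uniformly random $r$. (ii) Feed $\ct'$ to the compute-and-compare obfuscation $\tildeP$: by construction, $f(\ct') = \SKE.\Dec\bigl(\SKE.\sk,\, \qhe.\Dec(\qhe.\sk, \ct')\bigr) = b$, so the comparison succeeds and $\tildeP$ outputs $(\SKE.\sk, \qhe.\sk)$. (iii) Use $\qhe.\sk$ to decrypt $\ct_a$ to obtain $a$, then use $(\qhe.\sk, \SKE.\sk)$ to double-decrypt $\ct'$ to obtain $b$. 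At this point the adversary holds $(a, b, \SKE.\sk)$, which is a complete classical description of $C$.

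Two correctness facts need to be checked carefully. First, since the $\Eval$ procedure of $\OTP(C)$ is a general quantum circuit, we need a QFHE scheme that supports homomorphic evaluation of arbitrary polynomial-size (in fact, polynomial-depth) quantum circuits with negligible correctness error and classical ciphertexts for classical plaintexts; both requirements are met by the schemes of \cite{brakerski2018quantum, mahadev2020classical} instantiated from LWE (as noted in \Cref{sec:prelim_qfhe}). Second, we must observe that the one-time nature of $\OTP(C)$ is not an obstacle: the entire attack uses only a single (homomorphic) invocation of $\Eval$, after which all remaining steps are purely classical post-processing of $\ct'$, $\ct_a$, and the auxiliary obfuscation $\tildeP$. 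Composing the correctness errors of $\qhe$, $\OTP$, $\SKE$, and $\CC.\obf$ yields overall success probability $1 - \negl(\lambda)$.

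The main conceptual obstacle is arguing that the homomorphic evaluation actually produces a ciphertext of $\SKE.\Enc(\SKE.\sk, b; r)$ rather than some other branch of $C$. This is where the hardwired trapdoor $x = a$ in $C$ is used: because the plaintext under $\ct_a$ is exactly $a$, QFHE correctness (applied to the classical branching computed inside $C$) guarantees that the triggered branch is the one that outputs $\SKE.\Enc(\SKE.\sk, b; r)$, even though $a$ remains hidden under the QFHE layer throughout. Once this is established, the $\CC$ program serves as a ``selective opening'' that reveals the two secret keys precisely on the single ciphertext that decrypts to $b$, and reconstruction of $(a, b, \SKE.\sk)$ is immediate.
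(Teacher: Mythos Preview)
Your proposal is correct and follows essentially the same approach as the paper: encrypt the one-time program state under $\qhe$, homomorphically evaluate it on $\ct_a$ to obtain a QFHE ciphertext of $\SKE.\Enc(\SKE.\sk,b;r)$, feed this to $\tildeP$ to recover $(\SKE.\sk,\qhe.\sk)$, and then decrypt $\ct_a$ and the doubly-encrypted ciphertext to recover $(a,b)$. Your additional care in justifying the QFHE properties used (classical ciphertexts for classical plaintexts, support for polynomial-depth circuits) and in noting that only a single homomorphic invocation of $\Eval$ is needed matches what the paper relies on implicitly.
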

\begin{proof}
Given any OTP $\ket{\psi_C}$ of the circuit $C$ together with the auxiliary information $\aux$. A QPT adversary can perform the following attack:
\begin{enumerate}
    \item Encrypt the program: $\ct_{\ket{\psi_C}} \gets \qhe.\Enc(\qhe.\pk, \ket{\psi_C})$ using the QFHE public key $\qhe.\pk$ given in $\aux$.

    \item Homomorphically evaluate the program on the input $\ct_a = \qhe.\Enc(\qhe.\pk, a)$ from $\aux$, with respect to a universal quantum circuit $U$, to obtain an outcome $\ct_{\SKE.\Enc(b)}$:
    $$ \ct_{\SKE.\Enc(b)} :=  \qhe.\Enc(\qhe.\pk, \SKE.\Enc(\SKE.\sk, b; r_b)) \gets  \\
   \qhe.\Eval(\qhe.\pk, U, \ct_{\ket{\psi_C}}, \ct_a). $$
for some random $r_b$. 

The above evaluation holds due to the correctness of OTP scheme and the QFHE scheme: when one evaluates $U(\ket{\psi_C},a)$ honestly, then one obtains $\SKE.\Enc(\SKE.\sk, b; r_b)$ for some random classical string $r_b$.
Therefore, by the correctness of QFHE, we obtain the 
a QFHE ciphertext $\qhe.\Enc(\qhe.\pk, \SKE.\Enc(\SKE.\sk, b; r_b))$. This evaluation procedure is randomized and the original OTP state $\ket{\psi_C}$ gets destroyed during the procedure.

 Since the message $\SKE.\Enc(b;r_b)$ is classical, the ciphertext $\ct_{\SKE.\Enc(b)}$ under QFHE is classical by the property of the QFHE we use. 
 
    \item Evaluate the compute-and-compare obfuscation program $\tildeP$ on input $\ct_{\SKE.\Enc(b)}$. 

    Note that by the correctness of the compute-and-compare obfuscation program, the input $\ct_{\SKE.\Enc(b)}$ satisfies that $f(\ct_{\SKE.\Enc(b)}) = \SKE.\Dec(\SKE.\sk, \qhe.\Dec(\qhe.\sk, \ct_{\SKE.\Enc(b)}))  = b$.

    Therefore, one will obtain the information $(\SKE.\sk, \qhe.\sk)$.
    
    \item Now one can first decrypt the QFHE ciphertext $\ct_a = \qhe.\Enc(\qhe.\pk, a)$ to obtain $a$ and the doubly-encrypted ciphertext $\ct_{\SKE.\Enc(b)}$ to obtain $b$, with keys $\qhe.\sk, \SKE.\sk$.

    Given the above information, one can fully reconstruct the circuit $C$ together with all the auxiliary information in $\aux$ perfectly (note that the information in $\aux$ are classical and can be copied and kept in the first place).
\end{enumerate}

\end{proof}

\begin{remark}
    \label{remark:learnability_plain_model}

    Note that the above construction actually shows a stronger statement than an infeasibility result of OTP: it lets a QPT adversary recover the entire circuit perfectly, which obviously allows it to violate the OTP security.
    But if we only need the adversary to output two input-output pairs, storing $\SKE.\sk$ as the secret message in the compute-and-compare program suffices.
\end{remark}

\begin{claim}
 \label{claim:oracle_unlearnable_impossibility}   
Assuming the post-quantum security of LWE,the above circuit with auxiliary information $(C, \aux)$ satisfies single-query unlearnability  (\Cref{def:single-query-unlearnable}, for both physical and effective queries).  
\end{claim}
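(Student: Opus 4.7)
The plan is to prove single-query unlearnability via a sequence of hybrids that progressively strip the information-theoretic dependence of the adversary's view on the secret values $a$ and $b$, eventually reducing the task to forging an $\SKE$ ciphertext with only a single oracle query. Throughout, the adversary has only single (physical or effective) quantum oracle access to $C$ and only the auxiliary string $\aux = (\ct_a, \tildeP, \qhe.\pk)$; crucially, the description of $C$ (and in particular the hardcoded $\SKE.\sk$) is never handed over in the clear.

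First I would invoke the distributional indistinguishability of the compute-and-compare obfuscator $\CC.\obf$ to replace $\tildeP$ by the simulator output $\Sim(1^\lambda, P.\param)$, which depends neither on $b$ nor on $(\SKE.\sk, \qhe.\sk)$. To apply this, I have to argue that the underlying distribution is unpredictable in the sense of \Cref{def:cc_unpredictable_dist}: the predictability adversary receives $f = \SKE.\Dec(\SKE.\sk, \qhe.\Dec(\qhe.\sk, \cdot))$ and the ``outer'' auxiliary input $(\ct_a, \qhe.\pk)$, but \emph{not} query access to $C$, and must guess $b$. Since $b$ is sampled independently of $f, \ct_a, \qhe.\pk$, this probability is at most $2^{-n}$. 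The distinguishing experiment itself, which does include oracle access to $C$, is then ruled out by CC-obfuscation indistinguishability.

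Second, I would apply the semantic security of $\qhe$ to switch $\ct_a = \qhe.\Enc(\qhe.\pk, a)$ to $\qhe.\Enc(\qhe.\pk, 0^n)$. After the previous step the QFHE secret key $\qhe.\sk$ appears nowhere in the adversary's view: it was only hardcoded inside $\tildeP$, which is now simulated, and the oracle $C$ does not depend on $\qhe.\sk$. Hence $a$ is information-theoretically hidden from everything the adversary sees prior to its single query. Third, I would use the BBBV theorem (\Cref{thm:bbbv97_oraclechange}) to modify the $x=a$ branch of $C$ to also return $\SKE.\Enc(\SKE.\sk, a; r)$, thereby removing any dependence on $b$ from the oracle. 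Because $a$ is uniformly random and independent of the adversary's pre-query state, the expected squared amplitude of the single query on the specific string $a$ is at most $2^{-n}$, so by BBBV the final state shifts in trace distance by at most $2^{-n/2}$, which is negligible.

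In the final hybrid the oracle is a plain $\SKE$ encryption oracle under $\SKE.\sk$ that samples its randomness $r$ internally, and the adversary must output two distinct tuples $(x_i, r_i, y_i)$ with $y_i = \SKE.\Enc(\SKE.\sk, x_i; r_i)$ after a single quantum query. I would then reduce this to the post-quantum pseudorandomness of $\prf$ underlying $\SKE$: guessing a valid ciphertext on any pair $(x, r)$ whose randomness $r$ was not ``used'' by the one allowed query amounts to predicting $\prf(\SKE.\sk, r)$, which is negligible by one-query PRF security. The main obstacle in the argument is this last step, since one has to rule out the cheap attack in which the adversary re-uses the randomness returned (explicitly or implicitly) by its single query; threading this needle requires being precise about the query interface of the randomized oracle (what the oracle reveals about $r$) and about how the challenger verifies the two output tuples, and it is where a careful parametrization of $\SKE$ and of the single-query model is most delicate.
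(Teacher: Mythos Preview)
Your first hybrid has a genuine gap. To invoke compute-and-compare distributional indistinguishability (\Cref{def: cc obf}), the reduction playing the CC distinguisher must simulate the entire single-query unlearnability experiment for $\calA$, and in particular must answer $\calA$'s oracle query to $C$. But $C$ hardcodes $b$ (on the branch $x=a$), and $b$ is precisely the CC lock value sampled by the challenger; the reduction does not know it and hence cannot implement the oracle. You cannot fix this by folding oracle access (or a description of $C$) into the CC auxiliary input $\aux$: the unpredictability adversary in \Cref{def:cc_unpredictable_dist} then receives that $\aux$ \emph{together with} the function description $f=\SKE.\Dec(\SKE.\sk,\qhe.\Dec(\qhe.\sk,\cdot))$, which exposes both secret keys, so it can decrypt $\ct_a$ to recover $a$, query the oracle at $(a,r)$, and decrypt the response to obtain $b$---the distribution is no longer unpredictable. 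Your sentence ``the distinguishing experiment itself, which does include oracle access to $C$, is then ruled out by CC-obfuscation indistinguishability'' is exactly the step that does not go through: the CC indistinguishability conclusion only holds for distinguishers receiving $(\ccobf(P),\aux)$ with the $\aux$ for which unpredictability was established, and that $\aux$ cannot carry the oracle.

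The paper breaks this circular dependency by an \emph{inductive} argument over query indices rather than a linear sequence of hybrids. At step $k$, the inductive hypothesis gives negligible cumulative weight on $a$ in queries $1,\dots,k-1$; BBBV then lets one replace $\cO_C$ by the oracle $\cO_C'$ that always outputs $\SKE.\Enc(\SKE.\sk,x;r)$ (now independent of $a$ and $b$), and \emph{only then}---with the oracle simulable by the reduction---are CC-obfuscation and QFHE semantic security invoked to strip $\tildeP$ and $\ct_a$. With $a$ now uniform and independent of the adversary's state, the $k$-th query's weight on $a$ is negligible, closing the induction. This interleaving of BBBV with the computational reductions is the missing idea; your ordering (CC-obf, then QFHE, then a single BBBV step) cannot be made to work without it. As a secondary point, the single-effective-query case allows polynomially many physical queries, so a one-shot BBBV step would not suffice there anyway; the induction handles both the physical and effective query cases uniformly.
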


\begin{proof}
We will prove through a sequence of hybrids to show that the oracle is indistinguishable from a regular SKE functionality, which is single-query unlearnable. 

    The proof is similar to the proof for Claim 47 in  \cite{ananth2020secure}, with some modifications. We directly use the \cite{BBBV1997} argument instead of adversary method. 

        Let $\ket{\phi_i}$ be the state of the adversary after the $i$-th query to the oracle $\cO_C$ (quantum black-box access to the circuit $C$), i.e. $\ket{\phi_i} = U_i \cO_C \cdots \cO_C U_2 \cO_C U_1 \ket{\phi_0}$, where $\ket{\phi_0}$ is the initial adversary state.
    We first make the following claim:

    \begin{claim}
        Assuming the post-quantum security of LWE, the sum of squared amplitudes of query on strings starting with $a$: $\sum_{i}^T W_a(\ket{\phi_i}) \leq \negl(\lambda)$, where $T$ is the total number of steps.
    \end{claim}

    \begin{proof}
        
        We prove this by induction and the security properties of QFHE and $\ccobf$.

        \textbf{Base case}: before the adversary makes the first query, clearly $W_a(\ket{\phi_0})$ is negligible (in fact 0 here), we consider the following hybrids:

        \begin{enumerate}
            \item $H_0$: this is the original game where we give out auxiliary information $\aux = (\ct_a = \qhe.\Enc(\qhe.\pk, a); \\ \tildeP = \CC.\obf(\SKE.\Dec(\SKE.\sk,\qhe.\Dec(\qhe.\sk,\cdot)), b, (\SKE.\sk, \qhe.\sk)), \qhe.\pk)$.

            \item $H_1$: 
            reprogram the oracle $\cO_C$ to have the  functionality in \Cref{fig:hybrid1_oracle}.

        \begin{figure}[hpt]
    \centering
    \caption{$\cO_C'$ in $H_1$}
    \begin{mdframed}
\begin{description}

    \item Input: $(x \in \{0,1\}^n, r \in \{0,1\}^\ell)$

    \item Hardcoded:  $ \Enc.\sk$

        \item output $\SKE.\Enc(\Enc.\sk, x; r)$.

\end{description}
\end{mdframed}
\label{fig:hybrid1_oracle}
\end{figure}

            \item $H_2$: replace $\tildeP$ in the above $\aux$ with $\Sim(1^\lambda, P.{\sf param})$. 

            \item $H_3$: replace $\ct_a = \qhe.\Enc(\qhe.\pk, a)$ with $\ct_0 = \qhe.\Enc(\qhe.\pk, 0^n)$.
        \end{enumerate}

The adversary's state $\ket{\phi_0}$ should have negligible difference in terms of trace distance by \cite{BBBV97} (by plugging $T = 0$) in $H_0$ and $H_1$.

Let the projection $\Pi_a : =
(\ket{a}\bra{a}_{x} \otimes \mathbf{I}_{r,\adv} )$, where $x, r$ are the registers corresponding to the input $(x,r)$ to $\cO_C$ and $\adv$ represents the rest of registers in the adversary's state.

We can measure the adversary's first query by projecting the state $\ket{\phi_0}$ onto $\cO_c U_1 \Pi_a
 (\cO_c U_1)^\dagger$.
The adversary with state $\ket{\phi_0}$ should have negligible difference in query weight on $a$ for the first query in $H_1$ and $H_2$: since $b$ is sampled uniformly at random and for the adversarial state $U_1\ket{\phi_0}$, $b$ satisfies the unpredictable distribution property in \Cref{def:cc_unpredictable_dist}. 
Therefore, by the property of \Cref{def: cc obf}, any measurement in the adversary's behaviors in $H_1$ and $H_2$ should result in computationally indistinguishable outcomes. 

In more details, the reduction to the compute-and-compare security works as follows: since the oracle $\cO_C$ is now independent of $a,b$, the reduction can sample$\SKE, \qhe$ keys, prepare the oracle $\cO_C$ and sample its own $a$; it then receives the obfuscated compute-and-compare program $\tildeP$ (or the simulated program $\Sim(1^\lambda, 1^{|f|})$) from the challenger, where $b$ is uniformly random. If the measurement  $\cO_c U_1 \Pi_a
 (\cO_c U_1)^\dagger$ on adversary's first query  gives outcome 1, then output "real program", else output "simulated program".

 The query weights $H_2$ and $H_3$ should have negligible difference by the security of the QFHE. Since the program $\tildeP$ has been replaced with a simulated program, the $\qhe$ reduction can prepare the programs as well as the oracle $\cO_C$. It receives $\ct_a$ or $\ct_0$ from the challenger. If the measurement on the adversary's first query returns 1, then guess $\ct_a$, else guess $\ct_0$.

 The adversary's first query weight on $a$ in $H_3$ is negligible since now there is no information about $a$ anywhere in $\aux$ and $a$ is only a uniform random string in $\{0,1\}^n$. By the above arguments, the adversary's first query's weight on $a$ 
is negligible in the original game $H_0$.


        \textbf{Induction}: the above argument applies to the $k$-th query, if the sum of squared amplitudes over the first $(k-1)$ queries,  $\sum_{i}^{k-1} W_a(\ket{\phi_{i}})$, is negligible, then we can invoke the above arguments and show that $W_a(\ket{\phi_{k}})$ is negligible as well. 
    \end{proof}

Since we have shown that the total (squared) query weight on $a$, $\sum_i^T W_a(\ket{\phi_i})$ is negligible, we can replace the oracle $\cO_C$ for the entire game with the oracle $\cO_C'$ in the above $H_1$, i.e. \Cref{fig:hybrid1_oracle} and by \cite{BBBV97}, the trace distance between $\ket{\phi_T}$ using the original oracle $\cO_C$ and $\ket{\phi_T}$ using the oracle $\cO_C'$ in \Cref{fig:hybrid1_oracle} is negligible.  Now it remains to show that 
$\cO_C'$ together with $\aux$ is single-query unlearnable for any QPT adversary.

By similar hybrids as above, we can replace the information in $\aux$ with a dummy program and dummy ciphertext so that $\aux = (\ct_0 = \qhe.\Enc(\qhe.\pk, 0^n), \Sim(1^\lambda, 1^{|f|}), \qhe.\pk)$. The adversary's advantage in the unlearnability game should have negligible difference by similar hybrid arguments. 

Now recall that we instantiate $\SKE$ from $\prf$ using the textbook construction for IND-CPA SKE. We can then show that if there exists an adversary that violates the single-query unlearnability of a 
$\prf$ that maps $2\cdot |R|$-length inputs to $\vert m \vert$-length inputs 
(which is essentially a random oracle when accessed in the oracle model and thus it satisfies single-query unlearnability and high min-entropy outputs): the reduction can simulate the oracle $\cO_C'$ on query $x$ by querying a $\prf$ oracle on some random string $r_1$ of its own choice; the $\prf$ oracle will return $(r_2,\prf(\sk, r_1||r_2))$, where $r_2$ is the randomness chosen by the randomized $\prf$ oracle itself; the reduction then replies the adversary with $(r_1||r_2, \prf(\sk, r_1||r_2) \oplus m)$.
In the end, if the adversary wins by outputting two pairs $(m, r, \prf(\sk, r)\oplus m)$ and $(m', r', \prf(\sk, r')\oplus m')$, then the reduction outputs $(r, \prf(\sk, r))$ and $(r', \prf(\sk, r'))$ and would break the single-query unlearnability of the $\prf$ (see \Cref{remark:negl_unlearnable_example} and \Cref{sec:construction_plain_model}). 
\end{proof}

\subsection{Impossibility Result for Partially Randomized Functions in the Oracle Model}
\label{sec:impossibility_oracle_model}
In this section, we give an example of a function family that cannot be compiled into a one-time program under even the weakest definition of operational security, even in the classical oracle model. It is known that deterministic functions fall into this category, but this counterexample is not only randomized but also has high entropy in a very strong sense. It is however \textit{partially} deterministic: the (high) entropy is restricted to one half of the output and the other half is essentially deterministic, demonstrating that high entropy is not sufficient for a function to be one-time programmable.

Moreover, this example also demonstrates the following
\begin{enumerate}
    \item It separates a single-physical-query unlearnable function from single-effective-query unlearnable

    \item It is a single-physical-query unlearnable function that cannot be securely one-time programmed with respect to the classical-output simulation definition \Cref{def:single-query-classical-output-simulation-security}.
\end{enumerate}

Suppose $\mathsf{PRF}$ is a length-preserving pseudorandom function that is secure against adversaries who are allowed to make quantum superposition queries.\footnote{The GGM construction, for instance is secure against quantum superposition queries provided that the underlying PRG is quantum-secure~\cite{zhandry2012construct}.} 
Let $a$ be a uniformly random string in $\{0,1\}^n$. Let $k$ be a random PRF key.
Consider the function family $\calF_n = \{f_{a, k}\}_{a, k \in\{0,1\}^n}$, where $f_{a, k} : \{0,1\}^n \times \{0,1\}^n \rightarrow \{0,1\}^{2n}$ is defined as
\begin{align}\label{eqn:partially-deterministic}
    f_{a, k}(x; r) = 
    \begin{dcases}
        (a, \mathsf{PRF}_k(0 \| r))&\text{if } x=0,\\
        (k, \mathsf{PRF}_k(a \| r))&\text{if } x=a,\\
        (0, \mathsf{PRF}_k(x \| r))&\text{otherwise.}
    \end{dcases}
\end{align}
Also define an associated distribution $\calD$ over this function family such that $a \leftarrow \{0,1\}^n$ is chosen uniformly at random, and the key $k$ is sampled according to the PRF key-generation procedure.

First, we establish that this function family cannot be compiled into a one-time program even under the weak operational security definition.
\begin{lemma}\label{lem:partially-deterministic-weak-operational-impossibility}
$\calF$ cannot be compiled into a one-time program under the weak operational security definition.
\end{lemma}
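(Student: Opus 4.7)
The plan is to exhibit an explicit QPT adversary $\calA$ which, given $\OTP(f_{a,k})$, extracts both $a$ and $k$ with overwhelming probability, after which producing two valid input--randomness--output tuples is trivial. The key observation is that although $f_{a,k}$ outputs $\poly(\secpar)$ bits of high-entropy PRF material, the first $n$ bits of the output on inputs $x=0$ and $x=a$ are \emph{deterministic} in the internal randomness $r$. This determinism is exactly what gentle measurement can exploit.

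First I would convert $\Eval$ into a unitary $V_{\Eval}$ by the standard deferred-measurement trick, letting $\calA$ coherently apply it to an input register, an output register $\out = \out_1 \out_2$ (split as $n$ bits followed by the rest), the OTP state, and any required ancillas. On input $x=0$, correctness (\Cref{def:otp_correctness}) guarantees that measuring $\out$ yields a sample statistically $\negl(\secpar)$-close to $(a,\mathsf{PRF}_k(0\concat r))$ for a uniformly random $r$; in particular, a projective measurement of $\out_1$ alone returns $a$ with probability $1-\negl(\secpar)$. By the Gentle Measurement Lemma (\Cref{lem:gentle_measure}), the post-measurement joint state is within trace distance $\negl(\secpar)$ of the pre-measurement state, so $\calA$ can now apply $V_{\Eval}^\dagger$ together with the input to recover a state that is $\negl(\secpar)$-close to the original $\OTP(f_{a,k})$ tensored with clean ancillas. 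The adversary thus learns $a$ essentially for free.

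Next, $\calA$ repeats the same recipe with input $x=a$: it applies $V_{\Eval}$ with $a$ in the input register, measures $\out_1$ to obtain $k$ (again with probability $1-\negl(\secpar)$, since the first $n$ bits of $f_{a,k}(a;r)$ are deterministically $k$), and uncomputes. Because both measurements have outcomes that are statistically concentrated, a standard triangle-inequality / union bound over the two gentle measurements gives that $\calA$ outputs $(a,k)$ with probability $1-\negl(\secpar)$. At this point $\calA$ holds a full classical description of $f_{a,k}$, so it locally evaluates it on any two distinct inputs (say $x_1=0,r_1=0^n$ and $x_2=1,r_2=0^n$), obtains the corresponding outputs $y_1,y_2$, and outputs the tuples $((x_1,r_1,y_1),(x_2,r_2,y_2))$, which satisfy the winning condition of \Cref{def:weak-operational-security} with overwhelming probability.

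The only non-trivial step is the rigorous bookkeeping of the two gentle measurements and uncomputations: one must verify that the $\negl(\secpar)$ error from correctness propagates linearly through the two rounds of $V_{\Eval}$--measure--$V_{\Eval}^\dagger$ so that the state entering the second round is still within $\negl(\secpar)$ trace distance of a fresh $\OTP(f_{a,k})$, and that the final two local evaluations of $f_{a,k}$ by $\calA$ agree with the true function with probability $1-\negl(\secpar)$ (which is immediate once $(a,k)$ are known, since then $f_{a,k}$ is computed classically). The $\mathsf{PRF}$ component plays no role in the attack; it is present only to give the function family high min-entropy outputs and single-physical-query unlearnability, thereby witnessing the claimed separations announced before the lemma.
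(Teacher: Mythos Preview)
Your proposal is correct and follows essentially the same approach as the paper: exploit that the first $n$ bits of $f_{a,k}(0;\cdot)$ and $f_{a,k}(a;\cdot)$ are deterministic, use gentle measurement plus uncomputation to extract $a$ and then $k$, and finally output two valid tuples. Your write-up is in fact more careful than the paper's about the deferred-measurement purification, the error propagation through two rounds, and the explicit form of the output tuples required by \Cref{def:weak-operational-security}.
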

\begin{proof}
We will construct an adversary $\calA$ that is able to entirely learn the function given its one-time program. First, the $\calA$ runs the one-time program evaluatoin procedure on input $x = 0$, and measures the first $n$ bits of the output to obtain value $a$. Since the first $n$ bits of $f_{a,k}(0, \cdot)$ will always equal $a$, by gentle measurement lemma, this measurement by the adversary cannot disturb the one-time program state. Now, having learned the value of $a$, the advesrary $\calA$ uncomputes its query on $0$ to restore the intial state of the one-time program. Finally, the adversary makes a second query to the one-time program evaluation procedure on input $a$, and measures the first $n$ bits of the output to get the $\mathsf{PRF}$ key $k$. This reveals the entire function description to the adversary. In particular, the adversary can break the weakest operational security definition by computing two input-output pairs $(x_1, f_{a, k}(x_1))$ and $(x_2, f_{a, k}(x_2))$.
\end{proof}

Now consider the following notion of min-entropy for randomized functions $f$
\begin{align*}
        H_{\min}(f) = \min_{x, y} \log \frac{1}{p(y| x)},
\end{align*}
where $p(y|x) = \Pr_{r \leftarrow \calR}[f(x,r) = y]$.

\begin{claim}[High entropy]
    The randomized function family $\calF$ has high entropy for every input.
\end{claim}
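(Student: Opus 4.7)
The plan is to reduce the claim to the pseudorandomness of $\mathsf{PRF}_k$ on uniformly random inputs. My first observation is that for any fixed input $x$, the output $f_{a,k}(x;r)$ decomposes as $(c_x, \mathsf{PRF}_k(s_x \| r))$, where both the deterministic prefix $c_x \in \{a, k, 0\}$ and the PRF-input prefix $s_x \in \{0, a, x\}$ are constants that do not depend on $r$. Hence the only source of randomness in the output distribution comes from evaluating the PRF on $s_x \| r$ for uniformly random $r \in \{0,1\}^n$, and the probabilities $p(y \mid x)$ are nonzero only when $y$ has the form $(c_x, y_2)$, in which case $p(y \mid x) = \Pr_r[\mathsf{PRF}_k(s_x \| r) = y_2]$.

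Next I would argue that for each such $x$, the induced distribution on $y_2$ has high (computational) min-entropy. By quantum PRF security, the distribution $\{\mathsf{PRF}_k(s_x \| R)\}_{R \leftarrow \{0,1\}^n}$ (with $k$ sampled per the key-generation procedure) is computationally indistinguishable from $\{F(s_x \| R)\}_{R \leftarrow \{0,1\}^n}$ for a truly random function $F$. The latter is simply a uniformly random $n$-bit string since $F(s_x \| \cdot)$ is independent of $F$'s values on other prefixes, and a uniform string has min-entropy exactly $n$. Hence the PRF-induced distribution has HILL entropy at least $n - \negl(\secp)$, which I would use as the formalization of ``high entropy.''

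For a purely statistical bound, one can instead argue over the randomness of the PRF key: by the balls-and-bins / random-function analysis, with overwhelming probability over a uniformly random $F$ the function $r \mapsto F(s_x \| r)$ places at most $O(\log n / \log \log n)$ preimages in every bucket, so its pushforward has min-entropy at least $n - O(\log \log n)$. Invoking PRF security to replace $F$ by $\mathsf{PRF}_k$ on the (efficiently testable) event ``no bucket contains more than $\poly(n)$ preimages'' then shows this min-entropy bound also holds with overwhelming probability over $k \leftarrow \mathsf{PRF}.\keygen(1^n)$. Either way, since $c_x$ is deterministic, the total min-entropy of $f_{a,k}(x;R)$ is at least $n - O(\log n)$ for every input $x$.

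The only delicate point is reconciling the information-theoretic quantity $H_{\min}(f)$ defined in the excerpt with the fundamentally computational nature of PRF security; I expect this to be the main subtlety, and the proof should either (i) reinterpret $H_{\min}$ as HILL entropy throughout, or (ii) explicitly condition on the overwhelmingly likely event that $\mathsf{PRF}_k$ has no heavy bucket, so that the statistical bound $p(y \mid x) \leq \poly(n)/2^n$ holds uniformly. Either choice is compatible with the broader use of this example, since \Cref{lem:partially-deterministic-weak-operational-impossibility} only relies on $f$ having enough randomness that its impossibility is not trivially explained by low-entropy outputs.
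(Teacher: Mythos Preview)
The paper's own proof of this claim is literally empty: the \texttt{proof} environment contains no text. The only justification appears immediately after, as a one-sentence remark stating that the function is computationally indistinguishable from a function $f^*$ with $\min_x H_{\min}(f^*(x,\cdot)) = m$ (where $m$ is the PRF output length). In other words, the paper implicitly adopts exactly your option (i): interpret ``high entropy'' as computational (HILL) min-entropy inherited from PRF security.

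Your proposal is correct and considerably more careful than the paper's treatment. You correctly decompose the output as $(c_x, \mathsf{PRF}_k(s_x\|r))$ and isolate the entropy in the second coordinate; you correctly identify the tension between the information-theoretic definition of $H_{\min}$ and the computational nature of PRF security; and you offer two coherent resolutions. Option (i) is what the paper does (in a single line). Option (ii), the statistical balls-and-bins argument over the key randomness combined with PRF security on the efficiently-testable ``no heavy bucket'' event, is a genuinely different and more informative route---it yields an actual statistical min-entropy lower bound with overwhelming probability over $k$, rather than just a computational one. Either is adequate for the paper's purposes, since the claim is only used to argue that the impossibility in \Cref{lem:partially-deterministic-weak-operational-impossibility} is not explained away by trivially low-entropy outputs.
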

\begin{proof}
    
\end{proof}
If the PRF has output length $m$, this counterexample is insidtinguishable from a function $f^*$ that has high min-entropy for every input $x\in\{0,1\}^n$,
\begin{align*}
    \min_{x} H_{\min}(f^*(x, \cdot)) = m
\end{align*}

\begin{claim}[Single physical query unlearnable]\label{claim:partial-deterministic-spq-unlearnable}
The function family $\calF$ defined in \Cref{eqn:partially-deterministic} is unlearnable given a single physical query under the associated probability distribution $\calD$.
\end{claim}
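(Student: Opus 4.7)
The plan is to show that any QPT adversary making a single quantum query to $f_{a,k}$ outputs two valid input–output pairs only with negligible probability. I would proceed through three hybrids, closely following the template used in \Cref{claim:random_function_unlearnable-body}.

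First, I would replace $\PRF_k(\cdot)$ with a truly random function $H:\{0,1\}^{2n}\to \{0,1\}^n$ independent of $a$, obtaining an oracle $g_{a,H}$. Any gap in the adversary's success between $f_{a,k}$ and $g_{a,H}$ yields a PRF distinguisher making at most $O(1)$ quantum queries (one from $\calA$ plus the challenger's verification), so by the quantum-query security of $\PRF$ the gap is negligible. Next, I would reprogram the oracle at the single input $x=a$, replacing the branch $f_{a,k}(a;r)=(k,\PRF_k(a\|r))$ with the ``otherwise'' branch $(0,H(a\|r))$, yielding $\hat g_{a,H}$. To bound the indistinguishability, I invoke \Cref{thm:bbbv97_oraclechange} in the single-query regime: letting $\ket{\phi_1}=U_1\ket{\phi_0}$ be the query state, the trace distance between the adversary's final state under $g_{a,H}$ and under $\hat g_{a,H}$ is bounded by $2\sqrt{W_a(\ket{\phi_1})}$. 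Since $\ket{\phi_0}$ and $U_1$ are independent of $a$ and $a$ is uniform in $\{0,1\}^n$, $\bbE_a[W_a(\ket{\phi_1})]\le 2^{-n}$, so by Jensen's inequality the expected trace distance is $O(2^{-n/2})=\negl(\lambda)$. Absorbing this into the adversary's advantage, the success probability under $\hat g_{a,H}$ is within a negligible additive term of that under $g_{a,H}$.

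In the final hybrid, I would implement $H$ as a compressed random oracle in Zhandry's framework (\Cref{sec:compressedRO}), and view the reprogrammed oracle $\hat g_{a,H}$ as a composition that on input $(x,r)$ first reads $H(x\|r)$ and then packages it with either $a$ or $0$. Suppose the adversary outputs $(x_1,r_1,y_1),(x_2,r_2,y_2)$ with $(x_1,r_1)\ne (x_2,r_2)$ and $y_i$ agreeing with $\hat g_{a,H}(x_i;r_i)$. The second half of each $y_i$ equals $H(x_i\|r_i)$, so applying \Cref{lem:zhandry_lemma5} and \Cref{lemma:compressed-chaining-carry-body} (compressed oracle chaining) gives that, except with negligible probability, the measured compressed database $D$ of $H$ contains entries at both $x_1\|r_1$ and $x_2\|r_2$, which are distinct strings. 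However, after only a single quantum query to a compressed oracle, $|D|\le 1$ (this is a standard property of the $\Decomp$ step, analogous to \Cref{claim:single_entry_database}). This contradiction bounds the adversary's success probability by $\negl(\lambda)$, and combining the three hybrids yields the claim.

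The main obstacle is the careful single-query invocation of BBBV: the statement of \Cref{thm:bbbv97_oraclechange} is phrased for $T$ queries and relies on bounding the total query weight on reprogrammed points, so I would explicitly unfold the $T=1$ case to obtain the $2\sqrt{W_a}$ bound and then average over the random choice of $a$, rather than citing the multi-query version as a black box. A secondary subtlety is ruling out the attack that outputs two pairs of the form $(0,r_i,(a,y_i))$: here the two $y_i$'s correspond to two \emph{distinct} inputs $0\|r_1\ne 0\|r_2$ of $H$, so the compressed-oracle argument still applies uniformly regardless of which branch of $\hat g_{a,H}$ is exercised.
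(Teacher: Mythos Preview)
Your hybrid structure mirrors the paper's, but you have the first two steps in the wrong order, and that order matters here. In your first hybrid you swap $\PRF_k$ for a truly random $H$ and appeal to PRF security. The reduction to PRF security must simulate the adversary's oracle using only black-box access to $\PRF_k$ or $H$; it does not see the key $k$. But the branch $x=a$ of $f_{a,k}$ outputs $(k,\PRF_k(a\|r))$, and the reduction cannot produce that first coordinate. You also cannot define $g_{a,H}(a;r)=(k,H(a\|r))$, since once $\PRF_k$ is gone there is no meaningful $k$; and if you silently replace $k$ by a fresh random string you have already reprogrammed the function at $x=a$, which is the BBBV step you claim to do only afterwards. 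The paper avoids this by doing BBBV first---reprogram $x=a$ to output $(0,\PRF_k(a\|r))$ using the negligible query weight on the uniformly random $a$---and only then invokes PRF security. At that point the reduction knows $a$ and needs nothing but oracle access to the PRF/random function, so both the adversary's query and the challenger's verification can be simulated.

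Your third step is more explicit than the paper's (which simply appeals to single-physical-query unlearnability of random functions), and the core argument via \Cref{lem:zhandry_lemma5} is sound: one physical query to $\hat g_{a,H}$ is one query to $H$, hence the compressed database has at most one entry, so two valid pairs are impossible except negligibly. The invocation of the chaining lemma (\Cref{lemma:compressed-chaining-carry-body}) is a red herring here: there is no inner random oracle feeding into $H$'s input---$H$ is queried directly on $x\|r$---so Zhandry's lemma alone already gives what you need.
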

\begin{proof}
We show that for every non-uniform quantum-polynomial-time $\calA$,
\begin{align*}
    \Pr_{f \leftarrow \calF_n}[\mathsf{LearningGame}_{\calF, \calD}^{\calA} = 1] \leq \negl(n).
\end{align*}
We consider a sequence of hybrids. Let the first hybrid $\calH_1$ be the learning game. In the second hybrid $\calH_2$, the function family is now
\begin{align*}
    f_{a, k}(x; r) = 
    \begin{dcases}
        (a, \mathsf{PRF}_k(0 \| r))&\text{if } x=0,\\
        (0, \mathsf{PRF}_k(a \| r))&\text{if } x=a,\\
        (0, \mathsf{PRF}_k(x \| r))&\text{otherwise.}
    \end{dcases}
\end{align*}
Since the adversary $\calA$ gets to make a single quantum query to the oracle $O_{f(\cdot, \$)^{(1)}}$, and since $a \leftarrow \{0,1\}^n$ is sampled uniformly at random by the challenger, with overwhelming probability, the weight placed by this query on input $x = a$ must be negligible. Therefore, hybrids $\calH_1$ and $\calH_2$ are indistinguishable by \cite{BBBV97}.

Now consider a third hybrid $\calH_3$ where the PRF is replaced by a random oracle $H$.
\begin{align*}
    f_{a, k}(x; r) = 
    \begin{dcases}
        (a, H(0 \| r))&\text{if } x=0,\\
        (0, H(a \| r))&\text{if } x=a,\\
        (0, H(x \| r))&\text{otherwise.}
    \end{dcases}
\end{align*}
By the security of the PRF against quantum superposition query attacks, hybrids $\calH_2$ and $\calH_3$ are indistinguishable. Therefore, since the random oracle family is unlearnable under a single physical query, so is the function family $\calF$.
\end{proof}

Since we know that the function family cannot be compiled into a one-time program that satisfies weak operational security definition, by \Cref{lem:partially-deterministic-weak-operational-impossibility} and \Cref{claim:partial-deterministic-spq-unlearnable}, we now know this function family cannot be compiled into a one-time program that satisfies the single physical query classical output simlution-based definition.
\begin{corollary}
    $\calF$ cannot be compiled into a one-time program under the single physival query classical-output simulation-based definition.
\end{corollary}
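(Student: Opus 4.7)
The plan is to derive the corollary by chaining together the two previously established results about $\calF$ via the implications laid out in \Cref{sec:relations_defs}. Assume toward a contradiction that there exists some compiler $\OTP$ for $\calF$ satisfying \Cref{def:single-query-classical-output-simulation-security} (single physical query, classical-output, simulation-based security). The goal is to show this contradicts \Cref{lem:partially-deterministic-weak-operational-impossibility}.

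First, I would invoke the implication established earlier: \Cref{def:single-query-classical-output-simulation-security} implies strong operational security (\Cref{def:strong-operational-security}). Concretely, for any QPT adversary $\calA$ producing two valid classical input--output pairs $((x_1, r_1, y_1), (x_2, r_2, y_2))$ with $f(x_i, r_i) = y_i$, the simulation-based definition forces the existence of an efficient simulator $\Sim$ with single physical query access that matches $\calA$'s advantage up to a negligible term (in the learnability game $\LG^{\Sim}_{\calF,\cD}$).

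Next, I would combine this with the single physical query unlearnability of $\calF$ established in \Cref{claim:partial-deterministic-spq-unlearnable}. By unlearnability, any such $\Sim$ wins $\LG^{\Sim}_{\calF,\cD}$ only with negligible probability. Plugging this into the strong operational security inequality forces $\calA$'s success probability in outputting two valid input--output tuples to also be negligible. In other words, via \Cref{remark:weak_op_def_equivalent_strong_op}, strong operational security collapses to weak operational security (\Cref{def:weak-operational-security}) for $\negl$-unlearnable families, and $\calF$ is such a family.

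Finally, this is precisely a violation of \Cref{lem:partially-deterministic-weak-operational-impossibility}, which exhibits a concrete QPT adversary (evaluate on $x=0$, measure only the first $n$ bits to extract $a$ gently, uncompute, evaluate on $x=a$, extract $k$) that outputs two valid input--output pairs with overwhelming probability. The main (and essentially only) subtlety to be careful about is making sure the adversary from \Cref{lem:partially-deterministic-weak-operational-impossibility} is indeed a classical-output adversary of the form quantified over in \Cref{def:single-query-classical-output-simulation-security}; this is immediate since it outputs two classical tuples. Everything else is just pipelining previously established implications.
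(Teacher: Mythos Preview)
Your proposal is correct and follows essentially the same approach as the paper: combine \Cref{lem:partially-deterministic-weak-operational-impossibility} (no weak operational security) with \Cref{claim:partial-deterministic-spq-unlearnable} (single physical query unlearnability), and use the implication chain from \Cref{sec:relations_defs} (classical-output simulation $\Rightarrow$ strong operational $\Rightarrow$ weak operational for $\negl$-unlearnable families) to derive a contradiction. The paper states exactly this reasoning in the sentence preceding the corollary, without a separate proof block.
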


\begin{claim}[SEQ learnable]
    There is an adversary that, given single effective query access to $f_{\$, 1}$ succeeds in the learning game $\mathsf{LearningGame}_{\calF, \calD}$ with probability $1$.
\end{claim}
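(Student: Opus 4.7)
The plan is to exhibit an explicit two-physical-query adversary that wins the learning game with probability $1$ by crafting each query's output register so that the SEQ oracle's internal compressed database $\calH$ stays in $\ket{\emptyset}$; no ``effective'' query is ever recorded, and every physical query is answered. The trick exploits the fact that for $x\in\{0,a\}$, $f_{a,k}(x;\cdot)$ has a component of its $2n$-bit output (the first $n$ bits) that is constant in $r$.

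For the first query, split the output register $\calQ_u=\calQ_{u_1}\otimes\calQ_{u_2}$ into its two $n$-bit halves, and prepare
\[
  \ket{0}_{\calQ_x}\otimes\ket{0^n}_{\calQ_{u_1}}\otimes\ket{+}^{\otimes n}_{\calQ_{u_2}}\otimes\ket{0}_{\calQ_b}.
\]
Since $f(0;r)=(a,\mathsf{PRF}_k(0\|r))$ and $X^c\ket{+}^{\otimes n}=\ket{+}^{\otimes n}$ for every $c$, the step-3 isometry $V$ of $f_{\$,1}$ sends $\ket{q}\otimes\ket{0,r}\mapsto \ket{q'}\otimes\ket{0,r}$ with $\ket{q'}=\ket{0,a,+^{\otimes n},1}$ \emph{independent of $r$}. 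Linearly extending over the superposition $\CO(\ket{0,0}\otimes\ket{\emptyset})$ produced in step 2, $V$ factorizes on $\calQ\otimes(\calQ',\calH)$ as $(\ket{q'}\bra{q})\otimes I$, so the second $\CO$ in step 4 exactly undoes the first, returning $(\calQ',\calH)$ to $\ket{0,0}\otimes\ket{\emptyset}$. A computational-basis measurement of $\calQ_{u_1}$ then yields $a$ deterministically, without disturbing any other register and without recording anything in $\calH$.

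Because $\calH$ is still $\ket{\emptyset}$, a second physical query on any $x'\neq 0$ passes the step-1 database check. The adversary therefore repeats the construction with $\ket{a}_{\calQ_x}\otimes\ket{0^n}\otimes\ket{+}^{\otimes n}\otimes\ket{0}$; by the same Fourier invariance applied to $f(a;r')=(k,\mathsf{PRF}_k(a\|r'))$, the post-query state is $\ket{a,k,+^{\otimes n},1}_\calQ$ with $\calH=\ket{\emptyset}$, and measuring $\calQ_{u_1}$ yields $k$ with probability $1$. Now in possession of the plaintext description $(a,k)$ of $f_{a,k}$, the adversary picks any two distinct $(x_i,r_i)$ and outputs $(x_i,r_i,f_{a,k}(x_i,r_i))$ by direct classical computation, winning the learning game with probability $1$.

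The main step to verify is the ``database stays empty'' claim: whenever $V(\ket{q}\otimes\ket{x,r})=\ket{q'}\otimes\ket{x,r}$ is independent of $r$ on the relevant branch, one has $\CO\circ V\circ\CO\,(\ket{q}\otimes\ket{0,0}\otimes\ket{\emptyset}) = \ket{q'}\otimes\ket{0,0}\otimes\ket{\emptyset}$. This boils down to the self-inverseness of $\CO$ together with the invariance $X^c\ket{+}^{\otimes n}=\ket{+}^{\otimes n}$, which is the only real content of the argument; everything else is routine bookkeeping.
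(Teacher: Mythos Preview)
Your proposal is correct and follows the same high-level strategy as the paper: exploit that the first $n$ output bits of $f_{a,k}(0;\cdot)$ and $f_{a,k}(a;\cdot)$ are deterministic (equal to $a$ and $k$ respectively), so they can be learned without the SEQ oracle recording an effective query. The paper's proof is a one-line pointer to the OTP attack of Lemma~\ref{lem:partially-deterministic-weak-operational-impossibility}, which phrases the same idea as ``evaluate on $0$, measure the first $n$ output bits (gentle), uncompute, then evaluate on $a$.'' Your implementation replaces the evaluate--measure--uncompute sequence by preparing the second output half in $\ket{+}^{\otimes n}$, which absorbs the $r$-dependent part of $f$ so that the step-3 isometry acts as $(\ket{q'}\!\bra{q})\otimes I$ on $\calQ'\times\calH$ and the two $\CO$ applications cancel exactly. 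This is the Fourier-dual of the paper's uncompute step and buys you a slightly cleaner bookkeeping: you get a two-physical-query attack and a one-line verification that $\calH$ stays in $\ket{\emptyset}$, without having to check separately that the SEQ unitary is self-inverse on the relevant branch. Either way, the content is the same.
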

\begin{proof}
    The proof of \Cref{lem:partially-deterministic-weak-operational-impossibility} also shows that the function is learnable in the SEQ model.
\end{proof}
This trivially implies that the function family is one-time programmable in the SEQ model.
\begin{corollary}
    The $\OTP$ construction in \Cref{sec:construction} gives a one-time compiler for $\calF$ in the classical oracle model, under the single effective query simulation-based definition.
\end{corollary}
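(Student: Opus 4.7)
The plan is to invoke Theorem~\ref{thm:oracle-construction-is-secure} directly. That theorem states that the $\OTP$ compiler from \Cref{fig:otp-construction} achieves single-effective-query simulation-based security (\Cref{def:simulation-style-otp-security}) for every classical randomized function $f:\calX \times \calR \to \{0,1\}^{\ell}$ with $\calX = \{0,1\}^m$. Each function $f_{a,k}\in\calF_n$ from \Cref{eqn:partially-deterministic} has the form $\{0,1\}^n\times\{0,1\}^n \to \{0,1\}^{2n}$, so it fits this syntactic template with $m=n$ and $\ell=2n$. Thus applying $\generate(1^\lambda, f_{a,k})$ produces an OTP together with a simulator $\Sim^{f_{\$,1}}$ that satisfies the simulation-security bound for this $f_{a,k}$.

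The only point to check is that the randomness space is large enough to make the $\OTP$ construction well-defined (the sizes of the subspaces $A_i\subseteq \F_2^\lambda$ and the range of $G:\{0,1\}^{m\lambda}\to\calR$ are both chosen in terms of $\lambda$ and $m$, independently of the function family), which holds since $\calR = \{0,1\}^n$ has exponential size. Correctness follows from \Cref{thm:oracle-construction-is-correct}, and SEQ simulation-security from \Cref{thm:oracle-construction-is-secure}, both applied to the specific $f_{a,k}$.

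I would then add a short \emph{Remark} emphasizing that this corollary should be read as one of the definitional separations advertised in \Cref{par:impossibility_oracle}: although $\OTP(f_{a,k})$ is SEQ-simulation-secure, the preceding claim shows that $f_{a,k}$ can be fully reconstructed from SEQ access, so the simulator $\Sim^{f_{\$,1}}$ can in fact learn $(a,k)$ outright. Consequently the security guarantee is formally satisfied but operationally vacuous, which is exactly what is needed to witness the separation from, e.g., \Cref{def:weak-operational-security} (which is ruled out by \Cref{lem:partially-deterministic-weak-operational-impossibility}) and from \Cref{def:single-query-classical-output-simulation-security}.

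There is no real obstacle beyond bookkeeping: the hard work is done inside \Cref{thm:oracle-construction-is-secure}. The one pitfall to be careful about is not overclaiming what this corollary delivers: it does not say $\calF$ is meaningfully one-time programmable, only that the single-effective-query simulation definition is satisfied in the weak sense illustrated by the counterexample.
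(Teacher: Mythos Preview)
Your proposal is correct and matches the paper's approach: the corollary follows immediately from \Cref{thm:oracle-construction-is-secure}, which establishes SEQ simulation security of the \Cref{sec:construction} compiler for \emph{every} function $f$, so nothing beyond checking that $f_{a,k}$ has the right syntactic shape is needed. The paper gives no separate proof for the corollary beyond the one-line remark that SEQ-learnability ``trivially implies'' one-time programmability in the SEQ model, and your added remark about operational vacuousness is precisely the separation point the section is making.
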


\section{Applications}
\label{sec:applications}

\subsection{Signature Tokens}
\paragraph{Motivation and Comparison to \cite{ben2023quantum}}

In this section we briefly discuss how to generate one-time tokens for the 
Fiat-Shamir signature schemes, by embedding our construction in \Cref{sec:construction} in to the plain signature scheme. One might wonder what benefit this gives us over the signature token in \cite{ben2023quantum}, where the signatures are simply measured subspace vectors corresponding to the messages.

One unsatisfactory property of \cite{ben2023quantum} is that it doesn't satisfy the regular existential unforgeability of signatures. If we give a "signing oracle" to the adversary, then it can trivially break the one-time security: given sufficiently many subspace vectors, one can recover the signing key. A more idealized notion would be allowing the adversary to query a signing oracle (as well as a signature token), but not enabling it to produce two signatures where neither is in the queried set.

The other unsatisfactory part of the \cite{ben2023quantum} signature scheme is that one has to use subspace vectors as signatures and hard to integrate other properties of a signature scheme we may want (e.g. a short signature scheme). 
More importantly, a corporation may have 
been using a plain signature scheme for a long time but when they occasionally need to delegate a one-time signing key to some external third-party, they have to change their entire cooperation's verification scheme into the subspace signature token scheme, which can result in more cost and inconvenience. 
Therefore, one interesting question is: Can we build a generic way to upgrade an existing signature scheme to be one-time secure such that the verification algorithm?


The advantage of the signature schemes below over \cite{ben2023quantum} is that they preserve the original signature scheme's properties. 
In particular, the verification algorithm is almost identical to the original verification algorithm of the signature scheme being compiled; the signature tokens produce signatures from the original scheme on messages of the form $m\concat r$ for some $r$. 
Thus, the verifier can use the original verification procedure and ignore the latter half of the signed message.



\paragraph{Blind Unforgeability}
We describe here how to compile signature schemes satisfying a certain notion of unforgeability with quantum query access into signature tokens (see \Cref{sec:token-sig-defs} for definitions). The notion we require is a slight variant on blind unforgeability.

\begin{definition}[Quantum Blind Unforgeability~\cite{EC:AMRS20}]\label{def:blind-unforge}
    A signature scheme $(\Gen, \Sign, \Verify)$ for message space $\calM$ is \emph{blind-unforgeable} if for every QPT adversary $\adv$ and blinding set $B\subset \calM$,
    \[
        \Pr\left[
                m\in B
                \land 
                \Verify(\vk, m, \sigma) = \mathsf{Accept} 
            :
            \begin{array}{c}
                (\sk, \vk) \gets \Gen(1^\secpar)
                \\
                (m, \sigma) \gets \adv^{\Sign_B(\sk, \cdot)}(\vk)
            \end{array}
        \right] = \negl(\lambda),
    \]
    where $\Sign_B(\sk, \cdot)$ denotes a (quantumly-accessible) signature oracle that signs messages $m$ using $\sk$ if $m\notin B$, and otherwise outputs $\bot$.
\end{definition}

This definition differs from the original in that the adversary may choose its blinding set $B$. \cite{EC:AMRS20} show that the hardness of this task is polynomially related to their original definition, which samples $B$ uniformly at random.

We note that any sub-exponentially secure signature scheme is blind-unforgeable\footnote{Here we need the scheme to be secure against subexponential time and queries, and we also need that the message space is smaller than this subexponential bound.}, since the adversary could simply query for all signatures in $\calM \backslash B$, then simulate the blind-unforgeability experiment. \cite{EC:AMRS20} also gives several other signature schemes which are blind-unforgeable (under the original definition).

\paragraph{Construction.}
Given a signing key $\sk$, the signer constructs a signature token by outputting a one-time program for the following functionality:

\begin{figure}[hpt]
    \centering
    \begin{mdframed}[
      linecolor=black,
      leftmargin =4em,
      rightmargin=4em,
      usetwoside=false,
    ]

    \textbf{Hardcoded:} A signing key $\sk$, two PRF key $k_1$ and $k_2$.
    \\
    On input $m$:
    \begin{enumerate}
        \item Sample randomness $r\gets \{0,1\}^\secpar$.
        \item Compute $r_1 = \PRF(k_1, m\concat r)$ and $r_2 = \PRF(k_2, m)$.
        \item Output $\Sign(\sk, m\concat r \concat r_1; r_2)$.
    \end{enumerate}
    \end{mdframed}
    \caption{Signature Token Functionality}
    \label{fig:sig_token_func}
\end{figure}

To sign a message $m$ using a signature token $T$, the temporary signer evaluates $T(m)$ and measures the output.

\begin{theorem}
    If $(\Gen, \Sign, \Verify)$ satisfies blind-unforgeability (\Cref{def:blind-unforge}), $\PRF$ is a psuedorandom function secure against quantum queries, and the one-time program satisfies SEQ simulation security (\Cref{def:simulation-style-otp-security}), then the above construction is one-time unforgeable (\Cref{def:token-unforge}).
\end{theorem}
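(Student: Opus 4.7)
The plan is to chain together the SEQ simulation security of the one-time program with the blind unforgeability of the underlying signature scheme, adapting the proof of \Cref{claim:SEQ-unlearnable-blind-unforge} from a single SEQ oracle to $n$ of them. First I would apply SEQ simulation security sequentially to each of the $n$ tokens via a standard hybrid, replacing the real tokens $\ket{T_1}, \dots, \ket{T_n}$ with simulators $\Sim_1, \dots, \Sim_n$, each having single-effective-query access to independent SEQ oracles $f_{\$,1}^{(i)}$ for the sampling functionality $f(m; r) = \Sign(\sk, m\concat r\concat \PRF(k_1, m\concat r); \PRF(k_2, m))$. The adversary's view is computationally indistinguishable from the real view. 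Next, by post-quantum PRF security (since $k_1, k_2$ are never exposed), replace $\PRF(k_1, \cdot)$ and $\PRF(k_2, \cdot)$ with truly random functions $G_1$ and $G_2$.

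Now define a message $M \in \calM$ to be \emph{well-formed} if $M = m\concat r \concat G_1(m\concat r)$ for some $(m, r)$, and let $B \subseteq \calM$ be the complement of the set of well-formed messages. Categorize each alleged forgery $(m_i^*, \sigma_i^*)$ output by $\adv$ into three types: (a) well-formed, and with $(m, r)$ effectively recorded in some SEQ oracle's internal compressed database $H_i$; (b) well-formed, but with $(m, r)$ not recorded in any $H_i$; and (c) not well-formed, i.e., $m_i^* \in B$. By \Cref{claim:single_entry_database}, each $H_i$ records at most one entry after all queries, so at most $n$ forgeries can be of type (a). For type (c), I would reduce directly to blind unforgeability of $(\Gen,\Sign,\Verify)$ with blinding set $B$: the reduction can simulate every SEQ oracle because answering an effective query only requires calling $\Sign$ on a well-formed message (which lies outside $B$, hence is permitted by $\Sign_B$), while any type (c) forgery lies inside $B$ and thus violates \Cref{def:blind-unforge}.

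The main technical challenge is bounding the probability of type (b). Here the idea is to show that, if any $\adv$ with non-negligible probability produces a well-formed $m_i^* = m\concat r\concat G_1(m\concat r)$ such that $(m, r)$ was never recorded by any $H_i$, then $\adv$ must have predicted $G_1$'s output on the fresh input $m\concat r$. This mirrors the expanding-function / compressed-chaining argument from the proof of \Cref{claim:SEQ-unlearnable-blind-unforge}, but carried out across $n$ independent SEQ oracles simultaneously. Concretely, I would implement the randomness generation $r = H_i(m)$ inside each SEQ oracle as a small-range distribution via a composition $G_{e,i} \circ G_{s,i}$ of expanding and shrinking random functions (indistinguishable by the small-range technique of \cite{zhandry2012construct}), view $G_1$ itself as a compressed oracle, and invoke the chaining lemma (\Cref{lemma:compressed-chaining-carry-body}) together with the one-entry invariant of each $H_i$ to show that any well-formed output $m_i^*$ whose inner $(m,r)$ is not in any $H_i$ forces a corresponding entry to appear in $G_1$'s database without that entry propagating to any $H_i$, which \Cref{lemma:expanding-preimage-knowledge-body} and \Cref{lem:zhandry_lemma5} together show happens with only negligible probability.

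Combining the three cases: with overwhelming probability at most $n$ forgeries are of type (a), none are of type (b), and any type (c) forgery breaks blind unforgeability. Hence $\adv$ can output $n+1$ valid pairs only with negligible probability, proving tokenized unforgeability. The step I expect to be most delicate is formalizing the chaining across the $n$ independent compressed databases $H_1,\dots,H_n$ and $G_1$, since simultaneous application of the chaining lemma and the small-range substitution has to be done carefully enough to preserve the one-entry invariant per SEQ oracle while still ruling out ``cross-oracle'' prediction of $G_1$ values. The rest of the argument is, modulo bookkeeping, a direct lift of \Cref{claim:SEQ-unlearnable-blind-unforge}.
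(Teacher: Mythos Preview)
Your proposal is correct and follows the same two-phase skeleton as the paper's proof: first use blind unforgeability (with blinding set the non-well-formed messages) to confine every accepted forgery to the form $m\concat r\concat G_1(m\concat r)$, then use the SEQ/compressed-oracle machinery to argue that $n$ tokens yield at most $n$ such well-formed messages.

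Two points of comparison are worth making. First, the paper's proof is written for a single token and in its second phase simply invokes \Cref{claim:random_function_unlearnable-body}, whereas you carry out the general $n$-token argument and unpack the chaining machinery that claim relies on. Your explicit treatment is actually useful here: the reduction to SEQ-unlearnability of a random function is not entirely direct, since the adversary's SEQ access is to the signing functionality rather than to $G_1$ itself, and one needs exactly the observation you make (combine the $H_i$ into a single indexed oracle, apply \Cref{lemma:compressed-chaining-carry-body} and \Cref{lem:zhandry_lemma5} to $G_1$) to close the gap. Second, the small-range substitution $G_{e,i}\circ G_{s,i}$ you import from the proof of \Cref{claim:SEQ-unlearnable-blind-unforge} is unnecessary in this setting. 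There, it was needed because the randomness $r$ was not visible in the output and one had to make images of $G_e$ rare to apply blind unforgeability; here the well-formed structure already exposes $G_1(m\concat r)$ in the signed message, so \Cref{lem:zhandry_lemma5} applied directly to a compressed $G_1$, followed by chaining into the union of the $H_i$ databases (each with at most one entry by \Cref{claim:single_entry_database}), caps the number of well-formed forgeries at $n$ without any small-range step. Dropping it will simplify the delicate part you flagged considerably.
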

\begin{proof}
    We first show that any QPT adversary can only sign messages of the form $m\concat r\concat \PRF(k_1, r)$ for some $r$. Consider the following hybrid experiments:
    \begin{itemize}
        \item $\Hyb_0$ is the one-time unforgeability game.
        \item $\Hyb_1$ is the same as $\Hyb_0$, except that $r_2$ is replaced by true randomness, instead of being a $\PRF$ evaluation.
        \item $\Hyb_2$ is the same as $\Hyb_1$, except that the SEQ oracle does not have the signing key $\sk$ hard-coded. Instead, it queries $m\concat r \concat r_1$ to an external signing oracle.
    \end{itemize}
    $\Hyb_1$ is computationally indistinguishable from $\Hyb_1$ by the psuedorandomness of $\PRF$. $\Hyb_2$ is identical to $\Hyb_1$ in the view of the adversary. Note that in $\Hyb_2$, the SEQ oracle \emph{only} submits queries of the form $m\concat r\concat \PRF(k_1, r)$ for some $r$. By the blind-unforgeability of the signature scheme, no adversary can produce signatures on messages not of this form in $\Hyb_2$. Since $\Hyb_2 \approx \Hyb_0$, this also holds in the original one-time unforgeability game.

    Finally, consider the hybrid experiment where $r_1$ is replaced by $G(r)$, where $G$ is a random function. This is indistinguishable from the one-time unforgeability game by the pseudorandomness of $\PRF$. By the previous claim, any adversary producing two signatures must have signed $m_1\concat r_1 \concat G(m\concat r_1)$ and $m_2\concat r_1 \concat G(m\concat r_1)$. If $m_1 \neq m_2$, then this contradicts the SEQ-unlearnability of random functions (\Cref{claim:random_function_unlearnable-body}).
\end{proof}

\subsection{One-Time NIZK Proofs}
\label{sec:onetime_proof}

\def\prk{\mathsf{prk}}
\def\vk{\mathsf{vk}}

A second application of one-time programs is to a notion of one-time non-interactive zero-knowledge (ZK) proofs that we define and construct. Here, a proving authority (say, a government) publishes a verification key $\vk$ and delegates to its subsidiaries the ability to certify (prove) a limited number of statements on its behalf by giving them a one-time proving token $\prk$. A prover in possession of $\prk$, an NP statement $x$ and its witness $w$, can generate a proof $\pi$ that anyone can verify against $\vk$. Importantly, he can only do so for a single valid statement-witness pair. Thus, we have the following tuple of algorithms:

\begin{itemize}
\item \textsf{Setup}: produces a master secret key $\msk$ together with a verification key/common reference string $\crs$. 
\item \textsf{Delegate}: on input $\msk$, produces a one-time proving token $\rho$.
\item \textsf{Prove}: on input $\rho$ and a statement-witness pair $(x,w)$, produces a proof $\pi$.
\item \textsf{Verify}: on input $x,\pi$ and $\vk$, outputs accept or reject.
\end{itemize}

Note that all objects here are classical except for the proving token $\rho$ which is a quantum state. In addition to the usual properties of completeness, soundness and zero knowledge, we require that the proving token is one-time use only. 

We construct a one-time proof token, following  constructions of one-time PRFs in the plain model.  The proving token consists of a sequence of $n = |x|$ many subspace states corresponding to subspaces $A_1,\ldots,A_n$ together with the obfuscation of a program that contains a PRF key $K$ together with the $A_i$; takes as input $x,w$ and $n$ vectors $v_1,\ldots,v_n$; checks that $(x,w) \in R_L$, and that each $v_i \in A_i$ if $x_i = 0$ and $v_i \in A_i^{\perp}$ if $x_i = 1$. If all checks pass, output $\mathsf{PRF}_K(x,v_1,\ldots,v_n)$; otherwise output $\bot$ \footnote{We may also obtain a construction from random oracle based NIZK such as Fiat Shamir, but it would require the use of classical oracles while using iO and PRF gives a plain model result.}.

The security of the one-time proof follows from similar arguments of the security for one-time PRF in  the plain model \Cref{sec:construction_io}.
We give a more formal description of the scheme below. 

\paragraph{One-time NIZK Security Definition.}
The one-time NIZK scheme first needs to satisfy the usual NIZK soundness and zero knowledge property. We omit these standard definitions here and refer to \cite{sahai2014use} Section 
5.5 for details.

We then define a very natural one-time security through the following game, as a special case of the \Cref{def:generalized_operational_otp_security_game} for NIZK proofs:
\begin{enumerate}
    \item The challenger samples $\crs \gets \setup(1^\lambda)$ and prepares the program $\OTP$ for the $\prove$ functionality. $\adv$ gets $\crs$ and a copy of the one-time program $\OTP$. 

    \item $\adv$ outputs two instance-proof pairs (or instance-randomness-proof tuple, for a relaxed notion)  $(x_1, \pi_1), (x_2, \pi_2)$ , which satisfies $x_1 \neq x_2$ or $\pi_1 \neq \pi_2$, otherwise $\calA$ loses.

    \item Challenger checks if $\Verify(x_1, \pi_1) = 1$ and
    $\Verify(x_2, \pi_2) = 1$. output 1 if and only both are satisfied.
    \end{enumerate}
We say that a one-time sampling program for NIZK satisfies  security if for any QPT adversary $\adv$, there exists a negligible function $\negl(\cdot)$, such that for all $\lambda \in \N$, the following holds:
\begin{align*}
    \Pr\left[\adv \text{ wins the above game }\right] \leq  \negl(\lambda).
\end{align*}

We defer the regular soundness and zero-knowledge definition of NIZK tp \Cref{sec:appendix_prelims}.

\paragraph{Construction} 
Let $F$ be a 
constrainable PRF that takes inputs of $\ell$ bits and outputs $\lambda$ bits. Let $f(\cdot)$ be a PRG. Let $L$ be a language and $R(\cdot, \cdot)$ be a relation that takes in an instance and a witness. Our system
will allow proofs of instances of $\ell$ bits and witness of $\ell'$ bits. The values of bounds given
by the values $\ell$ and $\ell'$
can be specified at setup, although we suppress that notation here.

For simplicity of presentation, we omit the need of a second PRF used to extract randomness in \Cref{sec:construction_plain_model} since it is only used to extract full entropy and will not affect the security proof.

\begin{description}
    \item $\setup(1^\lambda):$ 
    The setup algorithm in our case generates a common reference string $\crs$ along with a one-time proof token.
    \begin{enumerate}
        \item The setup algorithm first chooses a puncturable PRF key $K$ for $F$. Next, it creates an obfuscation of the $\Verify$ NIZK of \Cref{fig:program_OTP_verify}. 
        The size of the program is padded to be the maximum of itself and the program we define later in the security game. 

        \item It samples $n$ independent subspaces $\{\ket{A_i}\}_{i \in [n]}$. 
        It creates an obfuscation of the program Prove NIZK of \Cref{fig:program_OTP_proof}. 
        The size of the program is padded to be the maximum of itself and the program we define later in the security game.
    \end{enumerate}

The common reference string $\crs$ consists of the two obfuscated programs, and the master secret key is $\msk = \{A_i\}_{i \in [n]}$, the classical descriptions of the subspaces.

\item $\mathsf{Delegate}$: takes in a master secret key $\msk =  \{A_i\}_{i \in [n]}$ and outputs $\{\ket{A_i}\}_{i \in [n]}$ as the one-time proof token.

\item $\prove(\crs, (x, w), \{\ket{A_i}\}_{i \in [n]}):$ 
The NIZK prove algorithm runs the obfuscated program of $\prove$ from
CRS on inputs $(x, w)$ and subspace states $\ket{A_i}, i \in [n]$ in the following way: 
Apply QFT to each $\ket{A_i}$ if $x_i = 1$, else apply identity operator. Run the program in \Cref{fig:program_OTP_proof} on input $(x,w)$ and the modified subspace states.
If $R(x, w)$ holds the program returns a proof $\pi = (r, F(K, x\Vert r))$.

\begin{figure}[hpt]
    \centering
    \begin{mdframed}[
      linecolor=black,
      leftmargin =4em,
      rightmargin=4em,
      usetwoside=false,
    ]
        
        Hardcoded: $K, \{\shO_i^0,\shO_i^1\}_{i \in [n]}$.
        
        On input $((x,w) \in \{0,1\}^{n+m}, u = u_1 || u_2 || \cdots || u_n \in \{0,1\}^{n\cdot \lambda})$ (where each $u_i \in 
        \mathbb{F}_2^\lambda$):
        
        \begin{enumerate}
            \item If $(x,w) \in R_L$ and for all $i \in [n]$, $\shO_i^{x_i}(u_i) = 1$, where $x_i$ is the $i$-th bit of $x$:
            
                
                \quad Output $(r = u, F(k, x || r))$.
            \item Else:
            
                \quad Output $\bot$
        \end{enumerate}
    \end{mdframed}
    \caption{Program NIZK $ \prove_{\OTP}$}
    \label{fig:program_OTP_proof}
\end{figure}

\item $\Verify(x, \pi, \crs)$:
Run the input $(x, \pi)$ into the obfuscated program \Cref{fig:program_OTP_verify} and output the program's output.

\begin{figure}[hpt]
    \centering
    \begin{mdframed}[
      linecolor=black,
      leftmargin =4em,
      rightmargin=4em,
      usetwoside=false,
    ]
    
    Hardcoded: $K, f,  \{\shO_i^0,\shO_i^1\}_{i \in [n]}$.
    
    On input $(x \in \{0,1\}^{n}, \pi)$:
    
    \begin{enumerate}
        \item  Parse $\pi := (r, y)$ 
        
        \item If for all $i \in [n]$, $\shO_i^{x_i}(r_i) = 1$, where $x_i$ is the $i$-th bit of $x$: proceed to step 3; else output 0.
            
        \item Check if $f(y) =  f(F(K, x\Vert r)))$: if yes, output $1$; if no, output 0.
    \end{enumerate}
    \end{mdframed}
    \caption{Program NIZK $\Verify_{\OTP}$}
    \label{fig:program_OTP_verify}
\end{figure}

\end{description}

We then prove the following statement:
\begin{theorem}
\label{thm:one_time_nizk}   
Assuming LWE and subexponentially secure iO, there exists one-time NIZK proofs satisfying the above definition.
\end{theorem}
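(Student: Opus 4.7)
The plan is to verify four properties of the scheme: completeness, soundness, zero-knowledge, and one-time security. Completeness follows directly from the construction: an honest prover applies $H^{\otimes n}$ or the identity to $\ket{A_i}$ according to $x_i$, measuring vectors $v_i$ that pass the $\shO_i^{x_i}$ checks with overwhelming probability; then $R(x,w)$ holds and the program outputs $\pi = (r, F(K, x\Vert r))$ which satisfies $f(y) = f(F(K, x\Vert r))$ trivially.

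For soundness, I would follow the Sahai--Waters template. Suppose an efficient adversary produces $(x,\pi=(r,y))$ with $x\notin L$ such that $\Verify$ accepts. Since the $\shO$-check forces $r$ to decompose into valid subspace vectors, the only way for $\Verify$ to accept is $f(y)=f(F(K,x\Vert r))$. Puncture $K$ at the point $x\Vert r$ in the obfuscated $\Verify$ program (using iO and the functionality equivalence that lets us hardwire $F(K,x\Vert r)$), then invoke PRF pseudorandomness at the punctured point to replace $F(K, x\Vert r)$ with uniform randomness $z$. The PRG's stretch ensures that a random $z$ has a preimage under $f$ only with probability $2^{-\lambda}$, contradicting the assumption that $\Verify$ accepts. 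For zero-knowledge, I would use a standard iO-based simulation trapdoor: the $\setup$ additionally places a hidden trigger branch in $\Verify$ (e.g., a statistically hiding commitment that the simulator can open), so that a simulator without a witness can produce an accepting $\pi$ via this branch. Hybrids argue that the hidden branch is computationally invisible to adversaries (using iO to remove/add it and the underlying hardness assumptions for the trigger).

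For one-time security, the proof proceeds by a sequence of hybrids analogous to the one-time PRF proof in Section 6. Let $C_A(x,r)$ be the circuit that checks $r$ decomposes as $(r_1,\ldots,r_n)$ with $r_i\in A_i^{x_i}$. First, using iO, replace $K$ in both $\prove_{\OTP}$ and $\Verify_{\OTP}$ with the constrained key $K_{C_A}$; functional equivalence holds because both programs already gate their $F$-evaluation on the $\shO_i^{x_i}(r_i)$ checks. Now suppose the adversary outputs two accepting tuples $(x_1,\pi_1=(r_1,y_1))$ and $(x_2,\pi_2=(r_2,y_2))$. Since $\Verify$ accepts, in each coordinate $i\in[n]$ we must have $r_{j,i}\in A_i^{x_{j,i}}$. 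If $x_1\neq x_2$, pick any index $i^*$ where they differ; then $r_{1,i^*}$ and $r_{2,i^*}$ lie in $A_{i^*}$ and $A_{i^*}^\perp$ respectively, breaking case~(1) of the computational direct-product hardness (Theorem~3.14 / coset-state version). If $x_1=x_2$ but $r_1\neq r_2$, some coordinate satisfies $r_{1,i^*}\neq r_{2,i^*}$ with both in $A_{i^*}^{x_{1,i^*}}$, breaking case~(2). The final case $x_1=x_2,r_1=r_2$ forces $y_1\neq y_2$ but then injectivity of the PRG $f$ yields $y_1=F(K,x_1\Vert r_1)=y_2$, a contradiction.

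The main obstacle is that $\Verify$ is publicly computable, so unlike the one-time PRF setting we cannot rely on pseudorandomness of $F$'s outputs to conclude security: the adversary locally checks its own candidate proofs. This forces us to route the whole argument through direct-product hardness on the coset states (which must use the \emph{coset-state} computational direct-product hardness of Section~6.1.3, not just the information-theoretic oracle version), together with careful use of iO to switch to the constrained key without changing the functionality of \emph{both} $\prove_{\OTP}$ and $\Verify_{\OTP}$. Additional care will be needed to ensure that the hidden-trigger modification used for zero-knowledge interacts cleanly with the constraints in $C_A$ --- the trigger branch must be enabled in the ZK hybrids but verifiably unreachable in the one-time security hybrids (using the subexponential iO assumption to absorb the hybrid of activating/deactivating the trigger on all trigger-less inputs).
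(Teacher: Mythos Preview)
Your one-time security argument is essentially what the paper does (and, if anything, is stated more carefully than the paper's own sketch). Your zero-knowledge plan, however, is far more elaborate than necessary: here the simulator generates the CRS itself, so it knows the PRF key $K$ and the subspaces, and can therefore output $\pi = (r, F(K, x\Vert r))$ for an $r$ of its own choice with exactly the honest-prover distribution. No hidden trigger branch is needed, and nothing needs to be threaded through the one-time security hybrids.

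The real gap is in your soundness argument, and it is precisely the reason the theorem needs \emph{subexponential} iO. You write ``puncture $K$ at the point $x\Vert r$'', but $r$ is part of the adversary's proof and is chosen \emph{adaptively after seeing the CRS}. Unlike plain Sahai--Waters, where the single fixed statement $x^*\notin L$ determines a single puncture point, here there are $2^{n\lambda/2}$ candidate values of $r$ (all $(u_1,\dots,u_n)$ with $u_i\in A_i^{x^*_i}$), and you cannot guess in advance which one the adversary will use. The paper's proof handles this by enumerating these points one at a time: hybrid $(2.j.1)$ punctures $K$ at $(x^*\Vert r_j)$ inside $\Verify_{\OTP}$ and hardwires $y^*=F(K,x^*\Vert r_j)$; hybrid $(2.j.2)$ replaces $y^*$ by a uniform value via punctured pseudorandomness; hybrid $(2.j.3)$ replaces $f(y^*)$ by a random element of the PRG range, so that with overwhelming probability no $y$ makes $\Verify$ accept on $(x^*,r_j)$; then hybrid $(2.j{+}1.1)$ increments $j$. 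After $t=2^{n\lambda/2}$ such blocks, $\Verify$ rejects all $(x^*,\pi)$. The exponentially many hybrids are exactly why subexponential security of iO (and of the PRF and PRG) is invoked. Your single-point puncturing misses this, and without it the soundness reduction does not go through.

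A minor point on one-time security: your final case ($x_1=x_2$, $r_1=r_2$, $y_1\neq y_2$) appeals to ``injectivity of the PRG $f$,'' which is not a standard assumption; at best this is finding a collision in $f$. One can sidestep this either by assuming $f$ injective (harmless since it expands) or by noting that the interesting content of the game is the $x_1\neq x_2$ or $r_1\neq r_2$ case, both of which are handled by direct-product hardness as you outline.
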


\paragraph{One-Time Security}
The correctness and one-time security proof follows relatively straightforward in a similar way as the proof for the PRF construction \Cref{sec:construction_plain_model}. 

We first replace the PRF $F$'s kardcoded key $K$ in both programs with a key $K^*$ that is constrained on a circuit to evaluate only on $(x,r)$ such that $r$ are subspace vectors corresponding to $x$. By the property of iO, this change is indistinguishable.  Then by the computational direct product hardness of the subspace states, we can argue that the adversary can only provide two proofs such that one of $r_1$ and $r_2$ are not valid subspace vectors, which will break the constrained pseudorandomness of the constrained PRF.

\paragraph{Soundness} The regular soundness of the construction is inspired by the proof for Theorem 9 in \cite{sahai2014use}.
but we need to use subexponentially secure iO and slightly more complicated hybrids. 
In hybrid 1, for some instance $x^* \notin L$, we constrain the PRF $F$ key $K$ used in the program \Cref{fig:program_OTP_proof} to a constrained key $K^*$ that evaluates on inputs $(x \Vert r)$ such that $x \neq x^*$. Since $x^* \notin L$, the functionality of the program $\prove_{\OTP}$ is unchanged and we can invoke iO. The next few steps deviate from \cite{sahai2014use}.

We design hybrids $(2.j.1)$ for $i = 1, 2\cdots,t, t = 2^{\lambda\cdot n/2}$: for all vectors $u_1 \in A^{x_1}, \cdots, u_n  \in A^{x_n}$, we make a lexigraphical order on them and call the $j$-th vector $\mathbf{u}_j$.  In hybrid $(2.j.1)$, we modify the program $\Verify_{\OTP}$ into the following. 

Note that the key $K_{x^*, r_j}$ is a punctured/constrained key that does not evaluate at $(x^*\vert r_j)$.

\begin{figure}[hpt]
\centering
\begin{mdframed}[
  linecolor=black,
  leftmargin =4em,
  rightmargin=4em,
  usetwoside=false,
]

    Hardcoded: $K_{x^*,r_j}, f,  \{\shO_i^0,\shO_i^1\}_{i \in [n]}, y^* = F(k,x^*\Vert r_j)$.
    
    On input $(x \in \{0,1\}^{n}, \pi)$:
    
    \begin{enumerate}
        \item  Parse $\pi := (r, y)$ 
        
        \item If for all $i \in [n]$, $\shO_i^{x_i}(r_i) = 1$, where $x_i$ is the $i$-th bit of $x$: proceed to step 3; else output 0.
        
        \item If $x = x^*$ and $r < r_j$: output 0.
        
        \item Else if $x = x^*$ and $r = r_j$: 
        
            \quad check if $f(y) = f(y^*)$: if yes, output $1$; if no, output 0.
            
        \item Else if $x \neq x^*$ or $r>r_j$:
        
          \quad Check if $f(y) =  f(F(K_{x^*, r_j}, x\Vert r)))$: if yes, output $1$; if no, output 0.
    \end{enumerate}
    \end{mdframed}
    \caption{Program NIZK $\Verify_{\OTP}$ in hybrid $2.j.1$. Note that the key $K_{x^*, r_j}$ is a punctured/constrained key that does not evaluate at $(x^*\vert r_j)$}
    \label{fig:program_OTP_verify_hybrid_j}
\end{figure}

Note that the functionality of the program is essentially the same as in the original program in \Cref{fig:program_OTP_verify} and therefore we can invoke the security of iO. After hybrid $(2.j.1)$, we introduce the next hybrid $(2.j.2)$.
In the next hybrid $(2.j.2)$, we will replace $y^* = F(K, x^*\Vert r_j)$ hardcoded in the program with a random value. This follows from the pseudorandomness at punctured/constrained values. In the following hybrid $(2.j.3)$, we replace 
the value $f(y^*)$ with a random value from the range of $f$. This is indistinguishable by the property of PRG $f$. Now with overwhelming probability, when the input is $(x^*, \pi = (r_j, y))$, there exists no value $y$ that will make the program output 1.

After hybrid $(2.j.3)$, we move to the next hybrid $(2.j+1.1)$, where 
$\Verify_{\OTP}$ is the same program as in \Cref{fig:program_OTP_verify_hybrid_j} but  we increment the counter $j$ to $j+1$.
By the above observations, the hybrid 
$(2.j.3)$ is statistically indistinguishable from the next hybrid $2.j+1.1$,  and now for all inputs where $x = x^*, r<r_{j+1}$, the program outputs 0.

Finally, after hybrid $2.t.3$, we obtain a $\Verify$ program that outputs all $0$ for inputs $(x, \pi = (r,y))$ where $x = x^*, r \in A^{x_1}, \cdots, A^{x_n}$. The adversary's advantage in getting an acceptance on instance $x^*$ is 0.

\paragraph{Zero Knowledge} The zero-knowledge proof follows exactly from \cite{sahai2014use}. A simulator $S$ that on input $x$, runs the setup algorithm and  outputs the corresponding $\crs$ 
along with $\pi = F(K, x\Vert r)$ for some $r$ of its own choice since it samples the subspaces on its own. The simulator has the exact same distribution as any real prover’s algorithm
for any $x \in L$ and witness $w$ where $R(x, w) = 1$.

\subsection{Future Work: One-Time MPC}
Our one-time sampling program has a definition in between a (deterministic) one-time memory and a fully random one-time memory, where the latter says that the receiver's input is  uniformly random and it does not get to choose which message it receives (similar discussions in \cite{bartusek2023secure}).
In this light, we hope that our one-time program may be extended to a one-time 2PC/MPC where each party has some choice over their inputs, but the rest of the inputs are random. We leave this question for future work.

\subsection{One-time programs for Verifiable Functions Imply Quantum Money}
In this section, we show that our definition of one-time (sampling) program for 
publicly verifiable functions implies public -key quantum money.

\begin{definition}[Public-key quantum money]
    A quantum money scheme consists of a pair of quantum polynomial-time algorithms $(\Gen, \Ver)$ with the following syntax, correctness and security specifications.
    \begin{itemize}
        \item Syntax: The generation procedure $\Gen(1^\lambda)$ takes as input a security parameter $\lambda$ and outputs a classical serial number $\sigma$ and a quantum state $\ket{\psi}$. The verification algorithm $\Ver(\sigma, \ket{\psi})$ outputs an accept/reject bit, $0$ or $1$.
        \item Correctness: There exists a negligible function $\negl$ such that
        \begin{align*}
            \Pr[1 \leftarrow \Ver(\Gen(1^\lambda))] \ge 1 - \negl(\lambda).
        \end{align*}
        \item Security: For all quantum polynomial-time algorithms $A$, there exists a negligible functions $\negl$ such that $A$ wins the following security game with probability $\negl(\lambda)$:
        \begin{itemize}
            \item The challenger runs $(\sigma, \ket{\psi}) \leftarrow \Gen(1^\lambda)$, and give $\sigma, \ket{\psi}$ to $A$.
            \item $A$ produces a (potentially entangled) join state $\rho_{1, 2}$ over two registers $\rho_1$ and $\rho_2$. $A$ sends $\rho_{1,2}$ to the challenger.
            \item The challenger runs $b_1 \leftarrow \Ver(\sigma, \rho_1)$ and $b_2 \leftarrow \Ver(\sigma, \rho_2)$. The adversary $A$ wins if and only if $b_1 = b_2 = 1$.
        \end{itemize}
    \end{itemize}
\end{definition}

\begin{theorem}
    Let $\calF$ be a verifiable randomized function family. Suppose there exists a one-time program scheme that satisfies the operational one-time security notion for verifiable functions (\Cref{def:verifiable-operational-security}). Then, a public-key quantum money scheme exists.
\end{theorem}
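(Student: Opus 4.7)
The plan is to use the one-time program state itself as the banknote and the public verification key of the underlying verifiable family as the serial number, and then rely on the operational security of $\OTP$ to rule out double-spending.

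First, I would define the quantum money scheme as follows. On input $1^\lambda$, $\Gen$ samples $(f, vk_f) \gets \calF_\lambda$ and outputs the serial number $\sigma := vk_f$ together with the banknote $\ket{\psi_f} := \OTP(f)$ (note that the description of $f$ is deliberately \emph{not} part of the serial number, since otherwise anyone could freely re-generate $\OTP(f)$). The verification algorithm $\Ver(\sigma, \rho)$ samples a uniform challenge $x \getsr \calX$, runs $\Eval$ coherently on $\rho$ and $\ket{x}$ to produce $y$ in an auxiliary register, applies the (classical, hence reversible) predicate $\Ver_\calF(vk_f, x, \cdot)$ into a single-qubit flag register, measures the flag, accepts iff the outcome is $1$, and finally uncomputes the evaluation. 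Correctness follows from correctness of $\OTP$, from the verifiability of $\calF$, and from the gentle measurement lemma (\Cref{lem:gentle_measure}): honest banknotes pass verification with overwhelming probability and are returned essentially undisturbed.

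For unforgeability, suppose a QPT counterfeiter $\calA$ takes $(vk_f, \OTP(f))$ and outputs a joint state $\rho_{1,2}$ such that $\Pr[b_1 = b_2 = 1] \geq \epsilon$ for some non-negligible $\epsilon$. I would build a reduction $\calB$ against the operational security definition \Cref{def:verifiable-operational-security} as follows: $\calB$ receives $(\OTP(f), vk_f)$, runs $\calA$ to get $\rho_{1,2}$, samples $x_1, x_2 \getsr \calX$ independently, coherently runs $\Eval$ on each register with challenge $x_i$ and measures the output register to obtain $y_i$, and outputs $((x_1, y_1), (x_2, y_2))$. By the same analysis as $\Ver$, with probability at least $\epsilon - \negl(\lambda)$ both pairs satisfy $\Ver_\calF(vk_f, x_i, y_i) = 1$. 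Since $x_1, x_2$ are independent uniform draws, the event $x_1 = x_2$ has probability $1/|\calX|$, which is negligible whenever $|\calX|$ is superpolynomial; if the family has only polynomial-size input domain, one can pre-compose $f$ with a hash-like function or simply assume without loss of generality that $|\calX|$ is large (since operational OTP security for verifiable families is trivially broken otherwise). Thus $\calB$ wins the experiment of \Cref{def:verifiable-operational-security} with non-negligible probability, a contradiction.

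The main obstacle I foresee is making the verification procedure sufficiently non-destructive that many copies of the banknote would remain verifiable after one honest check, which is what the definition of quantum money implicitly requires for usability, and ensuring that the extraction of two distinct valid $(x_i, y_i)$ pairs from $\rho_{1,2}$ really does reduce to a single invocation of the one-time operational security game. The first concern is handled by gentle measurement since the verification flag is $1$ with overwhelming probability on genuine notes, and the second is handled by the independence of the challenges $x_1, x_2$ together with the size of $\calX$. A subtle point worth double-checking is that the adversary's cheating probability in the money game is not automatically dominated by the OTP adversary's advantage when $\epsilon$ is small; this is fine because both games only require a non-negligible (not overwhelming) winning probability, and the reduction loses only a $1 - 1/|\calX|$ factor in $\epsilon$.
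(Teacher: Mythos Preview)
Your proposal is correct and follows essentially the same approach as the paper: use $vk_f$ as the serial number, $\OTP(f)$ as the banknote, verify by sampling a random challenge $x$, coherently evaluating, and gently measuring the verification bit; then reduce a successful counterfeiter to an adversary against \Cref{def:verifiable-operational-security} by running verification on both registers of $\rho_{1,2}$. In fact you are more careful than the paper's own proof in two places: you explicitly address the $x_1 \neq x_2$ requirement (the paper glosses over this), and you spell out the uncomputation step needed for non-destructive verification.
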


\begin{proof}
Let $\calF$ be a verifiable randomized function family, and let $\OTP = (\Generate_{\OTP}, \Evaluate_{\OTP})$ be a one-time program scheme that satisfies the operational one-time security notion for verifiable functions. Then, we can construct a quantum money scheme $\mathsf{QMoney} = (\Gen_{\QM}, \Ver_{\QM})$ as follows:
\begin{itemize}
    \item $\Gen_{\QM}(1^\lambda)$: Sample a function along with its verification key $f, vk_f \leftarrow \calF_\lambda$, and generate its one-time program $\ket{\psi} \leftarrow \OTP(f)$. Output classical serial number $\sigma = vk_f$ and the money state $\ket{\psi} = \OTP(f)$.
    \item $\Ver_{\QM}(\sigma, \rho)$: Sample a random $x \leftarrow \calX$ and initialize an input register $X$ to be $\ket{x}$. Coherently run (without performing any measurement) $\Evaluate_{\OTP}(\rho, x)$ to get the output in register $Y$. Coherently run $\Ver_{\calF}(vk_f, X, Y)$ on the $X$ and $Y$ registers and measure the resulting accept/reject bit. Accept if the verification passes and reject otherwise.
\end{itemize}

The correctness of this quantum money scheme follows from the correctness of the one-time program $\OTP(f)$, the verification procedure $\Ver_{\calF}$ as well as the gentle measurement lemma \Cref{lem:gentle_measure}. 

We now prove security. Suppose for contradiction there exists an adversary $A$ that wins the quantum money security game with non-negligible probability, so that it outputs a joint state $\rho_{1, 2}$ on two registers $\rho_1$ and $\rho_2$ such that both verification checks pass. Then, there exists an adversary $B$ that breaks the one-time program security of $\OTP(f)$: given $\OTP(f)$ and $vk_f$ as auxiliary information, $B$ runs $A$ on the same input, $A(\OTP(f), vk_f)$. Suppose that $A$ outputs (a joint state on) two registers $\rho_1, \rho_2$. Then, since the verification procedure verifies by sampling a random input to test the functionality on each state, running the verification procedure on both these registers results in two samples $(x_1, y_1)$ and $(x_2, y_2)$ with non-negligible probability, breaking the one-time program security of $\OTP$.
\end{proof}

\fi

\bibliographystyle{alpha}
\bibliography{abbrev3,crypto,references}

\newcommand{\etalchar}[1]{$^{#1}$}
\begin{thebibliography}{BBBV97b}

\bibitem[Aar04]{aaronson2004limitations}
Scott Aaronson.
\newblock Limitations of quantum advice and one-way communication.
\newblock In {\em Proceedings. 19th IEEE Annual Conference on Computational
  Complexity, 2004.}, pages 320--332. IEEE, 2004.

\bibitem[ABDS20]{alagic2020impossibility}
Gorjan Alagic, Zvika Brakerski, Yfke Dulek, and Christian Schaffner.
\newblock Impossibility of quantum virtual black-box obfuscation of classical
  circuits, 2020.

\bibitem[AC12]{aaronson2012quantum}
Scott Aaronson and Paul Christiano.
\newblock Quantum money from hidden subspaces.
\newblock In {\em Proceedings of the forty-fourth annual ACM symposium on
  Theory of computing}, pages 41--60. ACM, 2012.

\bibitem[ACE{\etalchar{+}}22]{EC:ACEGMPRT22}
Ghada Almashaqbeh, Ran Canetti, Yaniv Erlich, Jonathan Gershoni, Tal Malkin,
  Itsik Pe'er, Anna Roitburd-Berman, and Eran Tromer.
\newblock Unclonable polymers and their cryptographic applications.
\newblock In Orr Dunkelman and Stefan Dziembowski, editors, {\em
  EUROCRYPT~2022, Part~I}, volume 13275 of {\em {LNCS}}, pages 759--789.
  Springer, Cham, May~/~June 2022.

\bibitem[AMRS20]{EC:AMRS20}
Gorjan Alagic, Christian Majenz, Alexander Russell, and Fang Song.
\newblock Quantum-access-secure message authentication via
  blind-unforgeability.
\newblock In Anne Canteaut and Yuval Ishai, editors, {\em EUROCRYPT~2020,
  Part~III}, volume 12107 of {\em {LNCS}}, pages 788--817. Springer, Cham, May
  2020.

\bibitem[AP21]{ananth2020secure}
Prabhanjan Ananth and Rolando L.~La Placa.
\newblock Secure software leasing.
\newblock Springer-Verlag, 2021.

\bibitem[BBBV97a]{BBBV97}
Charles~H Bennett, Ethan Bernstein, Gilles Brassard, and Umesh Vazirani.
\newblock Strengths and weaknesses of quantum computing.
\newblock {\em SIAM journal on Computing}, 26(5):1510--1523, 1997.

\bibitem[BBBV97b]{BBBV1997}
Charles~H. Bennett, Ethan Bernstein, Gilles Brassard, and Umesh Vazirani.
\newblock Strengths and weaknesses of quantum computing.
\newblock {\em SIAM Journal on Computing}, 26(5):1510–1523, Oct 1997.

\bibitem[BDS23]{ben2023quantum}
Shalev Ben-David and Or~Sattath.
\newblock Quantum tokens for digital signatures.
\newblock {\em Quantum}, 7:901, 2023.

\bibitem[BGI{\etalchar{+}}01]{barak2001possibility}
Boaz Barak, Oded Goldreich, Rusell Impagliazzo, Steven Rudich, Amit Sahai,
  Salil Vadhan, and Ke~Yang.
\newblock On the (im) possibility of obfuscating programs.
\newblock In {\em Annual International Cryptology Conference}, pages 1--18.
  Springer, 2001.

\bibitem[BGS13]{broadbent2013quantum}
Anne Broadbent, Gus Gutoski, and Douglas Stebila.
\newblock Quantum one-time programs.
\newblock In {\em Annual Cryptology Conference}, pages 344--360. Springer,
  2013.

\bibitem[BJ15]{broadbent2015quantum}
Anne Broadbent and Stacey Jeffery.
\newblock Quantum homomorphic encryption for circuits of low t-gate complexity.
\newblock In {\em Annual Cryptology Conference}, pages 609--629. Springer,
  2015.

\bibitem[BKNY23]{bartusek2023obfuscation}
James Bartusek, Fuyuki Kitagawa, Ryo Nishimaki, and Takashi Yamakawa.
\newblock Obfuscation of pseudo-deterministic quantum circuits.
\newblock In {\em Proceedings of the 55th Annual ACM Symposium on Theory of
  Computing}, pages 1567--1578, 2023.

\bibitem[BKS23]{bartusek2023secure}
James Bartusek, Dakshita Khurana, and Akshayaram Srinivasan.
\newblock Secure computation with shared epr pairs (or: How to teleport in
  zero-knowledge).
\newblock In {\em Annual International Cryptology Conference}, pages 224--257.
  Springer, 2023.

\bibitem[BKW17]{boneh2017constrained}
Dan Boneh, Sam Kim, and David~J Wu.
\newblock Constrained keys for invertible pseudorandom functions.
\newblock In {\em Theory of Cryptography Conference}, pages 237--263. Springer,
  2017.

\bibitem[BLW17]{boneh2017constraining}
Dan Boneh, Kevin Lewi, and David~J Wu.
\newblock Constraining pseudorandom functions privately.
\newblock In {\em IACR International Workshop on Public Key Cryptography},
  pages 494--524. Springer, 2017.

\bibitem[Bra18]{brakerski2018quantum}
Zvika Brakerski.
\newblock Quantum fhe (almost) as secure as classical.
\newblock In {\em Annual International Cryptology Conference}, pages 67--95.
  Springer, 2018.

\bibitem[BV15]{brakerski2015constrained}
Zvika Brakerski and Vinod Vaikuntanathan.
\newblock Constrained key-homomorphic prfs from standard lattice assumptions:
  Or: How to secretly embed a circuit in your prf.
\newblock In {\em Theory of Cryptography: 12th Theory of Cryptography
  Conference, TCC 2015, Warsaw, Poland, March 23-25, 2015, Proceedings, Part II
  12}, pages 1--30. Springer, 2015.

\bibitem[CGLZ19]{chung2019cryptography}
Kai-Min Chung, Marios Georgiou, Ching-Yi Lai, and Vassilis Zikas.
\newblock Cryptography with disposable backdoors.
\newblock {\em Cryptography}, 3(3):22, 2019.

\bibitem[CHV23]{chevalier2023semi}
C{\'e}line Chevalier, Paul Hermouet, and Quoc-Huy Vu.
\newblock Semi-quantum copy-protection and more.
\newblock In {\em Theory of Cryptography Conference}, pages 155--182. Springer,
  2023.

\bibitem[CLLZ21a]{coladangelo2021hidden}
Andrea Coladangelo, Jiahui Liu, Qipeng Liu, and Mark Zhandry.
\newblock Hidden cosets and applications to unclonable cryptography.
\newblock In {\em Advances in Cryptology--CRYPTO 2021: 41st Annual
  International Cryptology Conference, CRYPTO 2021, Virtual Event, August
  16--20, 2021, Proceedings, Part I 41}, pages 556--584. Springer, 2021.

\bibitem[CLLZ21b]{C:CLLZ21}
Andrea Coladangelo, Jiahui Liu, Qipeng Liu, and Mark Zhandry.
\newblock Hidden cosets and applications to unclonable cryptography.
\newblock In Tal Malkin and Chris Peikert, editors, {\em CRYPTO~2021, Part~I},
  volume 12825 of {\em {LNCS}}, pages 556--584, Virtual Event, August 2021.
  Springer, Cham.

\bibitem[DFM20]{DFM20}
Jelle Don, Serge Fehr, and Christian Majenz.
\newblock The measure-andreprogram technique 2.0: multi-round fiat-shamir and
  more.
\newblock In {\em Annual International Cryptology Conference}, pages 602--631,
  2020.

\bibitem[DFMS19]{don2019security}
Jelle Don, Serge Fehr, Christian Majenz, and Christian Schaffner.
\newblock Security of the fiat-shamir transformation in the quantum
  random-oracle model.
\newblock In {\em Advances in Cryptology--CRYPTO 2019: 39th Annual
  International Cryptology Conference, Santa Barbara, CA, USA, August 18--22,
  2019, Proceedings, Part II 39}, pages 356--383. Springer, 2019.

\bibitem[GG17]{TCC:GoyGoy17}
Rishab Goyal and Vipul Goyal.
\newblock Overcoming cryptographic impossibility results using blockchains.
\newblock In Yael Kalai and Leonid Reyzin, editors, {\em TCC~2017, Part~I},
  volume 10677 of {\em {LNCS}}, pages 529--561. Springer, Cham, November 2017.

\bibitem[GGH{\etalchar{+}}16]{garg2016candidate}
Sanjam Garg, Craig Gentry, Shai Halevi, Mariana Raykova, Amit Sahai, and Brent
  Waters.
\newblock Candidate indistinguishability obfuscation and functional encryption
  for all circuits.
\newblock {\em SIAM Journal on Computing}, 45(3):882--929, 2016.

\bibitem[GIS{\etalchar{+}}10]{DBLP:conf/tcc/GoyalISVW10}
Vipul Goyal, Yuval Ishai, Amit Sahai, Ramarathnam Venkatesan, and Akshay Wadia.
\newblock Founding cryptography on tamper-proof hardware tokens.
\newblock In Daniele Micciancio, editor, {\em Theory of Cryptography, 7th
  Theory of Cryptography Conference, {TCC} 2010, Zurich, Switzerland, February
  9-11, 2010. Proceedings}, volume 5978 of {\em Lecture Notes in Computer
  Science}, pages 308--326. Springer, 2010.

\bibitem[GKR08a]{goldwasser2008one}
Shafi Goldwasser, Yael~Tauman Kalai, and Guy~N Rothblum.
\newblock One-time programs.
\newblock In {\em Advances in Cryptology--CRYPTO 2008: 28th Annual
  International Cryptology Conference, Santa Barbara, CA, USA, August 17-21,
  2008. Proceedings 28}, pages 39--56. Springer, 2008.

\bibitem[GKR08b]{C:GolKalRot08}
Shafi Goldwasser, Yael~Tauman Kalai, and Guy~N. Rothblum.
\newblock One-time programs.
\newblock In David Wagner, editor, {\em CRYPTO~2008}, volume 5157 of {\em
  {LNCS}}, pages 39--56. Springer, Berlin, Heidelberg, August 2008.

\bibitem[GKW17]{goyal2017lockable}
Rishab Goyal, Venkata Koppula, and Brent Waters.
\newblock Lockable obfuscation.
\newblock In {\em 2017 IEEE 58th Annual Symposium on Foundations of Computer
  Science (FOCS)}, pages 612--621. IEEE, 2017.

\bibitem[GM24]{gunn2024quantum}
Sam Gunn and Ramis Movassagh.
\newblock Quantum one-time protection of any randomized algorithm.
\newblock {\em private communication}, 2024.

\bibitem[Had00]{AC:Hada00}
Satoshi Hada.
\newblock Zero-knowledge and code obfuscation.
\newblock In Tatsuaki Okamoto, editor, {\em ASIACRYPT~2000}, volume 1976 of
  {\em {LNCS}}, pages 443--457. Springer, Berlin, Heidelberg, December 2000.

\bibitem[Liu23]{liu2023depth}
Qipeng Liu.
\newblock Depth-bounded quantum cryptography with applications to one-time
  memory and more.
\newblock In {\em 14th Innovations in Theoretical Computer Science Conference
  (ITCS 2023)}, 2023.

\bibitem[LSZ20]{liu2020quantum}
Qipeng Liu, Amit Sahai, and Mark Zhandry.
\newblock Quantum immune one-time memories.
\newblock {\em Cryptology ePrint Archive}, 2020.

\bibitem[Mah20]{mahadev2020classical}
Urmila Mahadev.
\newblock Classical homomorphic encryption for quantum circuits.
\newblock {\em SIAM Journal on Computing}, 52(6):FOCS18--189, 2020.

\bibitem[NC02]{nc02}
Michael~A Nielsen and Isaac Chuang.
\newblock Quantum computation and quantum information, 2002.

\bibitem[RKB{\etalchar{+}}18]{NATURE:RKBFW2018advantage}
Marie-Christine Roehsner, Joshua~A. Kettlewell, Tiago~B. Batalh{\~a}o,
  Joseph~F. Fitzsimons, and Philip Walther.
\newblock Quantum advantage for probabilistic one-time programs.
\newblock {\em Nature Communications}, 9(1):5225, Dec 2018.

\bibitem[RKFW21]{NATURE:RKFW21probabilistic}
Marie-Christine Roehsner, Joshua~A. Kettlewell, Joseph Fitzsimons, and Philip
  Walther.
\newblock Probabilistic one-time programs using quantum entanglement.
\newblock {\em npj Quantum Information}, 7(1):98, Jun 2021.

\bibitem[SW14]{sahai2014use}
Amit Sahai and Brent Waters.
\newblock How to use indistinguishability obfuscation: deniable encryption, and
  more.
\newblock In {\em Proceedings of the forty-sixth annual ACM symposium on Theory
  of computing}, pages 475--484, 2014.

\bibitem[VZ21]{vidick2021classical}
Thomas Vidick and Tina Zhang.
\newblock Classical proofs of quantum knowledge.
\newblock In {\em Annual International Conference on the Theory and
  Applications of Cryptographic Techniques}, pages 630--660. Springer, 2021.

\bibitem[Win99]{winter1999coding}
Andreas Winter.
\newblock Coding theorem and strong converse for quantum channels.
\newblock {\em IEEE Transactions on Information Theory}, 45(7):2481--2485,
  1999.

\bibitem[WZ82]{WoottersZurek}
W.~K. {Wootters} and W.~H. {Zurek}.
\newblock {A single quantum cannot be cloned}.
\newblock {\em Nature}, 299(5886):802--803, October 1982.

\bibitem[WZ17]{wichs2017obfuscating}
Daniel Wichs and Giorgos Zirdelis.
\newblock Obfuscating compute-and-compare programs under lwe.
\newblock In {\em 2017 IEEE 58th Annual Symposium on Foundations of Computer
  Science (FOCS)}, pages 600--611. IEEE, 2017.

\bibitem[Yue14]{QIC:Yuen14}
Henry Yuen.
\newblock A quantum lower bound for distinguishing random functions from random
  permutations.
\newblock {\em Quantum Inf. Comput.}, 14(13-14):1089--1097, 2014.

\bibitem[YZ21]{yamakawa2021classical}
Takashi Yamakawa and Mark Zhandry.
\newblock Classical vs quantum random oracles.
\newblock In {\em Annual International Conference on the Theory and
  Applications of Cryptographic Techniques}, pages 568--597. Springer, 2021.

\bibitem[Zha12a]{zhandry2012construct}
Mark Zhandry.
\newblock How to construct quantum random functions.
\newblock In {\em 2012 IEEE 53rd Annual Symposium on Foundations of Computer
  Science}, pages 679--687. IEEE, 2012.

\bibitem[Zha12b]{zhandry2012quantumprf}
Mark Zhandry.
\newblock How to construct quantum random functions.
\newblock In {\em Proceedings of the 2012 IEEE 53rd Annual Symposium on
  Foundations of Computer Science}, FOCS '12, page 679–687, USA, 2012. IEEE
  Computer Society.

\bibitem[Zha15]{QIC:Zha15}
Mark Zhandry.
\newblock A note on the quantum collision and set equality problems.
\newblock {\em Quantum Inf. Comput.}, 15(7{\&}8):557--567, 2015.

\bibitem[Zha16]{C:Zhandry16}
Mark Zhandry.
\newblock The magic of {ELFs}.
\newblock In Matthew Robshaw and Jonathan Katz, editors, {\em CRYPTO~2016,
  Part~I}, volume 9814 of {\em {LNCS}}, pages 479--508. Springer, Berlin,
  Heidelberg, August 2016.

\bibitem[Zha19a]{zhandry19compressed}
Mark Zhandry.
\newblock How to record quantum queries, and applications to quantum
  indifferentiability.
\newblock In Alexandra Boldyreva and Daniele Micciancio, editors, {\em Advances
  in Cryptology -- CRYPTO 2019}, pages 239--268, Cham, 2019. Springer
  International Publishing.

\bibitem[Zha19b]{zhandry2017quantum}
Mark Zhandry.
\newblock Quantum lightning never strikes the same state twice.
\newblock In {\em Annual International Conference on the Theory and
  Applications of Cryptographic Techniques}, pages 408--438. Springer, 2019.

\end{thebibliography}

\ifexabs
\else
\appendix
\section{Postponed Proofs for Families of Single-Query Unlearnable Functions}
\label{sec:exampls_seq_unlearnable}

\subsection{Pairwise Independent and Highly Random Functions}\label{sec:pairwise-indep}

In this section, we present the full proof that pairwise independent and highly random functions are single-effective-query $\negl(\lambda)$-unlearnable in the single-effective-query oracle model (\Cref{sec:seq_oracle_definition}).

\begin{lemma}
    Let $\calF$ be a family of functions mapping $\cX \times \cR \to \cY$ that satisfies:
    \begin{enumerate}
        \item \textit{Pairwise independence:} For any $(x, r, y), (x', r', y') \in \cX \times \cR \times \cY$ such that $(x, r) \neq (x', r')$, $\Pr_f[f(x, r) = y \land f(x', r') = y'] = \Pr_f[f(x, r) = y] \cdot \Pr_f[f(x', r') = y']$.
        \item \textit{High Randomness:} There is a negligible function $\nu(\secp)$ such that for any $(x, r, y) \in \cX \times \cR \times \cY$, 
        \[\Pr_f[f(x, r) = y] \leq \nu(n)\]
        \item $\frac{1}{|\cR|} = \negl(\secp)$
    \end{enumerate}
    Then $\cF$ is SEQ-$\negl(\secp)$-unlearnable.
\end{lemma}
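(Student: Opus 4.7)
The plan is to closely follow the structure of the proof of \Cref{claim:random_function_unlearnable-body}, adapting the compressed-oracle analysis to leverage pairwise independence rather than true randomness of $f$. The intuition is that the SEQ oracle's internal $H$ records at most one input-randomness pair at any time (by \Cref{claim:single_entry_database}), so the adversary effectively ``learns'' only one evaluation of $f$. Pairwise independence together with the high-randomness condition then guarantees that predicting a second evaluation $f(x,r)$ cannot succeed with probability better than $\nu(\secp)$, which is negligible by assumption.

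Concretely, I would proceed as follows. First, implement the random oracle $H$ inside the SEQ oracle as a compressed random oracle; this is equivalent to the original game from the adversary's point of view. Second, after the adversary outputs $((x_1, r_1, y_1), (x_2, r_2, y_2))$ with $(x_1, r_1) \neq (x_2, r_2)$, measure the database register $\calH$ to obtain a classical database $D_H$ containing at most one entry by \Cref{claim:single_entry_database}. Third, do a case analysis on $D_H$. If $D_H = \emptyset$, then the adversary's classical output depends on $f$ only through its ``un-recorded'' superposition interactions; since pairwise marginals of $(f(x_1, r_1), f(x_2, r_2))$ are independent with each marginal bounded by $\nu(\secp)$, the joint success probability is at most $\nu(\secp)^2$. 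If $D_H = \{(x_0, r_0)\}$, then at least one of $(x_1, r_1), (x_2, r_2)$ differs from $(x_0, r_0)$, and pairwise independence gives $\Pr_f[f(x_j, r_j) = y_j \mid f(x_0, r_0) = y_0] \leq \nu(\secp)$ for that index $j$. Finally, a union bound over the two cases, over which of the output pairs matches $(x_0, r_0)$, and over $y_0$, yields the desired negligible bound. The assumption $1/|\calR| = \negl(\secp)$ is used to absorb the standard compressed-oracle error terms coming from \Cref{lem:zhandry_lemma5} and from the negligible chance of superposition ``collisions'' affecting $D_H$.

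The main obstacle will be rigorously justifying the independence claim in the $D_H = \emptyset$ case and the ``depends on $f$ only through $f(x_0, r_0)$'' claim in the non-empty case, since in the quantum setting the adversary's state can coherently depend on many values of $f$ via evaluations of the form $f(x, H(x))$ over a superposition of $H$-values, even without anything being recorded in $D_H$. To handle this, I plan to \emph{purify the sampling of $f$} as well: represent the initial state of $f$ as the uniform superposition $\sum_{f \in \calF} \ket{f}$ over the pairwise-independent family and track the joint state of the adversary, the compressed database $\calH$, and the purified $f$-register. Pairwise independence then implies that ``touching'' $f(x, r)$ at an input-randomness pair not already pinned by $D_H$ does not collapse the $f$-register at that point, so the conditional marginal on $f(x, r)$ remains $\nu(\secp)$-close to uniform. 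The one-entry invariant on $D_H$ thereby lifts to a ``one-pinned-value'' invariant on the purified $f$-register, making the case analysis rigorous. An alternative would be to replace $f$ by a truly random function and invoke \Cref{claim:random_function_unlearnable-body} directly, but that reduction appears to require roughly $2q$-wise independence (where $q$ is the query count) rather than just pairwise, so the purified compressed-oracle argument should be the cleaner route.
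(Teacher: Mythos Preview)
Your proposal is correct and takes essentially the same approach as the paper: purify the choice of $f$ as a uniform superposition over $\calF$, maintain a joint invariant on the pair $(\calD_H, \calF)$ asserting the state lies (up to negligible error) in the span of $\ket{\emptyset}_{\calD_H}\otimes\ket{F_\emptyset}_\calF$ and $\ket{(x,r)}_{\calD_H}\otimes\ket{F_{x,r,y}}_\calF$, and then finish with exactly the case analysis you describe. The technical heart is proving this invariant survives each query, which the paper carries out by bounding how far $\Decomp$ can push the state out of the invariant subspace by $O(1/\sqrt{|\calR|})$; pairwise independence enters precisely in computing the inner products $\braket{F_{x,r,y}|F_{x',r',y'}}$ needed for that bound, rather than in the softer ``touching does not collapse'' form you sketched.
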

\begin{proof}
$ $
\begin{enumerate}
    \item The SEQ oracle implements a compressed oracle for $H$. In particular, it maintains a database register $\cD_H$ that stores $\emptyset$ or a list of values $(x, r) \in \cX \times \cR$. See \Cref{sec:compressedRO} for a formal definition of the compressed oracle representation.
    
    The SEQ oracle also stores the function $f$ on register $\cF$. Let $\cF$ be initialized to the uniform superposition:
    \[\ket{F_\emptyset} := \frac{1}{\sqrt{|\cF|}} \sum_{f \in \cF} \ket{f}_\cF\]
    Then the SEQ oracle will answer queries coherently, without measuring the superposition.

    Also, let us represent $f$ as its truth table, so for every $(x,r) \in \cX \times \cR$, there is a register $\cF_{x,r}$ that holds the value $y = f(x,r)$.
    
    Let $\cO = \cD_H \times \cF$ be the oracle's internal register, let $\cQ$ be the query register submitted to the oracle, and let $\cA$ be the adversary's private register. We can assume that the state of the system is a pure state over $\cA \times \cQ \times \cO$ since the adversary and the SEQ oracle act as unitaries over these registers.

    \item Let us define the states that $\cO$ is allowed to be in, and let $E$ project onto the allowed states.
    \begin{align*}
        \text{Let } \cF_{x, r, y} &= \{f \in \cF : f(x, r) = y\}\\
        \ket{F_{x,r,y}} &= \frac{1}{\sqrt{|\cF_{x,r,y}|}} \cdot \sum_{f \in \cF_{x,r,y}} \ket{f}_{\cF}\\
        E_\cO &= \ketbra{\emptyset}_{\cD_H} \otimes \ketbra{F_\emptyset}_\cF + \sum_{(x,r,y) \in \cX \times \cR \times \cY} \ketbra{(x,r)}_{\cD_H} \otimes \ketbra{F_{x,r,y}}_\cF\\
        E &= \mathbb{I}_{\cA \times \cQ} \otimes E_\cO
    \end{align*}
    $E_\cO$ projects onto all states on $\cO$ in the span of the following basis states:
    \[\ket{\emptyset}_{\cD_H} \otimes \ket{F_\emptyset}_\cF \quad \text{ or } \quad \left(\ket{(x,r)}_{\cD_H} \otimes \ket{F_{x,r,y}}_\cF\right)_{(x,r,y) \in \cX \times \cR \times \cY}\]
    Let us also define $\overline{E} = \mathbb{I} - E$, and $\overline{E}_\cO = \mathbb{I} - E_\cO$.

    \item After any polynomial number of queries to the SEQ oracle, the state of the system $\ket{\psi}$ satisfies $\|E \cdot \ket{\psi}\| \geq 1 - \negl(\secp)$. This is proven in \Cref{thm:SEQ-oracle-keeps-states-in-sigma}.

    \item At the end of the SEQ learning game, the following steps are executed. We have added an additional step (step 2, shown in \textcolor{red}{red}). 
    \begin{enumerate}
        \item The adversary outputs two pairs $(x,r,y)$ and $(x',r',y')$ such that $(x,r)\neq(x',r')$.
        \item {\color{red}The challenger measures $\cD_H$ in the computational basis to obtain $(x'',r'') \in \cX \times \cR$ or $\emptyset$. If the outcome is some $(x'', r'')$ (and not $\emptyset$), then the challenger measures the register $\cF_{x'', r''}$ to obtain $y'' = f(x'',r'')$.}
        \item The challenger measures $\cF$ in the computational basis to get a function $f$.
        \item The challenger checks whether $f(x, r) = y$ and $f(x',r')=y'$. If so, the adversary wins. If not, the adversary loses.
    \end{enumerate}
    Adding step 2 does not affect the probability that the adversary wins the learning game because the measurements made in step 2 commute with the measurements made in step 3. Clearly, the measurement on $\cD_{H}$ commutes with the measurement on $\cF$ in step 3 because they act on disjoint registers. Furthermore, measuring $\cF_{x'', r''}$ is a partial measurement on the $\cF$ register in the computational basis. Therefore, it commutes with step 3's full measurement of $\cF$ in the computational basis.

    \item We will show that the measurement outcome of $f$ in step 3 is highly random, so the adversary wins the learning game with negligible probability. 
    
    Let us assume that at the start of step $1$, the state of the system $\ket{\psi}$ satisfies $E \cdot \ket{\psi} = \ket{\psi}$. Next, there are two cases to consider:
    
    \begin{enumerate}
        \item Case 1: The measurement on $\cD_H$ returns $\emptyset$. Then at the end of step 2, the state on $\cF$ is $\ket{F_\emptyset}$. When we measure $\cF$ in step 3, we obtain a uniformly random $f \getsr \cF$. The adversary will lose the learning game with overwhelming probability because
        \begin{align*}
            \Pr_{f \getsr \cF}[f(x, r)=y \land f(x',r')=y'] &= \Pr_{f \getsr \cF}[f(x, r)=y] \cdot \Pr_{f \getsr \cF}[f(x',r')=y'] \\
            &\leq \nu(\secp)^2\\
            &= \negl(\secp)
        \end{align*}
        \item Case 2: Otherwise, the measurement on $\cD_H$ returns some $(x'', r'')$, and we also measure $y''$. At the end of step 2, the state on the $\cF$ register is $\ket{F_{x'', r'', y''}}$. When we measure $\cF$ in step 3, we obtain a uniformly random $f \getsr \cF_{x'', r'', y''}$. 

        At least one of $(x,r)$ and $(x',r')$ do not equal $(x'',r'')$. Without loss of generality, let us say that $(x,r) \neq (x'',r'')$. Then the probability that the adversary wins the learning game is
        \begin{align*}
            \Pr[\cA \text{ wins}] &\leq \Pr_{f \getsr \cF_{x'',r'',y''}}[f(x,r)=y]\\
            &= \frac{|\cF_{x,r,y} \cap \cF_{x'',r'',y''}|}{|\cF_{x'',r'',y''}|}\\
            &= \frac{|\cF| \cdot \Pr_{f \getsr \cF}[f(x,r)=y \land f(x'',r'')=y'']}{|\cF| \cdot \Pr_{f \getsr \cF}[f(x'',r'')=y'']}\\
            &= \frac{\Pr_{f \getsr \cF}[f(x,r)=y] \cdot \Pr_{f \getsr \cF}[f(x'',r'')=y'']}{\Pr_{f \getsr \cF}[f(x'',r'')=y'']}\\
            &= \Pr_{f \getsr \cF}[f(x,r)=y]\\
            &\leq \nu(\secp) = \negl(\secp)
        \end{align*}
    \end{enumerate}
    
    We've shown that in both cases, the adversary wins the learning game with negligible probability. Therefore $\cF$ is SEQ-$\negl(\secp)$-unlearnable.
\end{enumerate}
\end{proof}

\begin{lemma}\label{thm:SEQ-oracle-keeps-states-in-sigma}
    After any polynomial number of queries to the SEQ oracle, the state of the system $\ket{\psi}$ satisfies $\|E \cdot \ket{\psi}\| \geq 1 - \negl(\secp)$.
\end{lemma}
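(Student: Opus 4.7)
The plan is to prove the claim by induction on the number of queries $q$ made to the SEQ oracle so far. The base case $q=0$ is immediate: the initial state is $\ket{\psi_0}_{\cA\cQ} \otimes \ket{\emptyset}_{\cD_H} \otimes \ket{F_\emptyset}_\cF$, which lies entirely in the image of $E$. Moreover, the adversary's unitaries act only on $\cA \otimes \cQ$ and therefore commute with $E$, so they preserve $\|E\ket{\psi}\|$ exactly. Thus it suffices to show that each SEQ oracle query can push the state out of the image of $E$ by at most a negligible amount $\delta = \negl(\secpar)$, so that after $q = \poly(\secpar)$ queries the leakage is still $q\cdot\delta = \negl(\secpar)$.

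To analyze a single oracle query $U_{\mathrm{SEQ}}$, I would decompose a state in the image of $E$ along the two types of basis states: the ``empty'' component $\ket{\emptyset}_{\cD_H}\otimes\ket{F_\emptyset}_\cF$ and ``recorded'' components $\ket{(x',r')}_{\cD_H}\otimes\ket{F_{x',r',y'}}_\cF$. For queries to the recorded case with $x_0 \neq x'$, the SEQ check fires, so $U_{\mathrm{SEQ}}$ acts as the identity and the state trivially stays in $E$. For the case $x_0 = x'$, I would use the fact that $\cF_{x',r'}$ is effectively collapsed to $\ket{y'}$ within the $\ket{F_{x',r',y'}}$ subspace, so the oracle deterministically XORs $y'$ into $\cQ$ and the internal $(\cD_H,\cF)$ state is unchanged. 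The interesting case is the empty component: after querying the compressed random oracle $H$ on $x_0$, reading $f(x_0;r)$ coherently from $\cF_{x_0,r}$, and uncomputing the $H$ query, I would Fourier-decompose $\ket{\{(x_0,r)\}}$ inside $\Decomp$ into the uniform-superposition direction $\frac{1}{\sqrt{|\cR|}}\sum_{r'}\ket{\{(x_0,r')\}}$ plus its orthogonal complement. The uniform direction gets mapped back to $\ket{\emptyset}$ by $\Decomp$ (this is the ``gentle'' branch where no information about $f(x_0;r)$ was learned, and $\cF$ stays in $\ket{F_\emptyset}$), while the orthogonal directions keep the recorded entry $\{(x_0,r)\}$ and correspondingly leave $\cF$ in $\ket{F_{x_0,r,y}}$ (the ``destructive'' branch).

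The main obstacle is showing that these two branches exactly line up with the structure of $E$. Concretely, after step 3 the state on $\cF$ conditioned on the recorded database $\{(x_0,r)\}$ is the coherent mixture $\sum_y \sqrt{p(x_0,r,y)}\,\ket{F_{x_0,r,y}}$ entangled with $\cQ$ via $u_0\oplus y$, which is precisely the $\cF$-side of allowed basis states; the gentle branch, in which $r$ is summed uniformly on $\cD_H$, contains $y$-summed contributions that recombine (up to negligible error controlled by the pairwise-independence hypothesis) to $\ket{F_\emptyset}$ because $\frac{1}{|\cR|}\sum_r |\cF_{x_0,r,y}| = \frac{|\cF|}{|\cY|}$ is independent of $r$. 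I would use the pairwise-independence and high-randomness hypotheses to bound the cross terms: the inner products $\langle F_{x_0,r,y}\mid F_{x_0,r',y}\rangle$ for $r\neq r'$ are at most $\nu(\secpar)$, which controls any deviation of the gentle branch from being a scalar multiple of $\ket{F_\emptyset}\otimes\ket{\emptyset}_{\cD_H}$.

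Putting this together, each query leaves the state in the image of $E$ up to an error of magnitude $O(\sqrt{\nu(\secpar)} + 1/\sqrt{|\cR|}) = \negl(\secpar)$ in Euclidean norm, both from the pairwise-independence bound above and from standard compressed-oracle error terms (cf. \Cref{lem:zhandry_lemma5} and \Cref{lem:compressed-collision}). Summing these errors over $q = \poly(\secpar)$ queries and applying the triangle inequality together with the fact that $U_{\mathrm{SEQ}}$ is a unitary (so $\|U_{\mathrm{SEQ}}\ket{\psi} - E U_{\mathrm{SEQ}} E \ket{\psi}\|$ does not amplify with further application of unitaries) gives $\|E\ket{\psi}\| \geq 1 - \negl(\secpar)$ as required.
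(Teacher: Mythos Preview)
Your inductive skeleton and the observation that the adversary's local unitaries commute with $E$ are correct and match the paper. The case $x_0\neq x'$ is also fine: the SEQ check makes the oracle act as the identity there.

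The genuine gap is your treatment of the recorded case $x_0=x'$. Your claim that ``the oracle deterministically XORs $y'$ into $\cQ$ and the internal $(\cD_H,\cF)$ state is unchanged'' is false. The SEQ oracle first applies $\Decomp_{x'}$ to $\ket{(x',r')}_{\cD_H}$, which by \Cref{thm:Decomp-applied-to-database} spreads the database register to a superposition containing $\ket{\emptyset}$ and $\ket{(x',r)}$ for all $r\neq r'$ with amplitude $O(1/\sqrt{|\cR|})$. On those side branches the oracle then reads $f(x',r)$ from $\cF_{x',r}$, a register that is \emph{not} collapsed in $\ket{F_{x',r',y'}}$, creating new entanglement with $\cF$. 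After the final $\Decomp$, components such as $\ket{(x',r)}_{\cD_H}\otimes\ket{F_{x',r',y'}}_{\cF}$ with $r\neq r'$ and $\ket{\emptyset}_{\cD_H}\otimes\ket{F_{x',r',y'}}_{\cF}$ lie \emph{outside} the image of $E$. So this case needs exactly the same delicate norm estimate you sketch for the empty case, and cannot be dismissed as trivial.

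The paper sidesteps your case split entirely via a structural observation you are missing: on the $V$-component the SEQ oracle factors as $\Decomp\circ\CO'_H\circ\CO'_F\circ\CO'_H\circ\Decomp$, and both $\CO'_H$ and $\CO'_F$ \emph{commute with $E$} (they read but do not modify $\cD_H$ and $\cF$). Hence the only operator that can push the state out of $E$ is $\Decomp$, and the paper proves a single lemma (\Cref{thm:Decomp-does-not-move-states-out-of-E}) bounding $\|\overline{E}\cdot\Decomp\cdot E\cdot\ket{\Psi}\|\le\sqrt{2/|\cR|}$, applied twice. That lemma in turn rests on a spectral-norm calculation for the matrix $M_x$ with columns $|\cR|^{-1/2}\ket{F_{x,r,y}}$ (\Cref{thm:norm-of-M-x}), which is precisely where pairwise independence and the high-randomness bound $\nu(\secpar)$ enter. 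Your appeal to \Cref{lem:zhandry_lemma5} and \Cref{lem:compressed-collision} for the per-query error is misplaced; those lemmas concern extracting input/output pairs and collisions, not the $\Decomp$ leakage at issue here.
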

\begin{proof}
$ $
\begin{enumerate}
    \item Let us define some useful operations. 
    \begin{itemize}
        \item Let $V_x$ be a projector acting on $\cO$ that projects onto all states for which $\cD_H$ contains $\emptyset$ or a single entry of the form $(x,r)$ for some $r \in \cR$. Furthermore, let $V$ be the following projector acting on $\cA \times \cQ \times \cO$:
        \[V = \sum_{x \in \cX} \mathbb{I}_{\cA \times \cQ_u \times \cQ_b} \otimes \ketbra{x}_{\cQ_x} \otimes (V_x)_\cO\]
        In other words, $V$ verifies whether the query $x$ on $\cQ_x$, if answered by the SEQ oracle, would keep the number of queries recorded in $\cD_H$ $\leq 1$. 

        Note that $E$ and $V$ commute with each other and share an eigenbasis.
        \item Let $\Decomp$ and $\Decomp_x$ be defined as they were in \cref{sec:compressedRO}.
        \item The SEQ oracle maintains an internal register $\cR$ that is used to store an intermediate $r$-value. Then $\CO'_H$ copies the output of $H(x)$ into the $\cR$ register. $\CO'_H$ computes the following mapping:
        \[\ket{x}_{\cQ_\cX} \otimes \ket{r'}_\cR \otimes \ket{(x,r)}_{\cD_H} \overset{\CO'_H}{\longrightarrow} \ket{x}_{\cQ_\cX} \otimes \ket{r' \oplus r}_\cR \otimes \ket{(x,r)}_{\cD_H}\]
        \item Let $\CO'_F$ answer the query to $f$ with the following mapping:
        \[\ket{x,u,b}_\cQ \otimes \ket{r}_{\cR} \otimes \ket{f}_\cF \overset{\CO'_F}{\longrightarrow} \ket{x,u \oplus f(x,r), b \oplus 1}_\cQ \otimes \ket{r}_{\cR} \otimes \ket{f}_\cF\]
    \end{itemize}

    \item Let us describe how the SEQ oracle operates.
    
    The SEQ oracle acts as the identity on all states $\ket{\psi}$ for which $V \cdot \ket{\psi} = \vec{0}$. And for all states $\ket{\psi}$ for which $V \cdot \ket{\psi} = \ket{\psi}$, the SEQ oracle operates as follows:
    \begin{enumerate}
        \item Initialize $\cR$ to $\ket{0}$. Then apply $\Decomp \circ \CO'_H \circ \Decomp$ to $\cQ_\cX \times \cR \times \cD_H$.
        \item Apply $\CO'_F$ to $\cQ \times \cR \times \cF$.
        \item Apply $\Decomp \circ \CO'_H \circ \Decomp$ to $\cQ_\cX \times \cR \times \cD_H$.
    \end{enumerate}
    Note that $\Decomp$ commutes with $\CO'_F$. $\Decomp$ depends on the computational basis state of $\cQ_\cX$, and applies $\Decomp_x$ to $\cD_H$. $\CO'_F$ depends on the computational basis state of $\cQ_\cX$ and otherwise acts on registers ($\cQ_u \times \cQ_b \times \cR \times \cF$) that are disjoint from the ones acted on by $\Decomp$.
    
    Furthermore, $\Decomp^{-1} = \Decomp$, so we can write the action of the SEQ oracle as follows:
    \[\Decomp \circ \CO'_H \circ \CO'_F \circ \CO'_H \circ \Decomp\]
    
    \item Let us assume that at the beginning of the query, the state $\ket{\psi}$ is in the span of $E$. This is true at the beginning of the first query because the state of $\cO$ is $\ket{\emptyset}_{\cD_H} \otimes \ket{F_\emptyset}_\cF$, which is in the span of $E$. Then we will show that at the end of the query, the state $\ket{\psi'}$ is negligibly close to a state in the span of $E$. Then by induction, after any polynomial number of queries to the SEQ oracle, the state of $\cO$ will be negligibly close to a state in the span of $E$.

    \item Next, we split $\ket{\psi}$ into the components in the span of $V$ and perpendicular to $V$.
    $\ket{\psi} = (\mathbb{I} - V) \cdot \ket{\psi} + V \cdot \ket{\psi}$
    Note that $\vec{s}_{\overline{V}} := (\mathbb{I} - V) \cdot \ket{\psi}$ and $\vec{s}_V := V \cdot \ket{\psi}$ are still in the span of $E$ because applying $(\mathbb{I} - V)$ or $V$ to $\ket{\psi}$ just checks whether $\cQ_\cX$ records a different $x$-value than $\cD_H$.

    The SEQ oracle applies $\mathbb{I}$ to the first component $\vec{s}_{\overline{V}}$ and applies $\Decomp \circ \CO'_H \circ \CO'_F \circ \CO'_H \circ \Decomp$ to $\vec{s}_V$. We will show that applying $\Decomp \circ \CO'_H \circ \CO'_F \circ \CO'_H \circ \Decomp$ to $\vec{s}_V$ produces a state that gives an overwhelming fraction of its amplitude to a state in the span of $E$.

    \item \Cref{thm:Decomp-does-not-move-states-out-of-E} says that for any state $\ket{\phi}$ on $\cA \times \cQ \times \cO$, such that $V \cdot \ket{\phi} = \ket{\phi}$, 
    \[\|E \cdot \ket{\phi}\| - \negl(\secp) \leq \|E \cdot \Decomp \cdot E \cdot \ket{\phi}\|\]

    This means that if a state starts in the span of $E$ and $V$, then applying $\Decomp$ to it will not move it out of the span of $E$, except by a negligible amount.

    Let us set $\ket{\phi} = \frac{\vec{s}_V}{\|\vec{s}_V\|}$. Note that $E \ket{\phi} = \ket{\phi}$, and $V \ket{\phi} = \ket{\phi}$. Then applying \cref{thm:Decomp-does-not-move-states-out-of-E} shows that 
    \begin{align*}
        1 - \negl(\secp) &\leq \frac{\|E \cdot \Decomp \cdot \vec{s}_V\|}{\|\vec{s}_V\|}\\
        \|\vec{s}_V\| - \negl'(\secp) &\leq \|E \cdot \Decomp \cdot \vec{s}_V\|
    \end{align*}

    \item Next, the operations $\CO'_H$ and $\CO'_F$ commute with $E$. 
    If a state $\ket{\phi}$ is in the span of $E$, then $\CO'_H \circ \CO'_F \circ \CO'_H \ket{\phi}$ will be in the span of $E$ as well. Therefore, 
    \[\|\vec{s}_V\| - \negl'(\secp) \leq \|E \cdot \CO'_H \circ \CO'_F \circ \CO'_H \circ \Decomp \cdot \vec{s}_V\|\]
    Furthermore, $\CO'_H \circ \CO'_F \circ \CO'_H \circ \Decomp \cdot \vec{s}_V$ will be in the span of $V$ since all of the operations $\CO'_H, \CO'_F, \Decomp$ map a state in the span of $V$ to a state in the span of $V$.

    \item After applying $\CO'_H \circ \CO'_F \circ \CO'_H$, the SEQ oracle applies $\Decomp$ again. We can apply \cref{thm:Decomp-does-not-move-states-out-of-E} again to show that 
    \[\|\vec{s}_V\| - \negl(\secp) \leq \|E \cdot \Decomp \circ \CO'_H \circ \CO'_F \circ \CO'_H \circ \Decomp \cdot \vec{s}_V\|\]

    \item In summary, after the query to the SEQ oracle, the state is
    \begin{align*}
        \ket{\psi'} &= (\mathbb{I} - V) \cdot \ket{\psi} + \left(\Decomp \circ \CO'_H \circ \CO'_F \circ \CO'_H \circ \Decomp \cdot V\right) \cdot \ket{\psi}
    \end{align*}
    Then, 
    \begin{align*}
        \|E \cdot \ket{\psi'}\|^2 &= \|E \cdot (\mathbb{I} - V) \cdot \ket{\psi}\|^2 + \|E \cdot \left(\Decomp \circ \CO'_H \circ \CO'_F \circ \CO'_H \circ \Decomp \cdot V\right) \cdot \ket{\psi}\|^2\\
        &\geq \|(\mathbb{I} - V) \cdot \ket{\psi}\|^2 + \|V \cdot \ket{\psi}\|^2 - \negl(\secp)\\
        &= \|\ket{\psi}\|^2 - \negl(\secp)\\
        &= 1 - \negl(\secp)
    \end{align*}

    This shows that after any polynomial number of queries to the SEQ oracle, the state of the system $\ket{\psi}$ satisfies $\|E \cdot \ket{\psi}\| \geq 1 - \negl(\secp)$.
\end{enumerate}
\end{proof}

\begin{lemma}\label{thm:Decomp-does-not-move-states-out-of-E}
    For any state $\ket{\psi}$ on $\cA \times \cQ \times \cO$, such that $V \cdot \ket{\psi} = \ket{\psi}$, $\|E \cdot \ket{\psi}\| - \negl(\secp) \leq \|E \cdot \Decomp \cdot E \cdot \ket{\psi}\|$.
\end{lemma}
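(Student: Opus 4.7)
The approach is to decompose the projector $E$ as $E = E_A + E_B$ where $E_A = \ket\emptyset\bra\emptyset_{D_H}\otimes \ket{F_\emptyset}\bra{F_\emptyset}_\cF$ is the ``empty-database'' part and $E_B = \sum_{x,r,y} \ket{(x,r)}\bra{(x,r)}_{D_H}\otimes \ket{F_{x,r,y}}\bra{F_{x,r,y}}_\cF$ is the ``one-query'' part. Correspondingly write $E\ket{\psi} = E_A \ket{\psi} + E_B\ket{\psi}$. Because $\Decomp$ and $E$ act only on $\cQ_x, D_H, \cF$, and since $V\ket{\psi}=\ket{\psi}$ enforces the $x$-value in $\cQ_x$ to be consistent with $D_H$, it suffices to analyse a single $\cQ_x = x$ at a time and sum over $x$ at the end.

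First I would handle the $E_A$ piece. For fixed $\cQ_x = x$, $\Decomp$ swaps $\ket{\emptyset}_{D_H} \leftrightarrow \ket{\tilde 0}_x := \tfrac{1}{\sqrt{|\cR|}}\sum_r \ket{(x,r)}$. Combined with the crucial identity $\ket{F_\emptyset} = \sum_y \sqrt{p_{x,r,y}}\ket{F_{x,r,y}}$ (valid for every fixed $(x,r)$), this shows $\Decomp(\ket{\emptyset}_{D_H}\otimes \ket{F_\emptyset}_\cF) = \tfrac{1}{\sqrt{|\cR|}}\sum_{r,y}\sqrt{p_{x,r,y}} \ket{(x,r)}\ket{F_{x,r,y}}$, which lies entirely in the span of $E$. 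Hence $\overline E \,\Decomp\, E_A \ket{\psi} = 0$ exactly.

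Next I would analyse the $E_B$ piece. A direct computation using $\Decomp_x \ket{(x,r)} = \ket{(x,r)} + \tfrac{1}{\sqrt{|\cR|}}(\ket{\emptyset} - \ket{\tilde 0}_x)$ gives
\[
\Decomp\, E_B \ket{\psi} = E_B \ket{\psi} + \tfrac{1}{\sqrt{|\cR|}}\sum_x \ket{x}_{\cQ_x}\otimes (\ket{\emptyset} - \ket{\tilde 0}_x)_{D_H} \otimes \Xi_x ,
\]
where $\Xi_x = \sum_{r,y}\lambda_{x,r,y}\ket{\nu_{x,r,y}}\otimes\ket{F_{x,r,y}}$ collects the amplitudes of $E_B\ket{\psi}$. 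Since $\overline E$ annihilates $E_B\ket{\psi}$, applying $\overline E$ and exploiting orthogonality in $\cQ_x$ yields
\[
\|\overline E\, \Decomp\, E_B\ket{\psi}\|^2 = \tfrac{1}{|\cR|}\sum_x \|\Xi_x^{\perp F_\emptyset}\|^2 + \tfrac{1}{|\cR|^2}\sum_x\sum_r \|\Xi_x^{\perp \cF_{x,r}}\|^2,
\]
where $\Xi_x^{\perp F_\emptyset}$ and $\Xi_x^{\perp \cF_{x,r}}$ denote the components of $\Xi_x$ orthogonal, on the $\cF$-register, to $\ket{F_\emptyset}$ and to the subspace $\cF_{x,r} = \mathrm{span}\{\ket{F_{x,r,y}}\}_y$ respectively.

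The remaining technical step, and the main obstacle, is to bound these residual norms. The naive Cauchy--Schwarz bound $\|\Xi_x\|^2 \le |\cR|\,\|\psi_{B,x}\|^2$ is tight and yields only a trivial $O(1)$ error; the proof must exploit the fact that the inclusion $\cF_\emptyset \subset \cF_{x,r}$ holds for \emph{every} $r$, so any part of $\Xi_x$ aligned with $\ket{F_\emptyset}$ is killed by both orthogonal-complement projections. I would leverage pairwise independence (which gives the exact overlap formula $\braket{F_{x,r,y}|F_{x,r',y'}} = \sqrt{p_{x,r,y}\,p_{x,r',y'}}$ for $r\neq r'$) to decompose $\Xi_x$ into its $\ket{F_\emptyset}$-component (which contributes nothing) and its orthogonal part, and then use the high randomness bound $p_{x,r,y}\le \nu(\secp)$ together with $1/|\cR|=\negl(\secp)$ to show each remaining term is bounded by $\negl(\secp)\cdot \|\psi_{B,x}\|^2$. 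Summing over $x$ and combining with the Type-A contribution yields $\|\overline E \Decomp E\ket{\psi}\|^2 \le \negl(\secp)\cdot \|E\ket{\psi}\|^2$, from which the lemma follows by $\|E\Decomp E\ket{\psi}\|^2 = \|\Decomp E\ket{\psi}\|^2 - \|\overline E \Decomp E\ket{\psi}\|^2 = \|E\ket{\psi}\|^2 - \|\overline E \Decomp E\ket{\psi}\|^2$.
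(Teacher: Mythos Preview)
Your overall architecture matches the paper's proof closely: reduce to a fixed $x$ (using $V\ket{\psi}=\ket{\psi}$), split $E\ket{\psi}$ into the empty-database piece and the one-entry piece, observe that $\Decomp_x(\ket{\emptyset}\otimes\ket{F_\emptyset})$ lands exactly in $E$, and then bound $\overline{E}\,\Decomp_x$ applied to the one-entry piece. Your formula $\Decomp_x\ket{(x,r)}=\ket{(x,r)}+\tfrac{1}{\sqrt{|\cR|}}(\ket{\emptyset}-\ket{\tilde 0}_x)$ and the resulting two-term error expression are correct and agree with the paper.

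Where your plan is imprecise is the last step, and in particular the appeal to ``high randomness''. The paper does not use $p_{x,r,y}\le\nu(\secp)$ here at all; the bound comes purely from pairwise independence plus $1/|\cR|=\negl(\secp)$. Concretely, the paper packages the one-entry amplitudes into a matrix $M_x$ with columns $\tfrac{1}{\sqrt{|\cR|}}\ket{F_{x,r,y}}$ and shows, via a Gram-matrix computation, that
\[
M_x^\dagger M_x \;=\; \tfrac{1}{|\cR|}\,\mathbb{I} \;+\; \ketbra{\Phi_0}\;-\;N_x
\]
with $N_x$ PSD and $M_x\ket{\Phi_0}=\ket{F_\emptyset}$. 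It follows immediately that $\|(\mathbb{I}-\ketbra{F_\emptyset})M_x\|\le 1/\sqrt{|\cR|}$, which in your notation gives $\|\Xi_x^{\perp F_\emptyset}\|^2\le\|\psi_{B,x}\|^2$ directly, and the $\tfrac{1}{|\cR|}$ prefactor makes the whole term negligible. The same estimate handles the $\Xi_x^{\perp\cF_{x,r}}$ term since $\ket{F_\emptyset}\in\cF_{x,r}$.

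In your language, the observation you are missing is that pairwise independence says $\braket{F_{x,r,y}|F_{x,r',y'}}=\sqrt{p_{x,r,y}}\cdot\sqrt{p_{x,r',y'}}=\braket{F_{x,r,y}|F_\emptyset}\braket{F_\emptyset|F_{x,r',y'}}$ for $r\neq r'$, i.e.\ the entire cross-$r$ overlap is mediated by $\ket{F_\emptyset}$. Hence the components of $\ket{F_{x,r,y}}$ orthogonal to $\ket{F_\emptyset}$ are \emph{exactly} orthogonal across different $r$, and $\|\Xi_x^{\perp F_\emptyset}\|^2=\sum_r\|\cdot\|^2\le\|\psi_{B,x}\|^2$ with no $|\cR|$ blow-up. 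So your plan works, but high randomness is a red herring here; the entire weight is carried by pairwise independence and the $1/|\cR|$ factor.
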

\begin{proof}
$ $
\begin{enumerate}
    \item A generic state over $\cA \times \cQ \times \cO$ such that $V \cdot \ket{\psi} = \ket{\psi}$ can be written as follows: 
    \[\ket{\psi} = \sum_{a,x,u,b} \alpha_{a,x,u,b} \cdot \ket{a}_\cA \otimes \ket{x,u,b}_\cQ \otimes \ket{\Psi_{a,x,u,b}}_\cO\]
    where $1 = \sum_{a,x,u,b} |\alpha_{a,x,u,b}|^2$, and for each $(a,x,u,b)$, $V_x \cdot \ket{\Psi_{a,x,u,b}} = \ket{\Psi_{a,x,u,b}}$. This means the $\cD_H$ register of $\ket{\Psi_{a,x,u,b}}$ contains either $\emptyset$ or $(x,r)$ for some $r \in \cR$.

    Next,
    \begin{align*}
        E \cdot \Decomp \cdot E \cdot \ket{\psi} &= \sum_{a,x,u,b} \alpha_{a,x,u,b} \cdot \ket{a}_\cA \otimes \ket{x,u,b}_\cQ \otimes \left(E_\cO \cdot \Decomp_x \cdot E_\cO \cdot \ket{\Psi_{a,x,u,b}}_\cO\right)\\
        \left\|E \cdot \Decomp \cdot E \cdot \ket{\psi}\right\|_2^2 &= \sum_{a,x,u,b} |\alpha_{a,x,u,b}|^2 \cdot \left\|E_\cO \cdot \Decomp_x \cdot E_\cO \cdot \ket{\Psi_{a,x,u,b}}_\cO\right\|_2^2
    \end{align*}
    Additionally,
    \begin{align*}
        E \cdot \ket{\psi} &= \sum_{a,x,u,b} \alpha_{a,x,u,b} \cdot \ket{a}_\cA \otimes \ket{x,u,b}_\cQ \otimes \left(E_\cO \cdot \ket{\Psi_{a,x,u,b}}_\cO\right)\\
        \left\|E \cdot \ket{\psi}\right\|_2^2 &= \sum_{a,x,u,b} |\alpha_{a,x,u,b}|^2 \cdot \left\|E_\cO \cdot \ket{\Psi_{a,x,u,b}}_\cO\right\|_2^2
    \end{align*}

    It suffices to prove that for any $x \in \cX$, and any state $\ket{\Psi}$ on register $\cO$ such that $V_x \cdot \ket{\Psi} = \ket{\Psi}$, 
    \[\left\|E_\cO \cdot \ket{\Psi}_\cO\right\|_2 - \negl(\secp) \leq \left\|E_\cO \cdot \Decomp_x \cdot E_\cO \cdot \ket{\Psi}_\cO\right\|_2\]
    because that would imply that 
    \begin{align*}
        \sum_{a,x,u,b} |\alpha_{a,x,u,b}|^2 \cdot \left(\left\|E_\cO \cdot \ket{\Psi_{a,x,u,b}}_\cO\right\|_2^2 - \negl(\secp)\right) &\leq \left\|E \cdot \Decomp \cdot E \cdot \ket{\psi}\right\|_2^2\\
        \left\|E \cdot \ket{\psi}\right\|_2^2 - \negl(\secp) &= \\
        \left\|E \cdot \ket{\psi}\right\|_2 - \negl'(\secp) &\leq \left\|E \cdot \Decomp \cdot E \cdot \ket{\psi}\right\|_2
    \end{align*}
    which is what we wanted to prove. From now on, we will focus on proving that for any $x \in \cX$, and any state $\ket{\Psi}$ on register $\cO$ such that $V_x \cdot \ket{\Psi} = \ket{\Psi}$, 
    \[\left\|E_\cO \cdot \ket{\Psi}_\cO\right\|_2 - \negl(\secp) \leq \left\|E_\cO \cdot \Decomp_x \cdot E_\cO \cdot \ket{\Psi}_\cO\right\|_2\]
    
    \item 
    \begin{align*}
        \text{Let } E_\cO \cdot \ket{\Psi}_\cO &= v_{\emptyset} \cdot \ket{\emptyset}_{\cD_H} \otimes \ket{F_\emptyset}_\cF + \sum_{(r, y) \in \cR \times \cY} v_{r,y} \cdot \ket{(x, r)}_{\cD_H} \otimes \ket{F_{x,r,y}}_{\cF}
    \end{align*}
    where $\vec{v} = (v_{\emptyset}, (v_{r,y})_{(r,y \in \cR\times\cY)})$ is a vector of norm $\|\vec{v}\| \leq 1$.

    Next, we will show that $\|\overline{E}_\cO \cdot \Decomp_x \cdot E_\cO \cdot \ket{\Psi}\| = \negl(\secp)$.
    \begin{align*}
        &\overline{E}_\cO \cdot \Decomp_x \cdot E_\cO \cdot \ket{\Psi} &= \overline{E}_\cO \cdot \Decomp_x \cdot \left(v_{\emptyset} \cdot \ket{\emptyset}_{\cD_H} \otimes \ket{F_\emptyset}_\cF + \sum_{(r, y) \in \cR \times \cY} v_{r,y} \cdot \ket{(x, r)}_{\cD_H} \otimes \ket{F_{x,r,y}}_{\cF}\right)
    \end{align*}

    The first term -- $\Decomp_x \cdot \left(\ket{\emptyset}_{\cD_H} \otimes \ket{F_\emptyset}_\cF\right)$ -- lies in the span of $E_\cO$:
    \begin{align*}
        \Decomp_x \left(\ket{\emptyset}_{\cD_H} \otimes \ket{F_\emptyset}_\cF\right) &= \frac{1}{\sqrt{|\cR|}} \cdot \sum_{r \in \cR} \ket{(x,r)}_{\cD_H} \otimes \ket{F_\emptyset}_\cF\\
        &= \frac{1}{\sqrt{|\cR| \cdot |\cF|}} \cdot \sum_{(r, f) \in \cR \times \cF} \ket{(x,r)}_{\cD_H} \otimes \ket{f}_\cF\\
        &= \frac{1}{\sqrt{|\cR| \cdot |\cF|}} \cdot \sum_{(r,y) \in \cR \times \cY} \sqrt{|\cF_{x,r,y}|} \cdot \ket{(x,r)}_{\cD_H} \otimes \ket{F_{x,r,y}}_\cF\\
        \overline{E}_\cO \cdot \Decomp_x \left(\ket{\emptyset}_{\cD_H} \otimes \ket{F_\emptyset}_\cF\right) &= \vec{0}
    \end{align*}
    We used the fact that for any $(x,r)$, the sets $(\cF_{x,r,y})_{y \in \cY}$ partition $\cF$.
    
    Therefore, we only need to focus on the remaining terms.
    \begin{align*}
        \overline{E}_\cO \cdot \Decomp_x \cdot E_\cO \cdot \ket{\Psi} &= \overline{E}_\cO \cdot \sum_{(r, y) \in \cR \times \cY} v_{r,y} \cdot (\Decomp_x \cdot \ket{(x, r)}_{\cD_H}) \otimes \ket{F_{x,r,y}}_{\cF}\\
        &= \sum_{(r, y) \in \cR \times \cY} v_{r,y} \cdot \left(1 - \frac{1}{|\cR|}\right) \cdot \overline{E}_\cO \cdot \ket{(x, r)} \otimes \ket{F_{x,r,y}}_{\cF}\\
        &+ \sum_{(r, y) \in \cR \times \cY} v_{r,y} \cdot \frac{-1}{|\cR|} \cdot \sum_{r' \in \cR \backslash \{r\}} \overline{E}_\cO \cdot \ket{(x,r')}\otimes \ket{F_{x,r,y}}_{\cF} \\
        &+ \sum_{(r, y) \in \cR \times \cY} v_{r,y} \cdot \frac{1}{\sqrt{|\cR|}} \cdot \overline{E}_\cO \cdot\ket{\emptyset} \otimes \ket{F_{x,r,y}}_{\cF}
    \end{align*}
    We used the fact that 
    \[\Decomp_x \cdot \ket{(x, r)}_{\cD_H} = \left(1 - \frac{1}{|\cR|}\right) \cdot \ket{(x, r)} - \frac{1}{|\cR|} \cdot \sum_{r' \in \cR \backslash \{r\}} \ket{(x,r')} + \frac{1}{\sqrt{|\cR|}} \cdot \ket{\emptyset}\]
    from \cref{thm:Decomp-applied-to-database}.

    \item Next, 
    \begin{align*}
        \overline{E}_\cO \cdot \Decomp_x \cdot E_\cO \cdot \ket{\Psi} &= \sum_{(r, y) \in \cR \times \cY} v_{r,y} \cdot \left(1 - \frac{1}{|\cR|}\right) \cdot \vec{0}\\
        &+ \sum_{(r, y) \in \cR \times \cY} v_{r,y} \cdot \frac{-1}{|\cR|} \cdot \sum_{r' \in \cR \backslash \{r\}} \ket{(x,r')} \otimes \left(\mathbb{I} - \sum_{y' \in \cY} \ketbra{F_{x,r',y'}}\right) \cdot \ket{F_{x,r,y}}_{\cF}\\
        &+ \sum_{(r, y) \in \cR \times \cY} v_{r,y} \cdot \frac{1}{\sqrt{|\cR|}} \cdot \ket{\emptyset} \otimes (\mathbb{I} - \ketbra{F_\emptyset}) \cdot \ket{F_{x,r,y}}_{\cF}
    \end{align*}
    Next, we change the order of summation to obtain:
    \begin{align*}
        &\overline{E}_\cO \cdot \Decomp_x \cdot E_\cO \cdot \ket{\Psi}\\
        &\quad\quad= \frac{1}{\sqrt{|\cR|}} \cdot \sum_{r' \in \cR} \ket{(x,r')} \otimes \left(\mathbb{I} - \sum_{y' \in \cY} \ketbra{F_{x,r',y'}}\right) \cdot \left(\sum_{(r, y) \in \cR \backslash\{r'\} \times \cY} \frac{-v_{r,y}}{\sqrt{|\cR|}} \cdot \ket{F_{x,r,y}}_{\cF}\right)\\
        &\quad\quad+ \ket{\emptyset} \otimes (\mathbb{I} - \ketbra{F_\emptyset}) \cdot \left(\sum_{(r, y) \in \cR \times \cY} \frac{v_{r,y}}{\sqrt{|\cR|}} \cdot \ket{F_{x,r,y}}_{\cF}\right)
    \end{align*}
    
    \item Let $M_x$ be a matrix whose columns are $\left(\frac{1}{\sqrt{|\cR|}} \cdot \ket{F_{x,r,y}}\right)_{(r,y) \in \cR \times \cY}$.
    
    Also, for a given $r' \in \cR$, let $\vec{v}_{r'}$ be the same as $\vec{v}$ except that for every $y \in \cY$, the $(r',y)$-th entry is set to $0$. Note that $\|\vec{v}_{r'}\| \leq \|\vec{v}\| \leq 1$.
    
    Then,
    \begin{align*}
        \overline{E}_\cO \cdot \Decomp_x \cdot E_\cO \cdot \ket{\Psi} &= \frac{1}{\sqrt{|\cR|}} \cdot \sum_{r' \in \cR} \ket{(x,r')} \otimes \left(\mathbb{I} - \sum_{y' \in \cY} \ketbra{F_{x,r',y'}}\right) \cdot \left(-M_x \cdot \vec{v}_{r'}\right)\\
        &+ \ket{\emptyset} \otimes (\mathbb{I} - \ketbra{F_\emptyset}) \cdot M_x \cdot \vec{v}\\
        \left\|\overline{E}_\cO \cdot \Decomp_x \cdot E_\cO \cdot \ket{\Psi}\right\|_2^2 &= \frac{1}{|\cR|} \cdot \sum_{r' \in \cR} \left\|\left(\mathbb{I} - \sum_{y' \in \cY} \ketbra{F_{x,r',y'}}\right) \cdot M_x \cdot \vec{v}_{r'}\right\|_2^2\\
        &+ \|(\mathbb{I} - \ketbra{F_\emptyset}) \cdot M_x \cdot \vec{v}\|_2^2\\
        &\leq \frac{1}{|\cR|} \cdot \sum_{r' \in \cR} \left\|\left(\mathbb{I} - \sum_{y' \in \cY} \ketbra{F_{x,r',y'}}\right) \cdot M_x\right\|_2^2 \cdot \left\|\vec{v}_{r'}\right\|_2^2\\
        &+ \left\|\left(\mathbb{I} - \ketbra{F_\emptyset}\right) \cdot M_x\right\|_2^2 \cdot \left\|\vec{v}\right\|_2^2
    \end{align*}

    \item We know that 
    \begin{align*}
        \left\|\left(\mathbb{I} - \sum_{y' \in \cY} \ketbra{F_{x,r',y'}}\right) \cdot M_x\right\|_2 \leq \left\|\left(\mathbb{I} - \ketbra{F_\emptyset}\right) \cdot M_x\right\|_2 \leq \frac{1}{\sqrt{|\cR|}} 
    \end{align*}
    by \Cref{thm:norm-of-M-x,thm:norm-of-a-projector-applied-to-M}.
    Therefore, 
    \begin{align*}
        \left\|\overline{E}_\cO \cdot \Decomp_x \cdot E_\cO \cdot \ket{\Psi}\right\|_2^2 &\leq \frac{1}{|\cR|} \cdot \sum_{r' \in \cR} \frac{1}{|\cR|} \cdot \left\|\vec{v}_{r'}\right\|_2^2\\
        &+ \frac{1}{|\cR|} \cdot \left\|\vec{v}\right\|_2^2\\
        &\leq \frac{1}{|\cR|} \cdot \left(\left\|\vec{v}_{r'}\right\|_2^2 + \left\|\vec{v}\right\|_2^2\right) \leq \frac{2}{|\cR|}\\
        \left\|\overline{E}_\cO \cdot \Decomp_x \cdot E_\cO \cdot \ket{\Psi}\right\|_2 &\leq \sqrt{\frac{2}{|\cR|}} = \negl(\secp)
    \end{align*}

    \item Finally,
    \begin{align*}
        \left\|E_\cO \cdot \ket{\Psi}_\cO\right\|_2^2 &= \left\|\Decomp_x \cdot E_\cO \cdot \ket{\Psi}_\cO\right\|_2^2\\
        &= \left\|E_\cO \cdot \Decomp_x \cdot E_\cO \cdot \ket{\Psi}_\cO\right\|_2^2 + \left\|\overline{E}_\cO \cdot \Decomp_x \cdot E_\cO \cdot \ket{\Psi}_\cO\right\|_2^2\\
        &\leq \left\|E_\cO \cdot \Decomp_x \cdot E_\cO \cdot \ket{\Psi}_\cO\right\|_2^2 + \frac{2}{|\cR|}
    \end{align*}
    We used the fact that $\Decomp_x$ is a unitary, so it preserves norms. Then,
    \begin{align*}
        \left\|E_\cO \cdot \ket{\Psi}_\cO\right\|_2^2 - \frac{2}{|\cR|} &\leq \left\|E_\cO \cdot \Decomp_x \cdot E_\cO \cdot \ket{\Psi}_\cO\right\|_2^2\\
        \left\|E_\cO \cdot \ket{\Psi}_\cO\right\|_2 - \negl(\secp) &\leq \left\|E_\cO \cdot \Decomp_x \cdot E_\cO \cdot \ket{\Psi}_\cO\right\|_2
    \end{align*}
    which completes the proof.
\end{enumerate}
\end{proof}

\begin{lemma}\label{thm:Decomp-applied-to-database}
    $\Decomp_x \cdot \ket{(x, r)}_{\cD_H} = \left(1 - \frac{1}{|\cR|}\right) \cdot \ket{(x, r)} - \frac{1}{|\cR|} \cdot \sum_{r' \in \cR \backslash \{r\}} \ket{(x,r')} + \frac{1}{\sqrt{|\cR|}} \cdot \ket{\emptyset}$
\end{lemma}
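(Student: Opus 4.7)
The plan is a direct computation using the definition of $\Decomp_x$ recalled in \Cref{sec:compressedRO}. Recall that $\Decomp_x$ is diagonalized in the QFT basis of the ``output register'' associated with input $x$: the uniform superposition $\frac{1}{\sqrt{|\cR|}} \sum_y \ket{D_\emptyset \cup (x,y)}$ is swapped with $\ket{\emptyset}$, while for every nonzero phase $u \in \cR \setminus \{0\}$ the Fourier state $\frac{1}{\sqrt{|\cR|}} \sum_y (-1)^{u\cdot y} \ket{D_\emptyset \cup (x,y)}$ is an eigenvector of $\Decomp_x$ with eigenvalue $1$.

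First I would expand the computational basis state $\ket{(x,r)} = \ket{D_\emptyset \cup (x,r)}$ into this Fourier basis. The inner product with each Fourier basis vector is $\frac{1}{\sqrt{|\cR|}} (-1)^{u \cdot r}$, so
\[
    \ket{(x,r)} \;=\; \frac{1}{\sqrt{|\cR|}} \left( \frac{1}{\sqrt{|\cR|}} \sum_y \ket{D_\emptyset \cup (x,y)} \right) + \sum_{u \neq 0} \frac{(-1)^{u\cdot r}}{\sqrt{|\cR|}} \left( \frac{1}{\sqrt{|\cR|}} \sum_y (-1)^{u \cdot y} \ket{D_\emptyset \cup (x,y)} \right).
\]
Next I would apply $\Decomp_x$ termwise: the first bracketed state maps to $\ket{\emptyset}$, contributing $\frac{1}{\sqrt{|\cR|}} \ket{\emptyset}$, while each $u \neq 0$ term is fixed by $\Decomp_x$.

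The last step is to repackage the $u \neq 0$ contributions back into the computational basis. Collecting coefficients of $\ket{(x,y)}$ gives $\frac{1}{|\cR|}\sum_{u \neq 0} (-1)^{u \cdot (r+y)} = \frac{1}{|\cR|}\bigl( |\cR| \cdot \mathbbm{1}[y=r] - 1\bigr)$, which equals $1 - \frac{1}{|\cR|}$ when $y = r$ and $-\frac{1}{|\cR|}$ when $y \neq r$. Summing these with the earlier $\frac{1}{\sqrt{|\cR|}}\ket{\emptyset}$ term yields exactly the claimed identity. The only potential obstacle is bookkeeping the sign conventions of the Fourier basis so that the ``uniform'' direction ($u=0$) is the one $\Decomp_x$ pairs with $\ket{\emptyset}$; this is already fixed by the definition of $\Decomp_x$ in \Cref{sec:compressedRO}, so no additional work is required.
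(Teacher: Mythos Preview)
Your proposal is correct and follows essentially the same approach as the paper: both decompose $\ket{(x,r)}$ into the uniform-superposition component (which $\Decomp_x$ swaps with $\ket{\emptyset}$) and the orthogonal part (on which $\Decomp_x$ acts as the identity), then recombine. The only cosmetic difference is that the paper uses the coarse two-piece split $\vec v_0 + \vec v_1$ rather than your full Fourier expansion over all $u\neq 0$, which saves the final repackaging step but is otherwise the same computation.
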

\begin{proof}
\begin{align*}
    \text{Let } \vec{v_0} &= \frac{1}{|\cR|} \cdot \sum_{r' \in \cR} \ket{(x,r')}\\
    \vec{v_1} &= \ket{(x, r)} - \frac{1}{|\cR|} \cdot \sum_{r' \in \cR} \ket{(x,r')}\\
    \ket{(x, r)} &= \vec{v_0} + \vec{v_1}
\end{align*}

Next, $\vec{v_0}$ and $\vec{v_1}$ are orthogonal:
\begin{align*}
    \braket{\vec{v}_0|\vec{v}_1} &= \left(\frac{1}{|\cR|} \cdot \sum_{r'' \in \cR} \bra{(x,r'')}\right) \cdot \left(\ket{(x, r)} - \frac{1}{|\cR|} \cdot \sum_{r' \in \cR} \ket{(x,r')}\right)\\
    &= \frac{1}{|\cR|} \braket{(x,r)|(x,r)} - \frac{1}{|\cR|^2} \cdot \sum_{r' \in \cR} \braket{(x,r')|(x,r')}\\
    &= \frac{1}{|\cR|} - \frac{1}{|\cR|^2} \cdot |\cR| = 0
\end{align*}

Next, $\Decomp_x$ maps $\vec{v_0}$ to $\frac{1}{\sqrt{|\cR|}} \cdot \ket{\emptyset}$ and acts as the identity on $\vec{v_1}$. Therefore,
\begin{align*}
    \Decomp_x \cdot \ket{(x, r)} &= \frac{1}{\sqrt{|\cR|}} \cdot \ket{\emptyset} + \vec{v_1}\\
    &= \ket{(x, r)} - \frac{1}{|\cR|} \cdot \sum_{r' \in \cR} \ket{(x,r')} + \frac{1}{\sqrt{|\cR|}} \cdot \ket{\emptyset}\\
    &= \left(1 - \frac{1}{|\cR|}\right) \cdot \ket{(x, r)} - \frac{1}{|\cR|} \cdot \sum_{r' \in \cR \backslash \{r\}} \ket{(x,r')} + \frac{1}{\sqrt{|\cR|}} \cdot \ket{\emptyset}
\end{align*}
\end{proof}

\begin{lemma}\label{thm:norm-of-a-projector-applied-to-M}
    For any matrix $M$ of the appropriate dimensions,
    \[\left\|\left(\mathbb{I} - \sum_{y' \in \cY} \ketbra{F_{x,r',y'}}\right) \cdot M\right\|_2 \leq \left\|\left(\mathbb{I} - \ketbra{F_\emptyset}\right) \cdot M\right\|_2\]
\end{lemma}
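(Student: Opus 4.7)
The plan is to observe that both operators appearing inside the norms on the two sides are orthogonal projectors, and that the projector on the left-hand side dominates the one on the right in the L\"owner order. Specifically, let $P_{r'} := \sum_{y'\in \cY} \ketbra{F_{x,r',y'}}$ and $P_\emptyset := \ketbra{F_\emptyset}$. Since $\{\cF_{x,r',y'}\}_{y'\in \cY}$ partitions $\cF$, the vectors $\{\ket{F_{x,r',y'}}\}_{y'}$ are orthonormal, so $P_{r'}$ is an orthogonal projector. Obviously $P_\emptyset$ is one too.

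The key step is to show $P_\emptyset \preceq P_{r'}$, i.e.~that the range of $P_\emptyset$ lies inside the range of $P_{r'}$. This follows from the identity
\[
    \ket{F_\emptyset}
    = \frac{1}{\sqrt{|\cF|}}\sum_{f\in \cF}\ket{f}
    = \sum_{y'\in \cY} \sqrt{\frac{|\cF_{x,r',y'}|}{|\cF|}}\, \ket{F_{x,r',y'}},
\]
which shows that $\ket{F_\emptyset} \in \mathrm{span}\{\ket{F_{x,r',y'}}\}_{y'}$ and hence $P_{r'}\ket{F_\emptyset} = \ket{F_\emptyset}$. From this one gets the operator inequality $P_\emptyset \preceq P_{r'}$, and consequently $\mathbb{I} - P_{r'} \preceq \mathbb{I} - P_\emptyset$ as positive semidefinite operators.

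Finally, I would use the standard fact that for positive semidefinite operators $A \preceq B$ and any matrix $M$ of matching dimensions, $M^\dagger A M \preceq M^\dagger B M$, so $\|A^{1/2} M\|_2 \leq \|B^{1/2} M\|_2$ in the spectral norm (and in fact in any unitarily invariant norm). Since $\mathbb{I} - P_{r'}$ and $\mathbb{I} - P_\emptyset$ are projectors, they equal their own square roots, and we obtain the desired inequality
\[
    \left\|(\mathbb{I} - P_{r'})\,M\right\|_2 \;\leq\; \left\|(\mathbb{I} - P_\emptyset)\,M\right\|_2.
\]
I do not anticipate any serious obstacle; the only thing to double-check is that the norm $\|\cdot\|_2$ in the statement is understood as the spectral norm (or any unitarily invariant norm), which is consistent with its usage earlier in the proof of \Cref{thm:Decomp-does-not-move-states-out-of-E} where $\|(\mathbb{I} - \ketbra{F_\emptyset})\cdot M_x\|_2$ is bounded independently of the vector $\vec v$ it multiplies.
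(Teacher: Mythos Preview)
Your proof is correct and follows essentially the same approach as the paper. Both arguments hinge on the identical key observation that $\ket{F_\emptyset}$ lies in the span of $\{\ket{F_{x,r',y'}}\}_{y'}$ (equivalently, $(\mathbb{I}-P_{r'})\ket{F_\emptyset}=0$); the paper then uses the resulting factorization $(\mathbb{I}-P_{r'}) = (\mathbb{I}-P_{r'})(\mathbb{I}-P_\emptyset)$ together with submultiplicativity of the spectral norm, whereas you phrase the same fact as the L\"owner inequality $\mathbb{I}-P_{r'}\preceq \mathbb{I}-P_\emptyset$ and invoke monotonicity under conjugation---these two final steps are equivalent for projectors.
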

\begin{proof}
$ $
\begin{enumerate}
    \item Note that $\mathbb{I} - \sum_{y' \in \cY} \ketbra{F_{x,r',y'}}$ is a projector because the states $\left(\ket{F_{x,r',y'}}\right)_{y' \in \cY}$ are mutually orthogonal. Therefore,
    \[\left\|\left(\mathbb{I} - \sum_{y' \in \cY} \ketbra{F_{x,r',y'}}\right)\right\|_2 \leq 1\]

    \item $\ket{F_\emptyset}$ is orthogonal to $\mathbb{I} - \sum_{y' \in \cY} \ketbra{F_{x,r',y'}}$.
    \begin{align*}
        \left(\mathbb{I} - \sum_{y' \in \cY} \cdot \ketbra{F_{x,r',y'}}\right) \cdot \ket{F_{\emptyset}} &= \ket{F_{\emptyset}} - \sum_{y' \in \cY} \ket{F_{x,r',y'}} \cdot \braket{F_{x,r',y'} | F_\emptyset}\\
        &= \ket{F_{\emptyset}} - \sum_{y' \in \cY} \sum_{f \in \cF_{x,r',y'}} \ket{f} \cdot \frac{1}{\sqrt{|\cF_{x,r',y'}|}} \cdot \left(\sum_{f' \in \cF_{x,r',y'}}\frac{1}{\sqrt{|\cF_{x,r',y'}| \cdot |\cF|}}\right)\\
        &= \ket{F_{\emptyset}} - \sum_{y' \in \cY} \sum_{f \in \cF_{x,r',y'}} \ket{f} \cdot \left(\frac{|\cF_{x,r',y'}|}{|\cF_{x,r',y'}| \cdot \sqrt{|\cF|}}\right)\\
        &= \ket{F_{\emptyset}} - \sum_{y' \in \cY} \sum_{f \in \cF_{x,r',y'}} \ket{f} \cdot \frac{1}{\sqrt{|\cF|}}\\
        &= \ket{F_{\emptyset}} - \sum_{f \in \cF} \ket{f} \cdot \frac{1}{\sqrt{|\cF|}} = \ket{F_{\emptyset}} - \ket{F_{\emptyset}}\\
        &= \vec{0}
    \end{align*}

    \item Next,
    \begin{align*}
        \left(\mathbb{I} - \sum_{y' \in \cY} \ketbra{F_{x,r',y'}}\right) \cdot \left(\mathbb{I} - \ketbra{F_\emptyset}\right) &= \left(\mathbb{I} - \sum_{y' \in \cY} \ketbra{F_{x,r',y'}}\right) \\
        &\quad- \left(\mathbb{I} - \sum_{y' \in \cY} \ketbra{F_{x,r',y'}}\right) \cdot \ketbra{F_\emptyset}\\
        &= \left(\mathbb{I} - \sum_{y' \in \cY} \ketbra{F_{x,r',y'}}\right)
    \end{align*}

    \item Finally, 
    \begin{align*}
        \left\|\left(\mathbb{I} - \sum_{y' \in \cY} \ketbra{F_{x,r',y'}}\right) \cdot M\right\|_2 &= \left\|\left(\mathbb{I} - \sum_{y' \in \cY} \ketbra{F_{x,r',y'}}\right) \cdot \left(\mathbb{I} - \ketbra{F_\emptyset}\right) \cdot M\right\|_2\\
        &\leq \left\|\left(\mathbb{I} - \sum_{y' \in \cY} \ketbra{F_{x,r',y'}}\right)\right\|_2 \cdot \left\|\left(\mathbb{I} - \ketbra{F_\emptyset}\right) \cdot M\right\|_2\\
        &\leq \left\|\left(\mathbb{I} - \ketbra{F_\emptyset}\right) \cdot M\right\|_2
    \end{align*}
\end{enumerate}
\end{proof}

\begin{lemma}\label{thm:norm-of-M-x}
        Let $M_x$ be a matrix whose columns are $\left(\frac{1}{\sqrt{|\cR|}} \cdot \ket{F_{x,r,y}}\right)_{(r,y) \in \cR \times \cY}$. Then 
        \[\left\|(\mathbb{I} - \ketbra{F_\emptyset}) \cdot M_x\right\|_2 \leq \frac{1}{\sqrt{|\cR|}}\]
    \end{lemma}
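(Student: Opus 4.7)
The plan is to bound $\|(\mathbb{I}-\ketbra{F_\emptyset})M_x\|_2$ by computing its Gram matrix $M_x^\dagger (\mathbb{I}-\ketbra{F_\emptyset}) M_x$ and exploiting pairwise independence (which is in force for the outer lemma) to show the Gram matrix is block-diagonal in a useful basis.

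First I would set up notation. Let $P = \mathbb{I} - \ketbra{F_\emptyset}$, which is a projector, and write $p_{r,y} \coloneqq |\cF_{x,r,y}|/|\cF| = \Pr_{f\gets \cF}[f(x,r)=y]$. A direct computation gives $\braket{F_{x,r,y}|F_\emptyset} = \sqrt{p_{r,y}}$, and from the definition of $\ket{F_{x,r,y}}$ one has
\begin{equation*}
    \braket{F_{x,r,y} | F_{x,r',y'}} = \frac{|\cF_{x,r,y}\cap \cF_{x,r',y'}|}{\sqrt{|\cF_{x,r,y}|\cdot |\cF_{x,r',y'}|}}.
\end{equation*}
Since $\|PM_x\|_2^2 = \|M_x^\dagger P M_x\|_{\text{op}}$, the task reduces to bounding the operator norm of the $(|\cR||\cY|) \times (|\cR||\cY|)$ matrix whose $((r,y),(r',y'))$-entry is $\tfrac{1}{|\cR|}\bigl(\braket{F_{x,r,y}|F_{x,r',y'}} - \sqrt{p_{r,y}p_{r',y'}}\bigr)$.

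Next I would use the pairwise independence hypothesis of the outer lemma: for $r\neq r'$, $|\cF_{x,r,y}\cap \cF_{x,r',y'}|/|\cF| = p_{r,y}p_{r',y'}$, so $\braket{F_{x,r,y}|F_{x,r',y'}} = \sqrt{p_{r,y}p_{r',y'}}$ and the $((r,y),(r',y'))$-entry vanishes. For $r=r'$ with $y\neq y'$, the sets $(\cF_{x,r,y})_{y\in\cY}$ partition $\cF$, so $\braket{F_{x,r,y}|F_{x,r,y'}} = 0$ and the entry is $-\tfrac{1}{|\cR|}\sqrt{p_{r,y}p_{r,y'}}$; the diagonal entries are $\tfrac{1}{|\cR|}(1-p_{r,y})$. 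Hence $M_x^\dagger P M_x$ is block-diagonal with one block per $r\in \cR$, and the $r$-th block equals $\tfrac{1}{|\cR|}(I_{|\cY|} - \vec{p}_r \vec{p}_r^\top)$ where $\vec{p}_r = (\sqrt{p_{r,y}})_{y\in \cY}$ satisfies $\|\vec{p}_r\|^2 = \sum_y p_{r,y} = 1$.

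Finally, each block is $\tfrac{1}{|\cR|}$ times a projector onto the orthogonal complement of the unit vector $\vec{p}_r$, so its operator norm is exactly $\tfrac{1}{|\cR|}$. Therefore $\|M_x^\dagger P M_x\|_{\text{op}} \leq \tfrac{1}{|\cR|}$, and taking square roots gives $\|PM_x\|_2 \leq \tfrac{1}{\sqrt{|\cR|}}$ as claimed. The only real subtlety is recognizing that the pairwise independence assumption of the ambient proposition is exactly what is needed to kill the $r\neq r'$ cross terms; without it, the Gram matrix would not decouple and the stated bound would not in general hold.
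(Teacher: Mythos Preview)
Your proof is correct and rests on the same computation as the paper's: both evaluate the Gram-type entries $\braket{F_{x,r,y}|F_{x,r',y'}}$ using pairwise independence (this is exactly the paper's \Cref{thm:inner-products-of-F-register-states}) and observe that the $r\neq r'$ cross terms take the form $\sqrt{p_{r,y}p_{r',y'}}$, which is precisely cancelled by the $\ketbra{F_\emptyset}$ subtraction.

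The organization differs, however. The paper works with $M_x^\dagger M_x$ (no projector) and writes it as $\tfrac{1}{|\cR|}\mathbb{I} + \ketbra{\Phi_0} - N_x$, where $\ket{\Phi_0}$ has entries $\sqrt{p_{r,y}/|\cR|}$ and $N_x$ is block-diagonal with blocks $\tfrac{1}{|\cR|}\vec{p}_r\vec{p}_r^{\,\top}$; it then separately verifies $M_x\ket{\Phi_0}=\ket{F_\emptyset}$ and handles an arbitrary input vector by splitting it into $\ket{\Phi_0}$ and its orthogonal complement. You fold the projector $P$ in from the start, so that $M_x^\dagger P M_x = M_x^\dagger M_x - \ketbra{\Phi_0} = \tfrac{1}{|\cR|}\mathbb{I} - N_x$ comes out block-diagonal immediately, and each block is visibly $\tfrac{1}{|\cR|}$ times a rank-one-deficient projector. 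This is a cleaner packaging of the same calculation: your route avoids the intermediate identification of $M_x\ket{\Phi_0}$ with $\ket{F_\emptyset}$ and the subsequent case analysis, at the cost of not exhibiting that identification explicitly (which the paper's proof does make available but does not otherwise need).
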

    \begin{proof}
    $ $
    \begin{enumerate}
        \item The $(r,y), (r',y')$-th entry of $M_x^\dag \cdot M_x$ is 
        \begin{align*}
            (M_x^\dag \cdot M_x)_{(r,y), (r',y')} &= \frac{1}{|\cR|} \cdot \braket{F_{x,r,y}|F_{x,r',y'}}\\
            &= \begin{cases}
                \frac{1}{|\cR|},& (r,y) = (r',y')\\
                0,& r = r' \land y \neq y'\\
                \frac{1}{|\cR|} \cdot \sqrt{\Pr_f[f(x,r)=y]} \cdot \sqrt{\Pr_f[f(x,r')=y']},& r \neq r'
            \end{cases}
        \end{align*}
        by \cref{thm:inner-products-of-F-register-states}.
    
        \item Let us write $M_x^\dag \cdot M_x$ as a linear combination of PSD matrices.
        
        First, let $\ket{\Phi_0}$ be a column vector such that the $(r,y)$-th entry is $\sqrt{\frac{\Pr_f[f(x,r)=y]}{|\cR|}}$. Additionally, let $\ket{\Phi_{0, r}}$ be a column vector such that the $(r',y)$-th entry is $0$ if $r \neq r'$, and $\sqrt{\Pr_f[f(x,r)=y]}$ if $r = r'$. 

        Note that these vectors have unit norm:
        \begin{align*}
            \|\ket{\Phi_0}\|_2^2 &= \sum_{r, y} \frac{\Pr_f[f(x,r)=y]}{|\cR|} = \sum_{r} \frac{\sum_y \Pr_f[f(x,r)=y]}{|\cR|} = \sum_{r} \frac{1}{|\cR|} = 1\\
            \|\ket{\Phi_{0, r}}\|_2^2 &= \sum_{y \in \cY} \Pr_f[f(x,r)=y] = 1
        \end{align*}
        

        Second, 
        \begin{align*}
            \text{let } N_x &= \sum_{r \in \cR} \frac{1}{|\cR|} \cdot \ketbra{\Phi_{0, r}}
        \end{align*}
        Note that $\left(\ket{\Phi_{0, r}}\right)_{r \in \cR}$ are mutually orthogonal, so $N_x$ is positive semi-definite.
        
        Third, we claim that $M_x^\dag \cdot M_x$ can be written in the following form:
        \begin{align*}
            M_x^\dag \cdot M_x &= \frac{1}{|\cR|} \cdot \mathbb{I} + \ketbra{\Phi_0} - N_x
        \end{align*}
        The $(r,y), (r',y')$-th entry of $\left(\frac{1}{|\cR|} \cdot \mathbb{I} + \ketbra{\Phi_0} - N_x\right)$ is given below. As shorthand, we let $p_{x,r,y} = \Pr_f[f(x,r)=y]$.
        \begin{align*}
            &\left(\frac{1}{|\cR|} \cdot \mathbb{I} + \ketbra{\Phi_0} - N_x\right)_{(r,y),(r',y')}\\
            &\quad\quad= \begin{cases}
                \frac{1}{|\cR|}\left(1 + p_{x,r,y} - p_{x,r,y}\right),& (r,y) = (r',y')\\
                \frac{1}{|\cR|}\left(0 + \sqrt{p_{x,r,y}} \cdot \sqrt{p_{x,r',y'}} - \sqrt{p_{x,r,y}} \cdot \sqrt{p_{x,r',y'}}\right),& r = r' \land y \neq y'\\
                \frac{1}{|\cR|}\left(0 + \sqrt{p_{x,r,y}} \cdot \sqrt{p_{x,r',y'}} - 0\right),& r \neq r'
            \end{cases}\\
            &\quad\quad= \begin{cases}
                \frac{1}{|\cR|},& (r,y) = (r',y')\\
                0,& r = r' \land y \neq y'\\
                \frac{1}{|\cR|} \cdot \sqrt{p_{x,r,y}} \cdot \sqrt{p_{x,r',y'}},& r \neq r'
            \end{cases}\\
            &\quad\quad= (M_x^\dag \cdot M_x)_{(r,y), (r',y')}
        \end{align*}
        Therefore, $M_x^\dag \cdot M_x = \frac{1}{|\cR|} \cdot \mathbb{I} + \ketbra{\Phi_0} - N_x$
        
        \item For any $\ket{\Phi_1}$ of unit norm that is orthogonal to $\ket{\Phi_0}$,
        \begin{align*}
            \|M_x \cdot \ket{\Phi_1}\|_2^2 &= \bra{\Phi_1} \cdot M_x^\dag \cdot M_x \cdot \ket{\Phi_1}\\
            &= \bra{\Phi_1} \cdot \frac{1}{|\cR|} \cdot \mathbb{I} \cdot \ket{\Phi_1} + \braket{\Phi_1|\Phi_0} \cdot \braket{\Phi_0|\Phi_1} - \bra{\Phi_1} \cdot N_x \cdot \ket{\Phi_1} = \frac{1}{|\cR|} - \bra{\Phi_1} \cdot N_x \cdot \ket{\Phi_1}\\
            &\leq \frac{1}{|\cR|}\\
            \|M_x \cdot \ket{\Phi_1}\|_2 &\leq \frac{1}{\sqrt{|\cR|}}
        \end{align*}
        We used the fact that $\bra{\Phi_1} \cdot N_x \cdot \ket{\Phi_1} \geq 0$ because $N_x$ is PSD.

        \item We will show that $M_x \cdot \ket{\Phi_0} = \ket{F_{\emptyset}}$.
            \begin{align*}
            M_x \cdot \ket{\Phi_0} &= \frac{1}{|\cR|} \cdot \sum_{(r,y) \in \cR \times \cY} \ket{F_{x,r,y}} \cdot \sqrt{\Pr_f[f(x,r)=y]}\\
            &= \frac{1}{|\cR|} \cdot \sum_{(r,y) \in \cR \times \cY} \sum_{f \in \cF_{x,r,y}} \ket{f} \cdot \frac{1}{\sqrt{|\cF_{x,r,y}|}} \cdot \sqrt{\frac{|\cF_{x,r,y}|}{|\cF|}} = \frac{1}{|\cR|} \cdot \sum_{(r,y) \in \cR \times \cY} \sum_{f \in \cF_{x,r,y}} \ket{f} \cdot \frac{1}{\sqrt{|\cF|}}\\
            &= \frac{1}{|\cR|} \cdot \sum_{r \in \cR} \sum_{f \in \cF} \ket{f} \cdot \frac{1}{\sqrt{|\cF|}} = \frac{1}{|\cR|} \cdot \sum_{r \in \cR} \ket{F_{\emptyset}}\\
            &= \ket{F_{\emptyset}}
        \end{align*}
    We used the fact that for any $(x,r)$, the sets $(\cF_{x,r,y})_{y \in \cY}$ partition $\cF$.
    \item Any vector $\ket{\Phi}$ of unit norm can be written as $\ket{\Phi} = \alpha \cdot \ket{\Phi_0} + \beta \cdot \ket{\Phi_1}$ for some vector $\ket{\Phi_1}$ of unit norm that is orthogonal to $\ket{\Phi_0}$ and for some $\alpha, \beta$ for which $|\alpha|^2 + |\beta|^2 = 1$. Next,
    \begin{align*}
        (\mathbb{I} - \ketbra{F_\emptyset}) \cdot M_x \cdot \ket{\Phi} &= (\mathbb{I} - \ketbra{F_\emptyset}) \cdot M_x \cdot \left(\alpha \cdot \ket{\Phi_0} + \beta \cdot \ket{\Phi_1}\right)\\
        &= \alpha \cdot (\mathbb{I} - \ketbra{F_\emptyset}) \cdot \ket{F_\emptyset} + \beta \cdot (\mathbb{I} - \ketbra{F_\emptyset}) \cdot M_x \cdot \ket{\Phi_1}\\
        &= \beta \cdot (\mathbb{I} - \ketbra{F_\emptyset}) \cdot M_x \cdot \ket{\Phi_1}\\
        \|(\mathbb{I} - \ketbra{F_\emptyset}) \cdot M_x \cdot \ket{\Phi}\|_2 &\leq |\beta| \cdot \|(\mathbb{I} - \ketbra{F_\emptyset})\|_2 \cdot \|M_x \cdot \ket{\Phi_1}\|_2\\
        &\leq 1 \cdot 1 \cdot \frac{1}{\sqrt{|\cR|}} = \frac{1}{\sqrt{|\cR|}}
    \end{align*}
    We've shown that for any vector $\ket{\Phi} \neq \vec{0}$, $\frac{\|(\mathbb{I} - \ketbra{F_\emptyset}) \cdot M_x \cdot \ket{\Phi}\|_2}{\|\ket{\Phi}\|_2} \leq \frac{1}{\sqrt{|\cR|}}$, so 
    \[\left\|(\mathbb{I} - \ketbra{F_\emptyset}) \cdot M_x\right\|_2 \leq \frac{1}{\sqrt{|\cR|}}\]
\end{enumerate}
\end{proof}

\begin{lemma}\label{thm:inner-products-of-F-register-states}
For any $(x,r,y), (x',r',y') \in \cX \times \cR \times \cY$,
\begin{itemize}
    \item $\braket{F_{\emptyset}|F_{x,r,y}} = \sqrt{\Pr_f[f(x,r)=y]}$
    \item If $(x,r) = (x',r')$ and $y \neq y'$, then $\braket{F_{x,r,y}|F_{x',r',y'}} = 0$.
    \item If $(x,r) \neq (x',r')$, then $\braket{F_{x,r,y}|F_{x',r',y'}} = \sqrt{\Pr_f[f(x,r)=y] \cdot \Pr_f[f(x',r')=y']}$.
\end{itemize}
\end{lemma}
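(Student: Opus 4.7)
The plan is to prove all three items by direct computation, expanding the definitions of $\ket{F_\emptyset}$ and $\ket{F_{x,r,y}}$ and using the fact that $\Pr_f[f(x,r) = y] = |\cF_{x,r,y}|/|\cF|$ when $f$ is uniform over $\cF$.

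First, I would handle the inner product with $\ket{F_\emptyset}$: expanding the definitions gives
\[
    \braket{F_\emptyset | F_{x,r,y}} = \frac{1}{\sqrt{|\cF| \cdot |\cF_{x,r,y}|}} \sum_{f \in \cF_{x,r,y}} 1 = \frac{|\cF_{x,r,y}|}{\sqrt{|\cF| \cdot |\cF_{x,r,y}|}} = \sqrt{|\cF_{x,r,y}|/|\cF|},
\]
which equals $\sqrt{\Pr_f[f(x,r) = y]}$ by the definition of $\cF_{x,r,y}$.

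For the second item, the key observation is purely set-theoretic: a function cannot take two distinct values on the same input, so $\cF_{x,r,y} \cap \cF_{x,r,y'} = \emptyset$ whenever $y \neq y'$. Since the two computational-basis expansions have disjoint supports, their inner product is zero.

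For the third item, I would expand the inner product to obtain
\[
    \braket{F_{x,r,y}|F_{x',r',y'}} = \frac{|\cF_{x,r,y} \cap \cF_{x',r',y'}|}{\sqrt{|\cF_{x,r,y}| \cdot |\cF_{x',r',y'}|}},
\]
and then rewrite the numerator as $|\cF| \cdot \Pr_f[f(x,r)=y \wedge f(x',r')=y']$. The main (and only nontrivial) step is to invoke the pairwise independence hypothesis on $\calF$, which applies precisely because $(x,r)\neq (x',r')$, to split this joint probability as $\Pr_f[f(x,r)=y]\cdot \Pr_f[f(x',r')=y']$. Substituting $|\cF_{x,r,y}|/|\cF|$ for each marginal probability, the $|\cF|$ factors collapse and a square root emerges, yielding $\sqrt{\Pr_f[f(x,r)=y] \cdot \Pr_f[f(x',r')=y']}$. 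No step should present any difficulty; the lemma is essentially a bookkeeping exercise once the pairwise independence assumption is identified as the right tool for the cross-term case.
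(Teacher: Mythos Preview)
Your proposal is correct and follows essentially the same approach as the paper's proof: direct expansion of the inner products in the computational basis, using disjointness of $\cF_{x,r,y}$ and $\cF_{x,r,y'}$ for the second item, and invoking pairwise independence to factor the joint probability $\Pr_f[f(x,r)=y \wedge f(x',r')=y']$ in the third item. The paper's writeup is slightly more explicit in displaying the intermediate algebraic steps, but the structure and the key ingredient (pairwise independence for the cross term) are identical.
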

\begin{proof}
\begin{align*}
    \braket{F_{\emptyset}|F_{x,r,y}} &= \left(\frac{1}{\sqrt{|\cF|}} \cdot \sum_{f \in \cF} \bra{f}\right) \cdot \left(\frac{1}{\sqrt{|\cF_{x,r,y}|}} \cdot \sum_{f' \in \cF_{x,r,y}} \ket{f'}\right)\\
    &= \frac{1}{\sqrt{|\cF| \cdot |\cF_{x,r,y}|}} \cdot \sum_{f \in \cF_{x,r,y}} \braket{f|f}\\
    &= \frac{|\cF_{x,r,y}|}{\sqrt{|\cF| \cdot |\cF_{x,r,y}|}} = \sqrt{\frac{|\cF_{x,r,y}|}{|\cF|}}\\
    &= \sqrt{\Pr_f[f(x,r)=y]}
\end{align*}

Next, if $(x,r) = (x',r')$, but $y \neq y'$, then $\cF_{x,r,y} \cap \cF_{x',r',y'} = \{\}$, so $\braket{F_{x,r,y}|F_{x',r',y'}} = 0$.

Finally, if $(x,r) \neq (x',r')$, then
\begin{align*}
    \braket{F_{x,r,y}|F_{x',r',y'}} &= \left(\frac{1}{\sqrt{|\cF_{x,r,y}|}} \cdot \sum_{f \in \cF_{x,r,y}} \bra{f}\right) \cdot \left(\frac{1}{\sqrt{|\cF_{x',r',y'}|}} \cdot \sum_{f' \in \cF_{x',r',y'}} \ket{f'}\right)\\
    &= \frac{1}{\sqrt{|\cF_{x,r,y}| \cdot |\cF_{x',r',y'}|}} \cdot \sum_{f \in \cF_{x,r,y} \cap \cF_{x',r',y'}} \braket{f|f}\\
    &= \frac{|\cF_{x,r,y} \cap \cF_{x',r',y'}|}{\sqrt{|\cF_{x,r,y}| \cdot |\cF_{x',r',y'}|}} = \frac{|\cF| \cdot \Pr_f[f(x,r)=y \land f(x',r')=y']}{\sqrt{|\cF| \cdot \Pr_f[f(x,r)=y] \cdot |\cF| \cdot \Pr_f[f(x',r')=y']}}\\
    &= \frac{\Pr_f[f(x,r)=y] \cdot \Pr_f[f(x',r')=y']}{\sqrt{\Pr_f[f(x,r)=y] \cdot \Pr_f[f(x',r')=y']}} = \sqrt{\Pr_f[f(x,r)=y] \cdot \Pr_f[f(x',r')=y']}
\end{align*}
\end{proof}


\section{Security Proof for \Cref{def:weak-operational-security} with Measure-and-Reprogram}
\label{sec:proof_measure_and_reprogram}

In this section, we show that if we aim at proving the weak operational one-time security definition \Cref{def:weak-operational-security} for random functions with superpolynomial range size, for the construction in \Cref{sec:construction}, we can have a simple proof using a technique called measure-and-reprogram lemma developed in \cite{don2019security}. The following proof actually is applicable to a class of functions we call $2$-replaceable below, which is more generic than truly random functions. 

\subsection{Preliminaries}
We review the measure-and-reprogram lemma developed in \cite{don2019security} and \cite{DFM20}. We adopt the formulation presented in in \cite[Section~4.2]{yamakawa2021classical}.

\begin{definition}[Reprogramming Oracle]
  Let $\mathcal{A}$ be a quantum algorithm that is given quantum oracle access to an oracle $\mathcal{O}$, where $\mathcal{O}$ is an oracle that is intialized to compute a classical function $f : \mathcal{X} \rightarrow \mathcal{Y}$ such that $\mathcal{A}$. At some point in an execution of $\mathcal{A}^\mathcal{O}$, we say that we reprogram $\mathcal{O}$ to output $g(x)$ on $x \in \mathcal{X}$ if we update the oracle to compute the function $f_{x, g}$ defined by
  \begin{align*}
    f_{x, g}(x') = \begin{dcases}
      g(x') &\text{if } x = x',\\
      f(x') &\text{otherwise}.\end{dcases}
  \end{align*}
  This updated oracle is used in the rest of execution of $\mathcal{A}^\mathcal{O}$. We denote the above reprogramming procedure as $\mathcal{O} \leftarrow \mathsf{Reprogram}(\mathcal{O}, x', g)$.
\end{definition}

\begin{definition}[Measure-and-Reprogram Algorithm]\label{definition:measure-and-reprogram}
    Let $\mathcal{X}, \mathcal{Y}, \mathcal{Z}$ be a set of classical strings and $k$ be a positive integer. Let $\mathcal{A}$ be a $q$-quantum-query algorithm that is given quantum oracle access to an oracle that computes a classical function $f: \mathcal{X} \rightarrow \mathcal{Y}$. The algorithm $\mathcal{A}$, when given a (possibly quantum) input $\mathsf{input}$, outputs $\vecx \in \mathcal{X}^k$ and $z \in \mathcal{Z}$. For a function $g : \mathcal{X} \rightarrow \mathcal{Y}$, we define a measure-and-reprogram algorithm $\tilde{\mathcal{A}}[f, g]$ as follows:\\

    \noindent
    $\underline{\tilde{\mathcal{A}}[f, g](\mathsf{input})}$
  \begin{enumerate}
    \item For each $j \in [k]$, uniformly pick $(i_j, b_j) \in ([q] \times \{0,1\}) \cup \{(\bot, \bot)\}$ such that there does not exist $j \neq j'$ such that $i_j = i_{j'} \neq \bot$.
    \item Run $\mathcal{A}^{\mathcal{O}}(\mathsf{input})$ where the oracle $\mathcal{O}$ is initialized to be a quantumly-accessible classical oracle that computes the classical function $f$. When $\mathcal{A}$ makes its $i$th query, the oracle is simulated as follows:
    \begin{enumerate}
        \item If $i = i_j$ for some $j \in [k]$, measure $\mathcal{A}$'s query register to obtain $x'_j$, and do either of the following:
        \begin{enumerate}
            \item If $b_j = 0$, reprogram $\mathcal{O} \leftarrow \mathsf{Reprogram}(\mathcal{O}, x'_j, g)$ and answer $\mathcal{A}$'s $i$th query using the reprogrammed oracle.
            \item If $b_j = 1$, answer $\mathcal{A}$'s $i$th query by using the oracle before the reprogramming and then reprogram $\mathcal{O} \leftarrow \mathsf{Reprogram}(\mathcal{O}, x'_j, g)$.
        \end{enumerate}
        \item Otherise, answer $\mathcal{A}$'s $i$the query by just using the oracle $\mathcal{O}$ without any measurement or reprogramming.
    \end{enumerate}
    \item Let $(\vecx = (x_1, \ldots, x_k), z)$ be $\mathcal{A}$'s output.
    \item For all $j \in [k]$ such that $i_j = \bot$, set $x'_j := x_j$.
    \item Output $(\vecx', z)$, where $\vecx' := (x'_1, \ldots, x'_k)$.
  \end{enumerate}
\end{definition}

\begin{lemma}\label{lemma:measure-and-reprogram}
    Let $\mathcal{X}, \mathcal{Y}, \mathcal{Z}$, and $\mathcal{A}$ be as in Definition~\ref{definition:measure-and-reprogram}. Then, for any input $\mathsf{input}$, functions $f, g: \mathcal{X} \rightarrow \mathcal{Y}$, $\vecx^* \in \mathcal{X}^k$ such that $x^*_j \neq x^*_{j'}$ for all $j \neq j'$, and relation $R \subseteq \mathcal{X}^k \times \mathcal{Y}^k \times \mathcal{Z}$, we have
    \begin{align*}
        \Pr \left[
        \begin{array}{c}
        \vecx' = \vecx^*\ \land\\
        (\vecx', g(\vecx'), z) \in R
        \end{array}  :
        (\vecx', z) \leftarrow \tilde{\mathcal{A}}[f, g](\mathsf{input})
        \right] \ge \frac{1}{2q + 1}^{2k} \Pr \left[\begin{array}{c}
        \vecx = \vecx^*\ \land\\
        (\vecx, g(\vecx'), z) \in R
        \end{array} :
        (\vecx, z) \leftarrow \mathcal{A}^{|f_{\vecx^*, g}\rangle}(\mathsf{input}) \right],
    \end{align*}
where $\tilde{\mathcal{A}}[f, g]$ is the measure-and-reprogram algorithm as defined in Definition~\ref{definition:measure-and-reprogram}, and $f_{\vecx^*, g}$ is defined as
\begin{align*}
    f_{\vecx^*, g}(x') :=
    \begin{dcases}
        g(x') &\text{if } \exists j \in [k] \text{ s.t. } x' = x^*_j,\\
        f(x') &\text{otherwise.}
    \end{dcases}
\end{align*}
\end{lemma}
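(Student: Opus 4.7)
The plan is to prove this by induction on $k$, reducing to the single-extraction case (\,$k=1$\,) which is the core measure-and-reprogram lemma of Don--Fehr--Majenz, and then composing $k$ applications of that single-extraction argument with care taken to handle the fact that later measurements may depend on earlier reprogrammings. Before starting the induction, I would set up the natural framing: view the adversary $\cA$ as a sequence of $q$ unitaries $U_1,\dots,U_q$ interleaved with queries to the oracle; the oracles $f$ and $f_{\vec{x}^*,g}$ differ only on the (disjoint) set $\{x_1^*,\dots,x_k^*\}$, so by a standard hybrid one can write the difference in the adversary's output behavior as a sum of contributions from each ``changed point'' $x_j^*$.

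For the base case $k=1$, I would invoke the Don--Fehr--Majenz argument essentially verbatim: the measure-and-reprogram algorithm $\tilde\cA[f,g]$ guesses a pair $(i,b)\in [q]\times\{0,1\}\cup\{(\bot,\bot)\}$, which gives $2q+1$ options. The argument shows that, conditioned on the correct guess, the state right after measuring at query $i$ and reprogramming is (trace-distance) close to the state the adversary in the ideal world would hold at the same moment. Combined with Jordan's lemma / a BBBV-style / Zhandry-style analysis, one obtains that for every adversary strategy there is \emph{some} $(i,b)$ such that measuring at that position succeeds with probability at least $1/(2q+1)$ times the ideal success probability; uniform sampling over $(i,b)$ then yields the factor $\tfrac{1}{2q+1}$.

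The inductive step is where the real work lies. Assume the bound for $k-1$, and consider an adversary that outputs a $k$-tuple $\vec{x}$ satisfying a relation $R$ when run against $f_{\vec{x}^*,g}$. The idea is to single out the ``first extraction slot'' $j^*$ corresponding to, say, the smallest query index among the $k$ sampled pairs $(i_1,b_1),\dots,(i_k,b_k)$, apply the single-extraction lemma to that slot (paying a $\tfrac{1}{(2q+1)^2}$ factor because a single index $i_{j^*}$ is being used among $k$ ``slots,'' hence the pair $(i_{j^*},b_{j^*})$ together with the slot identity $j^*$ must be guessed), and then run the measure-and-reprogram algorithm for the remaining $k-1$ slots on the reprogrammed post-measurement state using the inductive hypothesis. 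Since the distinctness constraint $x_j^*\neq x_{j'}^*$ ensures that reprogramming at one point does not interfere with the definition of $f_{\vec{x}^*,g}$ at the others, the inductive step composes cleanly and yields the overall factor $\tfrac{1}{(2q+1)^{2k}}$.

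The main obstacle I anticipate is the bookkeeping needed to control the post-measurement state after several sequential extractions: each measurement performed by $\tilde\cA$ potentially disturbs the adversary's internal register, so one must argue that the \emph{joint} distribution of extracted values $(x_1',\dots,x_k')$ and output $z$ converges to the ideal distribution, not merely the marginals. Concretely, one needs to show that conditioning on the first extracted value being the correct $x_{j^*}^*$ leaves the residual state close to what the adversary would hold at that point when querying $f_{\vec{x}^*,g}$, so that the inductive hypothesis applies cleanly to the remaining $k-1$ extractions. This is exactly the content of the technical heart of \cite{DFM20}, and I would cite their analysis (in particular their ``commutator'' / hybrid bound on the difference between the two oracle executions at each query index) rather than reprove it from scratch.
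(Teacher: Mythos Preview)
The paper does not prove this lemma. It appears in the preliminaries of Appendix~B as a result quoted from prior work: the paper explicitly says it is ``review[ing] the measure-and-reprogram lemma developed in \cite{don2019security} and \cite{DFM20}'' and ``adopt[s] the formulation presented in \cite[Section~4.2]{yamakawa2021classical}''; no proof is supplied. So there is nothing in the paper to compare your sketch against.

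For what it is worth, your outline --- induction on $k$, base case the single-point extraction lemma, inductive step peeling off one slot and applying the hypothesis to the residual post-measurement state --- is indeed the shape of the argument in \cite{DFM20}. One caution on your bookkeeping: you claim the base case $k=1$ yields a factor $1/(2q+1)$, but the lemma as stated gives $1/(2q+1)^{2}$ already for $k=1$. Correspondingly, your explanation of the extra square in the inductive step (attributing it to guessing the slot label $j^*$ in addition to $(i_{j^*},b_{j^*})$) is not how the loss arises in \cite{DFM20}; the $k$ pairs $(i_j,b_j)$ are sampled jointly exactly as in Definition~\ref{definition:measure-and-reprogram}, and the quadratic loss per extraction comes out of the core single-point analysis itself. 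If you need to reconstruct the proof rather than cite it, go back to \cite{DFM20} for the precise decomposition.
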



\begin{definition}[$k$-wise replaceable]
    A function family $\mathcal{F} = \{f: \mathcal{X} \rightarrow \mathcal{Y}\}$ is $k$-wise replaceable if the following holds.
    For all $\vecx \in \mathcal{X}^k$, the distribution over functions $f'_{\vecx, f(\vecx)}$ obtained by sampling $f, f' \leftarrow \mathcal{F}$ is the same as the distribution of functions obtained by sampling $f \leftarrow \mathcal{F}$.
\end{definition}

A random function with superpolynomial range size is $k$-wise replaceable as shown in \cite{yamakawa2021classical}.

\subsection{Security Proof}

\begin{theorem}
\label{thm:weak_op_security_proof}
Let $\mathcal{F}$ be a function family that is $1$-query unlearnable \Cref{def:single-query-unlearnable} and $2$-wise replaceable. Then the scheme in \Cref{fig:otp-construction} satisfies the one-time security of \Cref{def:weak-operational-security}.
\end{theorem}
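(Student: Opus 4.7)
I plan to reduce the weak operational one-time security of the construction in \Cref{fig:otp-construction} to the conjunction of the direct-product hardness of subspace states (\Cref{thm: direct product oracle}) and the 1-query unlearnability of $\mathcal{F}$, using the measure-and-reprogram technique (\Cref{lemma:measure-and-reprogram}) with 2-wise replaceability as the glue that makes the reduction distribution-exact. Suppose a QPT adversary $\mathcal{A}$ making $q$ queries to $\cO_{f, G, \bfA}$ wins the game of \Cref{def:weak-operational-security} with probability $\epsilon$. I view each of $\mathcal{A}$'s queries to $\cO_{f, G, \bfA}$ as making one implicit query to the internal random oracle $G$ at the vector register, and apply \Cref{lemma:measure-and-reprogram} with $k = 2$ to $G$: this extracts two vectors $\vecv_1^*, \vecv_2^*$ from $\mathcal{A}$'s indirect $G$-queries and reprograms $G$ at these two points to fresh random values $r_j^* = g(\vecv_j^*)$. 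The target relation is that $\mathcal{A}$'s output tuples $(x_j, r_j, y_j)$ satisfy $r_j = r_j^*$ and $y_j = f(x_j, r_j)$ for $j = 1, 2$; by the measure-and-reprogram lemma, this relation holds in the extraction experiment with probability at least $\epsilon/(2q+1)^4$.

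Conditioned on this relation, each $\vecv_j^*$ must be a valid subspace vector for $x_j$: otherwise the validity check in $\cO_{f, G, \bfA}$ causes the oracle to ignore $\vecv_j^*$, so reprogramming $G$ at $\vecv_j^*$ leaves $\mathcal{A}$'s view unchanged and the event $r_j = r_j^*$ occurs only with probability $1/|\mathcal{R}| = \negl(\lambda)$. Given that both vectors are valid, \Cref{thm: direct product oracle} (applied to a uniformly guessed index $i^* \in [m]$) forces $\vecv_1^* = \vecv_2^*$ and $x_1 = x_2$: any other case yields either two vectors on opposite sides of a subspace/dual split at some index, or two distinct vectors inside the same subspace at some index, both of which break direct-product hardness up to a $1/m$ loss. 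But $\vecv_1^* = \vecv_2^*$ combined with $x_1 = x_2$ implies $r_1 = r_1^* = g(\vecv_1^*) = g(\vecv_2^*) = r_2^* = r_2$, contradicting the distinctness condition $(x_1, r_1) \neq (x_2, r_2)$ required for winning.

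The 2-wise replaceability of $\mathcal{F}$ enters in making the reduction distribution-exact: the reduction simulates $\cO_{f, G, \bfA}$ using an independently sampled $f' \leftarrow \mathcal{F}$ and reprograms $f'$ at the two extracted input-randomness pairs with the correct $f$-values, at which point 2-wise replaceability guarantees that the simulated oracle has the same joint distribution with $f$ as the real oracle. The 1-query unlearnability of $\mathcal{F}$ handles the residual case where $\mathcal{A}$ simply guesses one of its output triples rather than extracting a valid vector; in that case the reduction collapses to a single-query attack on $f$ that outputs two valid pairs, directly contradicting \Cref{def:single-query-unlearnable}.

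The main obstacle will be arguing rigorously that, conditioned on the extraction relation holding, the extracted vectors $\vecv_j^*$ are valid subspace vectors with overwhelming probability. This will require a careful compressed-oracle treatment of the interaction between the ``pass-through-on-invalid-input'' branch of $\cO_{f, G, \bfA}$ and the reprogramming of $G$, together with a clean case analysis that separates the ``extraction-based'' path (contradicted by direct-product hardness) from the ``guessing-based'' one (contradicted by 1-query unlearnability).
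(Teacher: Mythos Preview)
Your high-level plan --- extract two classical query points via measure-and-reprogram, then case-split into direct-product hardness versus 1-query unlearnability --- matches the paper's. But you apply the measure-and-reprogram lemma to the wrong oracle, and this is a genuine gap. The paper applies MaR to the oracle $f_A$ that $\mathcal{A}$ actually queries (not to the internal $G$): the reduction $\mathcal{B}$ samples a fresh $g\leftarrow\mathcal{F}$ and runs $\tilde{\mathcal{A}}[g,f_A]$. This is precisely the setting in which 2-wise replaceability of $\mathcal{F}$ does its work --- it says that $g$ reprogrammed at two points to $f$'s values is distributed exactly as $f$, so the right-hand side of the MaR inequality collapses to the adversary's true winning probability $\epsilon$. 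The two measured queries are then \emph{classical} inputs $(b_j,u_j)$ to $\mathcal{O}_{f,A}$: if both are valid subspace vectors and distinct, direct-product hardness is violated; if at most one is valid, $\mathcal{B}$ has made at most one actual query to $f$ yet outputs two correct tuples, contradicting 1-query unlearnability.

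Applying MaR to $G$ instead fails for two concrete reasons. First, \Cref{lemma:measure-and-reprogram} requires the adversary to \emph{output} $\vecx\in\mathcal{X}^k$ in the oracle's domain and tests the event $\vecx=\vecx^*$ on the right-hand side; but $\mathcal{A}$ outputs $(x_j,r_j,y_j)$ tuples, not vectors $\vecv_j$, so there is nothing for the lemma to match against and you cannot conclude extraction succeeds with probability $\epsilon/(2q+1)^4$. Your relation ``$r_j=r_j^*$'' is an extra constraint the win condition says nothing about --- the adversary could win with $r_j$ unrelated to any $G$-value. Second, reprogramming a uniformly random $G$ at two points to fresh random values is trivially distribution-preserving, so 2-wise replaceability of $\mathcal{F}$ has no role in an MaR-on-$G$ argument; your own description of how it enters (``simulates $\mathcal{O}_{f,G,\bfA}$ using an independently sampled $f'\leftarrow\mathcal{F}$ and reprograms $f'$ at the two extracted input-randomness pairs with the correct $f$-values'') is in fact describing MaR applied to $f$, which is exactly what the paper does. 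The fix is to apply MaR to $f_A$ directly: then the adversary's output $(x_j,r_j)$ lives in the correct domain, 2-wise replaceability is used in the intended way, and the measured OTP-oracle queries already expose the subspace vectors needed for the direct-product reduction.
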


\begin{proof}
We first prove one-time secrecy for what we call a "mini-scheme" where the user's input is one bit, i.e. $\mathcal{X} = \{0,1\}$. We will call this user input $b \in \{0,1\}$ for "bit".

Suppose for contradiction that there exists a (quantum) polynomial time algorithm $\calA$ that takes as input one-time sampling program $\mathcal{P}_f = (\ket{A}, \mathcal{O}_A, \mathcal{O}_{A^\perp}, \mathcal{O}_{f, A})$ and breaks one-time secrecy with non-negligible probability. Note that we can consider $\calA$ as having access to oracles $f_A$, which compute $f$ restricted on valid inputs:
\begin{align*}
  f_A(b, u) = \begin{dcases}
    f(b, H(u)) &\text{if }u \in A^b,\\
    \bot &\text{otherwise}.\end{dcases}
\end{align*}
We will use $\calA$ to construct a (quantum) polynomial time algorithm $\calB$ that either (1) breaks the one-query unlearnability of $f$, or (2) violates the direct product hardness for random subspaces.\\

\noindent
$\underline{\calB(1^\lambda, \calP_f)}$
\begin{enumerate}
    \item Sample a random function $g \leftarrow \calF$.
    \item Run the measure-and-reprogram algorithm $\tilde{\calA}[g, f_A]$ as defined in Definition~\ref{definition:measure-and-reprogram} with $k = 2$.
    \item Output the output of $\tilde{\calA}[g, f_A]$.
\end{enumerate}
Using the shorthand $\vecx = (x_1, x_2) \in (\{0,1\} \times \calR)^2$ where $x_1 = (b_1, r_1)$ and $x_2 = (b_2, r_2)$ and $\vecy = (y_1, y_2) \in \calY^2$, define the relation $R \subseteq (\{0,1\} \times \calR)^2 \times \calY^2$ as
\begin{align*}
    R = \left\{ (\vecx, \vecy) \mid f(x_1) = y_1, f(x_2) = y_2 \right\}.
\end{align*}
By Lemma~\ref{lemma:measure-and-reprogram}, we have that for all $\vecx^* \in (\{0,1\} \times \calR)^2$ such that $x^*_1 \neq x^*_2$,
\begin{gather*}
    \Pr_{f \leftarrow \calF}\left[
        (\vecx, \vecy) \in R \ \land
        \vecx = \vecx^* :
    (\vecx, \vecy) \leftarrow \calB(1^\lambda, \calP_f)
    \right] \\
    \ge \frac{1}{(2q+1)^4} \Pr_{f, g \leftarrow \calF}\left[
        (\vecx, \vecy) \in R \ \land
        \vecx = \vecx^* :
    (\vecx, \vecy) \leftarrow \calA^{|g_{\vecx^*, f}\rangle}(1^\lambda, \ket{A}, \calO_A, \calO_{A^\perp})
    \right].
\end{gather*}
Since $\calF$ is $2$-wise replaceable, the above probability expression is
\begin{gather*}
    = \frac{1}{(2q+1)^4} \Pr_{f \leftarrow \calF}\left[
        (\vecx, \vecy) \in R \ \land
        \vecx = \vecx^* :
    (\vecx, \vecy) \leftarrow \calA^{|f\rangle}(1^\lambda, \ket{A}, \calO_A, \calO_{A^\perp})
    \right] \\
    = \frac{1}{(2q+1)^4} \Pr_{f \leftarrow \calF}\left[
        (\vecx, \vecy) \in R \ \land
        \vecx = \vecx^* :
    (\vecx, \vecy) \leftarrow \calA(1^\lambda, \calP_f)
    \right] \ge \mathsf{non\text{-}negl}(\lambda),
\end{gather*}
where the last inequality holds because $\calA$ breaks the one-time security of $\calP_f$, by assumption. Note that $\calB$ makes at most $2$ queries, both of which are classical, to $\calO_{f, A}$. Consider the following two cases:

\paragraph{Case 1.} $\calB$ makes two valid and distinct classical queries $x_1 = (b_1, r_1), x_2 = (b_2, r_2) \in \{0,1\} \times \calR$ to $\calO_{f, A}$ such that $\calO_{f, A}(x_1) \neq \bot$ and $\calO_{f, A}(x_2) \neq \bot$. This means that $r_1 \in A^{b_1}$ and $r_2 \in A^{b_2}$. Since these are classical queries, they can be recorded, and this immediately gives us an adversary that breaks the direct product hardness of subspace states (\Cref{thm: direct product oracle}).

\paragraph{Case 2.} $\calB$ makes at most one valid classical queries $x = (b, r) \in \{0,1\} \times \calR$ to $\calO_{f, A}$ such that $\calO_{f, A}(x) \neq \bot$. This is impossible because it would break the $1$-query unlearnability of $\calF$.
\end{proof}

\begin{remark}
    We conjecture that the above proof extends to function families larger than uniformly random functions (which are shown to be $k$-wise replaceable in \cite{yamakawa2021classical}. We will leave to future works on how to characterize the $k$-wise replaceable property. 
\end{remark}

\section{More on the Compressed Oracle}

\subsection{Alternative Representations of the Decompression Function}\label{sec:decompression-alternative-form}

Let $D$ be a database for a random function $H$ such that $D(x) = \bot$. If one were to query $H$ and receive result $H(x) = y$, the state of the oracle before the final decompression operation is $\ket{D \cup (x, y)}$. The effect of the $x$ decompression operation $\mathsf{Decomp}_x$ on a state $\ket{D \cup (x, y)}$ is
\begin{align*}
    &\mathsf{Decomp}_x \ket{D\cup (x, y)}
    \\
    &= \mathsf{Decomp}_x \frac{1}{|\calY|} \sum_{y'} \ket{D\cup (x, y')} \sum_{r\in \calY}(-1)^{r\cdot (y'-y)} 
    \\
    &= \mathsf{Decomp}_x  \frac{1}{|\calY|}\left(\sum_{r\neq 0} (-1)^{r\cdot -y} \sum_{y'}(-1)^{r\cdot y'} \ket{D\cup (x, y')} + \sum_{y'} \ket{D\cup (x, y')} \right)
    \\
    &=  \frac{1}{|\calY|} \left( \sum_{r\neq 0} (-1)^{r\cdot -y} \sum_{y'}(-1)^{r\cdot y'} \ket{D\cup (x, y')} + \sqrt{|\calY|} \ket{D} \right)
    \\
    &=  \left(\ket{D\cup (x, y)} - \frac{1}{|\calY|} \sum_{y'} \ket{D\cup (x, y')} + \frac{1}{\sqrt{|\calY|}} \ket{D} \right)
    \\
    &=  \left(\left(1-\frac{1}{|\calY|} \right)\ket{D\cup (x, y)} - \frac{1}{|\calY|} \sum_{y'\neq y} \ket{D\cup (x, y')} + \frac{1}{\sqrt{|\calY|}} \ket{D} \right)
\end{align*}

We note that this only describes the state \emph{in between} queries to $x$. A second query to $x$ will apply $\mathsf{Decomp}_x$ to the database register before determining the response, which maps the database register back to $\ket{D\cup (x,y)}$. Thus, despite the support of the state on other databases, $G(x)$ cannot actually change in between queries to $x$. However, this form is relevant when looking at the database register without knowledge of $x$.


\subsection{Compressed Oracle Chaining}\label{sec:compressed-chaining}

In this section, we show that if an adversary has access to the composition of compressed oracles $H\circ G$ and $H$ records some input/output pair $H(y) = z$, then with overwhelming probability $G$ also records a matching input/output pair $G(x) = y$.

Following almost the same analysis, we will also show that if an adversary has access to a function $F(x) = H(x, G(x))$ and $H$ records $H(x\concat y) = z$, then with overwhelming probability $G$ contains a matching entry $G(x) = y$. See \Cref{lemma:compressed-chaining-carry} at the end of this subsection for more details.

\begin{lemma}[Compressed Oracle Chaining]\label{lemma:compressed-chaining}
    Let $G:\calX \rightarrow \calY$ and $H:\calY \rightarrow \calZ$ be random oracles implemented by the compressed oracle technique. Consider running an interaction of an oracle algorithm with their composition $H\circ G$ until query $t$, then measuring the internal state of $G$ and $H$ to obtain $D_G$ and $D_H$. 

    Let $E_t$ be the event that after the measurement at time $t$, for all $(y, z)\in D_H$, there exists an $x\in \calX$ such that $(x, y) \in D_G$.
    Then
    \[
        \Pr[E_t]
        \geq 
        1 - 4t^2\left(\frac{2}{|\calY|} - \frac{1}{|\calY|^2}\right)
    \]
\end{lemma}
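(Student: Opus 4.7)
The plan is to prove the bound by induction on $t$ using a \emph{bad subspace} argument. Define a projector $\Pi_{bad}$ on the joint database register $(D_G, D_H)$ that projects onto the span of basis states $\ket{D_G, D_H}$ for which there exists $(y, z) \in D_H$ with no $x \in \calX$ satisfying $D_G(x) = y$. Letting $\ket{\psi_t}$ denote the joint state of the adversary, query registers, and oracle databases after $t$ queries to $H \circ G$, it suffices to prove $\|\Pi_{bad} \ket{\psi_t}\|^2 \leq 4t^2(2/|\calY| - 1/|\calY|^2)$, since after measuring $D_G$ and $D_H$ one has $\Pr[\neg E_t] = \|\Pi_{bad} \ket{\psi_t}\|^2$. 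The base case $t=0$ is immediate, since both databases are initialized to $\ket{\emptyset}$.

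For the inductive step, I would decompose each $H \circ G$ query into three sub-operations: a forward query $U_G$ to $G$ writing $G(x)$ into an ancilla register $\calR$; a query $U_H$ to $H$ XORing $H(G(x))$ into the output register; and an uncomputing query $U_G^\dagger$. The key structural observation is that $U_G$ and $U_G^\dagger$ preserve the good subspace: they modify only $D_G$, and by the explicit form of $\Decomp$, the entries they can remove are precisely those for which $\calR$ has returned to the uniform superposition, a condition disrupted whenever $(y, z)$ has been recorded in $D_H$ entangled with $\calR = y$. Thus any motion out of the good subspace must occur during the $U_H$ step, and only via $\calR$ carrying amplitude on a $y$ that is not in the image of $D_G$.

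The heart of the argument, and what I expect to be the main obstacle, is quantifying this per-query escape amplitude. After $U_G$ acts on a good state, the compressed-oracle formalism (as recalled in \Cref{sec:decompression-alternative-form}) dictates that $\calR$ is essentially supported on $y$'s already in the image of $D_G$, with escape amplitude on fresh $y$'s bounded by a term of order $1/\sqrt{|\calY|}$ per application of $\Decomp$. A careful accounting, analogous in flavor to Zhandry's analyses (\Cref{lem:zhandry_lemma5} and \Cref{lem:compressed-collision}), should give a per-query escape amplitude of at most $2\sqrt{2/|\calY| - 1/|\calY|^2}$, where the factor of $2$ captures both the $U_G$ and $U_G^\dagger$ passes flanking $U_H$ together with the possibility of interference between database branches.

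Summing the per-query escapes by the triangle inequality will then yield $\|\Pi_{bad} \ket{\psi_t}\| \leq 2t\sqrt{2/|\calY| - 1/|\calY|^2}$, and squaring gives the desired bound on $\Pr[\neg E_t]$. The main subtlety I anticipate is the case in which $D_G$ already contains some entry $(x, y_0)$ at the time the adversary queries on $x$ again: there $\calR$ is not placed into a clean basis state but rather a superposition with small amplitude on $y \neq y_0$, and controlling this cleanly will likely require the alternative representation of $\Decomp_x$ provided in \Cref{sec:decompression-alternative-form}. Once the proof is complete, it should adapt with only minor bookkeeping to the related statement \Cref{lemma:compressed-chaining-carry-body} in which $H$ is evaluated at $(x_G, G(x_H))$ rather than at $G(x)$ alone.
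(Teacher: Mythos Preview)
Your high-level architecture—define a bad-subspace projector, bound the per-query escape amplitude, sum by triangle inequality, and square—matches the paper exactly. But your identification of \emph{which} sub-operation causes the escape is inverted, and this is not cosmetic: the claim that the full compressed-oracle unitary $U_G$ preserves the good subspace is false.

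Concretely, take the good basis state $\ket{x,0}_{\calQ_x,\calR}\otimes\ket{\{(x,y_0)\}}_{\calD_G}\otimes\ket{\{(y_0,z)\}}_{\calD_H}$ and apply $U_G=\Decomp_G\circ\CO'_G\circ\Decomp_G$. A direct computation (or use self-adjointness of $U_G$ together with the formula in \Cref{sec:decompression-alternative-form}) gives
\[
\bra{x,y_0,\emptyset}\,U_G\,\ket{x,0,\{(x,y_0)\}}=\tfrac{1}{\sqrt{|\calY|}}\bigl(1-\tfrac{1}{|\calY|}\bigr)\neq 0,
\]
so $U_G$ produces nonzero amplitude on $D_G=\emptyset$ while $D_H$ still holds $(y_0,z)$: a bad state. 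Your justification—``entries are removed only when $\calR$ returns to uniform superposition, which entanglement with $D_H$ prevents''—conflates the ancilla $\calR$ with the $y$-slot of $D_G$; $\Decomp_G$ looks only at the latter, and before the first $U_G$ the ancilla $\calR=0$ is not entangled with anything.

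The paper's fix is to use the finer decomposition. Since $\Decomp_G$ acts on $(\calQ_x,\calD_G)$ while $U_H$ acts on $(\calR,\calQ_\calZ,\calD_H)$, they commute; together with $\Decomp_G^2=I$ this collapses $U_G\circ U_H\circ U_G$ to
\[
U_{H\circ G}=\Decomp_G\circ\CO'_G\circ U_H\circ\CO'_G\circ\Decomp_G.
\]
Now the middle block $\CO'_G\circ U_H\circ\CO'_G$ preserves the good subspace \emph{exactly}: after $\CO'_G\circ\Decomp_G$ the state is supported on $\ket{x,y}_{\calQ_x,\calR}\otimes\ket{D_G\cup(x,y)}$, so whatever $U_H$ records at input $y$ already has a preimage in $D_G$. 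All escape is therefore concentrated in the two outer $\Decomp_G$'s, and the case analysis (whether $x$ is the unique $D_G$-preimage of some $y$ with $(y,z)\in D_H$) gives the $\sqrt{2/|\calY|-1/|\calY|^2}$ bound per $\Decomp_G$. The factor of $2$ in $2t$ is from these two $\Decomp_G$'s, not from ``$U_G$ and $U_G^\dagger$ passes flanking $U_H$.''
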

\begin{proof}
    We first explicitly describe how $H\circ G$ works. The internal states of $H$ and $G$ are stored in registers $\calD_H$ and $\calD_G$, respectively. 
    In general, we can consider a basis query $\ket{x, u}_{\calQ}$ in register $\calQ = \calQ_\calX, \calQ_\calZ$. For convenience, we insert into $\calQ$ an additional work register $\calQ_\calY$ which is initialized to $\ket{0}$, resulting in $\ket{x, 0, u}_{\calQ}$. 
    To compute $H\circ G$ on this query, first query $(\calQ_\calX, \calQ_\calY)$ to $G$ to obtain $\ket{x, G(x), u}_{\calQ}$, i.e. apply $\Decomp_G\circ \CO'_G \circ \Decomp_G$ on $(\calQ_\calX, \calQ_\calY, \calD_G)$.\footnote{Since $G$ and $H$ have different domains and ranges, we differentiate their decompression operations, which depend on the domain/range. We also differentiate the $\CO'_G$ operation to clarify that it acts on registers corresponding to a query to $G$.} 
    Then query $(\calQ_\calY, \calQ_\calZ)$ to $H$ to obtain $\ket{x, G(x), u\oplus H(G(x))}$. Finally, query $(\calQ_\calX, \calQ_\calY)$ to $G$ again to obtain $\ket{x, 0, u\oplus H(G(x))}$ and return registers $(\calQ_\calX, \calQ_{\calZ})$.

    Observe that $\Decomp_G$ commutes with the query to $H$, since they operate on disjoint registers $(\calQ_\calX, \calD_G)$ and $(\calQ_\calY, \calQ_\calZ, \calD_G)$, respectively. Furthermore, $\Decomp_G \circ \Decomp_G = I$. Thus, if we write the operation of $H$ as $U_H$, we may write the implementation of a query to $H\circ G$ as
    \[
        U_{H\circ G} 
        \coloneqq
        (\Decomp_G \circ \CO'_G) \circ (U_H) \circ (\CO'_G \circ \Decomp_G)
    \]
    where $\Decomp_G$ and $\CO'_G$ act on registers $(\calQ_\calX, \calQ_\calY, \calD_G)$, while $U_H$ acts on registers $(\calQ_\calY, \calQ_\calZ, \calD_H)$. 

    Now consider the interaction of the algorithm with $H\circ G$, where the algorithm maintains an additional internal register $\calA$.
    Define the projector $E$ onto states $\ket{a}_{\calA}\otimes \ket{x,0, u}_{\calQ}\otimes \ket{D_G, D_H}_{\calD}$ where $D_G$ and $D_H$ satisfy the requirements of event $E_t$ and define $\overline{E} = I - E$. 
    We will upper bound the norm $\norm{\overline{E} U_{H\circ G} \ket{\psi}}$ from after the query in terms of the norm $\norm{\overline{E}\ket{\psi}}$ from before the query. To do this, we will individually bounding the norm after each step in terms of the norm before that step, e.g. bound $\norm{\overline{E} \Decomp_G \ket{\psi'}}$ from after the query in terms of the norm $\norm{\overline{E} \ket{\psi'}}$. 

    We first bound the intermediate operations, in between the two $\Decomp_G$ operations.


    \begin{claim}\label{claim:chaining-co-H}
        \[
            \norm{\overline{E} (\CO'_G) \cdot (U_H) \cdot (\CO'_G \cdot \Decomp_G) \ket{\psi}} 
            = 
            \norm{\overline{E} (\Decomp_G) \ket{\psi}}
        \]
    \end{claim}
    \begin{proof}
        First, observe that for all states $\ket{\psi'}$, 
        \[
            \norm{\overline{E} \cdot \CO'_G \ket{\psi'}} 
            = 
            \norm{\overline{E}  \ket{\psi'}}
        \]
        since $\CO'_G$ does not modify the database registers.
    
        Now consider the operation of $U_H$ on $(\CO' \cdot \Decomp_G) \ket{\psi}$. Observe that by the definition of $\CO'_G$, the state of the query register and $H\circ G$ when $H$ is queried is always supported on basis states
        \[
            \ket{x, y, u}_{\calQ} \otimes \ket{D_G\cup (x, y)}_{\calD_G} \otimes \ket{D_H}_{\calD_H}
        \]
        A query of $y$ to $H$ can only result in a new database $D_H'$ where the only potential difference between $D_H$ and $D_H'$ is $D_H(y) \neq D_H'(y)$. Since $(x, y)$ is already recorded in the contents of register $\calD_G$, applying $H$ always results in a valid database state with respect to the event $E$. 
        In particular, if $D_H$ and $D_G$ already satisfied $E$ before the query to $H$, they continue to do so afterwards. Thus
        \[
            \norm{\overline{E} (U_H) \cdot (\CO'_G \cdot \Decomp_G) \ket{\psi}} 
            = 
            \norm{\overline{E} (\CO'_G \cdot \Decomp_G) \ket{\psi}}
        \]
        Putting this together with the prior bound on the effect of $\CO'_G$ yields the claim.
    \end{proof}

    Next, we bound the effect of $\Decomp_G$ on a general state $\ket{\psi'}$. In general, both the first and the last $\Decomp_G$ operation will operate on a state of the form
    \[
        \sum_{a,x,u,D_G,D_H} \alpha_{a,x,u,D_G,D_H} \ket{a}_{\calA} \otimes \ket{x, 0, u}_{\calQ} \otimes \ket{D_G, D_H}_{\calD}
    \]
    where $\calA$ is the adversary's internal register, $\calQ$ contains the (expanded) query, and $\calD$ contains the oracle databases. This is clearly true for the first application; it holds for the second because $\CO_G'$ is applied twice and $U_H$ does not modify the registers that $\CO_G'$ acts on.\footnote{We note that $\CO_G'$ and $U_H$ do not commute, despite this, since $U_H$ does certain actions controlled on the registers that $\CO_G'$ acts on.}

    \begin{claim}\label{claim:chaining-decomp}
        \[
            \norm{\overline{E} \Decomp_G \ket{\psi'}} 
            \leq  
            \norm{\overline{E} \ket{\psi}} + \sqrt{\frac{2}{|\calY|} - \frac{1}{|\calY|^2}}
        \]
    \end{claim}
    \begin{proof}
        Define the following projectors on states $\ket{a}\otimes \ket{x, 0, u} \otimes \ket{D_G, D_H}$:
        \begin{itemize}
            \item $E_{Y1, Z}$ projects onto states where (1) $D_G$ and $D_H$ satisfy $E$, (2) $D_G(x) = y$ for some $y$, (3) $D_H(y) = z$ for some $z$, and (4) $x$ is the unique preimage of $y$ under $G$, i.e. $D_G(x') \neq y$ for all $x\neq x'$.
            
            \item $E_{Y+, Z}$ projects onto states where (1) $D_G$ and $D_H$ satisfy $E$, (2) $D_G(x) = y$ for some $y$, (3) $D_H(y) = z$ for some $z$, and (4) there exists at least one $x'\neq x$ such that $D_G(x') = y$.
            
            \item $E_{Y, \bot}$ projects onto states where (1) $D_G$ and $D_H$ satisfy $E$, (2) $D_G(x) = y$ for some $y$, and (3) $D_H(y) = \bot$.
            \item $E_{\bot}$ projects onto states where (1) $D_G$ and $D_H$ satisfy $E$, (2) $D_G(x) = \bot$.
        \end{itemize}
        Observe that $E = E_{Y1,Z} + E_{Y+,Z} + E_{Y,\bot} + E_{\bot}$, so $I = \overline{E} + E_{Y1,Z} + E_{Y+,Z} + E_{Y,\bot} + E_{\bot}$. Furthermore, $\Decomp_G$ only modifies register $\calD_G$, where it maps $D_G$ to $D_G'$, with the only potential difference that $D_G(x) \neq D_G'(x)$.
        In the case where $x$ is not part of an $(x, y)\in D_G$ and $(y, z)\in D_H$ pair mandated by $E$, this modification does not affect the containment of the state in space $E$. Thus 
        \[
            \norm{\overline{E} \cdot \Decomp_G \cdot E_{Y, \bot}\ket{\psi'}}
            = 
            \norm{\overline{E} \cdot \Decomp_G \cdot E_{\bot}\ket{\psi'}}
            = 
            0
        \]
        Similarly, because there is a ``backup'' option $D_G(x') = y$ for $y$ in the case $E_{Y+, Z}$,
        \[
            \norm{\overline{E} \cdot \Decomp_G \cdot E_{Y+,Z}\ket{\psi'}}
            = 
            0
        \]

        For basis vectors in the support of $E_{Y1,Z}$, we can write $D_G = D_G' \cup (x,y)$ and $D_H = D_H' \cup (y,z)$, where $D_G'(x) = \bot$ and $D_H'(y) = \bot$. $D_G'$ and $D_H'$ represent $D_G$ and $D_H$ with $x$ and $y$ removed, respectively. Furthermore, there does not exist an $x'\in \calX$ such that $D_G'(x') = y$, since $x$ is the unique preimage of $y$ under $G$.
        Recall from \Cref{sec:decompression-alternative-form} that the effect of $\Decomp_G$ on $\ket{x,0, D\cup(x,y)}$ is
        \[
            \ket{x, 0} \otimes \left(\left(1-\frac{1}{|\calY|} \right)\ket{D\cup (x, y)} - \frac{1}{|\calY|} \sum_{y'\neq y} \ket{D\cup (x, y')} + \frac{1}{\sqrt{|\calY|}} \ket{D} \right)
        \]
        Since $D_H(y) \neq \bot$, the projection $\overline{E} \cdot \Decomp_G \ket{a}_{\calA}\otimes\ket{x,0,u}_{\calQ} \otimes \ket{D_G'\cup(x,y), D_H'\cup(y,z)}_{\calD}$ is
        \[
            \ket{a, x,0, D_H'\cup(y,z)}_{\calA,\calQ,\calD_H} \otimes \left(-\frac{1}{|\calY|} \sum_{y'\neq y} \ket{D_G'\cup (x, y')} + \frac{1}{\sqrt{|\calY|}} \ket{D_G'}\right)_{\calD_G}
        \]
        We can write $E_{Y1,Z} \ket{\psi'}$ in general as
        \[
        E_{Y1,Z} \ket{\psi'} = \sum_{a,x,u,y}\sum_{D'_G\not\ni y, D_H\ni y} \alpha_{a,x,u,y,D_G',D_H} \ket{a}\otimes \ket{x,0,u}\otimes \ket{D_G'\cup (x,y), D_H}
        \]
        Then we can compute
        \begin{align*}
            \norm{\overline{E}\cdot \Decomp_G \cdot E_{Y1,Z} \ket{\psi'}}
            &= 
            \sum_{\substack{a,x,u,y\\D_G'\not\ni y,D_H \ni y}} \norm{\alpha_{a,x,u,D_G',D_H} \left(-\frac{1}{|\calY|} \sum_{y'\neq y} \ket{D_G'\cup (x, y')} + \frac{1}{\sqrt{|\calY|}} \ket{D_G'}\right)}
            \\
            &=\sum_{\substack{a,x,u,y\\D_G'\not\ni y,D_H \ni y}} \norm{\alpha_{a,x,u,D_G',D_H}}\sqrt{\frac{|\calY|- 1}{|\calY|^2}  + \frac{1}{|\calY|}}
            \\
            &= \norm{E_{Y1,Z} \ket{\psi'}} \sqrt{\frac{2}{|\calY|} - \frac{1}{|\calY|^2}}
        \end{align*}

        Finally, $\norm{\overline{E} \cdot \Decomp_G \cdot \overline{E} \ket{\psi}} \leq \norm{\overline{E} \ket{\psi}}$. Putting these bounds together with the decomposition $\ket{\psi'} = \overline{E}\ket{\psi'} + E_{Y1,Z}\ket{\psi'} + E_{Y+,Z}\ket{\psi'} + E_{Y,\bot}\ket{\psi'} + E_{\bot}\ket{\psi'}$, we have
        \begin{align*}
            \norm{\overline{E} \cdot \Decomp_G \ket{\psi}} 
            &\leq \norm{\overline{E} \ket{\psi}} + \norm{E_{Y1,Z} \ket{\psi'}} \sqrt{\frac{2}{|\calY|} - \frac{1}{|\calY|^2}} + 0 + 0 + 0
            \\
            &\leq \norm{\overline{E} \ket{\psi}} + \sqrt{\frac{2}{|\calY|} - \frac{1}{|\calY|^2}}
        \end{align*}
    \end{proof}

    Putting together \Cref{claim:chaining-co-H} and \Cref{claim:chaining-decomp}, the norm after any single query to $H\circ G$ is bounded as
    \begin{align*}
        \norm{\overline{E} \cdot U_{H\circ G}\ket{\psi}}
        &\leq \norm{\overline{E} \cdot (\CO'_G) \cdot (U_H) \cdot (\CO' \cdot \Decomp_G) \ket{\psi}} + \sqrt{\frac{2}{|\calY|} - \frac{1}{|\calY|^2}}
        \\
        &= \norm{\overline{E} \Decomp_G \ket{\psi}} + \sqrt{\frac{2}{|\calY|} - \frac{1}{|\calY|^2}}
        \\
        &\leq \norm{\overline{E} \ket{\psi}} + 2\sqrt{\frac{2}{|\calY|} - \frac{1}{|\calY|^2}}
    \end{align*}
    The norm starts at $0$, so after $t$ queries to $H\circ G$ it is at most $2t\sqrt{\frac{2}{|\calY|} - \frac{1}{|\calY|^2}}$. The probability of seeing the event corresponding to $\overline{E}$ when we measure $\calD$ is the square of the norm, which is at most $4t^2\left(\frac{2}{|\calY|} - \frac{1}{|\calY|^2}\right)$. Therefore the probability of seeing the complementary event $E$ is at least $1-4t^2\left(\frac{2}{|\calY|} - \frac{1}{|\calY|^2}\right)$, as claimed.
\end{proof}

\begin{lemma}\label{lemma:compressed-chaining-carry}


    Let $G:\calX_G \rightarrow \calY$ and $H:\calX_H \times \calY \rightarrow \calZ$ be random oracles implemented by the compressed oracle technique. Let $\calX\subset \calX_G \times \calX_H$. Define the function $F: \calX \rightarrow \calZ$ by $F(x_g, x_h) = H(x_g, G(x_h))$.
    Consider running an interaction of an oracle algorithm with $F$ until query $t$, then measuring the internal state of $G$ and $H$ to obtain $D_G$ and $D_H$. 

    Let $E_t$ be the event that after the measurement at time $t$, for all $(x_G\concat y, z)\in D_H$, there exists a entry $(x_H, y) \in D_G$.
    Then
    \[
        \Pr[E_t]
        \geq 
        1 - 4t^2\left(\frac{2}{|\calY|} - \frac{1}{|\calY|^2}\right)
    \]
\end{lemma}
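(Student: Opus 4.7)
The plan is to follow the structure of the proof of \Cref{lemma:compressed-chaining} almost verbatim, observing that the extra input $x_g$ to $H$ plays a passive role in the chaining argument: what matters is only that the intermediate value $y$ produced by $G$ is what indexes $H$'s database entries. First I would write down the implementation of a query to $F$ analogously: with a fresh work register $\calQ_{\calY}$ initialized to $\ket{0}$, write
\[
    U_F \coloneqq (\Decomp_G \circ \CO'_G) \circ U_H \circ (\CO'_G \circ \Decomp_G),
\]
where $\CO'_G$ and $\Decomp_G$ act on $(\calQ_{x_h}, \calQ_{\calY}, \calD_G)$ (treating the $\calX_H$ portion of the input as $G$'s query register) and $U_H$ acts on $(\calQ_{x_g}, \calQ_{\calY}, \calQ_{\calZ}, \calD_H)$. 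Note that $\Decomp_G$ acts on registers disjoint from the ones $U_H$ modifies, and $\CO'_G$ does not touch $\calD_G$, so the same cancellation identity from the original proof applies.

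Next I would define the projector $E$ onto states $\ket{a}_{\calA}\otimes\ket{x_g, x_h, 0, u}_{\calQ}\otimes \ket{D_G, D_H}_{\calD}$ where $D_G, D_H$ satisfy the condition of event $E_t$ (every $(x_g'\concat y', z)\in D_H$ has a matching $(x_h', y')\in D_G$), and $\overline{E} = I - E$. The goal is then the standard bound $\|\overline{E}\, U_F\ket{\psi}\| \le \|\overline{E}\ket{\psi}\| + 2\sqrt{2/|\calY| - 1/|\calY|^2}$, from which the lemma follows by induction and squaring norms, exactly as in the original proof.

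The analog of \Cref{claim:chaining-co-H} is essentially unchanged: after $\CO'_G \circ \Decomp_G$, the relevant state is supported on basis vectors where register $\calD_G$ contains some $D_G' \cup (x_h, y)$ with $y$ written into $\calQ_{\calY}$. Querying $U_H$ can only change $D_H$ on the index $x_g \concat y$; since $(x_h, y)$ is already present in $D_G$ (witnessed by the very same $y$), any new entry $(x_g\concat y, z)$ added to $D_H$ automatically has its required $D_G$-witness, so this step does not move weight into $\overline{E}$. Combined with the fact that $\CO'_G$ is norm-preserving on $\overline{E}$ (it does not touch the databases), this gives $\|\overline{E}(\CO'_G)\cdot U_H \cdot (\CO'_G\circ\Decomp_G)\ket{\psi}\| = \|\overline{E}\Decomp_G\ket{\psi}\|$.

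For the analog of \Cref{claim:chaining-decomp}, I would run the same four-way decomposition $I = \overline{E} + E_{Y1,Z} + E_{Y+,Z} + E_{Y,\bot} + E_{\bot}$, where now $E_{Y1,Z}$ projects onto basis vectors whose databases satisfy $E$, where $D_G(x_h) = y$ with $x_h$ the unique $G$-preimage of $y$, and where $D_H(x_g\concat y) \neq \bot$ for the current query's $x_g$ (and analogously for the other cases, keyed on $x_g\concat y$ instead of just $y$). The components in $E_{\bot}$, $E_{Y, \bot}$, $E_{Y+, Z}$ again contribute $0$ to $\|\overline{E}\Decomp_G\cdot(\cdot)\ket{\psi'}\|$, and on $E_{Y1,Z}$ the same explicit computation of $\Decomp_G \ket{x_h, 0, D_G'\cup (x_h, y)}$ from \Cref{sec:decompression-alternative-form} produces weight $\sqrt{2/|\calY| - 1/|\calY|^2}$ on the bad $\overline{E}$-part. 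The hard part, and really the only place I expect subtlety, is keeping the casework indexed correctly: with the extra $x_g$ coordinate, it is tempting to conflate "same $y$" with "same full $H$-key," and one must be careful to verify that in $E_{Y+,Z}$, having a backup witness $(x_h', y)\in D_G$ for $x_h'\neq x_h$ still preserves the invariant for the particular $D_H$-entry $(x_g\concat y, z)$ (it does, because the invariant only demands some $(\cdot, y)\in D_G$). Once the case analysis is verified, assembling the per-query bound and iterating $t$ times gives the claimed $1 - 4t^2(2/|\calY| - 1/|\calY|^2)$ lower bound on $\Pr[E_t]$ verbatim as in \Cref{lemma:compressed-chaining}.
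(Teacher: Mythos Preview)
Your proposal is correct and follows essentially the same approach as the paper: both argue that the proof of \Cref{lemma:compressed-chaining} carries over after expanding the query register to include the carried coordinate and re-keying the projectors in \Cref{claim:chaining-decomp} on the full $H$-index (the concatenation with $y$) instead of just $y$. The paper's own proof is even terser than yours, simply noting these as syntactic modifications that do not affect the analysis of any projector or the final bound.
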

\begin{proof}


    The proof requires changing just a few lines of the proof of \Cref{lemma:compressed-chaining}, mostly for syntactic reasons. The first modification is to expand the query register to $\calQ = (\calQ_{\calX_G}, \calQ_{\calX_H}, \calQ_\calY, \calQ_\calZ)$. $U_H$ acts on registers $\calQ = (\calQ_{\calX_H}, \calQ_\calY, \calQ_\calZ)$, although it only modifies $\calQ_{\calZ}$. This change does not affect the proof of \Cref{claim:chaining-co-H}.

    Second, in the proof of \Cref{claim:chaining-decomp}, the projectors are slightly modified. $E_{Y, \bot}$ now requires $D_H(x_h\concat y) = \bot$, instead of $D_H(y) = \bot$. Additionally, $E_{Y1, Z}$ and $E_{Y+, Z}$ are similarly syntactically modified to consider $D_H(x_h\concat y)$ instead of $D_H(y)$ in condition 3. The analysis of each projector is the same, except for additional syntactic changes.
    
\end{proof}


\subsection{Knowledge of Preimage for Expanding Random Oracles}\label{sec:expanding-preimage-knowledge}

Here we show that if a random oracle $G$ is sufficiently expanding and an adversary making polynomially many queries to $G$ knows elements $y$ that appear in the image of $G$, then $G$'s compressed database should also contain a corresponding entry. In fact, we will show a more general statement which takes into account an adversary which knows partial preimages as well. In the case where $\calX_1 = \emptyset$ in the following lemma, this corresponds to an adversary simply finding elements from the range.

\begin{lemma}\label{lemma:expanding-preimage-knowledge}
    Let $G: \calX_1 \times \calX_2 \rightarrow \calY$ be a random function where $|\calX_2| < |\calY|$. Consider an oracle algorithm $A$ makes $q$ of queries to $G$, then outputs two vector of $k$ values $\vec{x^{(1)}} = \left(x_1^{(1)}\dots, x_k^{(1)}\right)$ and $\vec{y} = (y_1, \dots, y_k)$.
    Let $p$ be the probability that for every $i$, there exists an $x_i^{(2)}\in \calX$ such that $G\left(x_i^{(1)}, x_i^{(2)}\right) = y_i$.

    Now consider running the same experiment where $G$ is instead implemented as a compressed oracle, and measuring its database register after $A$ outputs to obtain $D$. Let $p'$ be the probability that for every $i$, there exists an $x_i^{(2)}\in \calX_2$ such that $D\left(x_i^{(1)}\concat x_i^{(2)} \right) = y_i$. If $k$ and $q$ are $\poly(\secpar)$ and $|\calX_2|^k/|\calY| = \negl(\secpar)$, then\footnote{We remark that the reliance on the number of queries is unlikely to be tight. A tighter bound might be achieved by performing a direct computation of the effects of querying $G$ on every $x\in \calX$ at the end of the experiment.}
    \[
        p \leq p' + \negl(\secpar)
    \]
\end{lemma}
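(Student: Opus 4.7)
\medskip

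\noindent\textbf{Proof proposal.} The plan is to invoke Zhandry's indistinguishability between the random-oracle world and the compressed-oracle-plus-lazy-sampling world, and then analyze the preimage condition as a union bound over unqueried inputs. Concretely, I would first recall the standard fact (implicit in the analysis of \cite{zhandry19compressed}) that the following two experiments produce identical joint distributions on $(A\text{'s view}, A\text{'s output}, G)$: (i) run $A$ with a uniformly random $G$, and at the end output $G$; (ii) run $A$ with the compressed oracle, measure the database register to obtain $D$, and then extend $D$ to a full function $\widetilde G$ by sampling $\widetilde G(x_1,x_2) \getsr \calY$ independently and uniformly for every $(x_1,x_2)\notin D$. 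Under this coupling, $p$ equals the probability that, in experiment (ii), for every $i$ there is some $x_i^{(2)} \in \calX_2$ with $\widetilde G(x_i^{(1)},x_i^{(2)}) = y_i$, and $p'$ is the probability that one can already find such an $x_i^{(2)}$ using only $D$ (i.e. with $D(x_i^{(1)}\concat x_i^{(2)})=y_i$).

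For each index $i$, decompose the event ``there exists $x_i^{(2)}$ with $\widetilde G(x_i^{(1)},x_i^{(2)})=y_i$'' into the disjunction $A_i \vee B_i$, where $A_i$ asks that such $x_i^{(2)}$ already appears in $D$, and $B_i$ asks that such $x_i^{(2)}$ appears only after lazy sampling (so that $(x_i^{(1)},x_i^{(2)})\notin D$). Then
\[
    p' = \Pr\!\Bigl[\bigcap_i A_i\Bigr],
    \qquad
    p = \Pr\!\Bigl[\bigcap_i (A_i \vee B_i)\Bigr]
    \leq \Pr\!\Bigl[\bigcap_i A_i\Bigr] + \Pr\!\Bigl[\bigcup_i (B_i \setminus A_i)\Bigr],
\]
so the task reduces to bounding $\sum_i \Pr[B_i]$. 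Conditioned on any measurement outcome of $D$ and on $(\vec{x^{(1)}}, \vec y)$, the values $\widetilde G(x_i^{(1)}, x_i^{(2)})$ for $(x_i^{(1)}, x_i^{(2)})\notin D$ are i.i.d.\ uniform over $\calY$ and independent of everything that has been measured. Hence a union bound over the at most $|\calX_2|$ choices of $x_i^{(2)}$ gives $\Pr[B_i] \le |\calX_2|/|\calY|$, and summing over $i$ yields $p \le p' + k\,|\calX_2|/|\calY|$.

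Finally, I would observe that the hypothesis $|\calX_2|^k/|\calY| = \negl(\secpar)$, together with $|\calX_2|\ge 1$, implies $|\calX_2|/|\calY| = \negl(\secpar)$; combined with $k = \poly(\secpar)$, this gives $p \le p' + \negl(\secpar)$ as required. The main (minor) obstacle is making the indistinguishability step of (i) vs.\ (ii) fully rigorous: one must justify that after measuring both the output registers and the database register, the remaining lazily sampled values behave as fresh independent uniform samples in $\calY$, which is a standard but delicate use of the compressed-oracle formalism (cf.\ \Cref{sec:compressedRO}); the remaining steps are elementary union bounds. As noted in the paper, the resulting bound does not depend on $q$, which is consistent with the footnote remarking that the $q$-dependence in any other approach is likely loose.
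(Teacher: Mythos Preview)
Your reduction hinges entirely on the assertion that experiments (i) and (ii) produce identical joint distributions on $(A$'s output, $G)$, and you flag this as a ``minor obstacle'' that is ``standard.'' It is neither: the statement is false. Measuring the compressed database register is a measurement in a \emph{different basis} than measuring the purified truth table in the computational basis (they are related by the unitary $\Decomp$), so measuring $D$ and then lazily filling in the blanks does \emph{not} reproduce the random-oracle distribution. Concretely, take $|\calY|=2$, let $A$ query a single point $x$ and measure the output to obtain $y^*$. In world (i), $G(x)=y^*$ with probability $1$. In world (ii), using the formula from \Cref{sec:decompression-alternative-form}, the post-measurement database state is $\tfrac{1}{2}\ket{\{(x,y^*)\}} - \tfrac{1}{2}\ket{\{(x,1{-}y^*)\}} + \tfrac{1}{\sqrt{2}}\ket{\emptyset}$, so $D=\{(x,y^*)\}$ with probability $\tfrac14$, $D=\{(x,1{-}y^*)\}$ with probability $\tfrac14$, and $D=\emptyset$ with probability $\tfrac12$; after lazy sampling, $\widetilde G(x)=y^*$ with probability only $\tfrac14 + \tfrac12\cdot\tfrac12 = \tfrac12$. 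More generally, your argument applied verbatim to the setting of \Cref{lem:zhandry_lemma5} would yield $p\le p' + k/|\calY|$, strictly sharper than Zhandry's bound $\sqrt{p}\le\sqrt{p'}+\sqrt{k/|\calY|}$; the correction term in that lemma exists precisely because your equivalence fails.

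The paper's proof avoids this pitfall by never asserting any distributional identity between the two worlds. Instead it builds an auxiliary adversary $B$ that runs $A$ and then \emph{guesses} the missing $x_i^{(2)}$ uniformly, so that $B$ outputs concrete input/output tuples to which \Cref{lem:zhandry_lemma5} applies directly. The guessing costs a factor of $|\calX_2|^k$ on both sides, and a collision bound (\Cref{lem:compressed-collision}) is used to relate $B$'s success in the compressed world back to $p'$. This is why the paper's bound picks up factors of $q^3$ and $|\calX_2|^k$ that your approach would avoid---but your approach, as written, does not go through.
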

\begin{proof}
    Consider the adversary $B$ which attempts to find $k$ input/output pairs by running $A$ to obtain $\vec{x^{(1)}}$ and $\vec{y}$, then guessing a uniform $\vec{x^{(2)}}\in \calX_2^k$ to construct the vector $\vec{x} = \left(x_1^{(1)}\concat x_1^{(2)}, \dots, x_k^{(1)}\concat x_k^{(2)}\right)$ and outputting $(\vec{x}, \vec{y})$. Denote $p_B$ as the probability that it outputs $(\vec{x}, \vec{y})$ such that $G(x_i^{(1)}, x_i^{(2)}) = y_i$ for all $i$. Since $\vec{x^{(2)}}$ is independent of $A$, we have $p_B \geq p/|\calX_2|^k$. 

    Now consider running $B$ with a compressed oracle, then measuring the compressed oracle to obtain a database $D$. Note that this is the same distribution over databases as running $A$. Denote $p'_B$ as the probability that it successfully outputs $(\vec{x}, \vec{y})$ such that $(x_i^{(1)}\concat x_i^{(2)}, y_i)\in D$. We may decompose the corresponding event into two mutually exclusive components: 
    \begin{itemize}
        \item Let $E'_{B,+}$ be the event that $(x_i^{(1)}\concat x_i^{(2)}, y_i)\in D$ for all $i$ and $D$ contains a collision $(x^*_0, y^*)$ and $(x^*_1, y^*)$.
        \item Let $E'_{B,1}$ be the event that $(x_i^{(1)}\concat x_i^{(2)}, y_i)\in D$ for all $i$ and $D$ does not contain a collision.
    \end{itemize}
    By definition, $p'_B = \Pr[E'_{B,+}] + \Pr[E'_{B,1}]$. We may define analogous events $E'_{A,+}$ and $E'_{A,1}$ when $A$ is run and the database is measured, where the first condition is changed to the existence of $x_i$, rather than requiring $A$ to find it. 
    $\Pr[E'_{B,+}]$ and $\Pr[E'_{A,+}]$ are both bounded by the probability of $D$ containing a collision, which \Cref{lem:compressed-collision} bounds by $O(q^3/|\calY|)$. Furthermore, since there is a unique ``solution'' to $\vec{y}$ in the event $E'_{A,1}$, we have $\Pr[E'_{B,1}] = \Pr[E'_{A,1}]/|\calX_2|^k$. Thus we can relate $p'_A$ and $p'_B$ by $p'_B \leq O(q^3/|\calY|) + \Pr[E'_{A,1}]/|\calX_2|^k \leq O(q^3/|\calY|) + p'_A/|\calX_2|^k$.

    By \Cref{lem:zhandry_lemma5},
    \[
        \sqrt{p_B} \leq \sqrt{p'_B} + \sqrt{k/|\calY|}
    \]
    Combining these with our bounds on $p_B$ and $p'_B$, we have
    \begin{align*}
        \sqrt{p_A/|\calX|^k} &\leq \sqrt{O(q^3/|\calY|) + p'_A/|\calX|^k} + \sqrt{k/|\calY|}
        \\
        \sqrt{p_A} &\leq \sqrt{O(q^3\cdot |\calX|^k/|\calY|) + p'_A} + \sqrt{k\cdot |\calX|^k/|\calY|}
    \end{align*}
    Since $k$ and $q$ are polynomial in $\secpar$ and $|\calX|^k/|\calY| = \negl$, we obtain the desired result by squaring both sides of the inequality.
\end{proof}

\section{Additional Prelims}
\label{sec:appendix_prelims}

We give some additional preliminaries in this section.

\subsection{NIZK}
\label{sec:nizk_def}

A NIZK for NP scheme should satisfy the following properties:

\paragraph{Correctness}
A NIZK proof $(\setup, \delegate, \prove, \Verify)$ is correct if there exists
a negligible function $\negl(\cdot)$ such that for all $\lambda \in \N$, all $x \in L$, and all $w \in \mathcal{R}_L(x)$ it holds that
$$\Pr[\Verify(\crs, \prove(\crs, \mathsf{token},\ w, x), x) = 1 ] = 1 - \negl(\lambda)$$
where $(\crs, \msk) \gets \setup(1^\lambda), \mathsf{token} \gets \delegate(\msk)$.

\paragraph{Computational Soundness}
A one-time NIZK proof $(\setup, \prove, \Verify)$ is computationally sound if there exist a negligible function $\negl(\cdot)$ such that for all unbounded adversaries $\calA$ and all $x \notin L$, it holds that:
$$\Pr[\Verify(\crs, \pi \gets \calA(\crs, x, \mathsf{token}), x) = 1] = \negl(\lambda)$$

where $(\crs, \msk) \gets \setup(1^\lambda), \mathsf{token} \gets \delegate(\msk)$.

\paragraph{Computational Zero Knowledge}
A one-time NIZK proof $(\setup, \delegate, \prove, \Verify)$ is computationally zero-knowledge if there exists a simulator $S$ such that for all non-uniform QPT adversaries with quantum advice $\{\rho_\lambda\}_{\lambda \in \N}$, all statements $x \in L$ and all witnesses $w \in \mathcal{R}_L(x)$, it holds that
$S(1^\lambda
, x) \approx_c  \prove(\crs, \mathsf{token}, w, x)$
where $\crs \gets \setup(1^\lambda), \mathsf{token} \gets \delegate(\msk)$.

\fi

\end{document}